\documentclass[a4paper, twoside, 11pt, reqno]{amsart}

\usepackage[T1]{fontenc}
\usepackage[english]{babel}
\usepackage[a4paper, hmargin=2.75cm, vmargin=3cm]{geometry}
\usepackage[foot]{amsaddr}
\usepackage{lmodern, microtype, setspace, csquotes, enumitem, moreenum, tikz, comment}
\usepackage{amsmath, amsfonts, amssymb, amsthm, mathtools, mathrsfs, bbm}
\usepackage[colorlinks=true, allcolors=blue!50!black, breaklinks=true]{hyperref}
\usepackage[capitalize]{cleveref}

\setdisplayskipstretch{2}

\setlist{labelindent=\parindent, leftmargin=*, align=left, topsep=8pt, itemsep=4pt, parsep=0pt}
\setlist[enumerate]{label=\normalfont(\alph*)}
\SetEnumitemKey{env}{wide=\parindent, label=\normalfont\bfseries(\arabic*), ref=\normalfont(\arabic*), nosep}
\SetEnumitemKey{equiv}{label=\normalfont(\roman*), align=right}

\theoremstyle{definition}
\newtheorem{definition}{Definition}[section]
\newtheorem{assumption}[definition]{Assumption}
\newtheorem{remark}[definition]{Remark}
\newtheorem{example}[definition]{Example}
\newtheorem{examples}[definition]{Examples}
\theoremstyle{plain}
\newtheorem{theorem}[definition]{Theorem}
\newtheorem{proposition}[definition]{Proposition}
\newtheorem{lemma}[definition]{Lemma}
\newtheorem{corollary}[definition]{Corollary}
\newtheorem*{theorem*}{Theorem}
\AddToHook{env/assumption/begin}{\crefalias{definition}{assumption}}
\AddToHook{env/remark/begin}{\crefalias{definition}{remark}}
\AddToHook{env/example/begin}{\crefalias{definition}{example}}
\AddToHook{env/examples/begin}{\crefalias{definition}{examples}}
\AddToHook{env/theorem/begin}{\crefalias{definition}{theorem}}
\AddToHook{env/proposition/begin}{\crefalias{definition}{proposition}}
\AddToHook{env/lemma/begin}{\crefalias{definition}{lemma}}
\AddToHook{env/corollary/begin}{\crefalias{definition}{corollary}}
\crefname{examples}{Example}{Examples}

\renewcommand{\implies}{\Rightarrow}
\renewcommand{\subset}{\subseteq}
\renewcommand{\supset}{\supseteq}
\renewcommand{\Re}{\operatorname{Re}}
\renewcommand{\Im}{\operatorname{Im}}
\let\originalleft\left
\let\originalright\right
\renewcommand{\left}{\mathopen{}\mathclose\bgroup\originalleft}
\renewcommand{\right}{\aftergroup\egroup\originalright}
\renewcommand{\Delta}{\varDelta}
\renewcommand{\Xi}{\varXi}
\renewcommand{\Sigma}{\varSigma}
\renewcommand{\Phi}{\varPhi}
\renewcommand{\Psi}{\varPsi}
\renewcommand{\Omega}{\varOmega}
\newcommand{\N}{\mathbb{N}}
\newcommand{\Z}{\mathbb{Z}}
\newcommand{\R}{\mathbb{R}}
\newcommand{\C}{\mathbb{C}}
\newcommand{\MD}{\mathcal{D}}
\newcommand{\MF}{\mathcal{F}}
\newcommand{\MH}{\mathcal{H}}
\newcommand{\MK}{\mathcal{K}}
\newcommand{\MS}{\mathcal{S}}
\newcommand{\MFM}{\mathfrak{M}}
\newcommand{\MFS}{\mathfrak{S}}
\newcommand{\ii}{\mathrm{i}}
\newcommand{\ee}{\mathrm{e}}
\newcommand{\BO}{\mathscr{B}}
\newcommand{\NO}{\BO_1}
\newcommand{\HS}{\BO_2}
\newcommand{\UO}{\mathscr{U}}
\newcommand{\comm}[1]{#1^{\prime}}
\newcommand{\NF}[1]{#1_{\ast}^{+}}
\newcommand{\NS}[1]{\Sigma_{\ast}(#1)}
\newcommand{\NPC}{\mathcal{P}}
\renewcommand{\AE}[1]{#1^{(\eta)}}
\newcommand{\id}{\mathrm{Id}}
\newcommand{\1}{\mathbf{1}}
\newcommand{\diff}{\mathop{}\!\mathrm{d}}
\newcommand{\dlim}[2]{\operatorname*{\text{$#1$}-lim}_{#2}\,}
\newcommand{\ssupp}{\mathbf{s}}
\newcommand{\EV}{\mathbb{E}}

\DeclareMathOperator{\dom}{dom}
\DeclareMathOperator{\ran}{ran}
\DeclareMathOperator{\supp}{supp}
\DeclareMathOperator{\lin}{lin}
\DeclareMathOperator{\Aut}{Aut}
\DeclareMathOperator{\tr}{tr}
\newcommand{\set}[2][]{#1\{{#2}#1\}}
\newcommand{\Set}[1]{\left\{#1\right\}}
\newcommand{\abs}[2][]{#1\vert{#2}#1\vert}

\newcommand{\norm}[2][]{#1\Vert{#2}#1\Vert}

\newcommand{\braket}[2][]{#1\langle{#2}#1\rangle}
\newcommand{\Braket}[1]{\left\langle #1 \right\rangle}
\newcommand{\odbound}[4][]{\frac{\diff^{#1} #2}{\diff {#3}^{#1}}_{\upharpoonright \, #4}}
\newcommand{\restr}{\!\upharpoonright\!}
\newcommand{\ce}{\coloneqq}

\newcommand{\oln}[1]{\overline{#1}}
\newcommand{\wt}[1]{\widetilde{#1}}
\newcommand{\inv}[1]{#1^{-1}}
\newcommand{\tand}{\quad\text{and}\quad}
\newcommand{\com}{{\, , \quad}}
\newcommand{\tAd}{\textit{Ad}}
\newcommand{\ndot}{\norm{\,\cdot\,}}
\newcommand{\bdot}{\braket{\cdot, \cdot}}
\newcommand{\mop}{\mathrm{op}}
\newcommand{\ph}{\varphi}
\newcommand{\Gc}{G_{c}}

\begin{document}

\title[The two-sided Bogoliubov inequality in von Neumann algebras]{The two-sided Bogoliubov inequality in von Neumann algebras conceptualizes the free energy--quantum correlations link}

\author[B. M. Reible]{Benedikt M. Reible$^{1,2,3}$}
\author[A. Much]{Albert Much$^{3}$}
\author[R. Verch]{Rainer Verch$^{3}$}
\author[C. Schütte]{Christof Schütte$^{1,2}$}
\author[L. Delle Site]{Luigi Delle Site$^2$}

\address{$^1$Zuse Institute Berlin, Takustr.~7, 14195 Berlin, Germany}
\address{$^2$Institute of Mathematics, Freie Universität Berlin, Arnimallee 6, 14195 Berlin, Germany}
\address{$^3$Institute of Theoretical Physics, Universität Leipzig, Brüderstr.~16, 04103 Leipzig, Germany}

\email{benedikt.reible@fu-berlin.de}
\email{much@itp.uni-leipzig.de} 
\email{rainer.verch@uni-leipzig.de}
\email{Christof.Schuette@fu-berlin.de}
\email{luigi.dellesite@fu-berlin.de}

\keywords{Relative entropy, unbounded perturbation theory, Liouvillian, KMS states, relative free energy, quantum entanglement}

\begin{abstract}
    The quantum-mechanical two-sided Bogoliubov inequality provides upper and lower bounds for the free energy required to separate a system of interacting particles into independent subsystems. The bounds can be calculated straightforwardly from the ensemble average of the interface energy, bypassing the direct evaluation of the free energy. In this work, we generalize the two-sided Bogoliubov inequality to arbitrary von Neumann algebras by employing the Araki-Uhlmann relative entropy and the framework of unbounded perturbation theory of KMS states. Furthermore, we obtain variational expressions for the relative free energy that extend existing bounded-perturbation principles to the unbounded setting. Crucially, these mathematical developments yield a physically well-founded thermodynamic criterion for the quantification of entanglement in infinite-dimensional systems.
\end{abstract}

\maketitle

\section{Introduction}

\subsection{Quantifying correlations}

In quantum systems consisting of infinitely many degrees of freedom, states across two subsystems can be infinitely entangled, and the theory of von Neumann algebras provides a powerful mathematical framework in which these correlations can be formalized \cite{Keyl2003, vanLuijk2025cmp, Verch2005}. Natural realizations of such systems are given by quantum field theoretic models of interacting particles, for example, quantum fields of spins serving as realistic prototypes for modern quantum simulators. The future of quantum computation, and more generally of quantum technology, requires a detailed understanding of these systems, in particular, the ability of practitioners to quantify entanglement and transfer this information onto optimally designed devices \cite{Amico2008, Kinoshita2025}.

Common entanglement measures, which quantify the difference between a reference state and the separable states of the system, are the relative entropy of entanglement, the entanglement cost, and the distillable entanglement, to name a few \cite{Horodecki2009, Plenio2007, Vedral1998, Vedral1997}; they are amenable to calculations or estimations using state-of-the-art computational protocols \cite{Feng2024, Koutny2023}. In recent years, considerable progress has been made in the mathematical analysis of the role and properties of relative entropy-like quantities \cite{Hiai2021mps, Rubboli2024, Vedral2002}. While the perspective of physical applications is rather promising, complementary paths that avoid direct calculations of the relative entropy, if physically well founded and mathematically rigorous, can lead to other possible experimental techniques (in particular, thermodynamic ones, as discussed below) and are certainly of major need in this field, which is in continuous development; our current contribution must be located in this perspective. To circumvent the challenges related to the direct determination of these entropic quantities, one can exploit links between information theory and statistical mechanics.

We propose a free energy-like criterion to quantify quantum entanglement by establishing a generalization of the quantum-mechanical \emph{two-sided Bogoliubov inequality} \cite{DelleSite2017, Reible2022} within the mathematical framework of von Neumann algebras; compared to the previously studied case, this generalized inequality offers an analytic tool for an extended range of physical situations, namely, systems with infinitely many degrees of freedom. The inequality provides upper and lower bounds for the free energy cost required for separating a system of interacting particles into two or more non-interacting subsystems. For the prototypical bipartite quantum systems, this leads to a free energy estimate of the correlations between particles located in different subsystems, which can hence be measured thermodynamically; in this context, it should be noted that formal correspondences between thermodynamics and entanglement have been known for quite some time \cite{Popescu1997}. In the past, the two-sided Bogoliubov inequality has been applied to low-temperature dense quantum gases with finite-range interactions \cite{DelleSite2024pra} and to a gas of interacting electrons \cite{Reible2023}, and it has been used to quantify finite-size effects in molecular simulations \cite{Reible2025apx, Reible2025camcos}. The connection between electron correlations and quantum entanglement \cite{DelleSite2015} justifies the idea that the inequality may also be employed to quantify the entanglement between separated subsystems, with the free energy of separation serving as a direct thermodynamic estimate of it. Indeed, the difference between the free energy of the entire system and of the system divided in two identical but separated, non-interacting parts expresses the difference in correlations across the two parts. Moreover, from the point of view of experimental physics, where entanglement is usually measured through highly complex techniques involving delicate preparation and careful manipulation of quantum states \cite{Laurell2025, Xie2023}, thermodynamic (calorimetric) measurements complementing the standard techniques are already being employed in the design of modern quantum materials \cite{Cole2025, Li2019}; while these measurements are not yet used to detect quantum entanglement directly, a possible path for achieving this has been proposed very recently \cite{Bakshi2024, Stamatova2025}.

Connected with the quantum-mechanical two-sided Bogoliubov inequality, there are also variational bounds for the free energy of separation \cite{Reible2022}; they will be generalized to the setting of von Neumann algebras as well. This makes it possible, in principle, to optimize the bounds with respect to some key physical quantities of the system, namely, states and perturbations. In applications to materials modeling, an optimization procedure of this kind could eventually lead to an optimal design of a physical system with prescribed quantum characteristics.

In conclusion, in this paper we provide the conceptual justification, with its proper mathematical structure, for the possibility of measuring entanglement through thermodynamics, while we leave concrete applications of this criterion to systems of interest for future work.

\subsection{Main results and outline of the paper}

The physical ideas sketched above rely on a mathematical development---the extension of the two-sided Bogoliubov inequality---which forms the central part of our proposal as it provides the basis for their solidity.

To illustrate our point of departure, we will first review the quantum-mechanical two-sided Bogoliubov inequality in \cref{sec:BogoliubovQM}. Its desired extension to general von Neumann algebras essentially rests on two mathematical pillars: the Araki-Uhlmann relative entropy \cite{Araki1976, Araki1977, Uhlmann1977} and the framework of unbounded perturbation theory of KMS states developed by J.~Derezi\'{n}ski, V.~Jak\v{s}i\'{c}, and C.-A.~Pillet \cite{DJP03}. Consequently, to set the stage for the extension, we will recall the necessary background from modular theory in \cref{sec:vonNeumann} and present the key ideas of $W^\ast$-dynamical perturbation theory, the results of Ref.~\cite{DJP03}, as well as some additions to these results in \cref{sec:perturbationTheory}.

Our first main result, the two-sided Bogoliubov inequality in von Neumann algebras, will be established in \cref{sec:extension}. Leaving aside some details, in its essential form it can be stated as follows:

\begin{theorem*}[Informal; see \cref{thm:BogoliubovVN}]
    Let $\MFM \subset \BO(\MH)$ be a von Neumann algebra on a Hilbert space $\MH$, let $\omega$ be a faithful KMS-state on $\MFM$, let $U$ be a self-adjoint operator affiliated with $\MFM$ satisfying certain properties, and let $\omega_U$ be the perturbed state. Then
    \begin{equation}\label{eq:informalBogoliubovVN}
        \omega_U(U) \le \MF(\omega_U, \omega) \le \omega(U) \, .
    \end{equation}
\end{theorem*}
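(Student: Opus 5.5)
We will derive both inequalities from two identities relating the relative free energy to Araki--Uhlmann relative entropies. Here $\MF(\omega_U,\omega)$ denotes the relative free energy, i.e.\ (with $\beta=1$ for simplicity, or after rescaling) the quantity playing the role of $F_U - F_0 = -\log Z_U/Z_0$ in the quantum-mechanical case. In the framework of \cite{DJP03}, $\omega$ is a $(\sigma,1)$-KMS state with standard vector $\Omega$ and Liouvillian $L$. The perturbed state $\omega_U$ is the KMS state for the dynamics generated by $L+U-JUJ$, with vector representative proportional to $\ee^{-(L+U)/2}\Omega$. We work with the normalized relative free energy $\MF(\omega_U,\omega) = -\log\norm{\ee^{-(L+U)/2}\Omega}^2$, i.e.\ $-\log$ of the normalization constant.

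The finite-dimensional prototype is the pair of relative-entropy identities
\begin{equation*}
S(\omega_U\Vert\omega) = \MF(\omega_U,\omega) - \omega_U(U) \ge 0, \qquad S(\omega\Vert\omega_U) = \omega(U) - \MF(\omega_U,\omega) \ge 0,
\end{equation*}
with $S(\phi\Vert\psi)=\phi(\log\rho_\phi-\log\rho_\psi)$. Both inequalities of \eqref{eq:informalBogoliubovVN} then follow from positivity of the Araki--Uhlmann relative entropy. In the von Neumann setting, $\log\rho_\phi-\log\rho_\psi$ is replaced by the relative modular operator: $S(\phi\Vert\psi) = -\braket{\xi_\phi, \log\Delta_{\psi,\phi}\,\xi_\phi}$ in one of the standard conventions. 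The key input is the perturbation formula from \cite{DJP03} for the relative modular operator between $\omega_U$ and $\omega$. The formula I expect is $\log\Delta_{\omega_U,\omega} = \log\Delta_\omega - U + \MF$-type constant, holding on a suitable core, where $\log\Delta_\omega = -L$. More precisely, by Connes cocycle arguments, $(D\omega_U : D\omega)_t = \ee^{\ii t(L+U)}\ee^{-\ii tL}\ee^{\ii t \MF}$ up to conventions. Differentiating at $t=0$ identifies $\log\Delta_{\omega_U,\omega} - \log\Delta_\omega$ with $-(U - \MF)\cdot\1$, in a form sense.

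The plan is therefore as follows.
\begin{enumerate}
\item Recall the Araki--Uhlmann relative entropy and its nonnegativity, including the value $+\infty$.
\item Compute $S(\omega_U\Vert\omega)$ via the cocycle derivative above. Evaluating in the vector $\Omega_U$, and using $\braket{\Omega_U, L\Omega_U}$-type cancellations coming from the invariance $L_U\Omega_U=0$, gives $S(\omega_U\Vert\omega) = \MF(\omega_U,\omega) - \omega_U(U)$.
\item Symmetrically, compute $S(\omega\Vert\omega_U) = \omega(U) - \MF(\omega_U,\omega)$, evaluating in $\Omega$ with $L\Omega=0$.
\item Conclude from $S\ge0$.
\end{enumerate}
A more robust alternative for the upper bound is Jensen/convexity. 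Use the Peierls--Bogoliubov inequality $\norm{\ee^{-(L+U)/2}\Omega}^2 \ge \ee^{-\omega(U)}$, which can be proved by convexity of $s\mapsto -\log\norm{\ee^{-(L+sU)/2}\Omega}^2$ (concavity of the free energy in the perturbation). Its derivative at $s=0$ is $\omega(U)$, and its derivative at $s=1$ is $\omega_U(U)$. Concavity of $s\mapsto \MF(\omega_{sU},\omega)$ with $\MF=0$ at $s=0$ then yields both bounds at once: the chord lies between the tangents at the endpoints. I would use this as a cross-check, or as the actual proof if the entropy identities prove technically awkward.

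The main obstacle is unboundedness. $U$ is only affiliated with $\MFM$, so $\omega(U)$ and $\omega_U(U)$ need integrability assumptions; these are presumably the "certain properties" in the statement, e.g.\ $\Omega\in\dom(\ee^{-U/2})$ and $U$ being $\omega$- and $\omega_U$-integrable. Moreover, the identity $\log\Delta_{\omega_U,\omega}=\log\Delta_\omega - U+c$ holds only as a form sum on a common core. I would first prove the result for bounded $U$, where \cite{DJP03} reduces to Araki's perturbation theory and the formulas are standard. I would then approximate by the spectral truncations $U_n = U\1_{[-n,n]}(U)$. Convergence of $\MF(\omega_{U_n},\omega)$ follows from strong resolvent convergence of $L+U_n$ to $L+U$ (the DJP03 convergence results stated earlier). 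For $\omega_{U_n}(U_n)\to\omega_U(U)$ one uses norm convergence of the vector representatives plus dominated convergence. Finally $\omega(U_n)\to\omega(U)$ holds by monotone/dominated convergence. The delicate point is the left-hand side: convergence of $\omega_{U_n}(U_n)$ requires uniform control, and I expect lower semicontinuity of relative entropy to be needed to pass the inequality to the limit.
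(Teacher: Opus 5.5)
Your architecture matches the paper's. Both bounds come from positivity of $\MS_\MFM(\omega_U,\omega)$ and $\MS_\MFM(\omega,\omega_U)$. The upper bound is the \cite{DJP03} identity $\log\Delta_{\Omega_{-\beta U},\Omega}=\log\Delta_\Omega-\beta U$ evaluated at $\Omega$, i.e.\ $\MS_\MFM(\omega,\omega_U)=\beta\omega(U)+\log\norm{\Omega_{-\beta U}}^2$ (\cref{thm:unboundedPerturbationsKMS}). That identity already holds for unbounded $U$, with no truncation.

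The paper \emph{defines} $\MF(\omega_U,\omega)\ce\omega_U(U)+\inv{\beta}\MS_\MFM(\omega_U,\omega)$ (\cref{def:relativeFreeEnergy}), so its lower bound is literally positivity. The real content is the identity $\MS_\MFM(\omega_U,\omega)=-\beta\omega_U(U)-\log\norm{\Omega_{-\beta U}}^2$ (\cref{lem:unboundedPerturbationRelEnt2}), which turns $\MF$ into your $-\inv{\beta}\log\norm{\Omega_{-\beta U}}^2$. With your definition the two bounds swap roles, but the crux is the same identity.

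Your step~2 is set up loosely. In your own convention, $\MS_\MFM(\omega_U,\omega)=-\braket{\Omega_U,\log\Delta_{\Omega,\Omega_U}\Omega_U}$ involves $\Delta_{\Omega,\Omega_U}$, with $\Omega_U$ the normalized cone representative of $\omega_U$. Your cocycle formula, however, concerns $\Delta_{\Omega_U,\Omega}$. The paper proceeds in three steps:
\begin{enumerate}
\item It proves $\log\Delta_{\Omega,\Omega_{-\beta U}}=\log\Delta_\Omega+\beta JUJ$.
\item Using (A2), it rewrites this as $\log\Delta_{\Omega_{-\beta U}}+\beta U$ (\cref{cor:unboundedPerturbationRelModOp3}).
\item It evaluates at the normalized vector $\Omega_{-\beta W}$, $W=U+\inv{\beta}\log\norm{\Omega_{-\beta U}}^2$, where $\log\Delta_{\Omega_{-\beta W}}\Omega_{-\beta W}=0$. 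This is your ``$L_U\Omega_U=0$'' cancellation.
\end{enumerate}
Alternatively, your formula suffices once you use $\Delta_{\Psi,\Phi}=J\Delta_{\Phi,\Psi}^{-1}J$ for cyclic separating $\Phi,\Psi\in\NPC$. This gives $\MS_\MFM(\omega_U,\omega)=\braket{\Omega_U,\log\Delta_{\Omega_U,\Omega}\Omega_U}$, and formally $\braket{\Omega_U,L\Omega_U}=0$ since $JL=-LJ$ and $J\Omega_U=\Omega_U$. The same $J$-conjugation turns the \cite{DJP03} formula directly into the paper's.

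The genuine gap is your treatment of unboundedness. Truncating $U$ and passing to the limit in the \emph{inequalities} does not work as described. Strong resolvent convergence $L+U_n\to L+U$ controls only bounded functions. For the unbounded $\ee^{-\beta(L+U_n)/2}$, the Golden--Thompson bound yields only weak convergence $\Omega_{-\beta U_n}\rightharpoonup\Omega_{-\beta U}$ (\cref{pro:unboundedPerturbedVector,pro:SRimpliesW}). Hence you get only $\norm{\Omega_{-\beta U}}\le\liminf_n\norm{\Omega_{-\beta U_n}}$, i.e.\ $\MF(\omega_U,\omega)\ge\limsup_n\MF(\omega_{U_n},\omega)$, which is the wrong direction for the upper bound. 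Without norm convergence you also know neither $\omega_{U_n}\to\omega_U$ nor $\omega_{U_n}(U_n)\to\omega_U(U)$, so lower semicontinuity of $\MS_\MFM$ cannot be applied to the lower bound.

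The missing idea is to take the limit at the level of \emph{operators}, where weak convergence of the vectors suffices:
\begin{itemize}
\item Araki's bounded formula gives $\log\Delta_{\Omega,\Omega_{-\beta U_n}}=-\beta(L-JU_nJ)$.
\item $L-JUJ=-J(L+U)J$ is essentially self-adjoint by (A1) and \cref{lem:JLLJ}, so $L-JU_nJ\to L-JUJ$ in the strong resolvent sense.
\item \cite[Thm.~4.2]{DJP03} (\cref{lem:convRelModOp}) identifies the limit of the relative modular operators using only $\Omega_{-\beta U_n}\rightharpoonup\Omega_{-\beta U}$.
\end{itemize}
This yields the exact identity for unbounded $U$ (\cref{lem:unboundedPerturbationRelModOp2}) rather than an inequality in the limit.

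Your concavity route is valid for bounded $U$, where $s\mapsto-\inv{\beta}\log\norm{\Omega_{-\beta sU}}^2$ is concave with derivative $\omega_{sU}(U)$. It does not bypass the problem, though. It needs (A1) and (A2) for every $sU$ with $0<s<1$, which are not among the hypotheses. It also needs the one-sided derivative $\omega_U(U)$ at $s=1$, which is no easier than the identity itself.
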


$\MF(\omega_U, \omega)$ denotes the relative free energy between the perturbed and unperturbed state, which we define in analogy to the quantum-mechanical case (cf. \cref{def:relativeFreeEnergy}). As indicated above, the proof of the theorem uses the framework of Derezi\'{n}ski, Jak\v{s}i\'{c}, and Pillet \cite{DJP03} as well as some extensions of the latter (in particular, \cref{lem:unboundedPerturbationRelEnt2}). To make the connection of this theorem to our previous work \cite{Reible2022} explicit, we will show that if $\MFM = \BO(\MH)$ is the von Neumann algebra of all bounded operators on $\MH$, used to describe non-relativistic quantum mechanics of finitely many degrees of freedom, then \cref{eq:informalBogoliubovVN} reduces to the quantum-mechanical two-sided Bogoliubov inequality (cf. \cref{pro:BogoliubovVNimpliesQM}).

While the inequality \eqref{eq:informalBogoliubovVN} establishes rigorous, computable bounds for the relative free energy, sharpening them in terms of variational principles provides a possible optimization framework for practical materials design. Our second main result addresses this by establishing variational bounds for $\MF(\omega_U, \omega)$, which can be summarized informally as follows:

\begin{theorem*}[Informal; see \cref{thm:variationalBoundsVN}]
    Let $\MFM \subset \BO(\MH)$ be a von Neumann algebra, let $\omega$ be a faithful KMS-state on $\MFM$, let $U$ be a self-adjoint operator affiliated with $\MFM$ satisfying certain properties, and let $\omega_U$ be the perturbed state. Then
    \begin{equation}\label{eq:informalVariationalBoundsVN}
        \inf_{\psi} \Bigl(\psi(U) + \inv{\beta} \MS_\MFM(\psi, \omega)\Bigr) = \MF(\omega_U, \omega) = \sup_{V} \Bigl(\omega_U(U - V) - \inv{\beta} \log \bigl(\norm{\Omega_{- \beta V}}^2\bigr)\Bigr) \, ,
    \end{equation}
    where the infimum is taken over a suitable subset of normal states on $\MFM$, and the supremum is taken over a certain subset of self-adjoint operators affiliated with $\MFM$.
\end{theorem*}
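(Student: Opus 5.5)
The plan is to obtain both variational identities from one relative-entropy identity, which is the von Neumann-algebraic form of the Gibbs variational principle. Write $\Omega$ for the vector representative of $\omega$ in the natural cone, $\Omega_{-\beta U}$ for the Araki perturbed vector, and $\omega_U = \braket{\Omega_{-\beta U}, \cdot\, \Omega_{-\beta U}}/\norm{\Omega_{-\beta U}}^2$ for the perturbed state. I take the relative free energy in the form $\MF(\omega_U,\omega) = -\inv{\beta}\log\bigl(\norm{\Omega_{-\beta U}}^2\bigr)$, as in \cref{def:relativeFreeEnergy}. The central ingredient is the relative-entropy formula for unbounded perturbations, \cref{lem:unboundedPerturbationRelEnt2}, extending the bounded case of Ref.~\cite{DJP03}. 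For every normal state $\psi$ in the admissible class, namely those with $\psi(\abs{U}) < \infty$ and $\MS_\MFM(\psi,\omega)<\infty$, I expect it to give
\begin{equation*}
    \MS_\MFM(\psi, \omega_U) = \MS_\MFM(\psi,\omega) + \beta\,\psi(U) + \log\bigl(\norm{\Omega_{-\beta U}}^2\bigr) \, ,
\end{equation*}
up to the sign conventions of the paper. Dividing by $\beta$, this reads $\psi(U) + \inv{\beta}\MS_\MFM(\psi,\omega) = \MF(\omega_U,\omega) + \inv{\beta}\MS_\MFM(\psi,\omega_U)$.

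\textbf{Infimum side.} Relative entropy is nonnegative and vanishes exactly when $\psi=\omega_U$. So the right-hand side of the identity is at least $\MF(\omega_U,\omega)$, and the infimum is attained at $\psi=\omega_U$. The one point to check is that $\omega_U$ itself belongs to the admissible class: we need $\omega_U(\abs{U})<\infty$ and $\MS_\MFM(\omega_U,\omega)<\infty$. I expect both to follow from the hypotheses already used in \cref{thm:BogoliubovVN}, since its lower bound $\omega_U(U)\le\MF$ is exactly this identity evaluated at $\psi=\omega_U$.

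\textbf{Supremum side.} Fix an admissible $V$ and apply the identity twice with $\psi=\omega_U$: once with the perturbation $U$ and once with $V$, both relative to the reference state $\omega$. Subtracting the two equations gives
\begin{equation*}
    \omega_U(U-V) - \inv{\beta}\log\bigl(\norm{\Omega_{-\beta V}}^2\bigr) = \MF(\omega_U,\omega) - \inv{\beta}\MS_\MFM(\omega_U,\omega_V) \le \MF(\omega_U,\omega) \, ,
\end{equation*}
with equality when $V=U$. This shows that the supremum equals $\MF(\omega_U,\omega)$ and is attained at $V=U$. The admissible set of $V$ is chosen so that the perturbation theory of \cref{sec:perturbationTheory} applies to $V$: $\Omega$ lies in the domain of $\ee^{-\beta V/2}$ or the analogous condition, $\omega_V$ is well defined and faithful, and $\omega_U(\abs{V})<\infty$. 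It must also contain $U$.

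\textbf{Main obstacle.} The hard part is justifying the relative-entropy identity for unbounded $U$ and $V$ and for general $\psi$, not only for $\psi=\omega_U$. In particular one must handle the case where both sides are $+\infty$, and show that $\psi(U)$ is well defined as an extended real number. My first approach is to approximate by bounded cutoffs $U_n = U\1_{[-n,n]}(U)$, for which the bounded formula of Araki and of Ref.~\cite{DJP03} holds. I would then pass to the limit using three facts: strong convergence $\Omega_{-\beta U_n}\to\Omega_{-\beta U}$, as provided in \cref{sec:perturbationTheory}; lower semicontinuity of $\MS_\MFM$ in the second argument; and monotone or dominated convergence for $\psi(U_n)$, using $\psi(\abs{U})<\infty$. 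Lower semicontinuity yields only an inequality in one direction. The reverse inequality should come from running the same argument with the roles of $\omega$ and $\omega_U$ exchanged, via the perturbation $-U$, which is the chain-rule/inverse-perturbation property $(\omega_U)_{-U}=\omega$.
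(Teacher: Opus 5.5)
There is a genuine gap. Your whole argument rests on the identity $\MS_\MFM(\psi,\omega_U)=\MS_\MFM(\psi,\omega)+\beta\psi(U)+\log\bigl(\norm{\Omega_{-\beta U}}^2\bigr)$ for general $\psi$. On the supremum side you need the same identity with $\psi=\omega_U$ and an arbitrary admissible $V$ in place of $U$. This identity is not available. \cref{lem:unboundedPerturbationRelEnt2} covers only $\psi=\omega_U$, and \cref{thm:unboundedPerturbationsKMS} \ref{enu:unboundedPerturbationRelEnt1} covers only $\psi=\omega$.

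Your cutoff argument does not close the gap. \cref{sec:perturbationTheory} gives only \emph{weak} convergence $\Omega_{-\beta U_n}\to\Omega_{-\beta U}$ (\cref{pro:unboundedPerturbedVector}), plus a uniform bound from Golden--Thompson, not strong convergence. Weak convergence does not control $\braket{\Omega_{-\beta U_n},A\,\Omega_{-\beta U_n}}$, so you cannot invoke lower semicontinuity of $\MS_\MFM$ along $\omega_{U_n}\to\omega_U$. It also only yields $\norm{\Omega_{-\beta U}}\le\liminf_n\norm{\Omega_{-\beta U_n}}$, which is the wrong direction. Passing to the limit in the bounded Gibbs inequality $\beta\psi(U_n)+\MS_\MFM(\psi,\omega)\ge-\log\bigl(\norm{\Omega_{-\beta U_n}}^2\bigr)$ requires $\liminf_n\norm{\Omega_{-\beta U_n}}\le\norm{\Omega_{-\beta U}}$, i.e.\ norm convergence along a subsequence.

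The inverse-perturbation step $(\omega_U)_{-U}=\omega$ is also unsupported. It needs $-U$ to satisfy \ref{enu:A1}--\ref{enu:A3} relative to $L_U$ and $\Omega_{-\beta U}$ (e.g.\ $\norm{\ee^{\beta U/2}\Omega_{-\beta U}}<\infty$), and this does not follow from $U\in\MFS_2(L,\Omega)$. It is not even needed: only the one-sided Gibbs inequality $\beta\psi(U)+\MS_\MFM(\psi,\omega)\ge-\log\bigl(\norm{\Omega_{-\beta U}}^2\bigr)$ enters, and attainment at $\psi=\omega_U$ is immediate from \cref{def:relativeFreeEnergy} and \cref{lem:unboundedPerturbationRelEnt2}. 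Note also that your admissible class (finite energy and finite relative entropy) is exactly the kind of enlargement the paper leaves open in its conclusions.

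The missing idea is to use the freedom in the \enquote{suitable subset} to restrict to states and perturbations for which the relative modular operator can be identified using weak convergence alone. The paper takes the infimum over $\Xi(U,\Omega)$, the faithful $\psi=\omega_\Psi$ with $\log(\Delta_{\Omega,\Psi})-\beta U$ essentially self-adjoint, and the supremum over $\MFS_3(L,\Omega,\Omega_{-\beta U})$. Under this hypothesis, \cref{pro:coreSOimpliesSR} and \cref{lem:convRelModOp} (which need only weak convergence of the perturbed vectors) give $\log(\Delta_{\Omega_{-\beta V},\Psi})=\log(\Delta_{\Omega,\Psi})-\beta V$ (\cref{lem:unboundedPerturbedRelModOp4}). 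Hence $\MS_\MFM(\psi,\omega_{\Omega_{-\beta V}})=\MS_\MFM(\psi,\omega)+\beta\psi(V)$ for the unnormalized vector functional, and the Gibbs inequality follows from \cref{thm:relEntProperties} \ref{enu:relEntLowerBound}. Moreover, $\omega_U\in\Xi(U,\Omega)$ by \cref{lem:unboundedPerturbationRelModOp2} and \ref{enu:A2}.

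With these classes the rest of your argument goes through, and your supremum step becomes a genuine alternative to the paper's. The same computation shows $U\in\MFS_3(L,\Omega,\Omega_{-\beta U})$, so the supremum is attained at $V=U$ with value $-\inv{\beta}\log\bigl(\norm{\Omega_{-\beta U}}^2\bigr)=\MF(\omega_U,\omega)$. The paper instead obtains the lower bound for the supremum from Petz' bounded Donsker--Varadhan principle (\cref{pro:PetzVP}), which additionally shows that bounded $V$ already suffice.
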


In the above expression, $\MS_\MFM(\psi, \omega)$ denotes the Araki-Uhlmann relative entropy and $\Omega_{- \beta V}$ the perturbation of the vector $\Omega \in \MH$ with the operator $V$ (cf. \cref{def:unboundedPerturbedState}). The proof of this theorem relies on variational principles for the quantities $\log \bigl(\norm{\Omega_{- \beta V}}^2\bigr)$ and $\MS_\MFM(\psi, \omega)$ that were established by D.~Petz \cite{Petz1988} for bounded perturbations; we will extend both of them to certain unbounded perturbations first, from which \cref{eq:informalVariationalBoundsVN} then follows.

\subsection{Conventions}

The symbol $\MH$ will always denote a complex Hilbert space with corresponding inner product $\bdot$ (assumed to be linear in the second argument), and $(\BO(\MH), \ndot_\mop)$, $(\NO(\MH), \ndot_\mathrm{tr})$, and $(\HS(\MH), \ndot_\mathrm{HS})$ will denote the Banach spaces of bounded, trace-class, and Hilbert-Schmidt operators on $\MH$, respectively.

In the following, unless stated otherwise, the theorems, propositions, and lemmas for which a proof is provided are original results.

\section{The quantum-mechanical two-sided Bogoliubov inequality}\label{sec:BogoliubovQM}

Let us begin by reviewing the two-sided Bogoliubov inequality in its formulation for non-relativistic quantum systems consisting of finitely many degrees of freedom, which was established in Ref.~\cite{Reible2022}. (See also Ref.~\cite{DelleSite2017} for classical systems.)

\subsection{Setup}\label{subsec:BogoliubovSetup}

Consider a quantum system at inverse temperature $\beta > 0$ confined to a spatial region $\Sigma \subset \R^d$ with lower semi-bounded, self-adjoint Hamiltonian $H : \MH \supset \dom(H) \to \MH$. Suppose that $\ee^{- \beta H} \in \NO(\MH)$, and that the system is described by the \emph{canonical Gibbs state}
\begin{equation}\label{eq:canonicalGibbsState}
    \rho = \frac{1}{Z} \, \ee^{- \beta H} \com Z = \tr(\ee^{- \beta H}) \, .
\end{equation}
Divide $\Sigma$ into $n \in \N$ disjoint subregions $\Sigma_k \subset \R^d$, $k \in \set{1, \dotsc, n}$, such that $\Sigma = \bigcup_{k=1}^{n} \Sigma_k$, and let $H_0^{(k)}$ be the part of the Hamiltonian $H$ acting only on the particles inside $\Sigma_k$. Define $H_0 = \sum_{k=1}^{n} H_0^{(k)}$ and suppose that there exists a self-adjoint operator $U : \MH \supset \dom(U) \to \MH$, mediating the interaction between the subregions $\Sigma_k$, such that $H$ can be written as
\begin{equation*}
    H = H_0 + U \, .
\end{equation*}
Assume that also $\ee^{- \beta H_0}$ is a trace-class operator and define the \enquote{free} Gibbs state
\begin{equation}\label{eq:freeGibbsState}
    \rho_0 = \frac{1}{Z_0} \, \ee^{-\beta H_0} \com Z_0 = \tr(\ee^{- \beta H_0}) \, ,
\end{equation}
representing the thermal state of the uncoupled subsystems. To quantify the thermodynamic difference between the full system and the collection of uncoupled systems, we define the \emph{relative free energy}
\begin{equation}\label{eq:interfaceEnergy}
    \Delta F \ce - \inv{\beta} \log\left(\frac{Z}{Z_0}\right) \, ,
\end{equation}
which is simply the difference between the free energy of the interacting and the non-interacting system. Physically, therefore, $\Delta F$ represents the free energy difference associated with the partitioning of the large system into smaller, independent subsystems.

\subsection{Results}

The two-sided Bogoliubov inequality \cite[Thm. 4.1]{Reible2022} provides upper and lower bounds for $\Delta F$ in terms of expectation values of the operator $U$.

\begin{theorem}[Two-sided Bogoliubov inequality]\label{thm:BogoliubovQM}
    In the situation described above, assume that $\EV_{\rho_0} [U] \ce \tr (\rho_0 U) < + \infty$ and that $\EV_{\rho} [U] \ce \tr (\rho U) < + \infty$. Then
    \begin{equation}\label{eq:BogoliubovQM}
        \EV_{\rho} [U] \le \Delta F \le \EV_{\rho_0} [U] \, .
    \end{equation}
\end{theorem}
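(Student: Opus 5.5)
The plan is to reduce both inequalities to the non-negativity of the quantum relative entropy (Klein's inequality). For two density operators $\sigma, \tau$, the relative entropy is $S(\sigma\Vert\tau) = \tr(\sigma\log\sigma - \sigma\log\tau) \ge 0$.

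We have $\log\rho = -\beta H - \log Z$ and $\log\rho_0 = -\beta H_0 - \log Z_0$. A formal computation then gives
\begin{equation*}
    S(\rho_0\Vert\rho) = \tr\bigl(\rho_0(\log\rho_0 - \log\rho)\bigr) = \beta\,\tr(\rho_0 U) - \log\frac{Z_0}{Z} = \beta\bigl(\EV_{\rho_0}[U] - \Delta F\bigr) \, .
\end{equation*}
Here we used $\log\rho_0 - \log\rho = \beta(H - H_0) + \log(Z/Z_0) = \beta U + \log(Z/Z_0)$. Non-negativity yields the upper bound $\Delta F \le \EV_{\rho_0}[U]$. Symmetrically,
\begin{equation*}
    S(\rho\Vert\rho_0) = \tr\bigl(\rho(-\beta U - \log(Z/Z_0))\bigr) = \beta\bigl(\Delta F - \EV_\rho[U]\bigr) \ge 0 \, ,
\end{equation*}
which gives the lower bound.

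The substance of the proof lies in justifying these manipulations when $H$, $H_0$ and $U$ are unbounded. The identity $H - H_0 = U$ holds only on the common domain, and the traces $\tr(\rho_0 H_0)$, $\tr(\rho_0 H)$ etc.\ need not split additively. To handle this, I would write $\rho_0 = \sum_j p_j \lvert\phi_j\rangle\langle\phi_j\rvert$ in its eigenbasis; the eigenvectors $\phi_j$ lie in $\dom(H_0)$. I would then use the spectral representation of relative entropy,
\begin{equation*}
    S(\rho_0\Vert\rho) = \sum_j p_j \bigl(\log p_j - \langle\phi_j, \log\rho\,\phi_j\rangle\bigr) \, ,
\end{equation*}
with the second term understood in the quadratic-form sense, possibly $+\infty$. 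Alternatively, I would use Klein's inequality in its form for lower semi-bounded operators. To make sense of $\langle\phi_j, H\phi_j\rangle = \langle\phi_j, H_0\phi_j\rangle + \langle\phi_j, U\phi_j\rangle$, I would interpret $H = H_0 + U$ as a form sum. The hypothesis $\EV_{\rho_0}[U] < \infty$, together with $\tr(\rho_0 H_0) < \infty$ (which follows from $\ee^{-\beta H_0}$ being trace class, since $x\ee^{-\beta x}$ is dominated by $\ee^{-\beta x/2}$ up to constants), ensures that every term is finite and the summation can be exchanged. The lower bound is handled in the same way, using $\EV_\rho[U] < \infty$ and $\tr(\rho H) < \infty$.

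An alternative that avoids the domain issues is Peierls--Bogoliubov convexity. Consider $f(\lambda) = -\log\tr\ee^{-\beta(H_0 + \lambda U)}$, which is concave in $\lambda$, and compare its values at $\lambda = 0$ and $\lambda = 1$ with its derivatives there. This requires differentiability, and hence some approximation of $U$ by bounded truncations $U_n = U\1_{\lvert U\rvert\le n}$ together with a monotone or dominated convergence argument.

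I expect the main obstacle to be this rigorous passage, namely verifying that $\tr(\rho_0\log\rho)$ decomposes as claimed for unbounded operators. I would first try the eigenbasis/quadratic-form argument with Klein's inequality, and fall back on truncation of $U$ if it fails.
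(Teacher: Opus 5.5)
Your argument is the paper's: both bounds follow from Klein's inequality applied to $S(\rho,\rho_0)=\beta(\Delta F-\EV_\rho[U])$ and $S(\rho_0,\rho)=\beta(\EV_{\rho_0}[U]-\Delta F)$, and the paper simply defers the unbounded-operator justification of these identities to \cite{Reible2022}. One correction to your rigor sketch: $\tr(\rho_0 H_0)<\infty$ does not follow from $\ee^{-\beta H_0}\in\NO(\MH)$ (your domination by $\ee^{-\beta x/2}$ would need $\ee^{-\beta H_0/2}\in\NO(\MH)$, which can fail), but you do not need it, because in $\sum_j p_j\bigl(\log p_j-\langle\phi_j,\log\rho\,\phi_j\rangle\bigr)$ the $H_0$-contributions cancel eigenvector by eigenvector, leaving $\sum_j p_j\bigl(\beta\langle\phi_j,U\phi_j\rangle+\log(Z/Z_0)\bigr)$, and the same remark applies to $\tr(\rho H)$ in the lower bound.
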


\begin{proof}
    Recall the definition of the Umegaki-Lindblad relative entropy \cite{Lindblad73, Umegaki62}: for two density operators $\rho$ and $\sigma$ on $\MH$ with $\ker(\rho)^\perp \subset \ker(\sigma)^\perp$, $S(\rho, \sigma)$ is defined to be \cite[Eq. (1.3)]{OP2004}
    \begin{equation}\label{eq:UmegakiRelEnt}
        S(\rho, \sigma) \ce \tr \bigl(\rho (\log \rho - \log \sigma)\bigr) \, .
    \end{equation}
    Note that from Klein's inequality (see, e.g., Ref.~\cite[Eq. (3.27)]{OP2004}), it follows that $S$ is non-negative: $S(\rho, \sigma) \ge 0$ for all density operators $\rho, \sigma$ on $\MH$.
    
    Since both $\rho$ and $\rho_0$ from \cref{eq:canonicalGibbsState,eq:freeGibbsState} are injective, we can compute $S(\rho, \rho_0)$ and $S(\rho_0, \rho)$ using the previous formula \cite[pp. 8 f.]{Reible2022}:
    \begin{equation}\label{eq:realtiveFreeEnergyFQS}
        S(\rho, \rho_0) = - \beta \, \EV_{\rho}[U] + \beta \Delta F \tand S(\rho_0, \rho) = \beta \, \EV_{\rho_0}[U] - \beta \Delta F \, .
    \end{equation}
    From $S(\rho, \rho_0) \ge 0$ and $S(\rho_0, \rho) \ge 0$, the lower and upper bound for $\Delta F$ follow.
\end{proof}

\begin{remark}\label{rem:BogoliubovQM}
    The upper bound in \cref{eq:BogoliubovQM} is the famous \emph{Peierls-Bogoliubov inequality}, which is an important tool in statistical mechanics. For two self-adjoint operators $A$ and $B$ such that $\ee^A, \ee^{A + B} \in \NO(\MH)$, it is usually written in the form \cite[Cor. 3.15]{OP2004}
    \begin{equation*}
        \log\left(\frac{\tr(\ee^{A + B})}{\tr(\ee^A)}\right) \ge \frac{\tr(B \, \ee^A)}{\tr(\ee^A)} \, .
    \end{equation*}
    Indeed, choosing $A = - \beta H_0$ and $B = - \beta U$, the above inequality becomes $\log(Z / Z_0) \ge - \beta \, \EV_{\rho_0}[U]$ which is equivalent to the upper bound in \cref{eq:BogoliubovQM}.
\end{remark}

\cref{thm:BogoliubovQM} can be sharpened to the following variational bounds, cf. Ref.~\cite[Cor. 4.5]{Reible2022}; the proof relies on the Golden-Thompson inequality \cite[Thm. 1]{BreiteneckerGruemm1972}, which states that for two positive self-adjoint operators $A$ and $B$ on $\MH$ such that $B$ is relatively $A$-bounded, with $A$-bound strictly less than one, and such that $\ee^{-A} \, \ee^{-B} \in \NO(\MH)$, there holds
\begin{equation*}
    \tr\bigl(\ee^{- (A + B)}\bigr) \le \tr(\ee^{-A} \, \ee^{-B}) \, .
\end{equation*}

\begin{theorem}[Variational bounds]\label{thm:variationalBoundsQM}
    Under the assumptions of \cref{thm:BogoliubovQM}, it holds that
    \begin{equation}\label{eq:variationalBoundsQM}
        \inf_{\gamma} \Bigl\{\EV_\gamma[U] + \inv{\beta} S(\gamma, \rho_0)\Bigr\} = \Delta F = \sup_{V \ge 0} \Bigl\{\EV_\rho[U - V] - \inv{\beta} \log \tr\bigl(\ee^{\log \rho_0 - \beta V}\bigr)\Bigr\} \, .
    \end{equation}
    The infimum ranges over all density operators $\gamma$ on $\MH$, and the supremum is taken over all densely defined, positive self-adjoint operators $V$ on $\MH$. In particular, the bounds of \cref{thm:BogoliubovQM} are a special case of \eqref{eq:variationalBoundsQM} obtained by taking $V = 0$ and $\gamma = \rho_0$.
\end{theorem}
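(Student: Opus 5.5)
Both equalities in \eqref{eq:variationalBoundsQM} come from rewriting the relevant quantity as a relative entropy and then using non-negativity of $S$; the supremum needs the Golden-Thompson inequality in addition.

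\textbf{Infimum.} For a density operator $\gamma$ with $S(\gamma, \rho) < +\infty$, I would use $\log \rho = -\beta H - \log Z$ and $\log \rho_0 = -\beta H_0 - \log Z_0$. Formally these give
\begin{equation*}
    S(\gamma, \rho_0) = S(\gamma, \rho) - \beta \, \EV_\gamma[U] + \beta \Delta F ,
\end{equation*}
that is,
\begin{equation*}
    \EV_\gamma[U] + \inv{\beta} S(\gamma, \rho_0) = \Delta F + \inv{\beta} S(\gamma, \rho) .
\end{equation*}
Since $S(\gamma, \rho) \ge 0$ by Klein's inequality, the left-hand side is always at least $\Delta F$. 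Equality holds exactly at $\gamma = \rho$, which is admissible because $\EV_\rho[U] < +\infty$. This is the same computation as \eqref{eq:realtiveFreeEnergyFQS}, done with a general $\gamma$ in place of $\rho$. If $\EV_\gamma[U]$ or $S(\gamma, \rho_0)$ is infinite, the expression is $+\infty$ and does not affect the infimum, so only the finite cases need care.

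\textbf{Supremum, lower bound.} Taking $V = 0$ gives $\EV_\rho[U] - \inv{\beta}\log\tr\rho_0 = \EV_\rho[U]$. That alone only reproduces the lower bound of \cref{thm:BogoliubovQM}, so attainment of $\Delta F$ must come from another choice. The natural candidate is $V = U$. It gives $-\inv{\beta}\log\tr(\ee^{\log\rho_0 - \beta U})$, and since $\log\rho_0 - \beta U = -\beta H - \log Z_0$ (as a form sum), this equals $-\inv{\beta}\log(Z/Z_0) = \Delta F$. If $U \ge 0$ this $V$ is admissible. Otherwise I would use $V = U + c$ with a constant $c$ so that $V \ge 0$ (assuming $U$ is bounded below); the shift cancels between $\EV_\rho[-c]$ and $-\inv{\beta}\log \ee^{-\beta c}$, so $\Delta F$ is still attained.

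\textbf{Supremum, upper bound.} I need to show that every admissible $V$ gives a value at most $\Delta F$. I would set $\sigma_V = \ee^{\log \rho_0 - \beta V}/Z_V$ with $Z_V = \tr(\ee^{\log\rho_0 - \beta V})$, which is a density operator. Then
\begin{equation*}
    0 \le S(\rho, \sigma_V) = \tr(\rho\log\rho) - \tr(\rho\log\rho_0) + \beta\,\EV_\rho[V] + \log Z_V .
\end{equation*}
Inserting $\log\rho - \log\rho_0 = -\beta U - \log(Z/Z_0)$ turns this into
\begin{equation*}
    \EV_\rho[U - V] - \inv{\beta}\log Z_V \le \Delta F ,
\end{equation*}
which is the required bound. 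This is exactly a Gibbs variational principle. The Golden-Thompson inequality quoted above serves as an alternative route, or as the tool to guarantee $Z_V < +\infty$: applying it with $A = \beta H_0$ and $B = \beta V$ gives $Z_V \le \tr(\rho_0 \ee^{-\beta V}) \le 1$. So $\sigma_V$ is well defined, and $-\inv{\beta}\log Z_V \ge 0$ is finite.

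\textbf{Main obstacle.} The difficulty is the domain bookkeeping for unbounded $U$, $V$, $H_0$. One must make sense of $\log\rho_0 - \beta V$ as a form sum, verify the relative-boundedness hypothesis of Golden-Thompson, and justify splitting $\tr(\rho(\log\rho - \log\sigma_V))$ into separate finite traces. I would handle this by working with spectral truncations $V_n = \min(V, n)$, which are bounded. For these the identities are routine. I would then pass to the limit $n \to \infty$ using monotone convergence for $\EV_\rho[V_n]$ and monotone convergence of $Z_{V_n}$ as $V_n \uparrow V$.
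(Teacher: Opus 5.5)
Your argument for the infimum and your proof of $\sup_{V\ge0}\{\dots\}\le\Delta F$ are sound. The identity $\EV_\gamma[U]+\inv{\beta}S(\gamma,\rho_0)=\Delta F+\inv{\beta}S(\gamma,\rho)$ and the bound from $S(\rho,\sigma_V)\ge0$ are exactly the Gibbs-variational inequality the paper proves in the von Neumann setting (\cref{pro:GibbsVariationalPrinciple} with $\psi=\rho$, $\varphi=\rho_0$). The gap is in the reverse inequality $\sup_{V\ge0}\{\dots\}\ge\Delta F$. You obtain it only by inserting $V=U+c$, which needs $U$ to be bounded below, and you simply assume this. That assumption is not among the hypotheses of \cref{thm:BogoliubovQM}. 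It also fails in the paper's own motivating example, where $U$ contains the attractive terms $-Z_i/\norm{X_i-x_j}$. There no $U+c$ is positive, and the supremum is not attained by any admissible $V$, since equality in $S(\rho,\sigma_V)\ge0$ forces $\sigma_V=\rho$, i.e.\ $V=U+\mathrm{const}$. So for such $U$ your argument gives only $\sup\le\Delta F$.

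What is missing is an approximation argument in place of attainment. One route matches how the paper handles the supremum in \cref{thm:variationalBoundsVN} (the quantum-mechanical statement itself is only cited from \cite{Reible2022}). The functional is invariant under $V\mapsto V+c\,\id$, as you observed, so the supremum over $V\ge0$ dominates the supremum over all bounded self-adjoint $V$. Then combine $\Delta F=\EV_\rho[U]+\inv{\beta}S(\rho,\rho_0)$ with Petz's Donsker--Varadhan formula $S(\rho,\rho_0)=\sup_{V=V^\ast\in\BO(\MH)}\{-\beta\EV_\rho[V]-\log\tr(\ee^{\log\rho_0-\beta V})\}$ (\cref{pro:PetzVP}), whose supremum is approached but not attained. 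Alternatively, truncate from below. Set $U_n\ce\max(U,-n)$ and $V_n\ce U_n+n\ge0$; the shift cancels. Since $U_n\ge U$, min-max gives $\tr(\ee^{-\beta(H_0\dot{+}U_n)})\le Z$, once you verify that this form sum dominates $H$ in the form sense. Hence the value at $V_n$ is at least $\Delta F+\EV_\rho[\min(U+n,0)]$, and the last term tends to $0$ by dominated convergence because $\EV_\rho[U]$ is finite. A smaller point: you propose to verify the Golden--Thompson hypothesis that $V$ be relatively $H_0$-bounded with bound $<1$, but this cannot be verified for an arbitrary positive $V$. You do not need it, because $H_0\dot{+}V\ge H_0$ already gives $Z_V\le1$ by min-max.
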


\subsection{Example}

The two-sided Bogoliubov inequality can be applied in numerical simulations to estimate finite-size effects \cite{Reible2022, Reible2025apx, Reible2023, Reible2025camcos}. In particular, it can be used to justify the simulation of a small subsystem instead of the computationally unfeasible total system.

To illustrate the underlying principle  \cite[Sect.~5]{Reible2022},  consider a quantum system located in the region $\Sigma \subset \R^3$ composed of $N \in \N$ electrons with positions $x_i \in \R^3$, $i \in \set{1, \dotsc, N}$, and $M \in \N$ nuclei with charges $Z_j \in \Z$ and positions $X_j \in \R^3$, $j \in \set{1, \dotsc, M}$. The electronic Hamiltonian for this problem is given in atomic units by
\begin{equation*}
    H = - \frac{1}{2} \sum_{i=1}^{N} \Delta_{x_i} + \sum_{i=1}^{N} \sum_{j>i}^{N} \frac{1}{\norm{x_i - x_j}} - \sum_{i=1}^{M} \sum_{j=1}^{N} \frac{Z_i}{\norm{X_i - x_j}} \, .
\end{equation*}
Suppose that the system is divided into two smaller subsystems $\Sigma_1, \Sigma_2 \subset \R^3$ such that $\Sigma = \Sigma_1 \sqcup \Sigma_2$; this situation is illustrated in \cref{fig:finiteSizeApproximation}. If, say, $n$ electrons and $m$ nuclei are contained in the region $\Sigma_1$, that is, $x_i, X_j \in \Sigma_1$ for all $1 \le i \le n$ and $1 \le j \le m$, then the Hamiltonian describing just this subsystem takes the form
\begin{equation*}
    H_0^{(1)} = - \frac{1}{2} \sum_{i=1}^{n} \Delta_{x_i} + \sum_{i=1}^{N} \sum_{j>i}^{N} \frac{1}{\norm{x_i - x_j}} - \sum_{i=1}^{m} \sum_{j=1}^{n} \frac{Z_i}{\norm{X_i - x_j}} \, .
\end{equation*}
Similarly, the Hamiltonian describing only the part of the system located in the region $\Sigma_2$, i.e., $x_i, X_j \in \Sigma_2$ for all $1 \le i \le N - n$ and $1 \le j \le M - m$, is given by
\begin{equation*}
    H_0^{(2)} = - \frac{1}{2} \sum_{i=1}^{N-n} \Delta_{x_i} + \sum_{i=1}^{N-n} \sum_{j>i}^{N-n} \frac{1}{\norm{x_i - x_j}} - \sum_{i=1}^{M-m} \sum_{j=1}^{N-n} \frac{Z_i}{\norm{X_i - x_j}} \, .
\end{equation*}
Finally, the operator $U$ mediating the interaction between these two subsystems reads
\begin{equation*}
    U = \sum_{i=1}^{n} \sum_{j=1}^{N-n} \frac{1}{\norm{x_i - x_j}} - \sum_{i=1}^{M-m} \sum_{j=1}^{n} \frac{Z_i}{\norm{X_i - x_j}} - \sum_{i=1}^{m} \sum_{j=1}^{N-n} \frac{Z_i}{\norm{X_i - x_j}} \, .
\end{equation*}

\begin{figure}[t]
    \centering
    \scalebox{1.0}{
    \begin{tikzpicture}[node distance=5cm]
        \node (fullBox) [rectangle, thick, minimum width=6cm, minimum height=3cm, draw, text width=2cm, align=center] {};
        \node [anchor=north west, inner sep=7pt] at (fullBox.north west) {$\Sigma_1$};
        \node [anchor=north east, inner sep=7pt] at (fullBox.north east) {$\Sigma_2$};
        \node [anchor=west, inner sep=7pt, text width=2.5cm, align=center] at (fullBox.west) {Full system \\[4pt] $\Sigma = \Sigma_1 \cup \Sigma_2$};
        \node [anchor=east, inner sep=7pt, text width=2.5cm, align=center] at (fullBox.east) {$H = H_0 + U$};
        \node [text width=5cm, align=center, below of=fullBox, yshift=3cm] {$\rho = \frac{1}{Z} \, \ee^{- \beta H}$, $Z = \tr(\ee^{- \beta H})$};
        \draw[dashed, very thick] (fullBox.south) -- (fullBox.north);

        \node (subBoxes) [rectangle, thick, minimum width=6cm, minimum height=3cm, draw, text width=3cm, align=center, right of=fullBox, xshift=3.5cm] {};
        \node [anchor=north west, inner sep=5pt] at (subBoxes.north west) {$\Sigma_1, H_0^{(1)}$};
        \node [anchor=north east, inner sep=5pt] at (subBoxes.north east) {$\Sigma_2, H_0^{(2)}$};
        \node [anchor=west, inner sep=7pt, text width=2.5cm, align=center] at (subBoxes.west) {Uncoupled subsystems};
        \node [anchor=east, inner sep=7pt, text width=2.5cm, align=center] at (subBoxes.east) {$H_0 =$ \\[2pt] $H_0^{(1)} + H_0^{(2)}$};
        \node [text width=5.5cm, align=center, below of=subBoxes, yshift=3cm] {$\rho_0 = \frac{1}{Z_0} \, \ee^{- \beta H_0}$, $Z_0 = \tr(\ee^{- \beta H_0})$};
        
        \draw[very thick] (subBoxes.south) -- (subBoxes.north);
        \draw[->, thick] ([xshift=0.25cm]fullBox.east) -- ([xshift=-0.25cm]subBoxes.west);
    \end{tikzpicture}}
    \caption{Illustration of the division of a large thermodynamic quantum system into two non-interacting subsystems. Figure adapted from Ref.~\cite{Reible2025apx}.}
    \label{fig:finiteSizeApproximation}
\end{figure}

In this situation, an important question is whether simulating only one of the two subsystems yields reliable numerical results for the local electronic properties of the molecular liquid, and the above illustrates that this problem is equivalent to determining the degree of independence of the two subsystems $\Sigma_1$ and $\Sigma_2$ with respect to the total system $\Sigma$. To solve this problem, one would compute the expectation values $\EV_{\rho_0}[U]$ and $\EV_{\rho}[U]$, where $\rho$ corresponds to the canonical Gibbs state \eqref{eq:canonicalGibbsState} of the full system in the entire region $\Sigma$ and $\rho_0 = \rho_{0}^{(1)} \otimes \rho_{0}^{(2)}$, with $\rho_{0}^{(k)}$ given by \eqref{eq:freeGibbsState} in terms of the Hamiltonian $H_0^{(k)}$, $k \in \set{1, 2}$. Since the interaction $U$ is of one-body and two-body form, the density operator representations needed for the computations are one-body and two-body reduced density matrix terms, i.e., electron densities and two-body electron-electron correlation functions; these quantities are routinely computed in electronic structure calculations.

\section{von Neumann algebras, modular theory, and relative entropy}\label{sec:vonNeumann}

To pave the way for the extension of the two-sided Bogoliubov inequality, we will now briefly recall some background material from the theory of von Neumann algebras, in particular, concerning modular theory and the relative entropy functional; all results stated in this section are well-known in the literature.

\subsection{Some fundamentals}\label{subsec:vonNeumann}

Let $\MH$ be a Hilbert space. A \emph{von Neumann algebra} is a strongly (equivalently, weakly) closed $\ast$-subalgebra $\MFM \subset \BO(\MH)$ of bounded linear operators on $\MH$ containing the identity $\id_\MH$. Since $\BO(\MH)$ is isomorphic to the dual space of $(\NO(\MH), \ndot_\mathrm{tr})$, one can also consider the weak-$\ast$ topology $\sigma (\BO(\MH), \NO(\MH))$ on $\MFM$, referred to as the \emph{$\sigma$-weak topology}, and show that every von Neumann algebra is $\sigma$-weakly closed.

Let $\MFM_\ast \subset \MFM^\ast$ denote the set of all $\sigma$-weakly continuous linear functionals $\varphi : \MFM \to \C$. A positive element $\varphi \in \MFM_\ast$, that is, a functional satisfying $\varphi(A^\ast A) \ge 0$ for all $A \in \MFM$, is called a \emph{normal functional}, and the set of these is denoted by the symbol $\NF{\MFM}$; a normal functional $\varphi$ with $\varphi(\id_\MH) = 1$ is called a \emph{normal state}, and their set is denoted by $\NS{\MFM}$. Given a vector $\xi \in \MH$, the functional $\omega_\xi : \MFM \to \C$ defined by
\begin{equation*}
    \omega_\xi(A) \ce \braket{\xi, A \xi} \com A \in \MFM \, ,
\end{equation*}
is called vector functional induced by $\xi$, and $\xi$ is called its \emph{vector representative}; it holds that $\omega_\xi \in \NF{\MFM}$ and $\omega_\xi \in \NS{\MFM}$ iff $\norm{\xi} = 1$. The \emph{support} of $\varphi \in \NF{\MFM}$, denoted by $\ssupp(\varphi)$ or $\ssupp_\varphi$, is defined to be the smallest projection in $\MFM$ satisfying
\begin{equation*}
    \varphi (\ssupp_\varphi) = \varphi(\id_\MH) \, .
\end{equation*}
Let $\comm{\MFM}$ denote the \emph{commutant} of $\MFM$, that is, the set of all $B \in \BO(\MH)$ which commute with every $A \in \MFM$. Then the support $\ssupp_\xi \ce \ssupp(\omega_\xi)$ of the vector functional $\omega_\xi = \braket{\xi, \cdot\, \xi}$, $\xi \in \MH$, is determined as follows \cite[Sect. 5.22]{StratilaZsido2019}:
\begin{equation*}
    \ran (\ssupp_\xi) = \oln{\comm{\MFM} \xi} \, .
\end{equation*}

A closed, densely defined linear operator $T : \MH \supset \dom(T) \to \MH$ is said to be \emph{affiliated with} the von Neumann algebra $\MFM \subset \BO(\MH)$, and the set of these operators is denoted by $\AE{\MFM}$, if for all $A^\prime \in \comm{\MFM}$ there holds
\begin{equation*}
    A^\prime \dom(T) \subset \dom(T) \tand A^\prime T = T A^\prime \ \text{on $\dom(T)$} \, .
\end{equation*}
If $T$ is self-adjoint, this is to say that $T \in \AE{\MFM}$ if and only if $T$ strongly commutes with all elements of the commutant $\comm{\MFM}$. Moreover, if $T \in \AE{\MFM}$ is self-adjoint and $f : \R \to \C$ is a bounded Borel-measurable function, it holds that $f(T) \in \AE{\MFM}$ \cite[p. 226]{StratilaZsido2019}.

\subsection{Tomita-Takesaki modular theory}\label{subsec:modularTheory}

Let $\MFM$ be a von Neumann algebra, and suppose that there exists a faithful $\omega \in \NF{\MFM}$, i.e., $\omega(A^\ast A) = 0$ implies $A = 0$ for all $A \in \MFM$; in this case, $\MFM$ is called \emph{$\sigma$-finite}. Passing to the GNS representation of $\MFM$ with respect to $\omega$, we may assume that $\MFM$ is represented on a Hilbert space $\MH$, and that there exists a cyclic and separating vector $\Omega \in \MH$ (meaning that $\oln{\MFM \Omega} = \MH$ and $A \Omega = 0$ implies $A = 0$ for all $A \in \MFM$) such that $\omega = \omega_\Omega = \braket{\Omega, \cdot\, \Omega}$ \cite[p. 13]{Hiai2021}.

The following construction is due to Tomita and Takesaki \cite{Takesaki1970}, constituting the basis of modular theory in von Neumann algebras; see also, e.g., Refs.~\cite{Hiai2021, Stratila2020, StratilaZsido2019} for exhaustive treatments. Consider the densely defined \emph{Tomita operator} $S_\Omega : \MH \supset \MFM \Omega \to \MH$ given by
\begin{equation}\label{eq:TomitaOperator}
    S_{\Omega} (A \Omega) = A^\ast \Omega \com A \in \MFM \, .
\end{equation}
It holds that $S_\Omega$ is anti-linear and closable. Denoting the closure by the same symbol, we can write its polar decomposition as
\begin{equation}\label{eq:TomitaPolarDecomp}
    S_\Omega = J_\Omega \Delta_\Omega^{1/2} \, .
\end{equation}
$J_\Omega$ is an anti-linear unitary involution ($J_\Omega^2 = \id$ and $J_\Omega^\ast = J_\Omega$), called \emph{modular conjugation}, and $\Delta_\Omega$ is a positive self-adjoint linear operator, called \emph{modular operator}, satisfying \cite[p. 14]{Hiai2021}
\begin{equation}\label{eq:modularData}
    J_\Omega \Omega = \Omega \com \Delta_\Omega \Omega = \Omega \tand J_\Omega \Delta_\Omega J_\Omega = \Delta_\Omega^{-1} \, .
\end{equation}
The fundamental result in modular theory is Tomita's theorem, which actually holds for arbitrary (not necessarily $\sigma$-finite) von Neumann algebras \cite[Thm. 2.2]{Hiai2021}.

\begin{theorem}[Tomita]\label{thm:Tomita}
    Let $\MFM \subset \BO(\MH)$ be a $\sigma$-finite von Neumann algebra with cyclic and separating vector $\Omega \in \MH$. Then
    \begin{equation*}
        J_\Omega \MFM J_\Omega = \comm{\MFM} \quad \text{and} \quad \Delta_\Omega^{\ii t} \MFM \Delta_\Omega^{- \ii t} = \MFM \quad (t \in \R) \, .
    \end{equation*}
\end{theorem}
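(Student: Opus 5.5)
The plan is to follow the classical Tomita--Takesaki route via the analytic structure of the modular objects. The strategy is to show that the commutant $\comm{\MFM}$ admits its own Tomita operator, identify it with the adjoint of $S_\Omega$, and then transfer $\MFM$ onto $\comm{\MFM}$ using $J_\Omega$ and $\Delta_\Omega$.

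\textbf{Step 1 (the commutant's Tomita operator).} Since $\Omega$ is cyclic and separating for $\MFM$, it is also cyclic and separating for $\comm{\MFM}$. We therefore define $F_\Omega(A'\Omega)=A'^\ast\Omega$ for $A'\in\comm{\MFM}$. A direct computation gives
\begin{equation*}
    \braket{A'\Omega, S_\Omega A\Omega}=\oln{\braket{F_\Omega A'\Omega, A\Omega}} .
\end{equation*}
Hence $F_\Omega\subset S_\Omega^\ast$ and $S_\Omega\subset F_\Omega^\ast$, which already shows closability.

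\textbf{Step 2 (the key identity).} We then prove the reverse inclusion $\oln{F_\Omega}=S_\Omega^\ast$. Take $\xi\in\dom(S_\Omega^\ast)$ and put $\eta=S_\Omega^\ast\xi$. The aim is to find operators $B'_n\in\comm{\MFM}$ with $B'_n\Omega\to\xi$ and $B'^\ast_n\Omega\to\eta$. The standard approach defines unbounded operators $A\Omega\mapsto A\xi$ and $A\Omega\mapsto A\eta$ on $\MFM\Omega$. One checks that they are closable, are affiliated with $\comm{\MFM}$, and are adjoint to one another. Polar decomposition and spectral cutoffs then produce the required $B'_n$.

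\textbf{Step 3 (consequences for the modular data).} Step 2 gives $S_\Omega^\ast=J_\Omega\Delta_\Omega^{-1/2}$. From $S_\Omega^2\subset\id$ we obtain $J_\Omega^2=\id$ and $J_\Omega\Delta_\Omega J_\Omega=\Delta_\Omega^{-1}$. In particular, $J_\Omega\Delta_\Omega^{\ii t}=\Delta_\Omega^{\ii t}J_\Omega$.

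\textbf{Step 4 (the main analytic step).} Here we prove both claimed equalities. I would use the resolvent approach of van Daele. For $\lambda>0$ and $A\in\MFM$, we find an operator $A'_\lambda\in\comm{\MFM}$ satisfying
\begin{equation*}
    (\lambda+\Delta_\Omega)^{-1}\ \text{-type relation}:\quad \braket{F\xi, A'_\lambda\eta}\ \text{expressed through}\ \lambda^{1/2}\bigl(\Delta_\Omega^{1/2}+\lambda\Delta_\Omega^{-1/2}\bigr)^{-1}\ \text{applied to}\ J_\Omega A J_\Omega .
\end{equation*}
The operator $A'_\lambda$ is built with the Riesz representation theorem, using boundedness of the sesquilinear form $(A'\Omega,B'\Omega)\mapsto\braket{A'\Omega,AB'\Omega}$. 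The resulting integral representation is
\begin{equation*}
    \frac{\pi}{\cosh(\pi t)}\ \text{-weighted Fourier transform in }t\ \text{of}\ \Delta_\Omega^{\ii t}J_\Omega AJ_\Omega\Delta_\Omega^{-\ii t}.
\end{equation*}
Injectivity of the Fourier transform, via the Stone--Weierstrass or Fourier uniqueness argument, gives
\begin{equation*}
    \Delta_\Omega^{\ii t}J_\Omega \MFM J_\Omega\Delta_\Omega^{-\ii t}\subset\comm{\MFM}
\end{equation*}
for all $t$. Setting $t=0$ gives $J_\Omega\MFM J_\Omega\subset\comm{\MFM}$. By the symmetric argument for $\comm{\MFM}$, using $F_\Omega$ in place of $S_\Omega$, and by $J_\Omega^2=\id$, we get equality. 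Conjugating by $J_\Omega$, which commutes with $\Delta_\Omega^{\ii t}$, then yields $\Delta_\Omega^{\ii t}\MFM\Delta_\Omega^{-\ii t}\subset\MFM$. Replacing $t$ by $-t$ gives equality, and the bicommutant theorem is used to pass between $\MFM$ and $\comm{\MFM}$.

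The main obstacle is Step 4, specifically constructing $A'_\lambda$ and justifying the integral identity on the unbounded domains. Step 2 is the secondary technical point. Since this is the standard theorem cited from \cite[Thm. 2.2]{Hiai2021}, in the paper one would simply refer to that source.
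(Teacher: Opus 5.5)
Your Steps 1--3 are correct, and so is the bookkeeping at the end of Step 4: taking $t=0$; rerunning the argument for $\comm{\MFM}$, whose modular data are $(J_\Omega,\Delta_\Omega^{-1})$ because $\oln{F_\Omega}=S_\Omega^\ast$; using $J_\Omega\Delta_\Omega^{\ii t}J_\Omega=\Delta_\Omega^{\ii t}$; replacing $t$ by $-t$; and applying Fourier uniqueness to normal functionals vanishing on $\comm{\MFM}$. Van Daele's route is a legitimate way to prove the theorem. The paper itself gives no proof and only cites \cite[Thm.~2.2]{Hiai2021}. Two minor points: $S_\Omega^\ast=J_\Omega\Delta_\Omega^{-1/2}$ follows from $S_\Omega=S_\Omega^{-1}$ and the polar decomposition, not from Step 2; and the Fourier weight is $1/(2\cosh\pi t)$, not $\pi/\cosh\pi t$. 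The genuine gap is Step 4, which carries the whole content of Tomita's theorem. You never state the identity $A'_\lambda$ is supposed to satisfy, and the mechanism you propose does not produce an element of $\comm{\MFM}$.

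First, the construction of $A'_\lambda$ fails as described. The form $(A'\Omega,B'\Omega)\mapsto\braket{A'\Omega,AB'\Omega}$ is trivially bounded by $\norm{A}$, and its Riesz representative for the inner product of $\MH$ is just $A\in\MFM$. If you use instead the weighted inner product $\lambda\braket{\xi,\eta}+\braket{\Delta_\Omega^{-1/2}\xi,\Delta_\Omega^{-1/2}\eta}$ on $\dom(\oln{F_\Omega})=\dom(\Delta_\Omega^{-1/2})$, Riesz gives the operator $(\lambda+\Delta_\Omega^{-1})^{-1}A$, which is not in $\comm{\MFM}$. It also gives the vector $\eta_\lambda=(\lambda+\Delta_\Omega^{-1})^{-1}A\Omega$.

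The missing key lemma is that $\eta_\lambda=A'_\lambda\Omega$ for a \emph{bounded} $A'_\lambda\in\comm{\MFM}$ (in fact $\norm{A'_\lambda}\le\norm{A}/(2\sqrt{\lambda})$). Riesz alone does not give this; it needs its own argument. For example, take $A\ge 0$, so that $x'\mapsto\braket{\Omega,Ax'\Omega}$ is dominated by $\norm{A}\,\omega_\Omega$ on $\comm{\MFM}_+$. A $\lambda$-symmetric Radon--Nikodym argument then yields $A'_\lambda\in\comm{\MFM}$ with $\braket{\Omega,Ax'\Omega}=\lambda\,\omega_\Omega(x'A'_\lambda)+\omega_\Omega(A'_\lambda x')$ for all $x'\in\comm{\MFM}$. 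This translates into the resolvent identity via $\oln{F_\Omega}=J_\Omega\Delta_\Omega^{-1/2}$.

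Second, even with $A'_\lambda$ in hand, it is not enough to check that $A'_\lambda$ and the $\lambda$-weighted average of $\Delta_\Omega^{\ii t}J_\Omega A^\ast J_\Omega\Delta_\Omega^{-\ii t}$ agree on $\Omega$. That average is not yet known to lie in $\comm{\MFM}$, so agreement on $\Omega$ does not identify the two operators. You must prove the identity for matrix elements on a dense set, for instance between $\MFM\Omega\subset\dom(\Delta_\Omega^{1/2})$ and $\comm{\MFM}\Omega\subset\dom(\Delta_\Omega^{-1/2})$, by analytic continuation in a strip; only then can Fourier uniqueness be invoked. As written, Step 4 describes the shape of the answer rather than proving it.
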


\subsection{Standard form representation}\label{subsec:standardForm}

For the discussion of the relative modular operator as well as the development of perturbation theory, the standard form representation of von Neumann algebras, developed independently by H.~Araki \cite{Araki1974}, A.~Connes \cite{Connes1974}, and U.~Haagerup \cite{Haagerup1975}, is an indispensable tool.

A von Neumann algebra in \emph{standard form} is a quadruple $(\MFM, \MH, J, \NPC)$ consisting of a von Neumann algebra $\MFM$ represented faithfully on $\MH$, an anti-linear unitary involution $J : \MH \to \MH$, and a closed self-dual cone $\NPC \subset \MH$ (called \emph{natural positive cone}) satisfying
\begin{enumerate}
    \item \label{enu:standardFormTomita} $J \MFM J = \comm{\MFM}$;
    \item \label{enu:standardFormJCenter} $J A J = A^\ast$ for all $A \in \MFM \cap \comm{\MFM}$;
    \item \label{enu:standardFormJIdentityNPC} $J \xi = \xi$ for all $\xi \in \NPC$;
    \item \label{enu:standardFormJConjugationNPC} $AJAJ \NPC \subset \NPC$ for all $A \in \MFM$.
\end{enumerate}

For a $\sigma$-finite $\MFM$ with cyclic and separating vector $\Omega \in \MH$, one can take $J$ to be the modular conjugation $J_\Omega$ from \cref{subsec:modularTheory} and define $\NPC \ce \oln{\set{A J A \Omega : A \in \MFM}}$; thus, every $\sigma$-finite von Neumann algebra possesses a standard form representation \cite[Thm. 3.2]{Hiai2021}. More generally, Haagerup \cite{Haagerup1975} showed that every von Neumann algebra possesses a standard form and that the quadruple $(\MFM, \MH, J, \NPC)$ is unique up to unitary equivalence \cite[Sect. 10.26]{StratilaZsido2019}.

The following simple fact \cite[Lem. 3.9]{Hiai2021} is used often below.

\begin{lemma}\label{lem:cyclicSeparating}
    Let $(\MFM, \MH, J, \NPC)$ be a von Neumann algebra in standard form and $\Omega \in \NPC$. Then $\Omega$ is cyclic for $\MFM$ if and only if $\Omega$ is separating for $\MFM$.
\end{lemma}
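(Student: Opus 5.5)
By the symmetry $\MFM \leftrightarrow \comm{\MFM}$ it is enough to show that a vector $\Omega \in \NPC$ that is cyclic for $\MFM$ is also separating for $\MFM$, and vice versa. I would use three standard facts:
\begin{itemize}
    \item $\Omega$ is separating for $\MFM$ if and only if it is cyclic for $\comm{\MFM}$. Indeed, if $\oln{\comm{\MFM}\Omega} = \MH$ and $A\Omega = 0$, then $A A^\prime \Omega = A^\prime A \Omega = 0$ for every $A^\prime \in \comm{\MFM}$, so $A = 0$.
    \item Property \ref{enu:standardFormTomita} gives $J\MFM J = \comm{\MFM}$ and $J \comm{\MFM} J = \MFM$.
    \item Property \ref{enu:standardFormJIdentityNPC} gives $J\Omega = \Omega$ for $\Omega \in \NPC$.
\end{itemize}

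For the forward direction, suppose $\oln{\MFM\Omega} = \MH$. Apply the antiunitary $J$, which maps dense sets to dense sets:
\[
J \MFM \Omega = J\MFM J \, J\Omega = \comm{\MFM}\Omega .
\]
Hence $\comm{\MFM}\Omega$ is dense, i.e.\ $\Omega$ is cyclic for $\comm{\MFM}$. By the first fact, $\Omega$ is separating for $\MFM$.

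For the converse, suppose $\Omega$ is separating for $\MFM$. We need $\Omega$ to be cyclic for $\comm{\MFM}$, since then the same $J$-argument (now using $J\comm{\MFM}J = \MFM$) shows $\Omega$ is cyclic for $\MFM$. This is the delicate step. "Separating for $\MFM$ implies cyclic for $\comm{\MFM}$" is false in general, but in terms of supports it holds exactly when the support projection of $\omega_\Omega$ on $\comm{\MFM}$ equals the identity.

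Let $p = \ssupp_\Omega \in \MFM$ be the projection onto $\oln{\comm{\MFM}\Omega}$. Then $(\id - p)\Omega = 0$. Since $\Omega$ is separating, $\id - p = 0$, so $p = \id$ and $\oln{\comm{\MFM}\Omega} = \MH$. This uses only that $p \in \MFM$, which holds because $\oln{\comm{\MFM}\Omega}$ is $\comm{\MFM}$-invariant; this is the fact $\ran(\ssupp_\xi) = \oln{\comm{\MFM}\xi}$ recalled in \cref{subsec:vonNeumann}. So the converse in fact follows from the support characterization together with $J$-invariance, and the cone $\NPC$ is needed only to guarantee $J\Omega = \Omega$.

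Thus both directions reduce to the identity $J\MFM J\,\Omega = \comm{\MFM}\Omega$ combined with the support argument. The only point needing care is keeping track of which algebra the support projection lives in, which the paper's earlier description of $\ssupp_\xi$ settles.
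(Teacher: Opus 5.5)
Your argument is correct and is the standard proof behind the result, which the paper cites from Hiai (Lem.~3.9) without proving it. The proof uses $J\Omega=\Omega$ and $J\MFM J=\comm{\MFM}$ to get $J\,\oln{\MFM\Omega}=\oln{\comm{\MFM}\Omega}$, together with the general equivalence ``cyclic for $\comm{\MFM}$ $\Leftrightarrow$ separating for $\MFM$.''

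One correction: your aside that ``separating for $\MFM$ implies cyclic for $\comm{\MFM}$'' is false in general is wrong. The support-projection argument you then give (the projection onto $\oln{\comm{\MFM}\Omega}$ lies in $\MFM''=\MFM$ and fixes $\Omega$) works for every von Neumann algebra. What fails in general, and what the standard form repairs via $J$, is ``separating for $\MFM$ implies cyclic for $\MFM$.''
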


The next very important theorem \cite[Thm. 3.12]{Hiai2021}, \cite[Thm. 10.25]{StratilaZsido2019}, relating the cone $\NPC$ to the set $\NF{\MFM}$, is due to Haagerup.

\begin{theorem}\label{thm:vectorRepresentation}
    Let $(\MFM, \MH, J, \NPC)$ be a standard form. Then the mapping $\NPC \owns \xi \mapsto \omega_\xi \in \NF{\MFM}$ is a homeomorphism with respect to the norm topologies. In particular, every normal functional $\varphi \in \NF{\MFM}$ is represented by a vector $\xi_\varphi \in \NPC$ from the natural positive cone.
\end{theorem}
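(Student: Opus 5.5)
The plan is to reduce everything to the norm estimate
\begin{equation*}
    \norm{\xi - \eta}^2 \le \norm{\omega_\xi - \omega_\eta} \le \norm{\xi - \eta} \, \norm{\xi + \eta} \com \xi, \eta \in \NPC \, ,
\end{equation*}
and then to prove surjectivity separately by a density-plus-closedness argument. The upper bound is elementary. For $A \in \MFM$ one has the polarization identity $(\omega_\xi - \omega_\eta)(A) = \frac{1}{2}\bigl(\braket{\xi - \eta, A(\xi + \eta)} + \braket{\xi + \eta, A(\xi - \eta)}\bigr)$. Taking the supremum over $\norm{A}_\mop \le 1$ gives continuity of $\xi \mapsto \omega_\xi$ in norm.

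For the lower bound I would use self-duality of $\NPC$. Since $\NPC$ is a closed convex cone, the real vector $\zeta \ce \xi - \eta$ (real in the sense $J\zeta = \zeta$) decomposes uniquely as $\zeta = \zeta_+ - \zeta_-$. Here $\zeta_\pm \in \NPC$ and $\braket{\zeta_+, \zeta_-} = 0$; $\zeta_+$ is the metric projection of $\zeta$ onto $\NPC$, and self-duality gives $\zeta_- \in \NPC$. The key auxiliary lemma is that for $\xi_1, \xi_2 \in \NPC$, $\braket{\xi_1, \xi_2} = 0$ forces the supports $\ssupp_{\xi_1}, \ssupp_{\xi_2}$ to be orthogonal. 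I would prove it from property \ref{enu:standardFormJConjugationNPC} applied to $AJAJ$ with $A$ a spectral projection, together with $\ran(\ssupp_\xi) = \oln{\comm{\MFM}\xi}$ and $J\MFM J = \comm{\MFM}$. Then $u \ce \ssupp_{\zeta_+} - \ssupp_{\zeta_-}$ is a self-adjoint contraction in $\MFM$, and I would evaluate $(\omega_\xi - \omega_\eta)(u)$. Writing $\xi + \eta = \zeta + 2\eta$ and using $JuJ$, which lies in $\comm{\MFM}$, the quantity $\braket{\zeta, u\zeta}$ equals $\norm{\zeta}^2$. The cross terms are then handled by positivity: $\braket{\zeta_\pm, \eta} \ge 0$ by self-duality, and the support projections are arranged so that these terms come with the right sign. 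This gives $\norm{\omega_\xi - \omega_\eta} \ge \norm{\xi - \eta}^2$, hence injectivity and continuity of the inverse on the image.

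For surjectivity, the estimate shows that the image $\set{\omega_\xi : \xi \in \NPC}$ is norm-closed in $\NF{\MFM}$. A norm-Cauchy sequence $\omega_{\xi_n}$ yields a Cauchy sequence $\xi_n$ in the closed set $\NPC$, and the upper bound identifies the limit. It therefore suffices to show the image is dense. Working in the $\sigma$-finite case with $\NPC$ realized via a cyclic separating $\Omega$, I would first treat functionals dominated by $\omega$ or of the form $\omega_{A\Omega}$, and then pass to general $\varphi \in \NF{\MFM}$. The passage uses the fact that every normal functional is a norm limit of functionals $\varphi_\epsilon \le c_\epsilon\,\omega$, obtained for instance by cutting down with $(\varphi + \epsilon\omega)$-type Radon--Nikodym arguments. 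For such a $\varphi$, Sakai's Radon--Nikodym theorem gives $h' \in \comm{\MFM}_+$ with $\varphi = \omega_{h'\Omega}$. The task is then to replace $h'\Omega$ by a vector in $\NPC$ representing the same functional. I would aim for a polar decomposition $h'\Omega = v'\,\xi$ with $v' \in \comm{\MFM}$ a partial isometry and $\xi \in \NPC$. Then $\omega_\xi = \omega_{h'\Omega}$ because $v'^\ast v'$ is the identity on $\ran(\ssupp_\xi)$ relative to $\comm{\MFM}$. The general, non-$\sigma$-finite case would then follow by the uniqueness of standard form and a reduction to the $\sigma$-finite reduced algebras $\ssupp_\varphi \MFM \ssupp_\varphi$.

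I expect this last step, constructing the $\NPC$-representative (the polar decomposition of vectors relative to $\NPC$), to be the main obstacle. I would first try to do it for $\varphi \le c\,\omega$ via the analytic vector $\Delta_\Omega^{1/4}$ applied to suitable elements. Specifically, I would use that $\NPC = \oln{\Delta_\Omega^{1/4}\MFM_+\Omega}$ and solve $\omega_{\Delta_\Omega^{1/4}a\Omega} = \varphi$ for $a \in \MFM_+$ approximately. Closedness of the image, established above, would then absorb the approximation.
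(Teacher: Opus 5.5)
\textbf{Main gap: surjectivity.} Your architecture matches the standard proof in the sources the paper quotes (\cite[Thm.~3.12]{Hiai2021}, \cite[Thm.~10.25]{StratilaZsido2019}; the paper itself gives no proof): the two-sided estimate gives injectivity, bicontinuity and norm-closedness of the image, and surjectivity remains. That last half has a genuine gap. After reducing to $\varphi = \omega_{h'\Omega}$ with $h' \in \comm{\MFM}_+$, you never produce a vector of $\NPC$ representing $\varphi$, not even approximately.

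The polar decomposition $h'\Omega = v'\xi$ with $\xi \in \NPC$ is equivalent to the existence statement you are proving. Indeed, given a representative $\xi$ of $\omega_{h'\Omega}$, the map $A\xi \mapsto A h'\Omega$ is isometric and defines $v' \in \comm{\MFM}$. So invoking it is circular.

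``Approximately solving $\omega_{\Delta_\Omega^{1/4}a\Omega} = \varphi$ for $a \in \MFM_+$'' merely restates the density you need. The map $a \mapsto \omega_{\Delta_\Omega^{1/4}a\Omega}$ is quadratic, and nothing in your toolkit inverts it. Take $\MFM = \BO(\MH)$ on $\HS(\MH)$, $\Omega = \rho^{1/2}$, and $h'$ right multiplication by $k \ge 0$. Then the equation reads $(\rho^{1/4}a\rho^{1/4})^2 = \rho^{1/2}k^2\rho^{1/2}$. It asks you to extract the square root $(\rho^{1/2}k^2\rho^{1/2})^{1/2}$, which \emph{is} the cone representative.

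In general this noncommutative square root comes from relative modular theory. In the $\sigma$-finite case, with $\omega = \omega_\Omega$ faithful and $\Omega \in \NPC$, the representative of $\varphi$ is $\Delta_{\varphi,\omega}^{1/2}\Omega$. Here $\Delta_{\varphi,\omega}$ is Araki's relative modular operator, defined intrinsically by the closable quadratic form $A\Omega \mapsto \varphi(AA^\ast)$; in the type I case it gives $\rho_\varphi^{1/2}$. Showing that this vector lies in $\NPC$ and represents $\varphi$ is where the polar decomposition of the relative Tomita operator and, e.g., Connes' $2\times2$ matrix trick are genuinely needed. Once faithful $\varphi$ are handled this way, your closedness argument applied to $\varphi + \epsilon\omega \to \varphi$ gives all of $\NF{\MFM}$. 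The detour through $\varphi \le c\,\omega$ and Sakai's theorem then becomes unnecessary.

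\textbf{Two smaller repairs in the estimate.} First, your identity $\braket{\zeta, u\zeta} = \norm{\zeta}^2$ is false. With orthogonal supports, $u\zeta = \zeta_+ + \zeta_-$, so $\braket{\zeta, u\zeta} = \norm{\zeta_+}^2 - \norm{\zeta_-}^2$. Mere non-negativity of $\braket{\zeta_\pm, \eta}$ then only yields $(\omega_\xi - \omega_\eta)(u) \ge \norm{\zeta_+}^2 - \norm{\zeta_-}^2$. To fix it, use $\eta = \xi - \zeta_+ + \zeta_-$ and $\zeta_+ \perp \zeta_-$. These give $\braket{\zeta_-, \eta} = \braket{\zeta_-, \xi} + \norm{\zeta_-}^2 \ge \norm{\zeta_-}^2$. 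Hence $(\omega_\xi - \omega_\eta)(u) = \norm{\zeta_+}^2 - \norm{\zeta_-}^2 + 2\braket{\zeta_+ + \zeta_-, \eta} \ge \norm{\xi - \eta}^2$.

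Second, the orthogonal-supports lemma does not follow from property~(d) with spectral projections alone. The usual proof passes to the reduced standard form of $e\MFM e$ on $eJeJ\MH$, with $e = \ssupp_{\zeta_+}$, where $\zeta_+$ is cyclic and separating. There one uses $J = J_{\zeta_+}$ and that the cone equals $\oln{\Delta_{\zeta_+}^{1/4}(e\MFM e)_+\zeta_+}$. Then $0 \le \braket{\Delta_{\zeta_+}^{1/4}A\zeta_+, \zeta_-} \le \braket{\zeta_+, \zeta_-} = 0$ for $0 \le A \le e$, which forces $eJeJ\zeta_- = 0$. A second reduction, to $\ssupp_{\zeta_-}$, upgrades this to $e\zeta_- = 0$.
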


Another important property of the standard form $(\MFM, \MH, J, \NPC)$ is that there exists a unitary representation of the group $\Aut(\MFM)$ of $\ast$-automorphisms of $\MFM$ on the Hilbert space $\MH$; see, e.g., Ref.~\cite[Cor. 2.5.32]{BR1}. Denote by $\UO(\MH)$ the group of unitary operators on $\MH$ equipped with the strong operator topology, and equip $\Aut(\MFM)$ with the topology of pointwise $\sigma$-weak convergence.

\begin{theorem}\label{thm:unitaryReprAut}
    Let $(\MFM, \MH, J, \NPC)$ be a von Neumann algebra in standard form. Then there exists a unique continuous unitary representation $\Aut(\MFM) \owns \alpha \mapsto U_\alpha \in \UO(\MH)$ with the following properties:
    \begin{enumerate}
        \item $\alpha(A) = U_\alpha A U_\alpha^\ast$ for all $A \in \MFM$;
        \item $U_\alpha \NPC \subset \NPC$;
        \item $U_\alpha J = J U_\alpha$.
    \end{enumerate}
\end{theorem}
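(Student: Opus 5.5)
Given $\alpha\in\Aut(\MFM)$, I will construct $U_\alpha$ from Haagerup's vector-representation theorem (\cref{thm:vectorRepresentation}). For $\xi\in\NPC$, the functional $\omega_\xi\circ\inv{\alpha}$ is again a normal positive functional. Hence there is a unique $\eta\in\NPC$ with $\omega_\eta=\omega_\xi\circ\inv{\alpha}$, and I set $U_\alpha\xi\ce\eta$.

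This defines a map $\NPC\to\NPC$. It is a bijection with inverse $U_{\inv{\alpha}}$, and it is norm-continuous because the correspondence in \cref{thm:vectorRepresentation} is a homeomorphism. Property (b) then holds by construction, and uniqueness is forced: any $U$ satisfying (a) and (b) must send $\xi\in\NPC$ to a cone vector representing $\omega_\xi\circ\inv{\alpha}$.

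The key step is to show that this map is additive and isometric on the cone, with respect to the inner product, so that it extends to a unitary. Since $\NPC$ is self-dual it spans $\MH$: every vector decomposes as $\xi_1-\xi_2+\ii(\xi_3-\xi_4)$ with $\xi_k\in\NPC$, and $\MH_J\ce\{\zeta:J\zeta=\zeta\}=\NPC-\NPC$.

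I would proceed in three steps.
\begin{enumerate}
\item Show $\braket{U_\alpha\xi,U_\alpha\eta}=\braket{\xi,\eta}$ for $\xi,\eta\in\NPC$. For this I would use the characterization of the inner product of cone vectors through the standard form: $\braket{\xi,\eta}$ is determined by the pair of functionals $(\omega_\xi,\omega_\eta)$. Concretely, $\braket{\xi,\eta}=\omega(\ldots)$ via the relative modular operator, or equivalently via Araki--Connes' formula $\braket{\xi,\eta}=\sup$ over partitions (the transition probability). Since these quantities are invariant under composing both functionals with $\inv{\alpha}$, the inner products agree.
\item Deduce that $U_\alpha$ extends by real linearity to an isometry of $\MH_J$, and then complex-linearly to a unitary on $\MH$ commuting with $J$. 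This gives (c).
\item Verify (a). For $A\in\MFM$ and $\xi\in\NPC$, the vector $U_\alpha A U_\alpha^\ast$ should act like $\alpha(A)$. One route is to check on vectors $AJAJ\xi$, using property \ref{enu:standardFormJConjugationNPC} of the standard form: both $U_\alpha AJAJ\xi$ and $\alpha(A)J\alpha(A)J\,U_\alpha\xi$ lie in $\NPC$ and represent the same functional $B\mapsto\omega_\xi(A^\ast\inv{\alpha}(B)A)$, so they coincide. Polarization, together with the fact that $\MFM$ is spanned by its positive elements, then gives $U_\alpha A=\alpha(A)U_\alpha$ on a total set. Here I would use $J\alpha(A)J\in\comm{\MFM}$ and that $U_\alpha$ commutes with $J$.
\end{enumerate}

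Continuity of $\alpha\mapsto U_\alpha$, from pointwise $\sigma$-weak convergence to the strong topology, follows again from the homeomorphism in \cref{thm:vectorRepresentation}. Indeed, $\alpha_i\to\alpha$ implies $\omega_\xi\circ\inv{\alpha_i}\to\omega_\xi\circ\inv{\alpha}$ weakly. For continuity in norm I would use the estimate $\norm{\xi-\eta}^2\le\norm{\omega_\xi-\omega_\eta}$ combined with a standard argument upgrading weak convergence of states to norm convergence on unit vectors, since the convergence is uniform on the cone image.

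The group law $U_{\alpha\beta}=U_\alpha U_\beta$ follows from uniqueness of cone representatives.

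The main obstacle is step 1, inner-product preservation on $\NPC$. Additivity is not obvious, because $\xi\mapsto\omega_\xi$ is nonlinear. The alternative I would try first is the following. Since $U_\alpha\NPC=\NPC$, it suffices to have an intrinsic description of $\braket{\xi,\eta}$ in terms of $(\omega_\xi,\omega_\eta)$. For $\xi$ cyclic and separating, this description is $\braket{\xi,\eta}=\omega_\xi$-expression $\braket{\xi,\Delta_{\eta,\xi}^{1/2}\xi}$, i.e., it is given by the relative modular operator, which is built canonically from the two functionals, and hence is $\alpha$-covariant. I would then reduce the general case to the faithful case by density of cyclic cone vectors, using continuity.
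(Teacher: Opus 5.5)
The paper gives no proof of its own here (it cites Bratteli--Robinson, Cor.~2.5.32). Your argument has a genuine gap at the decisive point, property (a). In step~3 you claim that $AJAJ\xi$ represents $B\mapsto\omega_\xi(A^\ast BA)$. It does not. Since $JAJ\in\comm{\MFM}$,
\begin{equation*}
    \omega_{AJAJ\xi}(B)=\braket{\xi,\,A^\ast BA\,JA^\ast AJ\,\xi} \, ,
\end{equation*}
and the commutant factor $JA^\ast AJ$ does not drop out. For $\MFM=\BO(\MH)$ and $\xi=\rho^{1/2}$ one gets $AJAJ\xi=A\rho^{1/2}A^\ast$, whose square is not $A\rho A^\ast$. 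So comparing $U_\alpha(AJAJ\xi)$ with $\alpha(A)J\alpha(A)J\,U_\alpha\xi$ requires identities $\braket{U_\alpha\xi,\alpha(X)J\alpha(Y)J\,U_\alpha\xi}=\braket{\xi,XJYJ\xi}$ with $X,Y\in\MFM$. For cyclic separating $\xi$ this is $\alpha$-covariance of $\braket{X^\ast\xi,\Delta_\xi^{1/2}Y^\ast\xi}$, which is essentially what you are trying to prove. Nor can (a) be extracted from the cone data by polarization. Your construction directly gives $\braket{\zeta,U_\alpha^\ast\alpha(A)U_\alpha\zeta}=\braket{\zeta,A\zeta}$ for $\zeta\in\NPC$, but $\NPC$ is only a real cone, so this does not determine the operator. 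On $\HS(\MH)$, for instance, the nonzero operator $X\mapsto XA-AX$ ($A=A^\ast$) has vanishing quadratic form on $\HS(\MH)_+$.

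Step~1 has a related problem. The transition probability is not the cone inner product. For $\BO(\MH)$, $\braket{\rho^{1/2},\sigma^{1/2}}_{\mathrm{HS}}=\tr(\rho^{1/2}\sigma^{1/2})$, whereas $\sup\abs{\braket{\xi',\eta'}}$ over vector representatives equals $\tr\abs{\rho^{1/2}\sigma^{1/2}}$, which is larger unless $\rho$ and $\sigma$ commute. The formula $\braket{\xi,\eta}=\braket{\xi,\Delta_{\eta,\xi}^{1/2}\xi}$ is correct, but $\Delta_{\eta,\xi}$ is built from the vectors in the given representation, so its $\alpha$-covariance needs an argument. Either Connes' cocycle theorem or the unitary $V\colon A\xi\mapsto\alpha(A)\xi'$, with $\xi'\in\NPC$ representing $\omega_\xi\circ\inv{\alpha}$, would do.

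That unitary is the missing idea, and once you have it the cone-by-cone construction is superfluous; this is the standard proof. Assume $\MFM$ is $\sigma$-finite (your density reduction also presupposes this).
\begin{itemize}
\item Take $\Omega\in\NPC$ cyclic and separating, and $\Omega'\in\NPC$ representing the faithful functional $\omega_\Omega\circ\inv{\alpha}$. Then $\Omega'$ is cyclic by \cref{lem:cyclicSeparating}.
\item Set $U_\alpha A\Omega\ce\alpha(A)\Omega'$. It is isometric since $\norm{\alpha(A)\Omega'}^2=\omega_\Omega(A^\ast A)$, and (a) holds by construction.
\item From $U_\alpha S_\Omega U_\alpha^\ast=S_{\Omega'}$ you get $U_\alpha JU_\alpha^\ast=J_{\Omega'}$ and $U_\alpha\NPC=\oln{\set{\alpha(A)J_{\Omega'}\alpha(A)J_{\Omega'}\Omega' : A\in\MFM}}$. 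These are $J$ and $\NPC$, because the modular conjugation and natural cone are the same for every cyclic separating vector in $\NPC$.
\end{itemize}
Your uniqueness argument then applies verbatim, and your cone formula becomes a corollary.

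Finally, your continuity step is not valid, since weak convergence of normal states does not upgrade to norm convergence. For the pointwise $\sigma$-weak topology the continuity claim actually fails. Take $L^\infty[0,1]$ in its standard form on $L^2[0,1]$, and let $\alpha_n(f)=f\circ T_n$, with $T_n$ piecewise linear of slopes $\tfrac12$ and $\tfrac32$ on the two halves of each $[k/n,(k+1)/n]$. Then $\alpha_n\to\id$ pointwise $\sigma$-weakly, because $\abs{T_n(x)-x}\le 1/n$ and $\tfrac12\le T_n'\le\tfrac32$. But $U_{\alpha_n}g=(g\circ T_n)(T_n')^{1/2}$, so $U_{\alpha_n}$ maps the constant function $1$ to $(T_n')^{1/2}$, which does not converge to $1$, even weakly. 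What your estimate $\norm{\xi-\eta}^2\le\norm{\omega_\xi-\omega_\eta}$ does give is continuity for Haagerup's u-topology, i.e.\ $\norm{\varphi\circ\alpha_i-\varphi\circ\alpha}\to0$ for all $\varphi\in\MFM_\ast$. For one-parameter groups, pointwise $\sigma$-weak continuity implies u-continuity, and that is all the paper uses.
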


The following key example for a standard form, taken from Ref.~\cite[Exa. 3.6 (2)]{Hiai2021}, will be resorted to several times throughout the paper.

\begin{example}\label{exa:standardForm}
    Let $\MFM = \BO(\MH)$. This algebra acts faithfully on the Hilbert space $\HS(\MH)$ of Hilbert-Schmidt operators [whose inner product is $\braket{X, Y}_\mathrm{HS} = \tr(X^\ast Y)$] by left multiplication, $\pi(A) X \ce A X$, and one can show that the standard form representation of $\MFM$ is given by
    \begin{equation*}
        \bigl(\BO(\MH), \HS(\MH), J A = A^\ast, \HS(\MH)_+\bigr) \, ,
    \end{equation*}
    where $J : \HS(\MH) \to \HS(\MH)$, $X \mapsto X^\ast$, is given by the operation of taking adjoints, and $\HS(\MH)_+$ is the cone of positive Hilbert-Schmidt operators.
\end{example}

\subsection{Relative modular operator}

The construction of the modular operator with respect to a von Neumann algebra $\MFM \subset \BO(\MH)$ and a cyclic and separating vector $\Omega \in \MH$ can be generalized to the situation in which two (not necessarily faithful) positive normal functionals are given; this is due to H.~Araki \cite{Araki1977}.

Let $(\MFM, \MH, J, \NPC)$ be a standard form and $\varphi, \psi \in \NF{\MFM}$. According to \cref{thm:vectorRepresentation}, there exist $\Phi, \Psi \in \NPC$ such that $\varphi = \omega_\Phi$ and $\psi = \omega_\Psi$. Define an operator $S_{\Psi, \Phi}^0 : \MH \supset \dom(S_{\Psi, \Phi}^0) \to \MH$ by
\begin{equation*}
    S_{\Psi, \Phi}^0(A \Phi + \eta) \ce \ssupp_\Phi A^\ast \Psi
\end{equation*}
for $A \in \MFM$ and $\eta \in (\id_\MH - \ssupp_\Phi^\prime) \MH$, where $\ssupp_\Phi^\prime \in \comm{\MFM}$ is the support of $\omega_\Phi^\prime (A^\prime) \ce \braket{\Phi, A^\prime \Phi}$, $A^\prime \in \comm{\MFM}$, with $\ran (\ssupp_\Phi^\prime) = \oln{\MFM \Phi}$. It holds that $J \ssupp_\Phi J = \ssupp_\Phi^\prime$ \cite[p. 163]{Hiai2021}, and the domain of $S_{\Psi, \Phi}^0$ is given by
\begin{equation*}
    \dom(S_{\Psi, \Phi}^0) = \MFM \Phi + [\MFM \Phi]^\perp \, .
\end{equation*}
$S_{\Psi, \Phi}^0$ is densely defined, anti-linear, and closable \cite[Lem. 10.2]{Hiai2021}. Let $S_{\Psi, \Phi}$ denote its closure. (Note that if $\psi = \varphi$ [and hence $\Psi = \Phi$] is faithful, then $S_{\Phi, \Phi}$ agrees with the operator $S_\Phi$ defined in \cref{subsec:modularTheory}.) The \emph{relative modular operator} $\Delta_{\Psi, \Phi}$ is defined to be
\begin{equation*}
    \Delta_{\Psi, \Phi} \ce S_{\Psi, \Phi}^\ast S_{\Psi, \Phi} \, .
\end{equation*}

The following lemma shows how the relative modular operator behaves under scaling of the normal functionals \cite[p. 179]{Araki1977}; it will be needed later on in \cref{sec:extension}.

\begin{lemma}\label{lem:relModOpScaling}
    Let $\lambda, \mu \in (0, + \infty)$. Then
    \begin{equation*}
        \Delta_{\sqrt{\mu} \, \Psi, \sqrt{\lambda} \, \Phi} = \frac{\mu}{\lambda} \, \Delta_{\Psi, \Phi} \, .
    \end{equation*}
\end{lemma}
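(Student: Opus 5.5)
The plan is to work directly with the definition of $S_{\Psi,\Phi}$ as the closure of $S^0_{\Psi,\Phi}$ and compare the operators $S^0_{\sqrt{\mu}\Psi,\sqrt{\lambda}\Phi}$ and $S^0_{\Psi,\Phi}$ on a common core. Supports do not change under positive scaling: $\ssupp(\omega_{\sqrt{\lambda}\Phi}) = \ssupp(\omega_\Phi)$, since $\omega_{\sqrt{\lambda}\Phi} = \lambda\omega_\Phi$ and the defining condition $\varphi(\ssupp_\varphi)=\varphi(\id_\MH)$ is invariant under scaling. Likewise $\ssupp'$ is unchanged, and $\MFM(\sqrt{\lambda}\Phi) = \MFM\Phi$. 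Hence both operators have the same domain $\MFM\Phi + [\MFM\Phi]^\perp$.

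Next I compute the scaled operator on this domain. A vector $A\Phi + \eta$ can be written as $(\lambda^{-1/2}A)(\sqrt{\lambda}\Phi) + \eta$, so
\begin{equation*}
    S^0_{\sqrt{\mu}\Psi,\sqrt{\lambda}\Phi}(A\Phi+\eta) = \ssupp_\Phi (\lambda^{-1/2}A)^\ast \sqrt{\mu}\,\Psi = \sqrt{\mu/\lambda}\, S^0_{\Psi,\Phi}(A\Phi+\eta) \, ,
\end{equation*}
using that $\lambda^{-1/2}$ is real, so antilinearity of the adjoint introduces no conjugation issue. Well-definedness is inherited from that of $S^0_{\Psi,\Phi}$, which I take as known from \cite{Hiai2021}. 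Therefore $S^0_{\sqrt{\mu}\Psi,\sqrt{\lambda}\Phi} = c\,S^0_{\Psi,\Phi}$ with $c=\sqrt{\mu/\lambda}>0$.

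Taking closures, which commute with multiplication by a nonzero real scalar, gives $S_{\sqrt{\mu}\Psi,\sqrt{\lambda}\Phi} = c\,S_{\Psi,\Phi}$. For the adjoint of an antilinear operator, defined by $\braket{\xi, S\zeta} = \overline{\braket{S^\ast\xi,\zeta}}$, a real scalar again passes through unchanged, so $(cS)^\ast = cS^\ast$ with the same domain. Hence
\begin{equation*}
    \Delta_{\sqrt{\mu}\Psi,\sqrt{\lambda}\Phi} = c^2 S^\ast_{\Psi,\Phi}S_{\Psi,\Phi} = \frac{\mu}{\lambda}\Delta_{\Psi,\Phi} \, .
\end{equation*}

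The only point needing care is bookkeeping: checking that $\sqrt{\mu}\Psi,\sqrt{\lambda}\Phi$ still lie in $\NPC$ (they do, since $\NPC$ is a cone), and that supports and domains are unaffected by scaling. There is no analytic obstacle.
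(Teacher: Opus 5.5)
Your proof is correct. The paper gives no proof of its own and only cites Araki, where the identity is read off from the definition just as you do: supports and domains are unchanged, so $S_{\sqrt{\mu}\Psi,\sqrt{\lambda}\Phi} = \sqrt{\mu/\lambda}\,S_{\Psi,\Phi}$, and a positive real scalar passes through the closure and the antilinear adjoint.
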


We also heavily rely on the next result about convergence properties of the relative modular operator, taken from and proved in Ref.~\cite[Thm. 4.2]{DJP03}.

\begin{lemma}\label{lem:convRelModOp}
    Let $(\MFM, \MH, J, \NPC)$ be a standard form and $(\Phi_n)_{n \in \N} \subset \NPC$, $(\Psi_n)_{n \in \N} \subset \NPC$ be two sequences. Assume that the following properties are satisfied:
    \begin{enumerate}[label=\normalfont(\greek*)]
        \item $\Delta_{\Phi_n, \Psi_n} \to M$ in the strong resolvent sense;
        \item $\ssupp_{\Psi_n} \to \ssupp_\Psi$ in the strong operator topology;
        \item $\Phi_n \to \Phi$ and $\Psi_n \to \Psi$ weakly in $\MH$.
    \end{enumerate}
    Then it follows that $M = \Delta_{\Phi, \Psi}$.
\end{lemma}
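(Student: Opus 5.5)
The plan is to identify $M$ through the sesquilinear identity defining the resolvent of $\Delta_{\Phi,\Psi}$, tested on the natural core of $S_{\Phi,\Psi}$, and then to pass to the limit using the three hypotheses. Write $\Delta_n \ce \Delta_{\Phi_n,\Psi_n}$, $S_n \ce S_{\Phi_n,\Psi_n} = J\Delta_n^{1/2}$, $R_n \ce (\id+\Delta_n)^{-1}$ and $R \ce (\id+M)^{-1}$. Here the polar decomposition uses the fact that $J$ is the modular conjugation of the standard form.

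\textbf{Step 1: the identity at finite $n$.} For every $\xi\in\MH$ the vector $y_n = R_n\xi$ lies in $\dom(\Delta_n)\subset\dom(S_n)$. For $x\in\dom(S_n)$ we have
\begin{equation*}
\braket{x,\xi} = \braket{x,y_n} + \braket{\Delta_n^{1/2}x,\Delta_n^{1/2}y_n} = \braket{x,y_n} + \braket{S_n y_n, S_n x},
\end{equation*}
where the last equality uses that $J$ is antiunitary. I would insert the two kinds of core vectors:
\begin{itemize}
\item $x = A\Psi_n$ with $A\in\MFM$, for which $S_n x = \ssupp_{\Psi_n}A^\ast\Phi_n$;
\item $x=(\id-\ssupp'_{\Psi_n})\eta$ with $\eta\in\MH$, for which $S_nx=0$.
\end{itemize}

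\textbf{Step 2: passing to the limit.} Strong resolvent convergence gives $f(\Delta_n)\to f(M)$ strongly for every bounded continuous $f$ on $[0,\infty)$. Taking $f(t)=(1+t)^{-1}$ and $f(t)=t^{1/2}(1+t)^{-1}$ yields $R_n\xi\to R\xi$ and $S_nR_n\xi = J\Delta_n^{1/2}R_n\xi \to JM^{1/2}R\xi$ strongly.

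On the other factor, $\ssupp_{\Psi_n}A^\ast\Phi_n\to \ssupp_\Psi A^\ast\Phi$ weakly, since a strongly convergent bounded sequence of operators applied to a weakly convergent sequence converges weakly. Similarly $A\Psi_n\to A\Psi$ weakly. Moreover $\ssupp'_{\Psi_n}=J\ssupp_{\Psi_n}J\to \ssupp'_\Psi$ strongly.

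Pairing strong with weak convergence, I obtain for all $x$ in the core $\MFM\Psi+[\MFM\Psi]^\perp$ of $S_{\Phi,\Psi}$:
\begin{equation*}
\braket{x,(\id-R)\xi} = \braket{S_{\Phi,\Psi}\,x,\, JM^{1/2}R\xi} = \braket{\Delta_{\Phi,\Psi}^{1/2}x,\, M^{1/2}R\xi} .
\end{equation*}

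\textbf{Step 3: identifying $M$.} Since the core is also a core for $\Delta_{\Phi,\Psi}^{1/2}$, the identity shows that $M^{1/2}R\xi\in\dom(\Delta_{\Phi,\Psi}^{1/2})$ with $\Delta_{\Phi,\Psi}^{1/2}M^{1/2}R\xi=(\id-R)\xi = MR\xi$. Hence $\Delta_{\Phi,\Psi}^{1/2}$ agrees with $M^{1/2}$ on $M^{1/2}\dom(M)$. I would then take graph closures and conclude $M^{1/2}\subset\Delta_{\Phi,\Psi}^{1/2}$. Because a self-adjoint operator has no proper self-adjoint extension, this gives $M^{1/2}=\Delta_{\Phi,\Psi}^{1/2}$, and therefore $M=\Delta_{\Phi,\Psi}$.

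\textbf{Main obstacle.} The delicate point is this last closure argument. The subspace $M^{1/2}\dom(M)$ is only dense in $\ker(M)^\perp\cap\dom(M^{1/2})$, so the kernels must be matched separately.

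\begin{itemize}
\item \emph{Kernel of $M$.} I would first try to show $\ker M\subset\ker\Delta_{\Phi,\Psi}$ by running the same limit argument with the roles reversed, that is, pairing $R_n\xi$ against $S_n^\ast$ on vectors $A'\Phi_n$. This should use the strong convergence of the supports together with the fact that $\ker\Delta_{\Phi,\Psi}$ is determined by $\ssupp_\Psi$ and $\ssupp'_\Phi$.
\item \emph{Fallback.} If that route is awkward, I would instead prove the symmetric inclusion $\Delta_{\Phi,\Psi}^{1/2}\subset M^{1/2}$ directly from the identity, using that $\ran R$ is a core for $M^{1/2}$.
\end{itemize}
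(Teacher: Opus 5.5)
Steps 1 and 2 are essentially sound, with three small slips. The middle term of your limit identity should be $\braket{JM^{1/2}R\xi, S_{\Phi,\Psi}x}$, not its complex conjugate. You tacitly need $M\ge 0$, which holds because strong resolvent limits of positive operators are positive. And $S_{\Phi,\Psi}=J\Delta_{\Phi,\Psi}^{1/2}$ for possibly non-faithful $\Phi,\Psi\in\NPC$ is a non-trivial known theorem that should be quoted as such.

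You locate the obstacle correctly, but the gap in Step 3 cannot be closed from the identity you derived. That identity involves $M$ only through $M(\id+M)^{-1}$ and $M^{1/2}(\id+M)^{-1}$. Both vanish on $\ker M$, so the identity says nothing there. Concretely, with $P\ce\1_{(1,\infty)}(\Delta_{\Phi,\Psi})$, the operator $M_0\ce\Delta_{\Phi,\Psi}P$ satisfies
\[
\braket{x, M_0(\id+M_0)^{-1}\xi} = \braket{\Delta_{\Phi,\Psi}^{1/2}x, M_0^{1/2}(\id+M_0)^{-1}\xi} \qquad (x\in\dom(\Delta_{\Phi,\Psi}^{1/2}),\ \xi\in\MH).
\]
Indeed, both sides equal $\braket{x,\Delta_{\Phi,\Psi}u}$ with $u=P(\id+\Delta_{\Phi,\Psi})^{-1}\xi$. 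Yet $M_0\neq\Delta_{\Phi,\Psi}$ in general: for $\Phi=\Psi$ one has $\Delta_{\Psi,\Psi}\Psi=\Psi$ but $M_0\Psi=0$. So your fallback of obtaining $\Delta_{\Phi,\Psi}^{1/2}\subset M^{1/2}$ ``directly from the identity'' is impossible. The kernel route also remains unsupported: $\ker\Delta_{\Phi,\Psi}=(\id-\ssupp_\Phi\ssupp'_\Psi)\MH$ involves $\ssupp_\Phi$, and the hypotheses give no control over it.

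The missing information comes from letting $\Delta_n^{1/2}$ act on the other factor before passing to the limit. For $x_n=A\Psi_n$,
\[
\braket{x_n,\Delta_n^{1/2}R_n\xi}=\braket{\Delta_n^{1/2}x_n,R_n\xi}=\braket{J\ssupp_{\Psi_n}A^\ast\Phi_n,R_n\xi},
\]
and similarly for $x_n=(\id-\ssupp'_{\Psi_n})\eta$, where $\Delta_n^{1/2}x_n=0$. The same weak-times-strong limits you already use give $\braket{x,M^{1/2}R\xi}=\braket{\Delta_{\Phi,\Psi}^{1/2}x,R\xi}$ for all $\xi\in\MH$. Now the test vectors $R\xi$ exhaust $\dom(M)$, which is a core for $M^{1/2}$. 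Hence $x\in\dom(M^{1/2})$ and $M^{1/2}x=\Delta_{\Phi,\Psi}^{1/2}x$ on the core $\MFM\Psi+[\MFM\Psi]^\perp$ of $\Delta_{\Phi,\Psi}^{1/2}$. Closedness of $M^{1/2}$ gives $\Delta_{\Phi,\Psi}^{1/2}\subset M^{1/2}$, and maximality of self-adjoint operators gives equality, so $M=\Delta_{\Phi,\Psi}$.

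No separate kernel argument is needed, because the inclusion is now established on a full core of $\Delta_{\Phi,\Psi}^{1/2}$ rather than on $M^{1/2}\dom(M)$. The paper does not prove this lemma itself; it quotes it from Derezi\'nski--Jak\v{s}i\'c--Pillet. The argument above is exactly what \cref{pro:SRimpliesW} packages, applied to $T_n=\Delta_n^{1/2}$ and $\Omega_n=A\Psi_n$ with the uniform bound $\norm{\Delta_n^{1/2}A\Psi_n}=\norm{\ssupp_{\Psi_n}A^\ast\Phi_n}\le\norm{A}\sup_n\norm{\Phi_n}$.
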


\subsection{Araki-Uhlmann relative entropy}

Let $(\MFM, \MH, J, \NPC)$ be a von Neumann algebra in standard form. H.~Araki \cite{Araki1976, Araki1977} defined a relative entropy for two arbitrary normal functionals of $\MFM$, which generalizes H.~Umegaki's relative entropy, cf. \cref{eq:UmegakiRelEnt}; an equivalent definition of the relative entropy in a different setting was given independently by A.~Uhlmann \cite{Uhlmann1977}.

Let $\psi, \varphi \in \NF{\MFM}$ be two normal functionals on $\MFM$ and $\Psi, \Phi \in \NPC$ be their vector representatives (\cref{thm:vectorRepresentation}). The \emph{Araki-Uhlmann relative entropy} $\MS_\MFM(\psi, \varphi)$ is defined to be
\begin{equation}\label{eq:relativeEntropy}
    \MS_\MFM(\psi, \varphi) \ce
    \begin{cases}
        - \braket[\big]{\Psi, \log (\Delta_{\Phi, \Psi}) \Psi} & \text{if $\ssupp(\psi) \le \ssupp(\varphi)$} \, , \\
        + \infty & \text{otherwise} \, .
    \end{cases}
\end{equation}
The following important properties of the relative entropy are well-known; Araki proved assertions \ref{enu:relEntScaling}, \ref{enu:relEntLowerBound}, \ref{enu:relEntNonNeg} \cite[Lem. 3.2 \& Thm. 3.6]{Araki1977} as well as \ref{enu:relEntSubalg} for special cases \cite[Thm. 3.8]{Araki1977}, while the latter in its present form is a consequence of Uhlmann's celebrated monotonicity theorem \cite[Prop. 18]{Uhlmann1977}; see also Refs.~\cite[Prop. 5.1 \& Thm. 5.3]{OP2004} and \cite[Thm. 1 \& Cor. 2]{Petz1986a}.

\begin{theorem}\label{thm:relEntProperties}
    Let $\psi, \varphi \in \NF{\MFM}$, $\lambda, \mu > 0$, and $\MFM_0 \subset \MFM$ be a von Neumann subalgebra.
    \begin{enumerate}
        \item \label{enu:relEntScaling} $\displaystyle \MS_\MFM(\lambda \psi, \mu \varphi) = \lambda \, \MS_\MFM(\psi, \varphi) - \lambda \, \psi(\id_\MH) \log (\mu / \lambda)$.
        
        \item \label{enu:relEntLowerBound} $\displaystyle \MS_\MFM(\psi, \varphi) \ge - \psi(\id_\MH) \bigl(\log \bigl( \varphi(\ssupp_\psi)\bigr) - \log \bigl(\psi(\id_\MH)\bigr)\bigr)$.        

        \item \label{enu:relEntNonNeg} If $\psi(\id_\MH) = \varphi(\id_\MH) = 1$, then $\MS_\MFM(\psi, \varphi) \ge 0$. Moreover, $\MS_\MFM(\psi, \varphi) = 0$ iff $\psi = \varphi$.

        \item \label{enu:relEntSubalg} If $\psi(\id_\MH) = \varphi(\id_\MH) = 1$, then $\MS_\MFM (\psi, \varphi) \ge \MS_{\MFM_0} (\psi|_{\MFM_0}, \varphi|_{\MFM_0})$.
    \end{enumerate}
\end{theorem}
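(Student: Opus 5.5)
For \ref{enu:relEntScaling}, I would first note that $\ssupp(\lambda\psi) = \ssupp(\psi)$ and $\ssupp(\mu\varphi) = \ssupp(\varphi)$. Hence both sides are $+\infty$ simultaneously, and I may assume $\ssupp(\psi) \le \ssupp(\varphi)$. By uniqueness in \cref{thm:vectorRepresentation}, the cone representatives of $\lambda\psi$ and $\mu\varphi$ are $\sqrt{\lambda}\,\Psi$ and $\sqrt{\mu}\,\Phi$. \cref{lem:relModOpScaling} gives $\Delta_{\sqrt{\mu}\Phi, \sqrt{\lambda}\Psi} = (\mu/\lambda)\Delta_{\Phi,\Psi}$. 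By the spectral theorem, $\log\bigl((\mu/\lambda)\Delta_{\Phi,\Psi}\bigr) = \log(\mu/\lambda) + \log\Delta_{\Phi,\Psi}$ on $\ker(\Delta_{\Phi,\Psi})^\perp$, with the expectation understood as a spectral integral. Inserting this into \eqref{eq:relativeEntropy} and using $\norm{\Psi}^2 = \psi(\id_\MH)$ yields the formula. The one point to verify is that the spectral measure $\mu_\Psi$ of $\Psi$ with respect to $\Delta_{\Phi,\Psi}$ does not charge $\{0\}$ when $\ssupp_\psi \le \ssupp_\varphi$. I would check this from $\ker S_{\Phi,\Psi} \cap \oln{\MFM\Psi} = \{0\}$, which holds because $S_{\Phi,\Psi}(A\Psi) = \ssupp_\Psi A^\ast \Phi$ and $\ssupp_\Psi \le \ssupp_\Phi$.

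For \ref{enu:relEntLowerBound}, the case of infinite relative entropy is trivial. Otherwise I would write $-\braket{\Psi, \log(\Delta_{\Phi,\Psi})\Psi} = -\psi(\id_\MH)\int \log t \,\diff\nu(t)$ with the probability measure $\nu = \mu_\Psi/\psi(\id_\MH)$. Jensen's inequality for the concave function $\log$ gives the lower bound $-\psi(\id_\MH)\log\bigl(\int t\,\diff\mu_\Psi(t)/\psi(\id_\MH)\bigr)$. It then remains to compute
\begin{equation*}
    \int t \,\diff\mu_\Psi(t) = \norm{\Delta_{\Phi,\Psi}^{1/2}\Psi}^2 = \norm{S_{\Phi,\Psi}\Psi}^2 = \norm{\ssupp_\Psi \Phi}^2 = \varphi(\ssupp_\psi) \, ,
\end{equation*}
using $\Psi = \id_\MH \Psi \in \MFM\Psi$. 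This gives \ref{enu:relEntLowerBound}.

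Part \ref{enu:relEntNonNeg} follows from \ref{enu:relEntLowerBound} together with $\varphi(\ssupp_\psi) \le 1$. For the equality case, $\psi = \varphi$ gives $\Delta_{\Phi,\Phi}\Phi = \Phi$, hence entropy $0$. Conversely, I would use \ref{enu:relEntSubalg}, which is proved independently, so there is no circularity. For a projection $P \in \MFM$, take $\MFM_0 = \lin\{P, \id_\MH - P\}$; there the relative entropy is the classical two-point Kullback--Leibler divergence. If $\MS_\MFM(\psi,\varphi) = 0$, this divergence vanishes, which forces $\psi(P) = \varphi(P)$. Since the spans of projections are norm dense in $\MFM$, it follows that $\psi = \varphi$.

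Part \ref{enu:relEntSubalg}, Uhlmann monotonicity, is the main obstacle, and I would follow Petz's route. Let $\Psi_0, \Phi_0$ be the standard representatives of the restrictions in a standard form of $\MFM_0$ on $\MH_0$. Define $V : \MH_0 \to \MH$ by $V(A\Psi_0) = A\Psi$ for $A \in \MFM_0$, extended by $0$ on $[\MFM_0\Psi_0]^\perp$; this is an isometry on $\oln{\MFM_0\Psi_0}$ and a contraction overall. For $A \in \MFM_0$ I would then compare quadratic forms:
\begin{equation*}
    \norm{\Delta_{\Phi,\Psi}^{1/2} V A\Psi_0}^2 = \norm{\ssupp_\Psi A^\ast\Phi}^2 \le \varphi(AA^\ast) = \norm{\Delta_{\Phi_0,\Psi_0}^{1/2} A\Psi_0}^2 + (\text{support corrections}) \, .
\end{equation*}
From this I would obtain $V^\ast \Delta_{\Phi,\Psi} V \le \Delta_{\Phi_0,\Psi_0}$ in the form sense, after passing to a core. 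The next step uses the representation $\log x = \int_0^\infty \bigl((1+t)^{-1} - (x+t)^{-1}\bigr)\diff t$, which makes $\log$ operator monotone. Combined with the operator Jensen inequality $V^\ast f(X) V \le f(V^\ast X V)$ for contractions and operator concave $f$ (applied to the resolvents $-(x+t)^{-1}$), this gives $\braket{\Psi, \log\Delta_{\Phi,\Psi}\Psi} \le \braket{\Psi_0, \log\Delta_{\Phi_0,\Psi_0}\Psi_0}$. Negating yields the claim. The delicate points are domain issues for the unbounded forms and the non-faithful support corrections. I would handle them by first replacing $\varphi$ with the faithful $\varphi + \varepsilon\omega$, then letting $\varepsilon \downarrow 0$ using lower semicontinuity and \cref{lem:convRelModOp}.
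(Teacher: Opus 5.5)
The paper does not prove this theorem. It cites Araki for (a)--(c) and Uhlmann for (d). Your (a) and (b) are essentially Araki's arguments: scaling of $\Delta_{\Phi,\Psi}$, then Jensen's inequality with $\int t \diff\mu_\Psi(t) = \norm{S_{\Phi,\Psi}\Psi}^2 = \varphi(\ssupp_\psi)$. Deriving the equality case of (c) from (d) via the subalgebras $\lin\{P, \id_\MH - P\}$ is a valid alternative to Araki's direct argument. For (d) you follow Petz's operator-monotonicity route rather than Uhlmann's interpolation theory. That route works, but as written the central step of (d) is not established, and two auxiliary claims are false.

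\textbf{The form inequality in (d) does not follow from your display.} You bound $\norm{\Delta_{\Phi,\Psi}^{1/2}VA\Psi_0}^2$ by $\varphi(AA^\ast)$. But $\norm{\Delta_{\Phi_0,\Psi_0}^{1/2}A\Psi_0}^2 = \norm{\ssupp_{\psi_0}A^\ast\Phi_0}^2 = \varphi(A\,\ssupp_{\psi_0}A^\ast)$, so the ``support correction'' $\varphi\bigl(A(\id_\MH - \ssupp_{\psi_0})A^\ast\bigr) \ge 0$ sits on the wrong side, and $V^\ast\Delta_{\Phi,\Psi}V \le \Delta_{\Phi_0,\Psi_0}$ is not obtained.

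This correction comes from the support of $\psi_0$, not of $\varphi$. Replacing $\varphi$ by a faithful $\varphi+\varepsilon\omega$ therefore cannot remove it; for $A = \id_\MH$ it becomes strictly positive whenever $\psi_0$ is not faithful.

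The missing observation is $\ssupp_\psi \le \ssupp_{\psi_0}$, which holds because $\ssupp_{\psi_0} \in \MFM_0 \subset \MFM$ and $\psi(\ssupp_{\psi_0}) = \psi(\id_\MH)$. It gives directly $\norm{\Delta_{\Phi,\Psi}^{1/2}A\Psi}^2 = \varphi(A\,\ssupp_\psi A^\ast) \le \varphi(A\,\ssupp_{\psi_0}A^\ast) = \norm{\Delta_{\Phi_0,\Psi_0}^{1/2}A\Psi_0}^2$ on the core $\MFM_0\Psi_0 + [\MFM_0\Psi_0]^\perp$, where $V$ kills the second summand. This works for arbitrary $\varphi$, so drop the $\varepsilon$-regularization. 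As sketched, it would in any case need monotonicity of $\MS_\MFM$ in the second argument and lower semicontinuity, and \cref{lem:convRelModOp} only identifies resolvent limits, not limits of $\braket{\Psi, \log(\Delta)\Psi}$.

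\textbf{The Jensen step is misstated.} The operator Jensen inequality $V^\ast f(X)V \le f(V^\ast XV)$ for contractions requires $f(0) \ge 0$. For $f(x) = -(x+t)^{-1}$ it fails already for $V = 0$. It is rescued here because $V$ is isometric on $\oln{\MFM_0\Psi_0} \ni \Psi_0$ and the form $V^\ast\Delta_{\Phi,\Psi}V$ lives on that subspace. So the isometric (Choi/Schur-complement) inequality $(W^\ast BW)^{-1} \le W^\ast B^{-1}W$, for an isometry $W$ and $B \ge t > 0$, suffices at the vector $\Psi_0$. Alternatively, work with $x/(x+t)$, which vanishes at $0$.

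\textbf{A side claim in (a) is false.} The claim $\ker S_{\Phi,\Psi} \cap \oln{\MFM\Psi} = \{0\}$ fails for non-faithful $\varphi$. Take $\BO(\MH)$ on $\HS(\MH)$ with $\psi$ pure and $\ssupp_\psi \le \ssupp_\varphi \ne \id_\MH$, and let $A$ be rank one mapping $\ran\rho_\psi$ into $\ker\rho_\varphi$. Then $A\Psi \ne 0$ but $\ssupp_\Psi A^\ast\Phi = 0$. The verification is unnecessary anyway. The integral $\int \log t \diff\mu_\Psi(t)$ is well defined in $[-\infty,\infty)$, since its positive part is at most $\int t \diff\mu_\Psi(t) < \infty$. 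And $\log(ct) = \log c + \log t$ also holds at $t = 0$ with $\log 0 = -\infty$.
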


As a key example, which will be of relevance for us later, we shall compute the relative entropy for two normal functionals on $\BO(\MH)$.

\begin{example}\label{exa:relativeEntropy}
    Consider the von Neumann algebra $\MFM = \BO(\MH)$ and recall (cf. \cref{exa:standardForm}) that its standard form is given by $\bigl(\MFM, \HS(\MH), \,\cdot\,^\ast, \HS(\MH)_+\bigr)$. Using \cref{thm:vectorRepresentation}, for all $\psi, \varphi \in \NF{\MFM}$ one can find positive trace-class operators $\rho_\psi, \rho_\varphi \in \NO(\MH)_+$ such that $\Psi = \rho_{\psi}^{1/2}$ and $\Phi = \rho_\varphi^{1/2}$ are the vector representatives of $\psi, \varphi$ in $\HS(\MH)_+$.
        
    If $\rho_\varphi = \sum_{i \in \N} \lambda_i \, P_i$ and $\rho_\psi = \sum_{j \in \N} \mu_j \, Q_j$ are the spectral decompositions of $\rho_\psi$ and $\rho_\varphi$, then the relative modular operator $\Delta_{\Phi, \Psi}$ on $\HS(\MH)$ is given by \cite[Exa. 10.4 (2)]{Hiai2021}
    \begin{equation*}
        \Delta_{\Phi, \Psi} = \sum_{i,j=1}^{\infty} \lambda_i \inv{\mu_j} L_{P_i} R_{Q_j} \, ,
    \end{equation*}
    where $L_T$ and $R_T$ denote the left and right multiplication operators with the operator $T$, respectively. From functional calculus, one obtains
    \begin{align*}
        \log(\Delta_{\Phi, \Psi}) &= \sum_{i,j=1}^{\infty} \log(\lambda_i \inv{\mu_j}) L_{P_i} R_{Q_j} = \sum_{i,j=1}^{\infty} \log(\lambda_i) L_{P_i} R_{Q_j} - \sum_{i,j=1}^{\infty} \log(\mu_j) L_{P_i} R_{Q_j} \\
        &= \sum_{i=1}^{\infty} \log(\lambda_i) L_{P_i} - \sum_{j=1}^{\infty} \log(\mu_j) R_{Q_j} = \log(\rho_\varphi) - \log(\rho_\psi) \, .
    \end{align*}
    With this result, the following expression for the relative entropy $\MS_\MFM(\psi, \varphi)$ in case that $\supp(\rho_\psi) \subset \supp(\rho_\varphi)$ is found:
    \begin{align*}
        \MS_{\BO(\MH)}(\psi, \varphi) &= - \braket[\big]{\Psi, \log(\Delta_{\Phi, \Psi}) \Psi} = - \tr \Bigl(\rho_\psi^{1/2} \bigl(\log(\rho_\varphi) - \log(\rho_\psi)\bigr) \rho_\psi^{1/2}\Bigr) \\
        &= \tr\bigl(\rho_\psi \log \rho_\psi - \rho_\psi \log \rho_\varphi\bigr) \\[3pt]
        &= S(\rho_\psi, \rho_\varphi) \, .
    \end{align*}
    That is, the Araki-Uhlmann relative entropy of two positive normal functionals on $\MFM = \BO(\MH)$ reduces to the Umegaki relative entropy from \cref{eq:UmegakiRelEnt}.
\end{example}

\section{Perturbation theory in von Neumann algebras}\label{sec:perturbationTheory}

In the quantum-mechanical two-sided Bogoliubov inequality discussed in \cref{sec:BogoliubovQM}, key roles were played by the perturbation $U$ of the \enquote{free} Hamiltonian $H_0$ as well as the Gibbs states $\rho_0$ and $\rho$ corresponding to $H_0$ and $H = H_0 + U$. In arbitrary von Neumann algebras, Hamiltonians and, \emph{a fortiori}, Gibbs states do not exist in general. Therefore, we first have to introduce a suitable framework with which the two-sided Bogoliubov can be extended to this setting. The fundamental notion turns out to be that of a $W^\ast$-dynamics on a von Neumann algebra, extending the well-known concept of time evolution in quantum mechanics (which did not play a role in \cref{sec:BogoliubovQM} due to the availability of Hamiltonians). This concept allows for the definition of Liouvillians, replacing the Hamiltonian from quantum mechanics, KMS states, serving as a generalization of Gibbs states, and finally perturbations of Liouvillians and KMS states, compensating for the decomposition $H = H_0 + U$.

In the following, we will first define these concepts in \cref{subsec:dynamics}, then recall the foundations of bounded perturbation theory in \cref{subsec:boundedPerturbations}, after that review some more recent results pertaining to unbounded perturbations in \cref{subsec:unboundedPerturbations}, and finally prove some more auxiliary lemmas in \cref{subsec:perturbationRelModOp} that will be crucial for us in the next section.

\subsection{Dynamics, Liouvillians, and KMS states}\label{subsec:dynamics}

A \emph{$W^\ast$-dynamics} on a von Neumann algebra $\MFM \subset \BO(\MH)$ is a pointwise $\sigma$-weakly continuous one-parameter group $\tau = (\tau_t)_{t \in \R}$ of $\ast$-automorphisms $\tau_t \in \Aut(\MFM)$, that is, $\tau_0 = \id_\MFM$, $\tau_{s+t} = \tau_s \circ \tau_t$ for all $s, t \in \R$, and $t \mapsto \tau_t(A)$ is $\sigma$-weakly continuous for all $A \in \MFM$. The pair $(\MFM, \tau)$ is called \emph{$W^\ast$-dynamical system}.

Representing the von Neumann algebra in standard form $(\MFM, \MH, J, \NPC)$, \cref{thm:unitaryReprAut} yields, for every $t \in \R$, a unique unitary operator $U_t \in \UO(\MH)$ such that $\tau_t(A) = U_t A U_t^\ast$ for all $A \in \MFM$ and $U_t \NPC \subset \NPC$. Since $(U_t)_{t \in \R}$ is strongly continuous \cite[p. 149]{Pillet2006}, Stone's theorem (cf. \cref{thm:Stone}) implies that there exists a unique self-adjoint operator $L : \MH \supset \dom(L) \to \MH$, which is called the \emph{standard Liouvillian} of $\tau$, such that for all $t \in \R$,
\begin{equation*}
    U_t = \ee^{\ii t L} \, .
\end{equation*}

We note the following algebraic property of the standard Liouvillian; even though it is well-known (e.g., Ref.~\cite[Eq. (29)]{Pillet2006}), we provide a proof for the sake of completeness.

\begin{lemma}\label{lem:JLLJ}
    In the above setting, we have
    \begin{equation}\label{eq:JLLJ}
        J \dom(L) = \dom(L) \quad \text{and} \quad J L + L J = 0 \ \text{on $\dom(L)$} \, .
    \end{equation}
\end{lemma}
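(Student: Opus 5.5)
We will use property (c) of \cref{thm:unitaryReprAut}, namely $U_\alpha J = J U_\alpha$, applied to $\alpha = \tau_t$. This gives $\ee^{\ii t L} J = J \ee^{\ii t L}$ for all $t \in \R$. Since $J$ is anti-linear, we have $J \ee^{\ii t L} J = \ee^{-\ii t (JLJ)}$, or more carefully: the family $V_t \ce J U_t J$ is a strongly continuous unitary group (anti-linearity appears twice, so $V_t$ is linear). The commutation relation says $V_t = U_t$. The approach is to identify the generator of $V_t$ in terms of $L$ and $J$, and then to compare generators using the uniqueness part of Stone's theorem (\cref{thm:Stone}).

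The key step is to compute the generator of $V_t$ directly. Write $V_t = \ee^{\ii t K}$. A vector $\xi$ lies in $\dom(K)$ iff $\lim_{t \to 0} (V_t \xi - \xi)/t$ exists. Since $J$ is an isometric bijection, this limit exists iff $\lim_{t\to0} (U_t J\xi - J\xi)/t$ exists, i.e. iff $J\xi \in \dom(L)$. In that case
\begin{equation*}
    \ii K \xi = \lim_{t \to 0} J\,\frac{U_t J\xi - J\xi}{t} = J(\ii L J \xi) = -\ii\, J L J \xi \, ,
\end{equation*}
where anti-linearity of $J$ turns the factor $\ii$ into $-\ii$. Hence $\dom(K) = J\dom(L)$ and $K = -JLJ$.

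Stone's uniqueness, combined with $V_t = U_t$, now yields $K = L$ as self-adjoint operators, including equality of domains. Thus $\dom(L) = J\dom(L)$, and $L = -JLJ$ on $\dom(L)$. Multiplying on the left by $J$ (using $J^2 = \id$) gives $JL = -LJ$ on $\dom(L)$, which is \eqref{eq:JLLJ}.

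The only delicate point is the bookkeeping of anti-linearity in the difference quotient, since the sign flip is exactly what produces the anticommutation. Everything else is a direct application of the uniqueness in Stone's theorem together with property (c) of \cref{thm:unitaryReprAut}.
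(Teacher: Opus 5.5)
Your proof is correct and uses essentially the same argument as the paper: the commutation $U_t J = J U_t$ from \cref{thm:unitaryReprAut}, the difference-quotient characterization of the generator (\cref{pro:generatorUnitaryGroup}), and anti-linearity of $J$ to produce the sign flip. Phrasing it as computing the generator of $J U_t J = U_t$ gives $\dom(L) = J\dom(L)$ in one step, whereas the paper proves $J\dom(L) \subset \dom(L)$ directly and obtains the reverse inclusion from $J^2 = \id$.
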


\begin{proof}
    By construction of the Liouvillian $L$ as the generator of the strongly continuous unitary group $(U_t)_{t \in \R}$ implementing the dynamics $\tau$, there holds (cf. \cref{pro:generatorUnitaryGroup})
    \begin{gather*}
        \dom(L) = \Set{\xi \in \MH \ : \ \lim_{h \to 0} \frac{1}{h} (U_h - \id) \xi \ \text{exists}} \, , \\
        L \xi = - \ii \lim_{h \to 0} \frac{1}{h} (U_h - \id) \xi \com \xi \in \dom(L) \, .
    \end{gather*}
    Let $\eta \in \dom(L)$ be arbitrary. From the relation $U_t J = J U_t$ [cf. \cref{thm:unitaryReprAut} (3)], it follows that the limit
    \begin{equation*}
        \lim_{h \to 0} \frac{1}{h} (U_h - \id) J \eta = J \lim_{h \to 0} \frac{1}{h} (U_h - \id) \eta
    \end{equation*}
    exists, hence $J \eta \in \dom(L)$, which shows that $J \dom(L) \subset \dom(L)$. Similarly, every $\eta \in \dom(L)$ can be written as $\eta = J (J \eta)$, and since $J \eta \in \dom(L)$ according to the previous observation, it also follows that $\dom(L) \subset J \dom(L)$; therefore, we have established the first equality in \eqref{eq:JLLJ}. Furthermore,
    \begin{align*}
        J L \eta = J \biggl(- \ii \lim_{h \to 0} \frac{1}{h} (U_h - \id) \eta\biggr) = \ii \lim_{h \to 0} \frac{1}{h} (U_h - \id) J \eta = - L J \eta
    \end{align*}
    for all $\eta \in \dom(L)$ because $J$ is anti-linear, hence the second equality in \eqref{eq:JLLJ} follows.
\end{proof}

The standard Liouvillian can be characterized as follows \cite[Thm. 2.9]{DJP03}, \cite[Prop. 4.46]{Pillet2006}.

\begin{proposition}\label{pro:standardLiouvillian}
    Let $((\MFM, \MH, J, \NPC), \tau)$ be a $W^\ast$-dynamical in standard form. The standard Liouvillian is the unique self-adjoint operator $L$ on $\MH$ satisfying for all $t \in \R$ and $A \in \MFM$:
    \begin{enumerate}
        \item $\ee^{\ii t L} \NPC \subset \NPC$;
        \item $\tau_t(A) = \ee^{\ii t L} A \, \ee^{- \ii t L}$.
    \end{enumerate}
\end{proposition}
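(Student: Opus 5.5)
Existence is already available: the standard Liouvillian $L$ constructed above from the unitary representation of \cref{thm:unitaryReprAut} satisfies (a) and (b). Indeed, $\ee^{\ii t L} = U_{\tau_t}$, so $\ee^{\ii t L}\NPC \subset \NPC$ and $\tau_t(A) = U_{\tau_t} A U_{\tau_t}^\ast$ follow directly from properties (2) and (1) of that theorem. The substantive part of the proposition is therefore uniqueness.

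For uniqueness, let $L'$ be another self-adjoint operator with properties (a) and (b), and put $V_t \ce \ee^{\ii t L'}$. For fixed $t$, the unitary $V_t$ implements $\tau_t$ by (b) and leaves $\NPC$ invariant by (a). I will show that $V_t$ also commutes with $J$; the uniqueness clause of \cref{thm:unitaryReprAut}, applied to the single automorphism $\tau_t$, then gives $V_t = U_{\tau_t}$. Since this holds for every $t$, Stone's theorem yields $L' = L$.

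The key step is showing that a unitary $V$ with $V\NPC \subset \NPC$ commutes with $J$. First, $\NPC$ spans $\MH$: every $\xi$ with $J\xi = \xi$ is a difference of two vectors in $\NPC$, because $\NPC$ is self-dual. Second, $V^\ast\NPC \subset \NPC$. To see this, take $\xi,\eta \in \NPC$; then $\braket{V^\ast\xi,\eta} = \braket{\xi, V\eta} \ge 0$ since $V\eta \in \NPC$, so $V^\ast\xi$ lies in the dual cone, which equals $\NPC$. Now both $VJ$ and $JV$ act on $\NPC$ as $V$, since $J$ fixes $\NPC$ pointwise and $V\NPC \subset \NPC$. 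Both operators are antilinear, so they also agree on real-linear combinations of vectors in $\NPC$. Hence they agree on $\{\xi : J\xi = \xi\}$. Every vector of $\MH$ has the form $\xi_1 + \ii\xi_2$ with $J\xi_k = \xi_k$, and antilinearity gives agreement on all of $\MH$. Therefore $VJ = JV$.

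The main obstacle is making sure that the uniqueness statement of \cref{thm:unitaryReprAut} really applies to one automorphism at a time, that is, that the implementing unitary is determined by conditions (1)–(3) for a fixed $\alpha$. If the theorem is read only as uniqueness of the whole representation, I would instead invoke the standard pointwise fact directly (cf. the cited references, e.g. Haagerup's result). Two unitaries implementing the same automorphism, both preserving $\NPC$ and commuting with $J$, differ by $W = V_t U_t^\ast \in \comm{\MFM}$. This $W$ is a unitary of the commutant with $W\NPC \subset \NPC$ and $WJ = JW$. Writing $W = J A J$ with $A \in \MFM$ and using $JAJ = W = JWJ$, I would deduce $W = A \in \MFM \cap \comm{\MFM}$. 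Then $W$ is a central unitary with $JWJ = W^\ast$ by property (b) of the standard form, so $W = W^\ast$ is a self-adjoint unitary. Finally, $W\NPC \subset \NPC$ forces $W \ge 0$, because $\braket{\xi, W\xi} \ge 0$ for $\xi \in \NPC$ and $\NPC$ generates $\MH$ through the central decomposition. A positive self-adjoint unitary equals $\id$, so $V_t = U_t$.
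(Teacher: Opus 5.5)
Your route is sound, and it is the standard one. The paper gives no proof of its own and defers to \cite[Thm.~2.9]{DJP03} and \cite[Prop.~4.46]{Pillet2006}. As in those standard proofs, you use Stone's theorem to reduce uniqueness of $L$ to uniqueness of the $\NPC$-preserving unitary that implements a single automorphism $\tau_t$. You are right that \cref{thm:unitaryReprAut}, as stated in the paper, only asserts uniqueness of the whole representation, so your pointwise fallback (or a citation of the pointwise Araki--Connes--Haagerup result) is genuinely needed. Your proof that an $\NPC$-preserving unitary commutes with $J$ is correct. So is the reduction of $W = V_t U_t^\ast$ to a self-adjoint central unitary.

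The step that does not hold as written is the last one. From $\braket{\xi, W\xi} \ge 0$ for all $\xi \in \NPC$ you cannot conclude $W \ge 0$, because positivity of the quadratic form on the cone says nothing about differences of cone vectors. For example, take $\MFM$ the diagonal $2 \times 2$ matrices on $\C^2$, with $J$ complex conjugation and $\NPC = [0,\infty)^2$. The swap $T(a,b) = (b,a)$ is a self-adjoint unitary with $T\NPC = \NPC$ and $\braket{\xi, T\xi} = 2ab \ge 0$ on $\NPC$, yet $T(1,-1) = -(1,-1)$.

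What rescues you is centrality, and you have to use it explicitly. Set $Q \ce (\id - W)/2$; this is a central projection with $W = \id - 2Q$. Axiom (b) of the standard form gives $JQJ = Q$, hence $Q = QJQJ$, and axiom (d) then gives $Q\NPC \subset \NPC$. For $\xi \in \NPC$, both $Q\xi$ and $WQ\xi = -Q\xi$ lie in $\NPC$. Self-duality gives $\NPC \cap (-\NPC) = \{0\}$, so $Q\xi = 0$. Since $\NPC$ spans $\MH$ (your first claim), $Q = 0$ and $W = \id$. With this repair the argument is complete.
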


Let $(\MFM, \tau)$ be a $W^\ast$-dynamical system and $\beta > 0$. A normal state $\omega \in \NS{\MFM}$ is called a \emph{$(\tau, \beta)$-KMS state} (introduced by Haag, Hugenholtz, and Winnink \cite{Haag1967} and named after Kubo, Martin, and Schwinger) if for all $A, B \in \MFM$ there exists a function $F_{\beta, A, B} : \C \to \C$ which is analytic in the strip $S_\beta = \set{z \in \C \, : \, 0 < \Im(z) < \beta}$ and continuous on its closure, and which satisfies the following \emph{KMS-boundary conditions} for all $t \in \R$:
\begin{equation*}
    F_{\beta, A, B}(t) = \omega\bigl(A \tau_t(B)\bigr) \tand F_{\beta, A, B}(t + \ii \beta) = \omega\bigl(\tau_t(B) A\bigr) \, .
\end{equation*}
We illustrate this rather abstract definition with two concrete examples that will become highly relevant for our investigations below.

\begin{examples}\label{exa:LiouvillianKMS}
    \leavevmode
    \begin{enumerate}[env]
        \item \label{enu:finiteQuantumSystem} Consider the von Neumann algebra $\MFM = \BO(\MH)$ and let $H : \MH \supset \dom(H) \to \MH$ be self-adjoint. Define a one-parameter group of $\ast$-automorphisms $\tau = (\tau_t)_{t \in \R}$ by setting
        \begin{equation*}
            \tau_t(A) \ce \ee^{\ii t H} A \, \ee^{- \ii t H} \, , \quad t \in \R \, , \ A \in \MFM \, .
        \end{equation*}
        The group $\tau$ is pointwise $\sigma$-weakly continuous \cite[Exa. 4.16]{Pillet2006}, hence, $(\MFM, \tau)$ is a $W^\ast$-dynamical system. According to \cref{exa:standardForm}, the standard form of $\MFM$ is $\bigl(\MFM, \HS(\MH), \,\cdot\,^\ast, \HS(\MH)_+\bigr)$, and one can show that the standard Liouvillian $L : \HS(\MH) \supset \dom(L) \to \HS(\MH)$ is determined by the following identity \cite[Exa. 4.51]{Pillet2006}
        \begin{equation*}
            \ee^{\ii t L} X = \ee^{\ii t H} X \, \ee^{- \ii t H} \com X \in \HS(\MH) \, .
        \end{equation*}

        Let $\beta > 0$ be arbitrary and assume that $\ee^{- \beta H} \in \NO(\MH)$. Define the density operator $\rho_\beta \ce \ee^{- \beta H} / \tr(\ee^{- \beta H})$ and the canonical Gibbs state $\varphi_\beta \ce \tr(\rho_\beta \,\cdot\,) \in \NS{\MFM}$. For any $A, B \in \MFM$ consider the function \cite[p. 21]{Hiai2021}
        \begin{equation*}
            F_{\beta, A, B}(z) \ce \frac{1}{\tr (\ee^{- \beta H})} \, \tr\bigl(\ee^{- \beta H} A \, \ee^{\ii z H} B \, \ee^{- \ii z H}\bigr) \, , \quad z \in \C \, .
        \end{equation*}
        It holds that $F_{\beta, A, B}$ is analytic in $S_\beta$ and continuous on $\oln{S_\beta}$, and one can compute
        \begin{equation*}
            F_{\beta, A, B}(t) = \varphi_\beta\bigl(A \tau_t(B)\bigr) \quad \text{and} \quad F_{\beta, A, B}(t + \ii \beta) = \varphi_\beta\bigl(\tau_t(B) A\bigr) \, .
        \end{equation*}
        Hence, the canonical Gibbs state $\varphi_\beta$ is a $(\tau, \beta)$-KMS state. Conversely, one can show that for \emph{every} $(\tau, \beta)$-KMS state $\omega = \tr(\rho_\omega \,\cdot\,) \in \NS{\MFM}$, $\rho_\omega \in \NO(\MH)$, of the $W^\ast$-dynamical system $(\MFM, \tau)$, it follows that $\rho_\omega = \ee^{- \beta H} / \tr(\ee^{- \beta H})$, i.e., $\omega$ is given by the canonical Gibbs state at inverse temperature $\beta$ \cite[Prop. 4.7]{Attal2006}, \cite[Exa. 5.5]{Pillet2006}.

        \item \label{enu:modularGroup} Let $\MFM \subset \BO(\MH)$ be a $\sigma$-finite von Neumann algebra with cyclic and separating vector $\Omega \in \MH$, standard form $(\MFM, \MH, J, \NPC)$, and modular operator $\Delta_\Omega$. Due to \cref{thm:Tomita}, one can define the \emph{modular group} $\sigma_t^\Omega : \MFM \to \MFM$ with respect to $\Omega$ by
        \begin{equation*}
            \sigma_t^\Omega (A) \ce \Delta_\Omega^{\ii t} \, A \, \Delta_\Omega^{- \ii t} \, , \quad t \in \R \, , \ A \in \MFM \, .
        \end{equation*}
        It holds that $\sigma^\Omega = (\sigma_t^\Omega)_{t \in \R}$ is a $W^\ast$-dynamics on $\MFM$ with standard Liouvillian $L_\Omega = \log(\Delta_\Omega)$ \cite[Thm. 2.4 (4)]{DJP03}. To see the latter, note that $\ee^{\ii t L_\Omega} \NPC = \Delta_\Omega^{\ii t} \NPC = \NPC$ \cite[Thm. 3.2 (iii)]{Hiai2021}. Moreover, for every $t \in \R$ and $A \in \MFM$,
        \begin{equation*}
            \ee^{\ii t L_\Omega} A \, \ee^{- \ii t L_\Omega} = \ee^{\ii t \log \Delta_\Omega} A \, \ee^{- \ii t \log \Delta_\Omega} = \Delta_\Omega^{\ii t} \, A \, \Delta_\Omega^{- \ii t} = \sigma_t^\Omega (A) \, .
        \end{equation*}
        Thus, according to \cref{pro:standardLiouvillian}, $L_\Omega = \log(\Delta_\Omega)$ must be the unique standard Liouvillian of the $W^\ast$-dynamical system $(\MFM, \sigma^\Omega)$.

        Regarding KMS states of the $W^\ast$-dynamical system $(\MFM, \sigma^\Omega)$, \emph{Takesaki's fundamental theorem} \cite[Thm. 2.14]{Hiai2021} asserts that $\omega = \omega_\Omega \in \NF{\MFM}$ is a $(\sigma^\Omega, -1)$-KMS state (with respect to the strip $\set{z \in \C \, : \, \beta < \Im(z) < 0}$), and that the modular group $\sigma^\Omega$ is uniquely determined as the $W^\ast$-dynamics on $\MFM$ for which $\omega$ satisfies the KMS condition at $\beta = - 1$.
    \end{enumerate}
\end{examples}

The following result, taken from Ref.~\cite[Thm. 2.13]{DJP03}, gives a characterization of the KMS condition expressed on the Hilbert-space level using the Liouvillian of the dynamics.

\begin{proposition}\label{pro:LiouvillianKMS}
    Let $(\MFM, \MH, J, \NPC)$ be a von Neumann algebra in standard form, $\Omega \in \NPC$ be a unit vector, and $\tau$ be a $W^\ast$-dynamics on $\MFM$ with standard Liouvillian $L$. Then $\omega_\Omega$ is a $(\tau, \beta)$-KMS state if and only if $\MFM \Omega \subset \dom(\ee^{- \beta L / 2})$ and
    \begin{equation*}
        \forall \, A \in \MFM \ : \ \ee^{- \beta L / 2} A \Omega = J A^\ast \Omega \, .
    \end{equation*}
    In this case, if $\Omega \in \NPC$ is also cyclic for $\MFM$, then it holds that
    \begin{equation*}
        \Delta_\Omega = \ee^{- \beta L} \, .
    \end{equation*}
\end{proposition}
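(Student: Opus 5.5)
The plan is to prove the two implications separately and to get the cyclic-case identity $\Delta_\Omega = \ee^{-\beta L}$ as a by-product of the forward direction. The main tools are Tomita--Takesaki theory and the uniqueness statement of \cref{pro:standardLiouvillian}.

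\emph{Converse direction (the relation implies KMS).} Assume $\MFM\Omega \subset \dom(\ee^{-\beta L/2})$ and $\ee^{-\beta L/2}A\Omega = JA^\ast\Omega$ for all $A \in \MFM$.
\begin{enumerate}
    \item First show that $\Omega$ is invariant. Taking $A = \id$ gives $\ee^{-\beta L/2}\Omega = J\Omega = \Omega$, using $\Omega \in \NPC$. Injectivity of $\lambda \mapsto \ee^{-\beta\lambda/2}$ in the spectral calculus then gives $L\Omega = 0$. Hence $\ee^{\ii tL}A\Omega = \tau_t(A)\Omega$.
    \item For $A, B \in \MFM$ and $0 \le \Im z \le \beta$, split $z = z_1 + z_2$ with $\Im z_1, \Im z_2 \in [0, \beta/2]$. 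Define
    \begin{equation*}
        F(z) = \braket[\big]{\ee^{-\ii \oln{z_1} L} A^\ast\Omega, \, \ee^{\ii z_2 L} B\Omega} \, .
    \end{equation*}
    Both vectors are defined because $A^\ast\Omega, B\Omega \in \dom(\ee^{-\beta L/2})$. Spectral calculus shows the value does not depend on the splitting.
    \item Analyticity in $S_\beta$ and continuity on $\oln{S_\beta}$ follow from the standard fact that $w \mapsto \ee^{\ii wL}\xi$ is analytic in $\{0<\Im w<\beta/2\}$ and continuous on its closure when $\xi \in \dom(\ee^{-\beta L/2})$.
    \item Check the boundary values. At real $t$, $F(t) = \braket{\Omega, A\tau_t(B)\Omega}$. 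At $t + \ii\beta$, take $z_1 = z_2 = \ii\beta/2$ (plus the real shift). Using the relation, $J$ anti-unitary and $JLJ = -L$ (\cref{lem:JLLJ}), $F(t+\ii\beta)$ becomes
    \begin{equation*}
        \braket{JA\Omega, \ee^{\ii tL} JB^\ast\Omega} = \braket{\tau_t(B^\ast)\Omega, A\Omega} = \omega(\tau_t(B)A) \, .
    \end{equation*}
\end{enumerate}

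\emph{Forward direction (KMS implies the relation).}
\begin{enumerate}
    \item KMS states are $\tau$-invariant, so $\omega \circ \tau_{-t} = \omega$. Thus $\ee^{\ii tL}\Omega \in \NPC$ represents $\omega$, and uniqueness of cone representatives (\cref{thm:vectorRepresentation}) gives $\ee^{\ii tL}\Omega = \Omega$.
    \item Treat first the case where $\Omega$ is cyclic, hence separating by \cref{lem:cyclicSeparating}. By Takesaki's theorem (\cref{exa:LiouvillianKMS}~\ref{enu:modularGroup}), the KMS property at $\beta$ identifies $\tau_{-\beta t}$ with the modular group $\sigma^\Omega_t$.
    \item The unitary groups $\Delta_\Omega^{\ii t}$ and $\ee^{-\ii\beta tL}$ then both implement $\sigma^\Omega_t$ and both leave $\NPC$ invariant. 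By the uniqueness in \cref{pro:standardLiouvillian}, their generators agree: $\log\Delta_\Omega = -\beta L$, i.e. $\Delta_\Omega = \ee^{-\beta L}$.
    \item Here I take $J = J_\Omega$, which is legitimate because the standard form is unique up to unitary equivalence and $\Omega \in \NPC$. Then $\ee^{-\beta L/2}A\Omega = \Delta_\Omega^{1/2}A\Omega = J A^\ast\Omega$ follows from $S_\Omega = J\Delta_\Omega^{1/2}$.
\end{enumerate}

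\emph{Main obstacle: $\Omega$ not cyclic.} Here I would reduce to the projection $e = \ssupp_\Omega J\ssupp_\Omega J$. The reduced algebra $e\MFM e$ is in standard form on $e\MH$ with cone $e\NPC$, and $\Omega$ is cyclic and separating there.
\begin{enumerate}
    \item Show that $\tau$ preserves $\ssupp_\Omega$ (by invariance of $\omega$), so $L$ commutes with $e$ and $eLe$ is the standard Liouvillian of the reduced dynamics.
    \item Apply the cyclic case to the reduced system to get the relation for $A\Omega$ with $A \in \ssupp_\Omega\MFM\ssupp_\Omega$.
    \item Extend to all $A \in \MFM$. Since $\ssupp_\Omega\Omega = \Omega$, one has $A\Omega = A\ssupp_\Omega\Omega$; I would still have to check how the part $(\id-\ssupp_\Omega)A\ssupp_\Omega$ enters.
\end{enumerate}
If this bookkeeping becomes messy, the fallback is a direct argument. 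Bounded analytic continuation of $t \mapsto \braket{\eta, \ee^{\ii tL}B\Omega}$ for $\eta$ in the dense set $\MFM\Omega + [\MFM\Omega]^\perp$, together with a Phragmén--Lindelöf bound, should give $B\Omega \in \dom(\ee^{-\beta L/2})$. The boundary value at $\ii\beta/2$ would then identify $\ee^{-\beta L/2}B\Omega = JB^\ast\Omega$.
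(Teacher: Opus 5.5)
Your converse direction is correct, and so is the forward direction when $\Omega$ is cyclic (Takesaki's theorem plus uniqueness of the standard Liouvillian). One ordering point: you already need $J = J_\Omega$ and $\NPC = \oln{\{AJ_\Omega A\Omega : A \in \MFM\}}$ in your step 3, to know that $\Delta_\Omega^{\ii t}\NPC \subset \NPC$. Both hold because the unitary supplied by uniqueness of the standard form lies in $\comm{\MFM}$, fixes $\Omega$ by uniqueness of cone representatives, and is therefore $\id$ by cyclicity. The paper gives no proof of this proposition; it quotes \cite[Thm.~2.13]{DJP03}.

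The genuine gap is the non-cyclic forward direction. The step you defer as bookkeeping is the heart of the matter. Put $s \ce \ssupp_\Omega$. The target identity applied to $(\id - s)A \in \MFM$ gives $\ee^{-\beta L/2}(\id - s)A\Omega = JA^\ast(\id - s)\Omega = 0$. Since $\ee^{-\beta L/2}$ is injective, this forces $(\id - s)A\Omega = 0$ for all $A \in \MFM$. Together with $\oln{\comm{\MFM}\Omega} = s\MH$, this says $(\id - s)\MFM s = 0$, i.e.\ $s$ is central. Your reduction only uses $\tau$-invariance of $s$ and the KMS property on the corner $s\MFM s$, and these cannot give this fact. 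Take $\MFM = M_2(\C)$ with trivial dynamics ($L = 0$) and a pure state: it is invariant and trivially KMS on $s\MFM s \cong \C$, yet $A\Omega = JA^\ast\Omega$ fails.

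The missing idea is one more use of the KMS condition on all of $\MFM$, applied to off-diagonal elements. Let $X \ce (\id - s)A$ and let $F$ be the KMS function of the pair $(X^\ast, X)$. Its upper boundary values are $F(t + \ii\beta) = \omega\bigl(\tau_t(X)A^\ast(\id - s)\bigr) = 0$, because $(\id - s)\Omega = 0$. By Schwarz reflection $F \equiv 0$, so $\norm{(\id - s)A\Omega}^2 = F(0) = 0$. Hence $s$ is central and $e = s$. Then $A\Omega = sAs\Omega$ and $A^\ast\Omega = sA^\ast s\Omega$, and your corner result yields $\ee^{-\beta L/2}A\Omega = JsA^\ast s\Omega = JA^\ast\Omega$.

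Your fallback does not bypass this fact. For $\eta \perp \MFM\Omega$ it produces $\braket{\eta, \ee^{-\beta L/2}B\Omega} = 0$. That is compatible with $\ee^{-\beta L/2}B\Omega = JB^\ast\Omega$ only if $JB^\ast\Omega \in \oln{\MFM\Omega}$, i.e.\ $(\id - s)B^\ast\Omega = 0$, which is the same fact. There are two further problems with it:
\begin{itemize}
\item The three-lines bound at $\Im z = \beta/2$ is $(\norm{A^\ast\Omega}\norm{B\Omega}\norm{A\Omega}\norm{B^\ast\Omega})^{1/2}$, not $C\norm{A^\ast\Omega}$. So it does not give the bounded functional you need for $B\Omega \in \dom(\ee^{-\beta L/2})$.
\item The value at $\ii\beta/2$ is not a KMS boundary value, so it cannot by itself identify $\ee^{-\beta L/2}B\Omega$.
\end{itemize}
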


\subsection{Bounded perturbation theory}\label{subsec:boundedPerturbations}

We will now briefly review some of the essential constructions of bounded perturbation theory in von Neumann algebras, as developed by H.~Araki \cite{Araki1973a, Araki1973b, Araki1973c}. This will set the stage to generalize some of these results to unbounded perturbations in the following \cref{subsec:unboundedPerturbations}.

Let $(\MFM, \MH, J, \NPC)$ be a von Neumann algebra in standard form and $\Phi \in \NPC$ be a separating vector for the algebra $\MFM$; then, by \cref{lem:cyclicSeparating}, $\Phi$ is also cyclic. For a self-adjoint element $Q \in \MFM$, one defines \cite[Eq. (1.1)]{Araki1973b}, \cite[Eq. (4.18)]{Araki1973c}
\begin{equation*}
    \Phi_Q \ce \sum_{n=0}^{\infty} \int_{0}^{1/2} \diff t_1 \int_{0}^{t_1} \diff t_2 \dotsb \int_{0}^{t_{n-1}} \diff t_n \, (\Delta_\Phi^{t_n} Q) (\Delta_\Phi^{t_{n-1}-t_n} Q) \dotsm (\Delta_\Phi^{t_1-t_2} Q) \, \Phi \, .
\end{equation*}
Araki showed that this series converges absolutely, and that the vector $\Phi_Q$ is again an element of $\NPC$ \cite[Prop. 4.1]{Araki1973c}. Moreover, he proved that $\Phi_Q$ is cyclic and separating for $\MFM$ \cite[Cor. 4.4]{Araki1973c}, and that $\Phi \in \dom \bigl(\ee^{(\log \Delta_\Phi + Q) / 2}\bigr)$ together with the identity \cite[Corollary on p. 170]{Araki1973b}
\begin{equation}\label{eq:boundedPerturbedVector}
    \Phi_Q = \ee^{(\log (\Delta_\Phi) + Q) / 2} \, \Phi \, .
\end{equation}
He used this relation between the unperturbed and perturbed vector to generalize the well-known Golden-Thompson trace inequality to arbitrary von Neumann algebras \cite[Thm. 2]{Araki1973b}:

\begin{proposition}[Golden-Thompson inequality]\label{pro:GoldenThompson}
    Let $(\MFM, \MH, J, \NPC)$ be a von Neumann algebra in standard form, let $\Phi \in \NPC$ be separating for $\MFM$, and let $Q \in \MFM$ be self-adjoint. Then
    \begin{equation}\label{eq:GoldenThompson}
        \norm[\big]{\ee^{(\log \Delta_\Phi + Q) / 2} \Phi} \le \norm[\big]{\ee^{Q / 2} \Phi} \, .
    \end{equation}
\end{proposition}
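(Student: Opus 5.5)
The plan is to combine a Lie--Trotter product formula with a modular Hölder-type estimate obtained by complex interpolation. This is essentially the strategy of Araki's original proof.

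\textbf{Step 1 (reduction to a product formula).} Put $H_0 \ce \log \Delta_\Phi$. Since $Q \in \MFM$ is bounded and self-adjoint, $H_0 + Q$ is self-adjoint on $\dom(H_0)$.

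For imaginary exponents the Trotter product formula holds strongly:
\begin{equation*}
    \ee^{\ii s (H_0 + Q)} = \dlim{s}{n \to \infty} \bigl(\ee^{\ii s H_0 / n} \, \ee^{\ii s Q / n}\bigr)^n \, .
\end{equation*}
We need the real-exponent version applied to $\Phi$:
\begin{equation*}
    \ee^{(H_0 + Q)/2} \Phi = \lim_{n \to \infty} \bigl(\Delta_\Phi^{1/(2n)} \, \ee^{Q/(2n)}\bigr)^n \Phi \, .
\end{equation*}
This would follow by analytic continuation in $s$ from the unitary case. We know from \cref{eq:boundedPerturbedVector} that $\Phi \in \dom(\ee^{(H_0 + Q)/2})$. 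For the approximants, the vector functions
\begin{equation*}
    z \mapsto \Delta_\Phi^{z_1} A_1 \Delta_\Phi^{z_2} A_2 \dotsm \Delta_\Phi^{z_n} A_n \Phi \com A_k \in \MFM \, ,
\end{equation*}
are analytic on the simplex $\Re z_k \ge 0$, $\sum_k \Re z_k \le 1/2$, and bounded there. This is the standard multiple-time analyticity coming from the KMS property of $\Phi$ with respect to $\sigma^\Phi$ (cf.\ \cref{exa:LiouvillianKMS}\,\ref{enu:modularGroup}). Once uniform bounds are available, a Vitali-type argument passes from $\Re z = 0$ to $z_k = 1/(2n)$.

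Alternatively, one can avoid the real-time Trotter formula entirely. Araki's Dyson series for $\Phi_Q$ is exactly the $n \to \infty$ limit of the time-ordered expansions of $\bigl(\Delta_\Phi^{1/(2n)} \ee^{Q/(2n)}\bigr)^n \Phi$, so the limit can be read off term by term from that absolutely convergent series.

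\textbf{Step 2 (the key estimate).} The goal is the bound, for every $n$,
\begin{equation*}
    \norm[\big]{\bigl(\Delta_\Phi^{1/(2n)} \, \ee^{Q/(2n)}\bigr)^n \Phi} \le \norm[\big]{\ee^{Q/2} \Phi} \, .
\end{equation*}
Letting $n \to \infty$ then gives \cref{eq:GoldenThompson}.

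To prove it, I would first establish a modular Hölder inequality. For $A_1, \dotsc, A_n \in \MFM$ and $\alpha_k \ge 0$ with $\sum_k \alpha_k = 1/2$,
\begin{equation*}
    \norm{\Delta_\Phi^{\alpha_1} A_1 \dotsm \Delta_\Phi^{\alpha_n} A_n \Phi} \le \prod_{k} \norm[\big]{\abs{A_k}^{1/(2\alpha_k)} \Phi}^{2\alpha_k} \, .
\end{equation*}
Here a factor with $\alpha_k = 0$ is read as $\norm{A_k}_\mop$, and the right-hand side is understood with the appropriate normalization.

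The proof goes by the three-lines (Hadamard) theorem on the simplex. On the boundary faces, one factor carries $\Re z_k = 1/2$, so $\norm{\Delta_\Phi^{1/2} B\Phi} = \norm{B^\ast \Phi}$ by the Tomita relation. The remaining factors are purely imaginary powers, which are unitaries $\Delta_\Phi^{\ii t}$ preserving $\MFM$ by \cref{thm:Tomita}. One also uses the polar decomposition $A_k = V_k \abs{A_k}$ and replaces $\abs{A_k}$ by the analytic family $\abs{A_k}^{z/\alpha_k}$.

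Applying the inequality with $A_k = \ee^{Q/(2n)}$ and $\alpha_k = 1/(2n)$ gives $\abs{A_k}^{1/(2\alpha_k)} = \ee^{Q/2}$. The right-hand side then collapses to $\norm{\ee^{Q/2}\Phi}$, which is exactly the bound of Step 2.

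\textbf{Main obstacle.} I expect the main difficulty to be Step 2: setting up the multi-variable interpolation rigorously. This requires boundedness and continuity of the vector-valued analytic function up to the boundary of the simplex, using only that $\Phi$ is a KMS vector for $\sigma^\Phi$. The domain issues in Step 1 come second.

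If the general $n$-factor Hölder inequality proves too cumbersome, I would fall back on induction. Take $n = 2^m$ and use only the two-factor estimate
\begin{equation*}
    \norm{\Delta^{1/4} A \Delta^{1/4} A \Phi} \le \norm{\Delta^{1/4} A^2 \Delta^{1/4}\cdots} \, ,
\end{equation*}
in the spirit of the Lieb--Thirring/Araki inequality. This shows that $\norm{(\Delta_\Phi^{1/(2n)} \ee^{Q/(2n)})^n \Phi}$ is monotone non-decreasing along $n = 2^m$ and bounded by its value at $n = 1$, which is $\norm{\Delta_\Phi^{1/2} \ee^{Q/2}\Phi} = \norm{\ee^{Q/2}\Phi}$.
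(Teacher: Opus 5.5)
The paper does not prove this proposition; it quotes it from Araki \cite[Thm.~2]{Araki1973b}. Your plan has two ingredients: a product formula for $\ee^{(\log\Delta_\Phi+Q)/2}\Phi$, and a modular Hölder bound obtained by Hadamard interpolation on the tube over the simplex. The analytic input is Araki's multiple-time KMS analyticity: analyticity inside the tube, continuity on its closure, and $\norm{\Delta_\Phi^{z_1}x_1\dotsm\Delta_\Phi^{z_n}x_n\Phi}\le\prod_k\norm{x_k}\,\norm{\Phi}$. You should cite this rather than re-derive it. The overall strategy is sound, but two steps are wrong as written.

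\textbf{The Hölder lemma is misstated.} For $n=1$, $\alpha_1=1/2$ it asserts $\norm{A^\ast\Phi}=\norm{\Delta_\Phi^{1/2}A\Phi}\le\norm{\abs{A}\Phi}=\norm{A\Phi}$ for all $A\in\MFM$. Applying it also to $A^\ast$ forces $\omega_\Phi$ to be tracial.
\begin{itemize}
\item The right-hand side must contain $\abs{A_k^\ast}=(A_kA_k^\ast)^{1/2}$, and the interpolating family must be $\abs{A_k^\ast}^{z_k/\alpha_k}V_k$ with $A_k=\abs{A_k^\ast}V_k$. With your family $V_k\abs{A_k}^{z_k/\alpha_k}$, on the face $\Re z_k=1/2$ the partial isometry sits between $\abs{A_k}^{1/(2\alpha_k)}$ and $\Phi$.
\item Since you only need $A_k=\ee^{Q/(2n)}$, it is simplest to prove the lemma for positive invertible $A_k$ with the family $A_k^{z_k/\alpha_k}$.
\item At the vertex $\Re z=e_k/2$, first collapse the factors to the right of $\Delta_\Phi^{1/2+\ii y_k}$ into a single $B'\in\MFM$ with $\norm{B'}\le 1$. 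This uses $\Delta_\Phi^{\ii t}\Phi=\Phi$ and $\Delta_\Phi^{\ii t}\MFM\Delta_\Phi^{-\ii t}=\MFM$. Only then apply $\Delta_\Phi^{1/2}X\Phi=JX^\ast\Phi$, which gives the vertex bound $\norm{A_k^{1/(2\alpha_k)}\Phi}$.
\end{itemize}

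\textbf{Step 1 does not reach the point you need.} Put $K\ce\log\Delta_\Phi+Q$ and $G_n(w)\ce(\Delta_\Phi^{w/n}\ee^{wQ/n})^n\Phi$. These functions are analytic only for $0<\Re w<1/2$, and $w=1/2$ lies on the boundary.
\begin{itemize}
\item Vitali itself needs convergence at interior points, and you only have strong convergence $G_n(\ii t)\to\ee^{\ii tK}\Phi$ on $\Re w=0$ (Trotter). Combining this with uniform bounds, the harmonic-measure (two-constants) estimate for $\log\abs{\braket{\eta,G_n(w)-G(w)}}$, where $G(w)\ce\ee^{wK}\Phi$, and Fatou's lemma, you get only weak convergence $G_n(w)\to G(w)$ for $0<\Re w<1/2$.
\item That is enough if you use it at real $r\in(0,1/2)$. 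Apply your corrected Hölder inequality to the $n+1$ factors consisting of $\Delta_\Phi^{r/n}\ee^{rQ/n}$ (repeated $n$ times) and $\Delta_\Phi^{1/2-r}\1$. The last factor is legitimate because $\Delta_\Phi^{1/2-r}\Phi=\Phi$. This gives $\norm{G_n(r)}\le\norm{\ee^{Q/2}\Phi}^{2r}\norm{\Phi}^{1-2r}$.
\item Weak lower semicontinuity of the norm transfers this bound to $\norm{\ee^{rK}\Phi}$. Since $\Phi\in\dom(\ee^{K/2})$, dominated convergence in the spectral representation of $K$ lets you take $r\uparrow 1/2$.
\item The Dyson-series alternative can be made rigorous, but it needs norm continuity of the multiple-time vectors on the closed simplex and control of the terms with repeated factors of $Q$. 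As sketched it is not an argument.
\item The $2^m$ fallback is garbled. A sequence that is non-decreasing and bounded by its first term is constant; you need non-increasing. The displayed two-factor estimate is also incomplete.
\end{itemize}

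\textbf{A shorter route using the paper's own tools.} The whole analytic machinery can be bypassed. Normalize $\norm{\Phi}=1$, put $\Phi_Q=\ee^{K/2}\Phi$, and let $\psi$ be the state of $\Psi\ce\Phi_Q/\norm{\Phi_Q}$, which is cyclic and separating.
\begin{itemize}
\item Araki's formula $\log\Delta_{\Phi_Q,\Psi}=\log\Delta_{\Phi,\Psi}+Q$ (\cref{pro:boundedPerturbationRelModOp}), together with $\Delta_{\Phi_Q,\Psi}=\norm{\Phi_Q}^2\Delta_\Psi$ (\cref{lem:relModOpScaling}) and $\Delta_\Psi\Psi=\Psi$, gives $\MS_\MFM(\psi,\omega_\Phi)=\psi(Q)-\log\norm{\Phi_Q}^2$.
\item Restrict to the abelian algebra $\{Q\}''$ and use monotonicity (\cref{thm:relEntProperties} \ref{enu:relEntSubalg}).
\item Apply the classical Gibbs inequality $D(\mu\,\Vert\,\nu)\ge\int\lambda\diff\mu(\lambda)-\log\int\ee^{\lambda}\diff\nu(\lambda)$ to the spectral distributions of $Q$ in $\psi$ and $\omega_\Phi$.
\item Together these yield $\MS_\MFM(\psi,\omega_\Phi)\ge\psi(Q)-\log\omega_\Phi(\ee^Q)$, that is, $\norm{\Phi_Q}^2\le\norm{\ee^{Q/2}\Phi}^2$.
\end{itemize}
This is the argument of \cref{pro:GibbsVariationalPrinciple} with the trivial subalgebra $\{\id_\MH\}$ replaced by $\{Q\}''$. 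What your analytic route buys in exchange is the stronger intermediate estimate $\norm{(\Delta_\Phi^{1/(2n)}\ee^{Q/(2n)})^n\Phi}\le\norm{\ee^{Q/2}\Phi}$ for every $n$.
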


Later, in his study of relative entropy, Araki proved the following formulas for the relative modular operator involving the perturbed vector $\Phi_Q$ \cite[Eqs.~(4.28) and (4.29), p.~189]{Araki1977}.

\begin{proposition}\label{pro:boundedPerturbationRelModOp}
    Let $(\MFM, \MH, J, \NPC)$ be a von Neumann algebra in standard form, let $\Phi, \Psi \in \NPC$ be cyclic and separating, and let $Q \in \MFM$ be self-adjoint. It holds that
    \begin{equation}\label{eq:boundedPerturbationRelModOp}
        \log (\Delta_{\Phi_Q, \Psi}) = \log (\Delta_{\Phi, \Psi}) + Q \tand \log (\Delta_{\Psi, \Phi_Q}) = \log (\Delta_{\Psi, \Phi}) - J Q J \, .
    \end{equation}
\end{proposition}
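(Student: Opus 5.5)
The plan is to prove both formulas in \eqref{eq:boundedPerturbationRelModOp} through Connes' cocycle calculus, that is, by comparing unitary groups. The key tool is the identity
\begin{equation*}
    \Delta_{\Phi_1, \Psi}^{\ii t} = (D\varphi_1 : D\varphi)_t \, \Delta_{\Phi, \Psi}^{\ii t},
\end{equation*}
valid for faithful $\varphi = \omega_\Phi$, $\varphi_1 = \omega_{\Phi_1}$ and $\psi = \omega_\Psi$. Here $(D\varphi_1 : D\varphi)_t \in \MFM$ is the Connes cocycle, and the identity holds because the relative modular operators are built from left actions of $\MFM$ on the $\Phi$-slot. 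The second slot behaves in the mirror way. Since $J \Delta_{\Psi, \Phi} J = \Delta_{\Phi, \Psi}^{-1}$, we get
\begin{equation*}
    \Delta_{\Psi, \Phi_1}^{\ii t} = J\Delta_{\Phi_1, \Psi}^{\ii t}J \cdot (\text{suitable}) = \Delta_{\Psi,\Phi}^{\ii t} \, J (D\varphi_1 : D\varphi)_{-t}^{\ast}... ,
\end{equation*}
so the second formula follows from the first by conjugation with $J$. In what follows I concentrate on the first formula.

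\emph{Step 1: perturbation of the modular group.} By \eqref{eq:boundedPerturbedVector}, $\Phi_Q = \ee^{(\log \Delta_\Phi + Q)/2}\Phi$. The unitary group $\ee^{\ii t(\log \Delta_\Phi + Q)}$ implements the perturbed dynamics $\sigma^{\Phi,Q}_t$ on $\MFM$, and its cocycle $\Gamma_t = \ee^{\ii t(\log\Delta_\Phi + Q)}\Delta_\Phi^{-\ii t}$ lies in $\MFM$. This membership follows from the Dyson expansion, since $Q \in \MFM$ and $\Delta_\Phi^{\ii s}$ preserves $\MFM$ by \cref{thm:Tomita}.

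Araki's theory gives that $\omega_{\Phi_Q}$ is KMS at $\beta = -1$ for $\sigma^{\Phi,Q}$. To avoid citing this, I would verify it directly from \cref{pro:LiouvillianKMS}, applied to the Liouvillian $\log\Delta_\Phi + Q - JQJ$. This operator is the standard Liouvillian of the perturbed dynamics, because $JQJ \in \comm{\MFM}$ commutes with $\MFM$ and the resulting group preserves $\NPC$. One then shows that $\Phi_Q$ is invariant under this group, and that $\Delta_{\Phi_Q} = \ee^{\log\Delta_\Phi + Q - JQJ}$.

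From Takesaki's uniqueness (\cref{exa:LiouvillianKMS}\ref{enu:modularGroup}) we conclude $\sigma^{\Phi_Q} = \sigma^{\Phi,Q}$. Hence $(D\omega_{\Phi_Q} : D\omega_\Phi)_t = \Gamma_t$, up to identifying the cocycle by its boundary (KMS-type) analyticity, which I would check via $\Gamma_t \Phi = \ee^{\ii t(\log\Delta_\Phi+Q)}\Phi$ analytically continued to $t = -\ii/2$ to reproduce $\Phi_Q$.

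\emph{Step 2: conclude.} Plugging $\Gamma_t$ into the cocycle identity gives
\begin{equation*}
    \Delta_{\Phi_Q,\Psi}^{\ii t} = \ee^{\ii t(\log\Delta_\Phi+Q)}\Delta_\Phi^{-\ii t}\Delta_{\Phi,\Psi}^{\ii t}.
\end{equation*}
I then need the analogous statement with $\log\Delta_{\Phi,\Psi}$ in place of $\log\Delta_\Phi$. This is obtained by repeating the Dyson expansion with $\log\Delta_{\Phi,\Psi}$, which is legitimate because $\Delta_{\Phi,\Psi}^{\ii s} A \Delta_{\Phi,\Psi}^{-\ii s} = \sigma^\Phi_s(A)$ for $A \in \MFM$. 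The result is the one-parameter group identity $\Delta_{\Phi_Q,\Psi}^{\ii t} = \ee^{\ii t(\log\Delta_{\Phi,\Psi}+Q)}$, and Stone's theorem then yields the first formula. Note that $\log\Delta_{\Phi,\Psi} + Q$ is self-adjoint on $\dom(\log\Delta_{\Phi,\Psi})$, since $Q$ is bounded.

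For the second formula, I would use $\Delta_{\Psi,\Phi_Q} = J\Delta_{\Phi_Q,\Psi}^{-1}J$, which holds for cyclic and separating vectors in $\NPC$, together with the first formula applied with roles swapped. The cleanest route is to redo the argument with a commutant-slot cocycle: perturbing the second slot corresponds to acting by $J\Gamma_t J \in \comm{\MFM}$, which produces the $-JQJ$.

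\emph{Main obstacle.} The delicate point is identifying the Connes cocycle with $\Gamma_t$, i.e.\ establishing rigorously that $\Phi_Q$ is the KMS vector of the perturbed dynamics, with the correct normalization. The problem is that $\Phi_Q$ is not normalized, so \cref{lem:relModOpScaling} must be used to remove the factor $\norm{\Phi_Q}^2$. I would first attempt this via the Dyson series and the analytic continuation of $\Gamma_t\Phi$, leaning on the already-stated identity \eqref{eq:boundedPerturbedVector}.
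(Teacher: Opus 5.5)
The paper gives no proof of this proposition; it is quoted from \cite[Eqs.~(4.28) and (4.29)]{Araki1977}, so your argument has to stand on its own. Its outer layer does. The chain rule $\Delta^{\ii t}_{\Phi_Q,\Psi}=(D\omega_{\Phi_Q}:D\omega_\Phi)_t\,\Delta^{\ii t}_{\Phi,\Psi}$ is standard, and Step~2 is rigorous. For self-adjoint $K$ and bounded $Q$, the unitary $W_t=\ee^{\ii t(K+Q)}\ee^{-\ii tK}$ satisfies $\frac{\diff}{\diff t}W_t\xi=\ii W_t\,\ee^{\ii tK}Q\,\ee^{-\ii tK}\xi$ for $\xi\in\dom(K)$, with $W_0=\id$. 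Hence $W_t$ is determined by $s\mapsto\ee^{\ii sK}Q\,\ee^{-\ii sK}$, which equals $\sigma^\Phi_s(Q)$ both for $K=\log\Delta_\Phi$ and for $K=\log\Delta_{\Phi,\Psi}$. The second formula is simpler than your display suggests. Since $J$ is antiunitary and $J\Delta_{\Phi_Q,\Psi}J=\Delta_{\Psi,\Phi_Q}^{-1}$ (as $\Phi_Q\in\NPC$), you get $J\Delta^{\ii t}_{\Phi_Q,\Psi}J=\Delta^{\ii t}_{\Psi,\Phi_Q}$ exactly. Therefore $\log\Delta_{\Psi,\Phi_Q}=-J(\log\Delta_{\Phi,\Psi}+Q)J=\log\Delta_{\Psi,\Phi}-JQJ$, and no correction factor or commutant-slot cocycle is needed.

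The genuine gap is Step~1. Your own chain rule with $\Psi=\Phi$ shows that $(D\omega_{\Phi_Q}:D\omega_\Phi)_t=\Gamma_t$ is \emph{equivalent} to $\log\Delta_{\Phi_Q,\Phi}=\log\Delta_\Phi+Q$, i.e.\ to the case $\Psi=\Phi$ of the proposition. Step~2 merely transports that case to arbitrary $\Psi$, so all the analytic content sits in Step~1, and your sketch does not supply it. You only assert that $\Phi_Q$ is invariant under $\ee^{\ii t(\log\Delta_\Phi+Q-JQJ)}$. Even granted, this says $\omega_{\Phi_Q}$ is invariant under the perturbed dynamics, not that it is KMS. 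To apply \cref{pro:LiouvillianKMS} you would need $A\Phi_Q\in\dom(\ee^{K/2})$ and $\ee^{K/2}A\Phi_Q=JA^\ast\Phi_Q$ for all $A\in\MFM$, with $K=\log\Delta_\Phi+Q-JQJ$. That is Araki's relative-Hamiltonian theorem itself; it requires analytically continuing the expansional $\Gamma_t$, and it appears nowhere in your argument.

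You must import it. If you cite that $\omega_{\Phi_Q}/\norm{\Phi_Q}^2$ is a $(\sigma^{\Phi,Q},-1)$-KMS state, the rest of your plan works:
\begin{itemize}
\item Takesaki's uniqueness identifies its modular group with $\operatorname{Ad}\Gamma_t\circ\sigma^\Phi_t$. So $z_t\ce\Gamma_t^\ast(D\omega_{\Phi_Q}:D\omega_\Phi)_t$ is central, and since both factors are $\sigma^\Phi$-cocycles, it is a one-parameter group $z_t=\ee^{\ii tH}$.
\item Then $\Delta^{\ii t}_{\Phi_Q,\Phi}=z_t\,\ee^{\ii t(\log\Delta_\Phi+Q)}$ with commuting factors.
\item Comparing $\Delta^{1/2}_{\Phi_Q,\Phi}\Phi=J\Phi_Q=\Phi_Q$ with $\ee^{(\log\Delta_\Phi+Q)/2}\Phi=\Phi_Q$ gives $\ee^{H/2}\Phi_Q=\Phi_Q$, hence $H=0$ because $\Phi_Q$ is separating.
\end{itemize}
The same comparison fixes the scalar factor $\norm{\Phi_Q}^{2\ii t}$, so normalization is not a real obstacle.

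Within the paper there is a shorter route. Apply \cref{thm:unboundedPerturbationsKMS}~\ref{enu:unboundedPerturbationRelModOp1} to the dynamics $t\mapsto\sigma^\Phi_{-t/\beta}$, as in \cref{pro:equivalencePerturbedVectors}, and to the bounded perturbation $V=-\inv{\beta}Q$. This $V$ trivially satisfies \ref{enu:A1}--\ref{enu:A3}; rescale $\Phi$ via \cref{lem:relModOpScaling} if it is not normalized. This yields $\log\Delta_{\Phi_Q,\Phi}=\log\Delta_\Phi+Q$ directly, hence $(D\omega_{\Phi_Q}:D\omega_\Phi)_t=\Gamma_t$, with no Takesaki or central-phase detour. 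Either way, what you have is a correct reduction of general $\Psi$ to $\Psi=\Phi$ plus an imported result, not a self-contained proof.
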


If $\tau$ is a $W^\ast$-dynamics on $\MFM$, then every self-adjoint $Q \in \MFM$ also gives rise to a perturbed $W^\ast$-dynamics $\tau^Q$ on $\MFM$ given for all $t \in \R$ and $A \in \MFM$ by \cite[Prop. 5.4.1]{BR2}
\begin{equation}\label{eq:boundedPerturbedDynamics}
    \tau_t^Q(A) = \sum_{n=0}^{\infty} \ii^n \int_{0}^{t} \diff t_1 \int_{0}^{t_1} \diff t_2 \, \dotsb \int_{0}^{t_{n-1}} \diff t_n \, \Bigl[\tau_{t_n}(Q), \bigl[\dotsb, \bigl[\tau_{t_2}(Q), [\tau_{t_1}(Q), \tau_t(A)]\bigr] \dotsb\bigr]\Bigr] \, ,
\end{equation}
where the integrals exist in the $\sigma$-weak operator topology and the series converges in the norm topology of $\MFM$. In case that $\tau$ is implemented by a self-adjoint operator $H$, i.e., $\tau_t(A) = \ee^{\ii t H} A \, \ee^{- \ii t H}$, then the perturbed dynamics is given by \cite[Cor. 5.4.2]{BR2}
\begin{equation*}
    \tau_t^Q(A) = \ee^{\ii t (H + Q)} A \, \ee^{- \ii t (H + Q)} \, .
\end{equation*}

\subsection{Unbounded perturbation theory}\label{subsec:unboundedPerturbations}

More general perturbations than bounded elements $Q \in \MFM$ of the von Neumann algebra were considered by M.~J.~Donald \cite{Donald1987, Donald1990}, who developed a framework for semi-bounded perturbations modeled as extended-valued operators affiliated with the algebra; see Ref.~\cite[Ch. 12]{OP2004} for a review. Later, J.~Derezi\'{n}ski, V.~Jak\v{s}i\'{c}, and C.-A.~Pillet \cite{DJP03} developed a perturbation theory for general unbounded perturbations of KMS states; their method essentially relies on three natural assumptions on the self-adjoint perturbation $V \in \AE{\MFM}$, the use of the Golden-Thompson inequality \eqref{eq:GoldenThompson}, and weak convergence arguments (in contrast to strong convergence arguments employed by Araki) to establish the existence of a perturbed vector $\Phi_V$ analogous to \eqref{eq:boundedPerturbedVector}.

In the following, we will review some of the results of Derezi\'{n}ski, Jak\v{s}i\'{c}, and Pillet as well as extend them in a way suitable for our needs in \cref{sec:extension}. As indicated more precisely at the appropriate places below, we will make essential use of the methods developed by these authors to prove the desired extensions. Let us first introduce the following general hypotheses used throughout this entire subsection.

\begin{assumption}\label{asm:generalSetting}
    In what follows, $\MFM$ shall be a von Neumann algebra represented in standard form $(\MFM, \MH, J, \NPC)$, $\tau$ a $W^\ast$-dynamics on $\MFM$ with standard Liouvillian $L : \MH \supset \dom(L) \to \MH$, that is, $\tau_t(A) = \ee^{\ii t L} A \, \ee^{- \ii t L}$, and $V \in \AE{\MFM}$ a self-adjoint operator affiliated with $\MFM$.
\end{assumption}

For unbounded perturbations, it turns out to be relatively straightforward to prove the existence of the perturbed dynamics, generalizing \cref{eq:boundedPerturbedDynamics}. For this, the following assumption of Ref.~\cite[Assumption 3.1]{DJP03} is needed, which we will adopt in the sequel as well:
\begin{enumerate}[label=\normalfont\bfseries(A\arabic*), ref=\normalfont(A\arabic*)]
    \item \label{enu:A1} $L + V$ is essentially self-adjoint on $\dom(L) \cap \dom(V)$.
\end{enumerate}
To simplify notation, denote the self-adjoint closure $\oln{L + V}$ also by $L + V$. The following result is proved in Ref.~\cite[Thm. 3.3]{DJP03}.

\begin{proposition}\label{pro:unboundedPerturbedDynamics}
    In the setting of Assumptions \ref{asm:generalSetting} and \ref{enu:A1}, define
    \begin{equation}\label{eq:unboundedPerturbedDynamics}
        \tau_t^V(A) \ce \ee^{\ii t (L + V)} A \, \ee^{- \ii t (L + V)}
    \end{equation}
    for all $t \in \R$ and $A \in \MFM$. Then $t \mapsto \tau_t^V$ is a $W^\ast$-dynamics on $\MFM$ which, in the case that $V \in \MFM$ is bounded, coincides with the expression \eqref{eq:boundedPerturbedDynamics} for bounded perturbations.
\end{proposition}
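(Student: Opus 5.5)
The plan is to verify the three defining properties of a $W^\ast$-dynamics for $\tau^V$ in turn, and then to identify $\tau^V$ with \eqref{eq:boundedPerturbedDynamics} when $V \in \MFM$. By \ref{enu:A1}, the closure $L + V$ is self-adjoint. By Stone's theorem, $t \mapsto \ee^{\ii t (L+V)}$ is therefore a strongly continuous unitary group. Consequently each $\tau_t^V$ is a unital, injective $\ast$-homomorphism of $\BO(\MH)$, with $\tau_0^V = \id$, $\tau_{s+t}^V = \tau_s^V \circ \tau_t^V$ and inverse $\tau_{-t}^V$. The substantive issue is that $\tau_t^V$ must map $\MFM$ into $\MFM$. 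Once this is known for all $t$, applying it to $-t$ gives $\tau_t^V(\MFM) = \MFM$, so $\tau_t^V \in \Aut(\MFM)$.

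\emph{Invariance of $\MFM$.} I will use the Trotter product formula. Since $L + V$ is essentially self-adjoint on $\dom(L) \cap \dom(V)$, it gives, for every $t \in \R$,
\begin{equation*}
    \ee^{\ii t (L+V)} = \dlim{s}{n \to \infty} W_n \com W_n \ce \bigl(\ee^{\ii t L / n} \, \ee^{\ii t V / n}\bigr)^n \, .
\end{equation*}
The $W_n$ are unitaries converging strongly to a unitary, so $W_n^\ast \to \ee^{- \ii t (L+V)}$ strongly as well. Because products of uniformly bounded, strongly convergent sequences converge strongly, it follows that $W_n A W_n^\ast \to \tau_t^V(A)$ strongly for each $A \in \MFM$.

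Each factor preserves $\MFM$. First, $\ee^{\ii s V}$ is a unitary in $\MFM$, since $V \in \AE{\MFM}$ and bounded Borel functions of affiliated self-adjoint operators belong to $\MFM$; hence $\ee^{\ii s V} A \, \ee^{- \ii s V} \in \MFM$. Second, $\ee^{\ii s L} A \, \ee^{- \ii s L} = \tau_s(A) \in \MFM$. Therefore $W_n A W_n^\ast \in \MFM$, and strong closedness of $\MFM$ gives $\tau_t^V(A) \in \MFM$. This step is the main obstacle, and the only place where \ref{enu:A1} is used in an essential way: the Trotter formula needs essential self-adjointness on $\dom(L) \cap \dom(V)$. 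Without it one could not pass from the individual implementers to the generator $L + V$.

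\emph{Continuity.} For $A \in \MFM$ and $\xi \in \MH$, strong continuity of $t \mapsto \ee^{\pm \ii t (L+V)}$ together with uniform boundedness makes $t \mapsto \tau_t^V(A) \xi$ norm continuous. Hence $t \mapsto \tau_t^V(A)$ is strongly continuous and bounded, and on bounded sets strong continuity implies $\sigma$-weak continuity. This shows that $\tau^V$ is a $W^\ast$-dynamics on $\MFM$.

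\emph{Bounded case.} If $V = Q \in \MFM$ is bounded, then $L + Q$ is self-adjoint on $\dom(L)$ by Kato--Rellich, so \ref{enu:A1} holds automatically. Moreover, $\tau$ is implemented by the self-adjoint operator $L$. The cited result \cite[Cor. 5.4.2]{BR2} then states that the Dyson-series dynamics \eqref{eq:boundedPerturbedDynamics} equals $\ee^{\ii t (L+Q)} A \, \ee^{- \ii t (L+Q)}$, which is exactly \eqref{eq:unboundedPerturbedDynamics}.

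If one prefers not to rely on that reference, an alternative is to compare the two directly. I would differentiate $\ee^{- \ii t L} \ee^{\ii t (L+Q)}$ on $\dom(L)$ to obtain the Duhamel identity, and then iterate it to reproduce the norm-convergent commutator series.
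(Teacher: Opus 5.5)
Your proof is correct and complete. The Trotter product formula, in its version for sums that are only essentially self-adjoint on $\dom(L) \cap \dom(V)$ (exactly \ref{enu:A1}), gives invariance of $\MFM$; strong continuity on the bounded orbit gives $\sigma$-weak continuity; and for $V \in \MFM$ the identification is precisely the implemented-dynamics formula \cite[Cor.~5.4.2]{BR2} that the paper quotes. The paper does not prove this statement itself but defers to \cite[Thm.~3.3]{DJP03}, and your argument is the standard proof of that result, so there is nothing to add.
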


Before we continue with the next result of Ref.~\cite{DJP03}, we prove a simple consequence of Assumption \ref{enu:A1} regarding the operator $L - J V J$, which we will need in \cref{sec:extension}; the proof relies on \cref{lem:JLLJ}.

\begin{lemma}\label{lem:LJVJessSA}
    Suppose that Assumptions \ref{asm:generalSetting} and \ref{enu:A1} are satisfied. Then the operator $L - J V J$ is essentially self-adjoint on $\dom(L) \cap \dom(J V J)$.
\end{lemma}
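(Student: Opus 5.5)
The plan is to show that $L - JVJ$ is unitarily (more precisely, anti-unitarily) conjugate to $-(L+V)$ via the modular conjugation $J$. Essential self-adjointness then transfers directly from Assumption \ref{enu:A1}. Since $J$ is an anti-linear unitary involution, conjugation by $J$ preserves closedness, adjoints, and cores: for a densely defined operator $T$, one has $(JTJ)^\ast = J T^\ast J$ with domain $J\dom(T^\ast)$, and $\oln{JTJ} = J\,\oln{T}\,J$.

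The first step is to verify this adjoint rule for anti-unitary $J$. For $\xi \in \dom(T)$ and $\eta \in \MH$, use $\braket{J a, J b} = \braket{b, a}$ to compute
\begin{equation*}
    \braket{\eta, JTJ\xi} = \oln{\braket{J\eta, T J\xi}} .
\end{equation*}
From this one reads off that $\eta \in \dom((JTJ)^\ast)$ if and only if $J\eta \in \dom(T^\ast)$, and that in this case $(JTJ)^\ast\eta = JT^\ast J\eta$. In particular, $T$ is essentially self-adjoint on a domain $D$ if and only if $JTJ$ is essentially self-adjoint on $JD$.

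The second step identifies the operator. Take $T = L + V$ on $D = \dom(L)\cap\dom(V)$. By \cref{lem:JLLJ}, $J\dom(L) = \dom(L)$ and $JLJ = -L$ on $\dom(L)$. Also, $\dom(JVJ) = J\dom(V)$ by definition. Hence
\begin{equation*}
    JD = \dom(L)\cap\dom(JVJ),
\end{equation*}
and for $\xi \in JD$ we get $J(L+V)J\xi = JLJ\xi + JVJ\xi = -L\xi + JVJ\xi$. Thus $J(L+V)J = -(L - JVJ)$ on exactly the domain in question.

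The third step concludes. By \ref{enu:A1}, $L+V$ is essentially self-adjoint on $D$, so by the first step $-(L-JVJ)$ is essentially self-adjoint on $JD$. Essential self-adjointness is invariant under multiplication by $-1$, which gives the claim. As a byproduct, the closure is $\oln{L - JVJ} = -J(L+V)J$.

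I expect no serious obstacle. The only delicate point is the bookkeeping of anti-linearity in the adjoint computation: complex conjugations appear in the inner products and must cancel correctly. One should also note that $JVJ$ is indeed self-adjoint (it is affiliated with $\comm{\MFM}$), although this is not even needed, since the argument goes entirely through the conjugation identity.
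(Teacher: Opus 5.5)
Your proposal is correct and follows the same route as the paper: write $L - JVJ = -J(L+V)J$ on $J\bigl(\dom(L)\cap\dom(V)\bigr) = \dom(L)\cap\dom(JVJ)$ using \cref{lem:JLLJ}, then transfer essential self-adjointness from \ref{enu:A1}. Your explicit check that conjugation by the anti-unitary $J$ satisfies $(JTJ)^\ast = JT^\ast J$ is a welcome addition, since the paper only asserts this step by calling $J$ a bounded, self-adjoint bijection.
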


\begin{proof}
    For brevity, introduce the notation $W \ce J V J$. Then
    \begin{equation*}
        \dom(W) = J \dom(V) \, .
    \end{equation*}
    To see this, first let $\eta \in \dom(W) = \set{\xi \in \MH : J \xi \in \dom(V)}$. Since $J^2 = \id_\MH$, it follows that $\eta = J (J \eta)$ and hence $\eta \in J \dom(V)$. Conversely, if $\eta \in J \dom(V)$, then $\eta = J \xi$ for some $\xi \in \dom(V)$. Therefore, $J \eta = J^2 \xi = \xi \in \dom(V)$, and so $\eta \in \dom(W)$. In \cref{lem:JLLJ}, we have shown that $\dom(L) = J \dom(L)$ and $J L + L J = 0$ on $\dom(L)$. From this one obtains
    \begin{equation*}
        L - W = - J L J - J V J = - J (L + V) J \, ,
    \end{equation*}
    where the domain of this operator is given by
    \begin{equation*}
        \dom(L - W) = \dom(L) \cap \dom(W) = J \bigl(\dom(L) \cap \dom(V)\bigr) \, .
    \end{equation*}
    If $\eta = J \xi \in \dom(L - W)$ for $\xi \in \dom(L + V)$, then the action of $L - W$ can be written as $(L - W) \eta = - J (L + V) J \eta = - J (L + V) \xi$. By Assumption \ref{enu:A1}, the operator $L + V$ is essentially self-adjoint on $\dom(L) \cap \dom(V)$. Since $J$ is a bounded, self-adjoint, bijective operator, $L - W$ is essentially self-adjoint on $\dom(L) \cap \dom(W)$.
\end{proof}

Coming back to Ref.~\cite{DJP03}, in their next step the authors determined the standard Liouvillian of the perturbed dynamics $\tau^V$ from \cref{eq:unboundedPerturbedDynamics}. Define
\begin{equation}\label{eq:perturbedLiouvillian}
    L_V \ce L + V - J V J
\end{equation}
on its canonical domain and suppose that the following assumption \cite[Assumption 3.2]{DJP03} is satisfied in addition to \ref{enu:A1}:
\begin{enumerate}[label=\normalfont\bfseries(A\arabic*), ref=\normalfont(A\arabic*)]
    \setcounter{enumi}{1}
    \item \label{enu:A2} $L_V$ is essentially self-adjoint on $\dom(L) \cap \dom(V) \cap \dom(J V J)$.
\end{enumerate}
As before, the self-adjoint closure $\oln{L_V}$ will be denoted simply by $L_V$. The following result about this operator is proved in Ref.~\cite[Thm. 3.5]{DJP03}.

\begin{proposition}\label{pro:perturbedLiouvillian}
    In the setting of \cref{asm:generalSetting} together with \ref{enu:A1} and \ref{enu:A2}, it holds that $L_V$ is the standard Liouvillian of the perturbed dynamics $\tau^V$ from \cref{eq:unboundedPerturbedDynamics}.
\end{proposition}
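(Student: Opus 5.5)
By \cref{pro:standardLiouvillian}, the standard Liouvillian of $\tau^V$ is characterized as the unique self-adjoint operator $K$ satisfying two conditions for all $t \in \R$ and $A \in \MFM$: first, $\ee^{\ii t K} \NPC \subset \NPC$; second, $\ee^{\ii t K} A \, \ee^{-\ii t K} = \tau_t^V(A)$. My plan is therefore to verify both conditions for $K = L_V$, the closure of $L + V - JVJ$. The strategy is to reduce to bounded perturbations, where the structure is explicit, and then pass to the limit.

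\textbf{Implementation of $\tau^V$.} Since $V$ is affiliated with $\MFM$, the operator $JVJ$ is affiliated with $J\MFM J = \comm{\MFM}$. Hence for every $A \in \MFM$ we have $\ee^{-\ii t JVJ} \in \comm{\MFM}$, which commutes with $A$. To exploit this, I would introduce the bounded truncations $V_n \ce V \1_{[-n,n]}(V) \in \MFM$. For these I would use the bounded-perturbation facts:
\begin{itemize}
    \item $L$ and $-JV_nJ$ commute in the sense that $\ee^{\ii t (L + V_n - JV_nJ)}$ implements $\tau^{V_n}$;
    \item $\ee^{\ii t (L + V_n - JV_nJ)}$ leaves $\NPC$ invariant. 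This is the known Araki/Connes cocycle fact: $\ee^{\ii t(L+V_n)}\ee^{-\ii tL}$ is a cocycle in $\MFM$, and $\ee^{\ii tL_{V_n}} = \Gamma_t J\Gamma_t J \ee^{\ii tL}$ with $\Gamma_t \in \MFM$ unitary, so positivity follows from property (d) of the standard form.
\end{itemize}
So for bounded $V_n$, $L_{V_n}$ is the standard Liouvillian of $\tau^{V_n}$.

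\textbf{Limit $n \to \infty$.} Next I would show $L + V_n \to L + V$ and $L_{V_n} \to L_V$ in the strong resolvent sense. The tool is the standard criterion: strong convergence of operators on a common core gives strong resolvent convergence, provided the limit is essentially self-adjoint on that core. Assumptions \ref{enu:A1} and \ref{enu:A2} are exactly what is needed here, using the cores $\dom(L) \cap \dom(V)$ and $\dom(L) \cap \dom(V) \cap \dom(JVJ)$. On these cores $V_n\xi \to V\xi$ and $JV_nJ\xi \to JVJ\xi$, and this holds since $\1_{[-n,n]}(V) \to \id$ strongly.

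Trotter's theorem then gives strong convergence of the unitary groups for each $t$. Consequently:
\begin{itemize}
    \item $\tau_t^{V_n}(A) \to \tau_t^V(A)$ strongly;
    \item $\ee^{\ii tL_{V_n}} A \ee^{-\ii tL_{V_n}} \to \ee^{\ii tL_V} A \ee^{-\ii tL_V}$ strongly;
    \item $\ee^{\ii tL_{V_n}}\xi \to \ee^{\ii tL_V}\xi$ for $\xi \in \NPC$, and the limit lies in $\NPC$ because the cone is closed.
\end{itemize}
Together these establish both characterizing properties for $L_V$, and uniqueness in \cref{pro:standardLiouvillian} concludes the argument.

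\textbf{Main obstacle.} I expect the hardest step to be the strong resolvent convergence $L_{V_n} \to L_V$. The difficulty is that the core supplied by \ref{enu:A2} must also be contained in the domains of the approximants, and convergence must hold on it. Domain inclusion is automatic, since each $V_n$ is bounded, but the argument has to be run carefully with the closures. If this route fails, the fallback is to argue directly through the cocycle $\Gamma_t = \ee^{\ii t(L+V)}\ee^{-\ii tL} \in \MFM$ and show that $\ee^{\ii tL_V} = \Gamma_t J\Gamma_t J\ee^{\ii tL}$, identifying the generator on the core via differentiation.
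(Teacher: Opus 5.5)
Your argument is correct. The paper has no in-house proof to compare with: it imports the statement from \cite[Thm.~3.5]{DJP03}.

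Your chain goes through:
\begin{enumerate}
\item the bounded case, via the cocycle $\Gamma_t=\ee^{\ii t(L+V_n)}\ee^{-\ii tL}\in\MFM$ and $\ee^{\ii tL_{V_n}}=\Gamma_tJ\Gamma_tJ\ee^{\ii tL}$;
\item strong resolvent convergence $L+V_n\to L+V$ and $L_{V_n}\to L_V$;
\item Trotter's theorem;
\item closedness of $\NPC$;
\item uniqueness in \cref{pro:standardLiouvillian}, which applies because $\tau^V$ is a $W^\ast$-dynamics by \cref{pro:unboundedPerturbedDynamics}.
\end{enumerate}

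The step you single out as the main obstacle is routine. The domain $\dom(L_{V_n})=\dom(L)$ contains the core from \ref{enu:A2}, and convergence on that core is exactly \cref{lem:convergenceLiouvillian}~\ref{enu:convPerturbedL}, obtained from \cref{pro:coreSOimpliesSR}. No fallback is needed.

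One wording fix: ``$L$ and $-JV_nJ$ commute'' is false as stated. The implementation property holds because $J\Gamma_tJ\in\comm{\MFM}$ drops out of the conjugation.

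A more direct route, for which \ref{enu:A1} and \ref{enu:A2} are precisely the hypotheses, is the Trotter \emph{product} formula. Since $V$ and $JVJ$ strongly commute and $J\ee^{\ii sV}J=\ee^{-\ii sJVJ}$, one has $\ee^{\ii tL_V}=\dlim{s}{n\to\infty}\bigl(\ee^{\ii tL/n}\,\ee^{\ii tV/n}J\ee^{\ii tV/n}J\bigr)^n$ and $\ee^{\ii t(L+V)}=\dlim{s}{n\to\infty}\bigl(\ee^{\ii tL/n}\ee^{\ii tV/n}\bigr)^n$. Each factor $\ee^{\ii tV/n}J\ee^{\ii tV/n}J$ maps $\NPC$ into itself by property \ref{enu:standardFormJConjugationNPC} of the standard form. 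The factors $J\ee^{\ii tV/n}J\in\comm{\MFM}$ drop out when conjugating $A\in\MFM$. So both characterizing properties follow in one stroke.

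That route needs neither the bounded-case theorem nor any approximation; note the bounded-case cocycle identity still requires a group-law and generator computation, which you are taking on citation. Your route instead reuses the convergence machinery the paper needs later anyway. As a by-product it gives $\ee^{\ii tL_{V_n}}\to\ee^{\ii tL_V}$ strongly.
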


In order to obtain their main result---the construction of perturbed KMS states---Derezi\'{n}ski, Jak\v{s}i\'{c}, and Pillet introduced a third and final assumption regarding the perturbation $V \in \AE{\MFM}$ \cite[Assumption 5.1]{DJP03}. Let $\beta > 0$ and $\Omega \in \NPC$ be separating (and hence cyclic) for the algebra $\MFM$. Suppose that\footnote{See \cref{subsec:extensionPerturbation} for an intuitive consequence of this assumption.}
\begin{enumerate}[label=\normalfont\bfseries(A\arabic*), ref=\normalfont(A\arabic*)]
    \setcounter{enumi}{2}
    \item \label{enu:A3} $\norm[\big]{\ee^{- \beta V / 2} \Omega} < + \infty$.
\end{enumerate}
Finiteness of this quantity is central for the construction procedure that leads to a perturbed vector and various nice properties of this object \cite[Thm 5.5]{DJP03}. The central idea of the proof is to approximate $V$ by a suitable sequence $(V_n)_{n \in \N} \subset \MFM$ of bounded operators, and then to use the Golden-Thompson inequality together with a weak convergence argument to show that a relation similar to \cref{eq:boundedPerturbedVector} extends to the unbounded case.

Let us quickly recall the approximation procedure \cite[p. 479]{DJP03}, which we will also need for our extension below. Denote by $E_V$ the unique spectral measure of the self-adjoint operator $V \in \AE{\MFM}$. For all $n \in \N$, define the function $f_n \ce \1_{[-n, n]} \cdot \id_{\sigma(V)}$ and the operator
\begin{equation}\label{eq:approximatingSequence}
    V_n \ce f_n(V) = \int_{\sigma(V) \cap [-n, n]} \lambda \diff E_V(\lambda) \, .
\end{equation}
Since $\norm{f_n}_\infty = n$, it follows that $V_n : \MH \to \MH$ is a bounded operator, and since $V$ is affiliated with the von Neumann algebra $\MFM$, it holds that $V_n \in \MFM$ for all $n \in \N$; the sequence $(V_n)_{n \in \N}$ will be called \emph{approximating sequence} for $V \in \AE{\MFM}$. From the dominated convergence theorem, one obtains the following well-known result \cite[Prop. 6.12]{Moretti19}.

\begin{lemma}\label{lem:convergenceUnboundedApproximation}
    Let $\xi \in \dom(V)$ be arbitrary. Then $\lim_{n \to \infty} V_n \xi = V \xi$ in $\MH$.
\end{lemma}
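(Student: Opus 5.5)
The plan is to use the spectral representation of $V$ and the dominated convergence theorem. Let $E_V$ be the spectral measure of $V$. For $\xi \in \MH$, write $\mu_\xi(\cdot) \ce \braket{\xi, E_V(\cdot)\, \xi}$ for the associated finite positive Borel measure on $\sigma(V)$, whose total mass is $\norm{\xi}^2$. Recall that $\xi \in \dom(V)$ is equivalent to
\begin{equation*}
    \int_{\sigma(V)} \abs{\lambda}^2 \diff \mu_\xi(\lambda) < + \infty \, .
\end{equation*}

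By the functional calculus, $V_n - V = (f_n - \id_{\sigma(V)})(V)$ on $\dom(V)$. Since $f_n$ is bounded, $\dom(f_n(V)) = \MH \supset \dom(V)$, so the difference makes sense on $\dom(V)$. The standard isometry property of the functional calculus (valid for $\xi$ in the domain of the operator $g(V)$, with $g = f_n - \id_{\sigma(V)}$) then gives
\begin{equation*}
    \norm{V_n \xi - V \xi}^2 = \int_{\sigma(V)} \abs{f_n(\lambda) - \lambda}^2 \diff \mu_\xi(\lambda) = \int_{\sigma(V) \setminus [-n,n]} \abs{\lambda}^2 \diff \mu_\xi(\lambda) \, .
\end{equation*}

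The integrand $\abs{\lambda}^2 \1_{\R \setminus [-n,n]}(\lambda)$ converges pointwise to $0$ as $n \to \infty$, since every fixed $\lambda$ eventually lies in $[-n,n]$. It is dominated by $\abs{\lambda}^2$, which is $\mu_\xi$-integrable because $\xi \in \dom(V)$. Dominated convergence therefore yields $\norm{V_n \xi - V\xi} \to 0$, which is the claim. One could equally use monotone convergence, since the integrands decrease to $0$ and are dominated by an integrable function.

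The only point requiring care is justifying the identity $(V_n - V)\xi = (f_n - \id)(V)\xi$ together with its norm formula. This is the standard statement $g(V) + h(V) \subset (g+h)(V)$ with $\dom(g(V) + h(V)) = \dom(h(V))$ when $g$ is bounded. Everything else is routine.
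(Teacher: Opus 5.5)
Your proof is correct and takes essentially the same approach as the paper, which gives no detailed argument and just invokes the dominated convergence theorem (citing \cite[Prop. 6.12]{Moretti19}). Your version spells that out: you write $\norm{V_n\xi - V\xi}^2 = \int_{\abs{\lambda}>n} \abs{\lambda}^2 \diff\mu_\xi(\lambda)$ and dominate the integrand by $\abs{\lambda}^2$, which is $\mu_\xi$-integrable because $\xi \in \dom(V)$.
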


Based on this fact, one can establish convergence results for operators related to the Liouvillian $L$; note that the first two assertions are stated in Ref.~\cite[Thm. 5.6]{DJP03}, whereas the third one is new and needed later for our extension.

\begin{lemma}\label{lem:convergenceLiouvillian}
    In the setting of \cref{asm:generalSetting}, assume that the self-adjoint operator $V \in \AE{\MFM}$ satisfies \ref{enu:A1} and \ref{enu:A2}, and let $(V_n)_{n \in \N} \subset \MFM$ be an approximating sequence for $V$ as defined in \cref{eq:approximatingSequence}.
    \begin{enumerate}
        \item \label{enu:convUnperturbedL} $L + V_n \to L + V$ in the strong resolvent sense.
        \item \label{enu:convPerturbedL} $L_{V_n} \to L_V$ in the strong resolvent sense.
        \item \label{enu:convSum} $- L_{V_n} + V_n \to - L_V + V$ in the strong resolvent sense.
    \end{enumerate}
\end{lemma}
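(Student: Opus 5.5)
Parts (a) and (b) are stated in Ref.~\cite[Thm. 5.6]{DJP03}. The new ingredient is (c), and the natural route to it is the standard criterion for strong resolvent convergence via a common core. Suppose $T_n, T$ are self-adjoint, $D$ is a core for $T$, and for every $\xi \in D$ we eventually have $\xi \in \dom(T_n)$ with $T_n \xi \to T \xi$. Then $T_n \to T$ in the strong resolvent sense (Reed--Simon, Thm.~VIII.25(a)). I will apply this with $T_n = -L_{V_n} + V_n$ and $T = -L_V + V$.

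First I identify the operators. Since $V_n \in \MFM$ is bounded and self-adjoint, $L_{V_n} = L + V_n - J V_n J$ is self-adjoint on $\dom(L)$. Hence
\begin{equation*}
    -L_{V_n} + V_n = -L + J V_n J
\end{equation*}
is self-adjoint on $\dom(L)$. On the limiting side, $-L_V + V$ must first be made sense of. On the natural domain $\dom(L) \cap \dom(V) \cap \dom(JVJ)$ it equals $-L + JVJ$. By \cref{lem:JLLJ}, $-L + JVJ = J(L + V)J$, and by \ref{enu:A1} together with the argument of \cref{lem:LJVJessSA}, this operator is essentially self-adjoint on $J(\dom(L) \cap \dom(V)) = \dom(L) \cap \dom(JVJ)$. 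So I interpret $-L_V + V$ as the self-adjoint operator $J(L+V)J$, the closure of $-L + JVJ$, and show it agrees with the closure of the sum of the closed operators wherever both are defined.

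The convergence then comes from conjugation. We have $-L + JV_nJ = J(L + V_n)J$, and by part (a), $L + V_n \to L + V$ in the strong resolvent sense. Conjugating by the antiunitary involution $J$ preserves strong resolvent convergence, with a complex-conjugate twist in the spectral parameter:
\begin{equation*}
    (J A J - z)^{-1} = J (A - \bar z)^{-1} J .
\end{equation*}
Since $J$ is norm-preserving, strong convergence of $(A_n - \bar z)^{-1}$ transfers to $(J A_n J - z)^{-1}$. This gives (c) directly, without even invoking the core criterion. Alternatively, one can verify the core criterion on $D = \dom(L) \cap \dom(JVJ)$ using \cref{lem:convergenceUnboundedApproximation}: for $\xi \in D$, $J\xi \in \dom(V)$, so $JV_nJ\xi \to JVJ\xi$.

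The main obstacle is the bookkeeping about what $-L_V + V$ means. The difference of the closed operator $L_V$ and the unbounded $V$ is not a priori closed or self-adjoint. The key step is therefore showing that its closure coincides with $J(L+V)J$. I would do this by checking that on the common core $\dom(L) \cap \dom(V) \cap \dom(JVJ)$, which is contained in the domain of $L_V$ by \ref{enu:A2}, the two operators agree. I would then need that this set is also a core for $J(L+V)J$. That second fact is the delicate point. If it fails, I would instead take the statement to mean the self-adjoint operator $J(L+V)J$ and prove (c) purely by the conjugation argument above.
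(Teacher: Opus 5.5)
Your proposal is correct, but your primary argument for \ref{enu:convSum} differs from the paper's. The paper does exactly what you list as the ``alternative.'' It writes $-L_{V_n}+V_n = -L + JV_nJ$ and $-L_V+V = -L+JVJ$. It gets essential self-adjointness of the latter on $\MD = \dom(L)\cap\dom(JVJ)$ from \cref{lem:LJVJessSA}, and checks $(-L+JV_nJ)\eta \to (-L+JVJ)\eta$ for $\eta = J\xi \in \MD$ via \cref{lem:convergenceUnboundedApproximation}. It then concludes with the core criterion \cref{pro:coreSOimpliesSR}, without using \ref{enu:convUnperturbedL}.

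Your main route instead pushes the identity $-L+JVJ = J(L+V)J$ (which is precisely the content of \cref{lem:LJVJessSA}) all the way to the resolvents. Via $(JAJ-z)^{-1} = J(A-\bar z)^{-1}J$, part \ref{enu:convSum} then becomes a formal corollary of \ref{enu:convUnperturbedL}. This gives you an explicit identification of the limit as the self-adjoint operator $J\,\oln{(L+V)}\,J$, with no separate core check. The cost is that it relies on \ref{enu:convUnperturbedL}, which is imported from DJP03 and is itself proved there by the same core argument using \ref{enu:A1}.

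One small detail to fill in: the paper defines strong resolvent convergence at $z=\ii$ only, while your identity gives $(JA_nJ-\ii)^{-1} = J(A_n+\ii)^{-1}J$. You therefore need $(L+V_n+\ii)^{-1}\to(L+V+\ii)^{-1}$ strongly. This follows from \cref{pro:characterizationSRConvergence} with $f(\lambda)=(\lambda+\ii)^{-1}$.

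As for your ``delicate point'' about the meaning of $-L_V+V$: the paper does not resolve it either. It simply sets $-L_V+V=-L+JVJ$ ``by definition of $L_V$'' and works with the closure, which is your fallback interpretation. That is also exactly the operator used afterwards in \cref{lem:unboundedPerturbationRelModOp2}, so adopting it loses nothing.
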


\begin{proof}
    For \ref{enu:convUnperturbedL} and \ref{enu:convPerturbedL} see Ref.~\cite[Thm. 5.6]{DJP03}.

    \ref{enu:convSum} By definition of the operator $L_V$ [cf. \cref{eq:perturbedLiouvillian}], it holds that $- L_{V_n} + V_n = - L + J V_n J$ for all $n \in \N$ and $- L_V + V = - L + J V J$. Since $L$, $J$, and $V_n$ are self-adjoint operators and the latter two are also bounded, the operator $- L + J V_n J$ is self-adjoint as well. According to \cref{lem:LJVJessSA}, the operator $- L + J V J$ is essentially self-adjoint on $\MD \ce \dom(L) \cap \dom(J V J) = J \bigl(\dom(L) \cap \dom(V)\bigr)$. Furthermore, because $J$ is an isometry and satisfies $J^2 = \id_\MH$, \cref{lem:convergenceUnboundedApproximation} implies that for all $\eta = J \xi \in \MD$:
    \begin{equation*}
        (- L_{V_n} + V_n) \eta = - L \eta + J V_n \xi \xrightarrow{n \to + \infty} - L \eta + J V \xi = (- L_V + V) \eta \, .
    \end{equation*}
    Thus, all assumptions of \cref{pro:coreSOimpliesSR} are satisfied, which implies that $- L_{V_n} + V_n \to - L_V + V$ in the strong resolvent sense as $n \to + \infty$.
\end{proof}

In the general setting of \cref{asm:generalSetting}, let us introduce the following set of perturbations $V \in \AE{\MFM}$ that satisfy \ref{enu:A1} -- \ref{enu:A3} with respect to the Liouvillian $L$, the cyclic and separating vector $\Omega \in \NPC$, and $\beta > 0$:
\begin{equation*}
    \MFS_1(L, \Omega) \ce \left\{V \in \AE{\MFM} \ : \ \parbox{0.5\textwidth}{\centering $V$ self-adjoint and satisfies assumptions \ref{enu:A1}, \ref{enu:A2}, and \ref{enu:A3} with respect to $L$ and $\Omega$}\right\} \, .
\end{equation*}
Derezi\'{n}ski, Jak\v{s}i\'{c}, and Pillet proved the following crucial result \cite[Thm. 5.5 (1)]{DJP03}, which generalizes Araki's construction \eqref{eq:boundedPerturbedVector} of the perturbed vector:
\begin{itemize}
    \item \itshape If $\omega \in \NS{\MFM}$ is a faithful $(\tau, \beta)$-KMS state ($\beta > 0$) with vector representative $\Omega \in \NPC$ and if $V \in \MFS_1(L, \Omega)$, then $\Omega \in \dom \bigl(\ee^{- \beta (L + V) / 2}\bigr)$.
\end{itemize}
In \cref{sec:extension} we will need a version of this assertion for non-KMS states as well, which is the content of the following proposition; since its proof uses exactly the same strategy that was devised in Ref.~\cite{DJP03}, we defer it to Appendix \ref{app:perturbedVectors}. There, we also show that our formulation is actually equivalent to the result stated above, cf. \cref{pro:equivalencePerturbedVectors}.

\begin{proposition}\label{pro:unboundedPerturbedVector}
    Let $(\MFM, \MH, J, \NPC)$ be a von Neumann algebra in standard form, let $\Omega \in \NPC$ be normalized and separating for $\MFM$, let $\beta > 0$, and define $L \ce - \inv{\beta} \log(\Delta_\Omega)$. If $V \in \MFS_1(L, \Omega)$, then it follows that
    \begin{equation}\label{eq:weakConvUnboundedPerturbedVector}
        \Omega \in \dom \bigl(\ee^{- \beta (L + V) / 2}\bigr) \tand \ee^{- \beta (L + V) / 2} \Omega = \dlim{w}{n \to \infty} \ee^{- \beta (L + V_n) / 2} \Omega \, .
    \end{equation}
\end{proposition}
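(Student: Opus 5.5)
The proof follows the strategy of \cite{DJP03}: replace $V$ by its approximating sequence $(V_n)_{n \in \N} \subset \MFM$ from \cref{eq:approximatingSequence}, control the bounded case uniformly with the Golden-Thompson inequality, extract a weak limit, and identify that limit through strong resolvent convergence.

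\textbf{Bounded approximants.} Since $L = -\inv{\beta} \log(\Delta_\Omega)$, we have $-\beta(L + V_n)/2 = (\log(\Delta_\Omega) - \beta V_n)/2$. By \cref{eq:boundedPerturbedVector} with $Q = -\beta V_n \in \MFM$, the vector $\Omega$ lies in the domain of this operator, and
\begin{equation*}
    \Omega_n \ce \ee^{-\beta(L + V_n)/2} \Omega = \Omega_{-\beta V_n} \in \NPC .
\end{equation*}
\cref{pro:GoldenThompson} then gives $\norm{\Omega_n} \le \norm{\ee^{-\beta V_n/2} \Omega}$. By the spectral theorem and monotone convergence, applied separately to the spectral parts where $V$ is negative and where $V$ is positive, $\norm{\ee^{-\beta V_n/2}\Omega}^2 = \int \ee^{-\beta f_n(\lambda)} \diff\braket{\Omega, E_V(\lambda)\Omega}$. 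This is bounded by $\norm{\Omega}^2 + \norm{\ee^{-\beta V/2}\Omega}^2$, which is finite by \ref{enu:A3}. Hence $(\Omega_n)$ is bounded in $\MH$, and we may pick a subsequence converging weakly to some $\Omega_\infty$.

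\textbf{Identifying the limit.} Set $T_n \ce \ee^{-\beta(L + V_n)/2}$ and $T \ce \ee^{-\beta(L+V)/2}$, both positive self-adjoint. By \cref{lem:convergenceLiouvillian}~\ref{enu:convUnperturbedL}, which only needs \ref{enu:A1}, we have $L + V_n \to L + V$ in the strong resolvent sense. Therefore $g(L + V_n) \to g(L + V)$ strongly for every bounded continuous $g$. I take $g(x) = \ee^{-\beta x/2}\chi(x)$ with $\chi$ a continuous compactly supported cutoff, or more cleanly the bounded function $h_k(x) = \ee^{-\beta x/2}(1 + k^{-1}\ee^{-\beta x/2})^{-1}$. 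Then for $\eta \in \MH$,
\begin{equation*}
    \braket{\eta, h_k(L+V)\Omega} = \lim_n \braket{\eta, h_k(L+V_n)\Omega} = \lim_n \braket{(1 + k^{-1}T_n)^{-1}\eta, \Omega_n} .
\end{equation*}
The operators $(1+k^{-1}T_n)^{-1}$ are self-adjoint contractions converging strongly to $(1+k^{-1}T)^{-1}$, again by strong resolvent convergence. Combined with weak convergence of $\Omega_n$, the right-hand side tends to $\braket{(1+k^{-1}T)^{-1}\eta, \Omega_\infty}$. This yields $h_k(L+V)\Omega = (1+k^{-1}T)^{-1}\Omega_\infty$, and in particular $\norm{h_k(L+V)\Omega} \le \norm{\Omega_\infty}$ uniformly in $k$. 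Since $h_k(x) \uparrow \ee^{-\beta x/2}$, monotone convergence in the spectral representation of $L+V$ shows
\begin{equation*}
    \int \ee^{-\beta x}\diff\braket{\Omega, E_{L+V}(x)\Omega} \le \norm{\Omega_\infty}^2 < \infty ,
\end{equation*}
so $\Omega \in \dom(T)$. Letting $k \to \infty$ then gives $T\Omega = \Omega_\infty$.

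\textbf{Conclusion and main obstacle.} Because the limit $T\Omega$ does not depend on the chosen subsequence, every subsequence of the bounded sequence $(\Omega_n)$ has a further subsequence converging weakly to $T\Omega$. Hence the whole sequence converges weakly, which is \cref{eq:weakConvUnboundedPerturbedVector}. The main obstacle is this identification step: strong resolvent convergence only controls bounded functions, while $\ee^{-\beta x/2}$ is unbounded on the left. The uniform Golden-Thompson bound together with the regularization $h_k$ is what bridges that gap. One point to check carefully is that strong convergence of $(1+k^{-1}T_n)^{-1}$ indeed follows from strong resolvent convergence of $L+V_n$, since the function $(1+k^{-1}\ee^{-\beta x/2})^{-1}$ is bounded and continuous.
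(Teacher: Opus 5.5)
Your proof is correct and follows the paper's argument: it uses the same Golden--Thompson bound on $\norm{\ee^{-\beta(L+V_n)/2}\Omega}$, made uniform via $\ee^{-\beta f_n(\lambda)} \le 1 + \ee^{-\beta\lambda}$, and the same strong resolvent convergence $L+V_n \to L+V$ from \cref{lem:convergenceLiouvillian}~\ref{enu:convUnperturbedL}. The only difference is in the identification step: the paper passes to $\ee^{-\beta(L+V_n)/2}$ via \cref{pro:SRimpliesFSR} and then cites \cref{pro:SRimpliesW} (i.e., \cite[Prop.~A.4]{DJP03}) with $\Omega_n \equiv \Omega$, whereas you prove that step directly with the bounded regularizations $h_k$ and a subsequence argument, which is valid.
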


\begin{definition}\label{def:unboundedPerturbedState}
    In the setting of \cref{pro:unboundedPerturbedVector}, define the perturbed vector $\Omega_{- \beta V}$ and corresponding state $\omega_V$ by
    \begin{equation}\label{eq:unboundedPerturbedVector}
        \Omega_{- \beta V} \ce \ee^{- \beta (L + V) / 2} \, \Omega \tand \omega_V \ce \frac{1}{\norm{\Omega_{-\beta V}}^2} \, \braket{\Omega_{-\beta V}, \,\cdot\ \Omega_{-\beta V}} \, .
    \end{equation}
\end{definition}

Note that this notation is consistent with Araki's definition \eqref{eq:boundedPerturbedVector} of the perturbed vector $\Phi_Q$ for bounded perturbations $Q \in \MFM$, and that using this notation, the weak convergence statement in \cref{eq:weakConvUnboundedPerturbedVector} can be expressed as
\begin{equation*}
    \Omega_{- \beta V} = \dlim{w}{n \to \infty} \Omega_{- \beta V_n} \, .
\end{equation*}

The following properties of $\Omega_{- \beta V}$, which we will need for our proofs in the next sections, are contained in Ref.~\cite[Thm. 5.5 (2) \& (3)]{DJP03} for KMS states; by virtue of \cref{pro:equivalencePerturbedVectors}, they automatically apply to the non-KMS case as well.

\begin{lemma}\label{lem:perturbedVectorSeparating}
    In the setting of \cref{pro:unboundedPerturbedVector}, it holds that $\Omega_{- \beta V} \in \NPC$ and, moreover, $\Omega_{- \beta V}$ is cyclic and separating for $\MFM$.
\end{lemma}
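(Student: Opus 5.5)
The plan is to deduce all three properties from Araki's bounded theory through the approximating sequence $(V_n)_{n\in\N}$ of \cref{eq:approximatingSequence}, and then pass to the limit with the weak convergence statement of \cref{pro:unboundedPerturbedVector}.

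\emph{Step 1: membership in the cone.} Since $V_n \in \MFM$ is self-adjoint, Araki's construction \eqref{eq:boundedPerturbedVector} applies. Here $\Delta_\Omega = \ee^{-\beta L}$ by the definition of $L$, so
\begin{equation*}
    \Omega_{-\beta V_n} = \ee^{(\log \Delta_\Omega - \beta V_n)/2}\,\Omega = \ee^{-\beta (L+V_n)/2}\,\Omega .
\end{equation*}
By \cite[Prop. 4.1, Cor. 4.4]{Araki1973c}, each $\Omega_{-\beta V_n}$ lies in $\NPC$ and is cyclic and separating. The cone $\NPC$ is closed and convex, hence weakly closed. Therefore the weak limit $\Omega_{-\beta V} = \dlim{w}{n\to\infty}\Omega_{-\beta V_n}$ from \cref{pro:unboundedPerturbedVector} lies in $\NPC$.

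\emph{Step 2: reduction to cyclicity.} By \cref{lem:cyclicSeparating}, a vector in $\NPC$ is cyclic if and only if it is separating. It therefore suffices to show one of the two properties. I would prove that $\omega_{\Omega_{-\beta V}}$ is faithful, i.e. that $\ssupp_{\Omega_{-\beta V}} = \id_\MH$. Weak limits do not preserve faithfulness, so this cannot be read off from Step 1 directly. Nonvanishing is also needed and does not come for free from weak convergence; I would get it from the lower bound in Step 3.

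\emph{Step 3: faithfulness, the main obstacle.} The first idea is a reversibility argument. Set $\Psi \ce \Omega_{-\beta V}$. I would check that the perturbed dynamics $\tau^V$ with Liouvillian $L_V$ (\cref{pro:perturbedLiouvillian}) makes $\omega_{\Psi}/\norm{\Psi}^2$ a $(\tau^V,\beta)$-KMS state. This should follow by passing to the limit in the bounded KMS identities of \cref{pro:LiouvillianKMS}, using \cref{lem:convergenceLiouvillian}. The support of a KMS state is a central projection $z$ that is invariant under the dynamics. I would then argue that $z$ commutes with $L+V$ and with $L + V_n$. Since $V_n$ commutes with $z$ and $\ee^{\ii tL}$ commutes with $z$ through invariance, this gives
\begin{equation*}
    (\id - z)\,\ee^{-\beta(L+V_n)/2}\Omega = \ee^{-\beta(L+V_n)/2}(\id - z)\Omega .
\end{equation*}
Taking weak limits, $0 = (\id-z)\Psi = \ee^{-\beta(L+V)/2}(\id-z)\Omega$. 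Injectivity of $\ee^{-\beta(L+V)/2}$ then gives $(\id-z)\Omega = 0$, and $z = \id$ because $\Omega$ is separating.

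The delicate part is justifying that $z$ commutes with $L$, i.e. that the support is $\tau$-invariant, and not only $\tau^V$-invariant. If that fails, I would fall back on the approach of \cite{DJP03}. It uses the Golden--Thompson inequality \eqref{eq:GoldenThompson} applied to the reversed perturbation $+V$, starting from $\Omega_{-\beta V_n}$, to obtain uniform two-sided bounds. Such bounds would give $\omega_{\Omega_{-\beta V}}(P) \ge c\,\omega(P)^{k}$ for projections $P\in\MFM$, with constants $c>0$ and $k$ independent of $P$. Faithfulness of $\omega$ then transfers to $\omega_{\Omega_{-\beta V}}$, and taking $P=\id$ also shows $\Omega_{-\beta V}\neq 0$.
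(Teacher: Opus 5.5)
Steps 1 and 2 are correct. Step 3, however, is left open exactly where you flag it, and your fallback would not close it. No estimate $\omega_{\Omega_{-\beta V}}(P) \ge c\,\omega(P)^k$ with $P$-independent $c, k$ holds in general. On $\BO(\MH)$ take $H_0 = \sum_n n P_n$ and $V = \sum_n n^2 P_n$ with rank-one $P_n$. Then $V \in \MFS_1(L,\Omega)$, while $\omega_V(P_n) \propto \ee^{-\beta(n+n^2)}$ and $\omega(P_n) \propto \ee^{-\beta n}$, so the ratio $\omega_V(P_n)/\omega(P_n)^k$ tends to zero for every $k$.

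The missing idea is that no invariance argument is needed. In the setting of \cref{pro:unboundedPerturbedVector}, $\ee^{\ii t L} = \Delta_\Omega^{-\ii t/\beta}$ implements the rescaled modular group, and the modular group fixes the center pointwise. For $z \in \MFM \cap \comm{\MFM}$, property \ref{enu:standardFormJCenter} of the standard form and $J\Omega = \Omega$ give $J\Delta_\Omega^{1/2} z\Omega = z^\ast\Omega = JzJ\Omega = Jz\Omega$. Hence $\Delta_\Omega^{1/2} z\Omega = z\Omega$, so $\Delta_\Omega^{\ii s} z \Delta_\Omega^{-\ii s}\Omega = z\Omega$, and $\sigma^\Omega_s(z) = z$ because $\Omega$ is separating. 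Your central $z$ therefore commutes with $L$ and, lying in $\comm{\MFM}$, with $V_n$, so your displayed identity holds. The weak limit of its right-hand side follows from \cref{pro:SRimpliesW} with $\Omega_n = (\id - z)\Omega$ and the bound $\norm{(\id-z)\ee^{-\beta(L+V_n)/2}\Omega} \le C$ from \eqref{eq:auxiliaryExistenceInequality}.

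A shorter fix avoids the approximation altogether. The support $z$ of the $\tau^V$-invariant state is $\tau^V$-invariant, so it commutes with $\ee^{\ii t(L+V)}$ and hence with $\ee^{-\beta(L+V)/2}$. This gives $(\id - z)\Omega_{-\beta V} = \ee^{-\beta(L+V)/2}(\id-z)\Omega$ directly. Also, $\Omega_{-\beta V} \neq 0$ is immediate from injectivity of $\ee^{-\beta(L+V)/2}$. You need this before normalizing, so it should not be deferred to the fallback.

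Your KMS step does go through, but make the uniform bound explicit:
\begin{itemize}
\item By \cref{pro:boundedPerturbationRelModOp}, $\log \Delta_{\Omega_{-\beta V_n}} = \log\Delta_\Omega - \beta V_n + \beta J V_n J = -\beta L_{V_n}$.
\item Hence $\ee^{-\beta L_{V_n}/2} A\,\Omega_{-\beta V_n} = J A^\ast \Omega_{-\beta V_n}$ for $A \in \MFM$, and this has norm at most $\norm{A}_\mop C$.
\item \cref{pro:SRimpliesW}, together with \cref{lem:convergenceLiouvillian} \ref{enu:convPerturbedL} and \cref{pro:SRimpliesFSR}, passes the identity to the limit.
\item \cref{pro:LiouvillianKMS,pro:perturbedLiouvillian} then give the KMS property.
\end{itemize}
The limiting identity also shows directly that $A\Omega_{-\beta V} = 0$ implies $A^\ast\Omega_{-\beta V} = 0$, which is why the support is central.

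Repaired this way, your argument is a genuinely different, self-contained route. The paper gives no argument of its own here. It cites \cite[Thm. 5.5 (2) \& (3)]{DJP03} for faithful KMS states and transfers the statement to the present setting through \cref{pro:equivalencePerturbedVectors}, regarding $\omega_\Omega$ as a $(\sigma^\Omega_{-\inv{\beta}t}, \beta)$-KMS state with Liouvillian $L$. Your version costs a KMS limit and a support argument. In exchange, it shows that the lemma follows from Araki's bounded theory and the convergence results of the appendix alone.
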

    
In Ref.~\cite[Thm. 5.5]{DJP03}, further important properties of the perturbed vector $\Omega_{- \beta V}$ and perturbed state $\omega_V$ were proved, assuming that $\omega = \omega_\Omega$ is a $(\tau, \beta)$-KMS state. We will build on these findings in the next section when extending the two-sided Bogoliubov inequality.

\begin{theorem}\label{thm:unboundedPerturbationsKMS}
    In addition to \cref{asm:generalSetting}, suppose that $\omega$ is a faithful $(\tau, \beta)$-KMS state with vector representative $\Omega \in \NPC$, let $\beta > 0$, and let $V \in \MFS_1(L, \Omega)$.
    \begin{enumerate}
        \item \label{enu:unboundedPerturbationKMS} $\omega_V$ is a faithful $(\tau^V, \beta)$-KMS state.
        \item \label{enu:unboundedPerturbationModOp} $\log(\Delta_{\Omega_{- \beta V}}) = - \beta L_V$.
        \item \label{enu:unboundedPerturbationRelModOp1} $\log(\Delta_{\Omega_{- \beta V}, \Omega}) = \log(\Delta_\Omega) - \beta V$.
        \item \label{enu:unboundedPerturbationRelEnt1} $\MS_\MFM(\omega, \omega_V) = \beta \omega(V) + \log\bigl(\norm{\Omega_{- \beta V}}^2\bigr)$.
        \item \label{enu:unboundedPerturbationPB} $\ee^{- \beta \braket{\Omega, V \Omega} / 2} \le \norm{\Omega_{- \beta V}}$.
        \item \label{enu:unboundedPerturbationGT} $\norm{\Omega_{- \beta V}} \le \norm{\ee^{- \beta V / 2} \Omega}$.
    \end{enumerate}
\end{theorem}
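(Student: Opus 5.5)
The plan is to prove every item first for the bounded approximants $V_n \in \MFM$ from \cref{eq:approximatingSequence}, where Araki's bounded perturbation theory applies, and then pass to the limit $n \to \infty$ using \cref{lem:convergenceLiouvillian}, the weak convergence $\Omega_{-\beta V_n} \rightharpoonup \Omega_{-\beta V}$ from \cref{pro:unboundedPerturbedVector}, and the closure result \cref{lem:convRelModOp}. Since $\omega$ is a faithful $(\tau,\beta)$-KMS state, \cref{pro:LiouvillianKMS} gives $\Delta_\Omega = \ee^{-\beta L}$, i.e.\ $L = -\inv{\beta}\log\Delta_\Omega$. We are then in the setting of \cref{pro:unboundedPerturbedVector}, and for bounded $Q = -\beta V_n$ Araki's formula \eqref{eq:boundedPerturbedVector} identifies $\Omega_{-\beta V_n}$ with $\ee^{-\beta(L+V_n)/2}\Omega$.

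I would begin with (3). For bounded $V_n$, \cref{pro:boundedPerturbationRelModOp} (with $\Psi = \Omega$, $\Phi = \Omega$, $Q = -\beta V_n$) gives $\log\Delta_{\Omega_{-\beta V_n},\Omega} = \log\Delta_\Omega - \beta V_n = -\beta(L+V_n)$. By \cref{lem:convergenceLiouvillian}\ref{enu:convUnperturbedL}, this converges in the strong resolvent sense to $-\beta(L+V)$, and hence so does $\Delta_{\Omega_{-\beta V_n},\Omega} = \ee^{-\beta(L+V_n)}$. The support condition in \cref{lem:convRelModOp} is trivial because $\Omega$ is faithful, and the weak convergence of the vectors is exactly \cref{pro:unboundedPerturbedVector}. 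That lemma then yields $\Delta_{\Omega_{-\beta V},\Omega} = \ee^{-\beta(L+V)}$, which is (3).

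For (2), I would use the bounded identity $\log\Delta_{\Omega_{-\beta V_n}} = -\beta L_{V_n}$. This follows by applying both formulas of \cref{pro:boundedPerturbationRelModOp} in succession, which gives $\log\Delta - \beta V_n + \beta J V_n J$. I would then combine \cref{lem:convergenceLiouvillian}\ref{enu:convPerturbedL} with \cref{lem:convRelModOp}, using that $\Omega_{-\beta V}$ is separating by \cref{lem:perturbedVectorSeparating}, so its support is $\id$. The normalized vector representing $\omega_V$ has the same modular operator, so \cref{pro:LiouvillianKMS} together with \cref{pro:perturbedLiouvillian} (which says that $L_V$ is the standard Liouvillian of $\tau^V$) gives (1). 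Faithfulness in (1) again comes from \cref{lem:perturbedVectorSeparating}.

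For (4), I would write $\omega_V$ via the normalized vector $\Omega_{-\beta V}/\norm{\Omega_{-\beta V}}$. \cref{lem:relModOpScaling} then gives
\[
  \log\Delta_{\hat\Omega_V,\Omega} = -\beta(L+V) - \log\norm{\Omega_{-\beta V}}^2 ,
\]
where $\hat\Omega_V$ denotes the normalized vector. Plugging this into \eqref{eq:relativeEntropy} and using $L\Omega = 0$ (from $\Delta_\Omega\Omega = \Omega$) produces $\beta\omega(V) + \log\norm{\Omega_{-\beta V}}^2$.

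The main obstacle is a domain subtlety in (4). The expression $\braket{\Omega,(L+V)\Omega}$ must be read as a form value, and $\Omega$ need not lie in $\dom(V)$. I would therefore define $\omega(V)$ via the spectral measure, and justify the computation either via the form domain or by a limiting argument over $V_n$ using lower semicontinuity of the relative entropy. I expect the latter to be cleaner. (Strictly, the relative modular operator entering $\MS_\MFM(\omega,\omega_V)$ is $\Delta_{\hat\Omega_V,\Omega}$, with the perturbed vector in the first slot, so (3) is exactly what is needed.)

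Finally, (5) follows from (4) and \cref{thm:relEntProperties}\ref{enu:relEntNonNeg}, since $\MS_\MFM(\omega,\omega_V) \ge 0$ rearranges to $\norm{\Omega_{-\beta V}}^2 \ge \ee^{-\beta\omega(V)}$. For (6), \cref{pro:GoldenThompson} applied with $Q = -\beta V_n$ gives $\norm{\Omega_{-\beta V_n}} \le \norm{\ee^{-\beta V_n/2}\Omega}$. Weak lower semicontinuity of the norm then bounds $\norm{\Omega_{-\beta V}} \le \liminf_n \norm{\ee^{-\beta V_n/2}\Omega}$. By monotone/dominated convergence in the spectral representation of $V$, the right-hand side equals $\norm{\ee^{-\beta V/2}\Omega}$, which is finite by \ref{enu:A3}.
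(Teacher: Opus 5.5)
The paper does not prove this theorem itself: it is quoted from \cite[Thm.~5.5]{DJP03}. Your scheme is the one used there and in the paper's \cref{lem:unboundedPerturbationRelModOp2,lem:unboundedPerturbedRelModOp4}: Araki's bounded identities for $V_n$, \cref{lem:convergenceLiouvillian}, the weak limit of \cref{pro:unboundedPerturbedVector}, and closure via \cref{lem:convRelModOp}. Items (1)--(3) and (6) go through as you describe. For (1), add that $\hat\Omega_V \ce \Omega_{-\beta V}/\norm{\Omega_{-\beta V}} \in \NPC$ forces $J_{\hat\Omega_V} = J$, so $\Delta_{\hat\Omega_V} = \ee^{-\beta L_V}$ yields exactly the criterion of \cref{pro:LiouvillianKMS}.

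\emph{The genuine gap is (4), and (5) inherits it.} From (3) and \cref{lem:relModOpScaling} you get $\MS_\MFM(\omega,\omega_V) = \beta\int\lambda\,\diff\mu(\lambda) + \log\bigl(\norm{\Omega_{-\beta V}}^2\bigr)$, where $\mu$ is the spectral measure of $\oln{L+V}$ in $\Omega$. Replacing $\int\lambda\,\diff\mu$ by $\omega(V)$ via $L\Omega = 0$ is justified only when $\Omega \in \dom(V)$. In that case $\Omega \in \dom(L)\cap\dom(V)$ and $\oln{L+V}\,\Omega = V\Omega$. Your preferred fallback does not close this, for two reasons:
\begin{itemize}
    \item Lower semicontinuity of $\MS_\MFM$ along $\omega_{V_n}$ needs $\omega_{V_n}\to\omega_V$, hence $\norm{\Omega_{-\beta V_n}}\to\norm{\Omega_{-\beta V}}$. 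But \cref{pro:unboundedPerturbedVector} gives only weak convergence, which yields just $\norm{\Omega_{-\beta V}} \le \liminf_n \norm{\Omega_{-\beta V_n}}$.
    \item Even granting that convergence, lower semicontinuity gives only one of the two inequalities.
\end{itemize}

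What does work is to pass to the limit in the spectral measures $\mu_n$ of $L+V_n$ in the fixed vector $\Omega$, rather than in the states:
\begin{itemize}
    \item $\mu_n \to \mu$ weakly, by \cref{lem:convergenceLiouvillian} \ref{enu:convUnperturbedL} and \cref{pro:characterizationSRConvergence}.
    \item $\int\lambda\,\diff\mu_n = \braket{\Omega,(L+V_n)\Omega} = \omega(V_n) \to \omega(V)$.
    \item The negative tails are uniformly integrable, since $\int \ee^{-\beta\lambda}\,\diff\mu_n = \norm{\Omega_{-\beta V_n}}^2 \le C^2$ by \eqref{eq:auxiliaryExistenceInequality}.
\end{itemize}
The Fatou (portmanteau) inequality for the positive part then gives $\int\lambda\,\diff\mu \le \omega(V)$. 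That is, $\MS_\MFM(\omega,\omega_V) \le \beta\omega(V) + \log\bigl(\norm{\Omega_{-\beta V}}^2\bigr)$, and together with $\MS_\MFM(\omega,\omega_V)\ge 0$ this already proves (5).

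Equality in (4) also requires the reverse inequality: no first-moment mass of $\mu_n$ may escape to $+\infty$. Neither strong resolvent convergence nor weak convergence of $\Omega_{-\beta V_n}$ provides this, so you need a further argument there. The paper does not supply it either. Besides quoting the result, it performs the same formal splitting of $\braket{\Psi,(\log\Delta - \beta V)\Psi}$ in the proof of \cref{lem:unboundedPerturbationRelEnt2} and in \eqref{eq:relEntPerturbation}.
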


Note that the inequalities stated in \ref{enu:unboundedPerturbationPB} and \ref{enu:unboundedPerturbationGT} of the preceding theorem are the Peierls-Bogoliubov and Golden-Thompson inequalities, respectively, for unbounded perturbations, which extend the corresponding inequalities proved by Araki \cite[Thm. 1 and 2]{Araki1973b}.

\subsection{Perturbation of relative modular operators}\label{subsec:perturbationRelModOp}

To generalize the two-sided Bogoliubov inequality and the variational principles, we require additional formulas for the relative modular operator involving a perturbed vector, similar to the one of \cref{thm:unboundedPerturbationsKMS} \ref{enu:unboundedPerturbationRelModOp1}.

\begin{lemma}\label{lem:unboundedPerturbationRelModOp2}
    Let $(\MFM, \MH, J, \NPC)$ be a von Neumann algebra in standard form, let $\Omega \in \NPC$ be normalized and separating for $\MFM$, and let $L \ce - \inv{\beta} \log(\Delta_\Omega)$, $\beta > 0$. If $V \in \MFS_1(L, \Omega)$, then
    \begin{equation}\label{eq:unboundedPerturbationRelModOp2}
        \log(\Delta_{\Omega, \Omega_{- \beta V}}) = \log (\Delta_\Omega) + \beta J V J \, .
    \end{equation}
\end{lemma}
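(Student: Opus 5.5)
We approximate $V$ by its approximating sequence $(V_n)_{n\in\N}\subset\MFM$ from \cref{eq:approximatingSequence}, prove the identity at the bounded level, and pass to the limit using \cref{lem:convRelModOp}.

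\emph{Bounded level.} For each $n$, Araki's construction gives $\Omega_{-\beta V_n}=\ee^{(\log\Delta_\Omega-\beta V_n)/2}\Omega$. This is \cref{eq:boundedPerturbedVector} with $Q=-\beta V_n$, and it agrees with $\ee^{-\beta(L+V_n)/2}\Omega$ because $L=-\inv{\beta}\log\Delta_\Omega$. The vector $\Omega_{-\beta V_n}$ is cyclic and separating. The second formula of \cref{pro:boundedPerturbationRelModOp}, with $\Psi=\Phi=\Omega$, then gives
\begin{equation*}
    \log(\Delta_{\Omega,\Omega_{-\beta V_n}})=\log(\Delta_\Omega)+\beta JV_nJ = -\beta\bigl(L-JV_nJ\bigr).
\end{equation*}

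\emph{Convergence of the operators.} Next we show $-\beta(L-JV_nJ)\to-\beta(L-JVJ)$ in the strong resolvent sense. This is essentially \cref{lem:convergenceLiouvillian}\,\ref{enu:convSum}, since $-L_{V_n}+V_n=-L+JV_nJ$ and $-L_V+V=-L+JVJ$. Both rely on \ref{enu:A1} via \cref{lem:LJVJessSA}, which makes $L-JVJ$ essentially self-adjoint on $J(\dom(L)\cap\dom(V))$, together with strong convergence on that core. Multiplying by the constant $\beta$ preserves strong resolvent convergence. Applying the exponential through functional calculus then gives
\begin{equation*}
    \Delta_{\Omega,\Omega_{-\beta V_n}}\to M\ce \ee^{-\beta(L-JVJ)}
\end{equation*}
in the strong resolvent sense. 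This holds because strong resolvent convergence of self-adjoint operators $A_n\to A$ implies $f(A_n)\to f(A)$ strongly for bounded continuous $f$. For the resolvent of $\ee^{A_n}$ we use $f(x)=(\ee^x-z)^{-1}$ with $z\notin[0,\infty)$.

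\emph{Remaining hypotheses of \cref{lem:convRelModOp}.} We take $\Phi_n=\Omega$ and $\Psi_n=\Omega_{-\beta V_n}$. The weak convergence $\Psi_n\to\Omega_{-\beta V}$ is exactly \cref{pro:unboundedPerturbedVector}. Each $\Psi_n$ and the limit $\Omega_{-\beta V}$ are separating and cyclic (Araki, and \cref{lem:perturbedVectorSeparating}). Hence all supports satisfy $\ssupp_{\Psi_n}=\id=\ssupp_{\Omega_{-\beta V}}$, and the support condition is trivial. The lemma yields $\Delta_{\Omega,\Omega_{-\beta V}}=\ee^{-\beta(L-JVJ)}$. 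Taking logarithms, which is legitimate since both sides are injective positive self-adjoint operators, gives $\log\Delta_{\Omega,\Omega_{-\beta V}}=-\beta L+\beta JVJ=\log\Delta_\Omega+\beta JVJ$. Here the sum is understood as the self-adjoint closure from \cref{lem:LJVJessSA}.

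\emph{Main obstacle.} The delicate step is the strong-resolvent convergence, specifically ensuring the limit operator is the closure of $L-JVJ$ on the correct core. This is exactly why \cref{lem:LJVJessSA} is needed. Once it is in hand, the rest is bookkeeping with \cref{lem:convRelModOp}.
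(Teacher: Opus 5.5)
Your proposal is correct and follows essentially the same route as the paper. You apply Araki's second formula in \cref{pro:boundedPerturbationRelModOp} to the bounded approximants $V_n$, obtain strong resolvent convergence $L - JV_nJ \to L - JVJ$ from \cref{lem:convergenceLiouvillian}\,\ref{enu:convSum}, pass to the exponential, and identify the limit via \cref{lem:convRelModOp} using $\Omega_{-\beta V_n} \to \Omega_{-\beta V}$ weakly and the trivial supports; the only addition is that you spell out the step to the exponential via the bounded continuous function $x \mapsto (\ee^{x} - z)^{-1}$, which the paper leaves implicit.
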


\begin{proof}
    For every $n \in \N$, let the self-adjoint operator $V_n \in \MFM$ be defined as in \cref{eq:approximatingSequence}. According to \cref{pro:boundedPerturbationRelModOp}, it holds that
    \begin{equation*}
        \log \bigl(\Delta_{\Omega, \Omega_{- \beta V_n}}\bigr) = \log(\Delta_\Omega) + \beta J V_n J = - \beta (L - J V_n J) \, .
    \end{equation*}
    \cref{lem:convergenceLiouvillian} \ref{enu:convSum} shows that $L - J V_n J \to L - J V J$ in the strong resolvent sense as $n \to + \infty$. Therefore, we also have that
    \begin{equation*}
        \Delta_{\Omega, \Omega_{- \beta V_n}} \to \ee^{- \beta (L - J V J)}
    \end{equation*}
    in the strong resolvent sense as $n \to + \infty$. Since $\Omega_{- \beta V_n}$ and $\Omega_{- \beta V}$ are separating for $\MFM$ (cf. \cref{lem:perturbedVectorSeparating}), it holds that $\ssupp(\Omega_{- \beta V_n}) = \ssupp(\Omega_{- \beta V}) = \id_\MH$. Finally, we have $\Omega_{- \beta V_n} \to \Omega_{- \beta V}$ weakly in $\MH$ as $n \to + \infty$ according to \cref{pro:unboundedPerturbedVector}. Thus, all the assumptions of \cref{lem:convRelModOp} are satisfied and we conclude that
    \begin{equation*}
        \Delta_{\Omega, \Omega_{- \beta V}} = \ee^{- \beta (L - J V J)} \, . \tag*{\qedhere}
    \end{equation*}
\end{proof}

The previous result \eqref{eq:unboundedPerturbationRelModOp2} generalizes the formula of Ref.~\cite[Thm. 5.1 (8)]{DJP03} to unbounded perturbations, as we will demonstrate explicitly now.

\begin{corollary}\label{cor:unboundedPerturbationRelModOp3}
    In the setting of \cref{lem:unboundedPerturbationRelModOp2}, suppose that $\omega = \omega_\Omega \in \NS{\MFM}$ is a faithful $(\tau, \beta)$-KMS state. Then
    \begin{equation}\label{eq:unboundedPerturbationRelModOp3}
        \log \bigl(\Delta_{\Omega, \Omega_{- \beta V}}\bigr) = \log \bigl(\Delta_{\Omega_{- \beta V}}\bigr) + \beta V \, .
    \end{equation}
\end{corollary}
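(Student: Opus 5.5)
The plan is to combine \cref{lem:unboundedPerturbationRelModOp2} with \cref{thm:unboundedPerturbationsKMS} \ref{enu:unboundedPerturbationModOp}, with the KMS property used to identify $L$ with the generator appearing in the lemma.

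First, since $\omega = \omega_\Omega$ is a faithful $(\tau,\beta)$-KMS state and $\Omega \in \NPC$ is cyclic and separating, \cref{pro:LiouvillianKMS} gives $\Delta_\Omega = \ee^{-\beta L}$. Hence $\log(\Delta_\Omega) = -\beta L$, which is consistent with the definition $L = -\inv{\beta}\log(\Delta_\Omega)$ used in \cref{lem:unboundedPerturbationRelModOp2}. Therefore that lemma applies to the standard Liouvillian of $\tau$, with the same class $\MFS_1(L,\Omega)$. It yields
\begin{equation*}
    \log(\Delta_{\Omega, \Omega_{-\beta V}}) = -\beta L + \beta J V J = -\beta\,\oln{(L - JVJ)} \, ,
\end{equation*}
where the closure is the self-adjoint closure from \cref{lem:LJVJessSA}.

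Second, \cref{thm:unboundedPerturbationsKMS} \ref{enu:unboundedPerturbationModOp} gives $\log(\Delta_{\Omega_{-\beta V}}) = -\beta L_V$, with $L_V = \oln{L + V - JVJ}$. Formally, then,
\begin{equation*}
    \log(\Delta_{\Omega_{-\beta V}}) + \beta V = -\beta(L + V - JVJ) + \beta V = -\beta(L - JVJ) \, ,
\end{equation*}
which agrees with the first step.

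The main obstacle is that this is only an identity on domains, and sums of unbounded operators must be read correctly. I would interpret the right-hand side of \eqref{eq:unboundedPerturbationRelModOp3} as the closure of $-\beta L_V + \beta V$ on $\dom(L_V)\cap\dom(V)$. The argument then proceeds in two steps.

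\begin{enumerate}
    \item On the core $\MD_0 = \dom(L)\cap\dom(V)\cap\dom(JVJ)$ (assumption \ref{enu:A2}), both sides act as $-\beta(L - JVJ)$. Hence $-\beta L_V + \beta V$ extends $-\beta(L-JVJ)$ restricted to $\MD_0$, and is contained in $\log(\Delta_{\Omega,\Omega_{-\beta V}})$.
    \item To upgrade this inclusion to equality, I would use \cref{lem:convergenceLiouvillian} \ref{enu:convSum}: $-L_{V_n} + V_n \to -L_V + V$ in the strong resolvent sense, where the right-hand side denotes the self-adjoint closure $-L + JVJ$. For the bounded approximants, \cref{pro:boundedPerturbationRelModOp} gives exactly $\log(\Delta_{\Omega,\Omega_{-\beta V_n}}) = -\beta L_{V_n} + \beta V_n$. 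The limits identified in \cref{lem:unboundedPerturbationRelModOp2} and in \cref{thm:unboundedPerturbationsKMS} \ref{enu:unboundedPerturbationModOp} then coincide, because strong resolvent limits are unique.
\end{enumerate}

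Thus \eqref{eq:unboundedPerturbationRelModOp3} holds with the sum understood as the self-adjoint operator $\oln{-\beta L_V + \beta V}$.
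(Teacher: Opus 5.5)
Your first two paragraphs are exactly the paper's proof. \cref{pro:LiouvillianKMS} gives $\log(\Delta_\Omega) = -\beta L$. \cref{lem:unboundedPerturbationRelModOp2} gives $\log(\Delta_{\Omega,\Omega_{-\beta V}}) = -\beta L + \beta JVJ$. This is rewritten as $-\beta(L+V-JVJ) + \beta V = -\beta L_V + \beta V = \log(\Delta_{\Omega_{-\beta V}}) + \beta V$ using \cref{thm:unboundedPerturbationsKMS} \ref{enu:unboundedPerturbationModOp}. The paper stops at this computation and implicitly reads $-L_V + V$ as the self-adjoint closure of $-L + JVJ$, the same convention it uses in the proof of \cref{lem:convergenceLiouvillian} \ref{enu:convSum}. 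At that level your proposal is complete.

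The extra domain argument, however, does not establish the stronger operator-sum reading you announce.

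In step (1), $T \ce -\beta L_V + \beta V$ on $\dom(L_V) \cap \dom(V)$ and $A \ce \log(\Delta_{\Omega,\Omega_{-\beta V}}) = -\beta\,\oln{L - JVJ}$ both extend $S \ce -\beta (L - JVJ)\restr\MD_0$. But two extensions of the same operator need not be comparable. The inclusion $T \subset A$ would follow from $T \subset T^\ast \subset S^\ast$ only if $S$ were essentially self-adjoint, i.e.\ if $\MD_0$ were a core for $\oln{L - JVJ}$. \cref{lem:LJVJessSA} only provides the larger core $\dom(L)\cap\dom(JVJ)$. Since $J\MD_0 = \MD_0$ and $L - JVJ = -J(L+V)J$, the missing condition is equivalent to $\MD_0$ being a core for $L+V$. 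That is a strengthening of \ref{enu:A1} which is not among the hypotheses.

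Step (2) does not repair this. \cref{lem:convergenceLiouvillian} \ref{enu:convSum} identifies $-L_V + V$ with $\oln{-L + JVJ}$ by convention, and strong resolvent convergence is not additive. So knowing $-\beta L_{V_n} \to -\beta L_V$ and $V_n \to V$ on a core says nothing about the closure of the operator sum $-\beta L_V + \beta V$. Uniqueness of limits merely reproduces \cref{lem:unboundedPerturbationRelModOp2}.

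You can either drop (1)--(2) and adopt the paper's convention, or add the core hypothesis above. Under that hypothesis step (1) alone gives $\oln{T} = A$, because then $A = \oln{S} \subset \oln{T} \subset A$.
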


\begin{proof}
    Since $\omega$ is a faithful $(\tau, \beta)$-KMS state, \cref{pro:LiouvillianKMS} implies that the Liouvillian $L$ of $\tau$ is given by $- \beta L = \log (\Delta_\Omega)$. Furthermore, we have $\log (\Delta_{\Omega_{- \beta V}}) = - \beta L_V$ according to \cref{thm:unboundedPerturbationsKMS} \ref{enu:unboundedPerturbationModOp}. Thus, using the previously established \cref{eq:unboundedPerturbationRelModOp2}, it follows that
    \begin{align*}
        \log(\Delta_{\Omega, \Omega_{- \beta V}}) &= - \beta L + \beta J V J \\
        &= - \beta ( L + V - J V J) + \beta V \\
        &= - \beta L_V + \beta V \\
        &= \log (\Delta_{\Omega_{- \beta V}}) + \beta V \, . \tag*{\qedhere}
    \end{align*}
\end{proof}

With the next result, we extend the first formula of Araki's \cref{pro:boundedPerturbationRelModOp} to unbounded perturbations; the following can also be seen as a generalization of \cref{thm:unboundedPerturbationsKMS} \ref{enu:unboundedPerturbationRelModOp1}.

\begin{lemma}\label{lem:unboundedPerturbedRelModOp4}
    Let $(\MFM, \MH, J, \NPC)$ be a von Neumann algebra in standard form, let $\Phi, \Psi \in \NPC$ be separating, define $L \ce - \inv{\beta} \log(\Delta_\Phi)$, and let $V \in \MFS_1(L, \Phi)$. Assume that the operator $\log(\Delta_{\Phi, \Psi}) - \beta V$ is essentially self-adjoint on $\MD \ce \dom(\log \Delta_{\Phi, \Psi}) \cap \dom(V)$. Then
    \begin{equation*}
        \log(\Delta_{\Phi_{- \beta V}, \Psi}) = \log(\Delta_{\Phi, \Psi}) - \beta V \, .
    \end{equation*}
\end{lemma}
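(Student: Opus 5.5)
We follow the proof of \cref{lem:unboundedPerturbationRelModOp2}, using the approximating sequence $(V_n)_{n \in \N} \subset \MFM$ from \cref{eq:approximatingSequence}. For each $n$, $V_n$ is bounded and self-adjoint, and Araki's construction gives the perturbed vector $\Phi_{- \beta V_n}$. This vector coincides with $\ee^{- \beta (L + V_n)/2} \Phi = \ee^{(\log \Delta_\Phi - \beta V_n)/2}\Phi$ by \cref{eq:boundedPerturbedVector}, and it is cyclic and separating. Since $\Psi$ is separating, and hence cyclic by \cref{lem:cyclicSeparating}, the first formula of \cref{pro:boundedPerturbationRelModOp} (with $Q = - \beta V_n$) yields
\begin{equation*}
    \log(\Delta_{\Phi_{- \beta V_n}, \Psi}) = \log(\Delta_{\Phi, \Psi}) - \beta V_n \com n \in \N \, .
\end{equation*}

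Next we show that $T_n \ce \log(\Delta_{\Phi, \Psi}) - \beta V_n$ converges in the strong resolvent sense to the closure $T$ of $\log(\Delta_{\Phi, \Psi}) - \beta V$. Each $T_n$ is self-adjoint on $\dom(\log \Delta_{\Phi, \Psi})$, because $V_n$ is bounded. By hypothesis $T$ is essentially self-adjoint on $\MD$. For $\xi \in \MD$, \cref{lem:convergenceUnboundedApproximation} gives $T_n \xi = \log(\Delta_{\Phi,\Psi})\xi - \beta V_n \xi \to \log(\Delta_{\Phi,\Psi})\xi - \beta V \xi = T\xi$. Hence \cref{pro:coreSOimpliesSR} applies with core $\MD$, exactly as in \cref{lem:convergenceLiouvillian} \ref{enu:convSum}, and gives $T_n \to T$ in the strong resolvent sense. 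Since $\exp$ is continuous, functional calculus then gives $\Delta_{\Phi_{- \beta V_n}, \Psi} = \ee^{T_n} \to \ee^{T}$ in the strong resolvent sense.

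We then apply \cref{lem:convRelModOp} with $\Phi_n \ce \Phi_{- \beta V_n}$ and $\Psi_n \ce \Psi$, and check its three hypotheses in turn.
\begin{enumerate}[label=\normalfont(\greek*)]
    \item This is the strong resolvent convergence $\Delta_{\Phi_n, \Psi} \to \ee^T$ just established.
    \item This is trivial, since $\Psi_n = \Psi$ for all $n$.
    \item $\Psi_n \to \Psi$ trivially, and $\Phi_{- \beta V_n} \to \Phi_{- \beta V}$ weakly by \cref{pro:unboundedPerturbedVector}. That proposition applies because $\Phi$ is separating, $L = - \inv{\beta}\log \Delta_\Phi$, and $V \in \MFS_1(L, \Phi)$. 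If $\Phi$ is not normalized, we first rescale it, using \cref{lem:relModOpScaling} and the homogeneity of the definition of the perturbed vector.
\end{enumerate}
\cref{lem:convRelModOp} then gives $\Delta_{\Phi_{- \beta V}, \Psi} = \ee^{T}$. Taking logarithms, which is legitimate because $\Delta_{\Phi_{-\beta V},\Psi}$ is injective by \cref{lem:perturbedVectorSeparating}, we obtain the claimed identity.

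The main obstacle is the strong resolvent convergence of $T_n$. Unlike in \cref{lem:convergenceLiouvillian}, it does not follow from \ref{enu:A1}--\ref{enu:A2}; this is precisely why the statement assumes essential self-adjointness on $\MD$. The remaining technical points are the normalization of $\Phi$ and checking that $\Phi_{-\beta V_n}$ is the same vector in Araki's construction and in \cref{pro:unboundedPerturbedVector}; both are handled as above.
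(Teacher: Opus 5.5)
Your proposal is correct and follows essentially the same route as the paper. You approximate $V$ by the bounded $V_n$, apply Araki's first bounded-perturbation formula, get strong resolvent convergence from \cref{pro:coreSOimpliesSR} on the core $\MD$, and conclude with \cref{lem:convRelModOp} using the weak convergence $\Phi_{-\beta V_n}\to\Phi_{-\beta V}$ from \cref{pro:unboundedPerturbedVector}. Your explicit treatment of the exponentiation step, of the trivially satisfied support condition (since $\Psi_n=\Psi$), and of the normalization of $\Phi$ fills in small details that the paper leaves implicit.
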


\begin{proof}
    Let $(V_n)_{n \in \N} \subset \MFM$ be an approximating sequence for $V$ as defined in \cref{eq:approximatingSequence}. Since $V_n \xi \to V \xi$ in $\MH$ for all $\xi \in \MD$ according to \cref{lem:convergenceUnboundedApproximation}, the assumption on $\log(\Delta_{\Phi, \Psi}) - \beta V$ together with \cref{pro:coreSOimpliesSR} implies that
    \begin{equation*}
        \log(\Delta_{\Phi, \Psi}) - \beta V_n \to \log(\Delta_{\Phi, \Psi}) - \beta V
    \end{equation*}
    in the strong resolvent sense as $n \to + \infty$. For bounded perturbations, the first of Araki's identities in \cref{eq:boundedPerturbationRelModOp} entails
    \begin{equation*}
        \log(\Delta_{\Phi_{- \beta V_n}, \Psi}) = \log(\Delta_{\Phi, \Psi}) - \beta V_n \, .
    \end{equation*}
    Since the right-hand side converges to $\log(\Delta_{\Phi, \Psi}) - \beta V$ in the strong resolvent sense, and since $\Phi_{- \beta V_n} \to \Phi_{- \beta V}$ weakly by \cref{pro:unboundedPerturbedVector} and $\Phi_{- \beta V} \in \NPC$ according to \cref{lem:perturbedVectorSeparating}, it follows again from \cref{lem:convRelModOp} that $\log(\Delta_{\Phi_{- \beta V}, \Psi}) = \log(\Delta_{\Phi, \Psi}) - \beta V$.
\end{proof}

\section{Extension of the two-sided Bogoliubov inequality}\label{sec:extension}

We now have almost everything at hand to extend the two-sided Bogoliubov inequality to general von Neumann algebras. First, however, in \cref{subsec:extensionPerturbation} we need to formulate a fourth and final assumption on the perturbation $V \in \AE{\MFM}$, allowing us to derive an identity for the relative entropy similar to \cref{thm:unboundedPerturbationsKMS} \ref{enu:unboundedPerturbationRelEnt1}. Building on that, in \cref{subsec:twoSidedBogoliubov} we will establish the two-sided Bogoliubov inequality and, moreover, show that it reduces to \cref{eq:BogoliubovQM} in case of a type I factor. In \cref{subsec:bosonicField}, we shall discuss the extended inequality for a bosonic scalar field and an inner perturbation, and finally, in \cref{subsec:variationalBounds}, we will prove analogs of the variational principles from \cref{thm:variationalBoundsQM} in the von Neumann algebra setting.

\subsection{Additional assumption}\label{subsec:extensionPerturbation}

As hinted at above, our first goal is to establish a formula for the quantity $\MS_\MFM (\omega_V, \omega)$, analogous to the one for $\MS_\MFM (\omega, \omega_V)$ from \cref{thm:unboundedPerturbationsKMS} \ref{enu:unboundedPerturbationRelEnt1}; this will require an additional assumption on $V$ introduced below. Before that, notice the following consequence of Assumption \ref{enu:A3}, which is mentioned in Ref.~\cite[p. 479]{DJP03}.

Let $V \in \AE{\MFM}$ be a self-adjoint operator affiliated with $\MFM$ and $E_V$ be its spectral measure. The operator $V$ can be decomposed into the sum $V = V_+ + V_-$, where
\begin{equation*}
    V_+ \ce \1_{(0, + \infty)}(V) V \tand V_- \ce \1_{(- \infty, 0]}(V) V
\end{equation*}
are defined via the functional calculus. Let $\beta > 0$ and consider a faithful $(\tau, \beta)$-KMS state $\omega \in \NS{\MFM}$ with vector representative $\Omega \in \NPC$. Suppose that $V$ satisfies \ref{enu:A3}. It holds that
\begin{equation*}
    \beta \omega(V_-) = \int_{\sigma(V) \cap (- \infty, 0]} \beta \lambda \diff \mu_\Omega^V(\lambda) \, ,
\end{equation*}
where $\mu_\Omega^V (A) = \braket{\Omega, E_V (A) \Omega}$ for all Borel sets $A \subset \R$, as before. One can estimate the integral by using that\footnote{From $\ee^x = \lim_{n \to \infty} (1 + x / n)^n$ and the Bernoulli inequality $(1 + x)^n \ge 1 + nx$, valid for all $x \ge -1$ and $n \in \N$, it follows that $\ee^{x} \ge 1 + x$ for all $x \in \R$. (The validity of this inequality for $x < - 1$ is clear.) Hence, substituting $x \to - x$, one obtains $\ee^{-x} \ge 1 - x$, and so $- \ee^{-x} \le - 1 + x < x$.\label{ftn:perturabtionTheory_exponentialInequality}} $x > - \ee^{- x}$ for all $x \in \R$ and $\int_{\sigma(T) \cap (0, + \infty)} \ee^{- \beta \lambda} \diff \mu_\Omega^V(\lambda) \ge 0$:
\begin{equation*}
    \beta \omega(V_-) > - \int_{\sigma(V) \cap (- \infty, 0]} \ee^{- \beta \lambda} \diff \mu_\Omega^V(\lambda) \ge - \int_{\sigma(T)} \ee^{- \beta \lambda} \diff \mu_\Omega^V(\lambda) = - \norm{\ee^{- \beta V / 2} \Omega}^2 > - \infty \, .
\end{equation*}
The condition $\beta \omega(V_-) > - \infty$ implies that $\beta \omega(V) > - \infty$. Therefore, the expression for $\MS_\MFM (\omega, \omega_V)$ from \cref{thm:unboundedPerturbationsKMS} \ref{enu:unboundedPerturbationRelEnt1} cannot be $- \infty$.

Inspired by this observation, let us pose the following additional assumption on the self-adjoint operator $V \in \MFS_1(L, \Omega)$:
\begin{enumerate}[label=\normalfont\bfseries(A\arabic*), ref=\normalfont(A\arabic*)]
    \setcounter{enumi}{3}
    \item \label{enu:A4} $\beta \omega_V(V_+) < + \infty$.
\end{enumerate}
This condition, in particular, implies that $\omega_V(V) < + \infty$. To compactify the notation again, introduce the set
\begin{equation*}
    \MFS_2(L, \Omega) \ce \left\{V \in \AE{\MFM} \ : \ \parbox{0.45\textwidth}{\centering $V$ self-adjoint and satisfies assumptions \ref{enu:A1} -- \ref{enu:A4} with respect to $L$ and $\Omega$}\right\} \subset \MFS_1(L, \Omega) \, .
\end{equation*}

The following result complements \cref{thm:unboundedPerturbationsKMS} \ref{enu:unboundedPerturbationRelEnt1}.

\begin{lemma}\label{lem:unboundedPerturbationRelEnt2}
    Let $(\MFM, \MH, J, \NPC)$ be a von Neumann algebra in standard form, let $\tau$ be a $W^\ast$-dynamics on $\MFM$ with Liouvillian $L$, and let $\omega$ be a faithful $(\tau, \beta)$-KMS state ($\beta > 0)$ with vector representative $\Omega \in \NPC$. If $V \in \MFS_2(L, \Omega)$, then
    \begin{equation*}
        \MS_\MFM(\omega_V, \omega) = - \beta \omega_V(V) - \log \bigl(\norm{\Omega_{- \beta V}}^2\bigr) \, .
    \end{equation*}
\end{lemma}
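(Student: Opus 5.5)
We argue directly from the definition \eqref{eq:relativeEntropy}, using the formula for the relative modular operator from \cref{cor:unboundedPerturbationRelModOp3}. Since $\omega$ is faithful, $\ssupp(\omega) = \id_\MH$, so the support condition in \eqref{eq:relativeEntropy} holds automatically. Next we normalize the perturbed vector. Put $\Psi \ce \Omega_{-\beta V}/\norm{\Omega_{-\beta V}} \in \NPC$; by \cref{lem:perturbedVectorSeparating} this is the vector representative of $\omega_V$. Then
\begin{equation*}
    \MS_\MFM(\omega_V, \omega) = - \braket[\big]{\Psi, \log(\Delta_{\Omega, \Psi}) \Psi} \, .
\end{equation*}
By \cref{lem:relModOpScaling} with $\mu = 1$ and $\lambda = \norm{\Omega_{-\beta V}}^{-2}$, we have $\Delta_{\Omega, \Psi} = \norm{\Omega_{-\beta V}}^{2}\, \Delta_{\Omega, \Omega_{-\beta V}}$. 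Taking logarithms and using \eqref{eq:unboundedPerturbationRelModOp3} gives
\begin{equation*}
    \log(\Delta_{\Omega, \Psi}) = \log\bigl(\norm{\Omega_{-\beta V}}^2\bigr) + \log(\Delta_{\Omega_{-\beta V}}) + \beta V \, .
\end{equation*}
Since $\Omega_{-\beta V}$ is cyclic and separating, so is $\Psi$, and $\Delta_{\Omega_{-\beta V}} = \Delta_\Psi$ by scaling invariance of the modular operator.

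Formally, inserting this into the expectation gives the claim. The constant term contributes $\log(\norm{\Omega_{-\beta V}}^2)$, and the $V$ term contributes $\beta\omega_V(V)$. The modular term contributes $\braket{\Psi, \log(\Delta_\Psi)\Psi} = 0$, because $\Delta_\Psi \Psi = \Psi$ and so $\Psi$ lies in the kernel of $\log \Delta_\Psi$. With the overall minus sign this is exactly $-\beta\omega_V(V) - \log(\norm{\Omega_{-\beta V}}^2)$.

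\textbf{The main obstacle} is justifying the splitting of the quadratic form, since the identity \eqref{eq:unboundedPerturbationRelModOp3} holds only as an equality of self-adjoint closures, and $\Psi$ need not lie in $\dom(V)$. One first checks that $\omega_V(V)$ is a genuine finite number. The upper bound is \ref{enu:A4}. For the lower bound, $\omega_V(V_-) > -\infty$ follows as in the discussion preceding \ref{enu:A4}, now using that $\omega_V$ is itself a $(\tau^V, \beta)$-KMS state (\cref{thm:unboundedPerturbationsKMS} \ref{enu:unboundedPerturbationKMS}).

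To justify the splitting rigorously, I would work with the approximating sequence $V_n$ from \eqref{eq:approximatingSequence}, for which Araki's bounded formulas (\cref{pro:boundedPerturbationRelModOp}) make everything exact. For bounded $V_n$ the identity reads $\MS(\omega_{V_n}, \omega) = -\beta\omega_{V_n}(V_n) - \log\norm{\Omega_{-\beta V_n}}^2$. One then passes to the limit $n \to \infty$ using the following ingredients:
\begin{itemize}
    \item weak convergence $\Omega_{-\beta V_n} \to \Omega_{-\beta V}$ (\cref{pro:unboundedPerturbedVector});
    \item lower semicontinuity of the relative entropy;
    \item control of $\omega_{V_n}(V_n)$ via \ref{enu:A4}.
\end{itemize}
Alternatively, one can use the spectral representation directly. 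Writing $\log\Delta_{\Omega,\Psi} = c + \log\Delta_\Psi + \beta V$ on the quadratic-form level, the form $\Psi \mapsto \braket{\Psi, \log(\Delta_{\Omega,\Psi})\Psi}$ can be split into positive and negative parts. Finiteness of $\omega_V(V_\pm)$ then allows the form sum to be evaluated termwise. I expect this domain/form argument to be the delicate step.
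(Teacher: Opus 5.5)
Your main computation is the paper's: normalize $\Omega_{-\beta V}$, insert \eqref{eq:unboundedPerturbationRelModOp3}, and evaluate the three terms separately, with the modular term vanishing because $\Delta_\Psi\Psi = \Psi$. The only difference is how you normalize. The paper passes to $W \ce V + \inv{\beta}\log(\norm{\Omega_{-\beta V}}^2)\,\id_\MH \in \MFS_2(L,\Omega)$, for which $\Omega_{-\beta W} = \Omega_{-\beta V}/\norm{\Omega_{-\beta V}}$ and $\omega_W = \omega_V$, and applies \cref{cor:unboundedPerturbationRelModOp3} to $W$ directly. Your use of \cref{lem:relModOpScaling} is equivalent. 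The paper also does the termwise evaluation without further domain justification. It only remarks that, by \ref{enu:A4}, $\beta\omega_V(V)$ lies in $\R \cup \{-\infty\}$, so the right-hand side is never $-\infty$.

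Where your proposal goes wrong is the step ``one first checks that $\omega_V(V)$ is a genuine finite number''. The argument before \ref{enu:A4} gets $\omega(V_-) > -\infty$ from \ref{enu:A3}, i.e.\ from $\norm{\ee^{-\beta V/2}\Omega} < +\infty$; the KMS property plays no role there. For $\omega_V$ you would need $\norm{\ee^{-\beta V/2}\Omega_{-\beta V}} < +\infty$, which is not available and is false in general. Here is a counterexample:
\begin{itemize}
\item Take $\MH = \ell^2(\N)$, $\MFM = \BO(\MH)$, $H_0 = \mathrm{diag}(k^2 + \tfrac{2}{\beta}\log k)$ and $V = \mathrm{diag}(-k^2)$.
\item $L$, $V$ and $JVJ$ all act diagonally on the matrix units of $\HS(\MH)$, so \ref{enu:A1} and \ref{enu:A2} hold.
\item $V_+ = 0$ gives \ref{enu:A4}.
\item $\norm{\ee^{-\beta V/2}\Omega}^2 = Z_0^{-1}\sum_k k^{-2} < +\infty$ gives \ref{enu:A3}.
\item Yet $\omega_V = \tr(\rho\,\cdot\,)$ with $\rho \propto \mathrm{diag}(k^{-2})$, so $\omega_V(V) = -\infty$ and $\MS_\MFM(\omega_V,\omega) = +\infty$.
\end{itemize}
The lemma is therefore an identity in $(-\infty,+\infty]$. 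A rigorous form-splitting argument must allow $\omega_V(V_-) = -\infty$ rather than exclude it. \ref{enu:A4} makes this case harmless, since no $\infty - \infty$ can then arise in the final formula.

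Your approximation route also does not close as sketched, for three reasons:
\begin{itemize}
\item Lower semicontinuity yields only $\MS_\MFM(\omega_V,\omega) \le \liminf_n \MS_\MFM(\omega_{V_n},\omega)$, i.e.\ one inequality.
\item Weak convergence of $\Omega_{-\beta V_n}$ does not imply convergence of the associated vector states.
\item \ref{enu:A4} bounds $\omega_V(V_+)$, not $\omega_{V_n}(V_n)$.
\end{itemize}
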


\begin{proof}   
    Define the operator $W \ce V + \inv{\beta} \log\bigl(\norm{\Omega_{- \beta V}}^2\bigr) \, \id_\MH$, which is still an element of the set $\MFS_2(L, \Omega)$. The $W$-perturbation \eqref{eq:unboundedPerturbedVector} of the vector $\Omega$ is given by
    \begin{equation*}
        \Omega_{-\beta W} = \ee^{- \beta (L + W) / 2} \, \Omega = \ee^{- \log \norm{\Omega_{- \beta V}} \id_\MH} \, \ee^{- \beta (L + V)} \Omega = \frac{\Omega_{- \beta V}}{\norm{\Omega_{- \beta V}}} \, ,
    \end{equation*}
    and hence the corresponding state $\omega_W$ [cf. again \cref{eq:unboundedPerturbedVector}] satisfies
    \begin{align*}
        \omega_W (A) &= \frac{1}{\norm{\Omega_{- \beta W}}^2} \, \braket{\Omega_{- \beta W}, A \Omega_{- \beta W}} \\[4pt]
        &= \braket{\Omega_{- \beta W}, A \Omega_{- \beta W}} \\
        &= \frac{1}{\norm{\Omega_{- \beta V}}^2} \, \braket{\Omega_{- \beta V}, A \Omega_{- \beta V}} = \omega_V (A)
    \end{align*}
    for all $A \in \MFM$, that is, we have $\omega_W = \omega_V$ on $\MFM$. Using now the identity \eqref{eq:unboundedPerturbationRelModOp3} from \cref{cor:unboundedPerturbationRelModOp3}, it follows that
    \begin{equation*}
        \log (\Delta_{\Omega, \Omega_{- \beta W}}) = \log(\Delta_{\Omega_{- \beta W}}) + \beta W = \log(\Delta_{\Omega_{- \beta W}}) + \beta V + \log\bigl(\norm{\Omega_{- \beta V}}^2\bigr) \id_\MH \, .
    \end{equation*}
    Since both $\omega$ and $\omega_V$ are faithful (the latter according to \cref{lem:perturbedVectorSeparating}), the relative entropy $\MS_\MFM (\omega_V, \omega)$ defined in \cref{eq:relativeEntropy} can be written as
    \begin{align*}
        \MS_\MFM (\omega_V, \omega) &= \MS_\MFM (\omega_W, \omega) = - \braket[\big]{\Omega_{- \beta W}, \log (\Delta_{\Omega, \Omega_{- \beta W}}) \Omega_{- \beta W}} \\[4pt]
        &= - \braket[\big]{\Omega_{- \beta W}, \underbrace{\log(\Delta_{\Omega_{- \beta W}}) \Omega_{- \beta W}}_{=0}} - \beta \underbrace{\braket[\big]{\Omega_{- \beta W}, V \Omega_{- \beta W}}}_{= \omega_V(V)} \\[-2pt]
            &\quad - \log \bigl(\norm{\Omega_{- \beta V}}^2\bigr) \braket[\big]{\Omega_{- \beta W}, \Omega_{- \beta W}} \\[4pt]
        &= - \beta \omega_V (V) - \log \bigl(\norm{\Omega_{- \beta V}}^2\bigr) \, .
    \end{align*}
    Note that $\log(\Delta_{\Omega_{- \beta W}}) \Omega_{- \beta W} = 0$ by functional calculus and the fact that $\Delta_{\Omega_{- \beta W}} \Omega_{- \beta W} = \Omega_{- \beta W}$, cf. \cref{eq:modularData}. According to Assumption \ref{enu:A4}, it holds that $\beta \omega_V(V)$ is a finite number or $- \infty$. Therefore, the right-hand side in the above equation cannot be equal to $- \infty$, which is necessary to guarantee non-negativity of the relative entropy.
\end{proof}

\subsection{Two-sided Bogoliubov inequality}\label{subsec:twoSidedBogoliubov}

The last thing that is missing in order to generalize \cref{eq:BogoliubovQM} to arbitrary von Neumann algebras is to define an analog of the relative free energy $\Delta F$ that was introduced in \cref{eq:interfaceEnergy}.

\begin{definition}\label{def:relativeFreeEnergy}
    Let $(\MFM, \MH, J, \NPC)$ be a von Neumann algebra in standard form, let $\tau$ be a $W^\ast$-dynamics on $\MFM$ with Liouvillian $L$, and let $\omega \in \NS{\MFM}$ be a faithful $(\tau, \beta)$-KMS state ($\beta > 0$) with vector representative $\Omega \in \NPC$. Given a perturbation $U \in \MFS_2(L, \Omega)$, define the \emph{relative free energy} between $\omega_U$ and $\omega$ to be
    \begin{equation}\label{eq:relativeFreeEnergy}
        \MF(\omega_U, \omega) \ce \omega_U(U) + \inv{\beta} \MS_\MFM(\omega_U, \omega) \, .
    \end{equation}
\end{definition}

\begin{remark}
    Consider the algebra $\MFM = \BO(\MH)$ with Hamiltonian $H_0$ at inverse temperature $\beta > 0$, let $\rho_0$ be the canonical Gibbs state \eqref{eq:freeGibbsState} for $H_0$, and let $\rho$ be the corresponding state \eqref{eq:canonicalGibbsState} for $H = H_0 + U$. In \cref{eq:realtiveFreeEnergyFQS} in the proof of \cref{thm:BogoliubovQM}, it was mentioned that
    \begin{equation*}
        \Delta F = \EV_\rho[U] + \inv{\beta} S(\rho, \rho_0) \, ,
    \end{equation*}
    where $S(\rho, \rho_0)$ is the Umegaki relative entropy. The right-hand side can be defined in general von Neumann algebras, namely, in the form of \cref{eq:relativeFreeEnergy} which is, therefore, a proper extension of the concept of relative free energy $\Delta F$. In fact, in \cref{pro:BogoliubovVNimpliesQM} below we will show that $\MF(\omega_U, \omega)$ indeed reduces to $\Delta F$ when $\MFM = \BO(\MH)$.
    
    A definition of a relative free energy in operator algebras similar to \cref{eq:relativeFreeEnergy}, justified by physical principles and the analogy to finite quantum systems, was suggested by M.~J.~Donald \cite[Eq.~(1.2)]{Donald1987}; he considered a relative free energy of an arbitrary normal state with respect to an equilibrium state, both defined with respect to the same Hamiltonian of the quantum system. A formula very similar to \cref{eq:relativeFreeEnergy} also appeared in another context in the work of R.~Longo \cite[Eq. (3.6)]{Longo1997}.
\end{remark}

We can now finally prove the two-sided Bogoliubov inequality, which is one of the main results of this paper.

\begin{theorem}[Two-sided Bogoliubov inequality]\label{thm:BogoliubovVN}
    Let $(\MFM, \MH, J, \NPC)$ be a von Neumann algebra in standard form, let $\tau$ be a $W^\ast$-dynamics on $\MFM$ with standard Liouvillian $L$, let $\beta > 0$, and let $\omega \in \NS{\MFM}$ be a faithful $(\tau, \beta)$-KMS state with vector representative $\Omega \in \NPC$. For all perturbations $U \in \MFS_2(L, \Omega)$ there holds
    \begin{equation}\label{eq:BogoliubovVN}
        \omega_U(U) \le \MF(\omega_U, \omega) \le \omega(U) \, .
    \end{equation}
\end{theorem}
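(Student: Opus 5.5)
The plan mirrors the proof of \cref{thm:BogoliubovQM}: we obtain both bounds from non-negativity of the Araki-Uhlmann relative entropy, \cref{thm:relEntProperties} \ref{enu:relEntNonNeg}. We apply it once to $\MS_\MFM(\omega_U, \omega)$ and once to $\MS_\MFM(\omega, \omega_U)$, using the explicit formulas of \cref{lem:unboundedPerturbationRelEnt2} and \cref{thm:unboundedPerturbationsKMS} \ref{enu:unboundedPerturbationRelEnt1}. Both $\omega$ and $\omega_U$ are normal states; $\omega_U$ is normalized by \cref{def:unboundedPerturbedState} and faithful by \cref{lem:perturbedVectorSeparating}. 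So both relative entropies are non-negative.

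\emph{Lower bound.} By \cref{def:relativeFreeEnergy}, $\MF(\omega_U, \omega) - \omega_U(U) = \inv{\beta} \MS_\MFM(\omega_U, \omega) \ge 0$. This inequality is immediate once the subtraction is meaningful, which requires $\omega_U(U) < +\infty$. That finiteness from above is exactly Assumption \ref{enu:A4}.

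\emph{Upper bound.} First rewrite $\MF$ using \cref{lem:unboundedPerturbationRelEnt2}:
\begin{equation*}
    \MF(\omega_U, \omega) = \omega_U(U) + \inv{\beta}\bigl(-\beta \omega_U(U) - \log \norm{\Omega_{-\beta U}}^2\bigr) = - \inv{\beta} \log\bigl(\norm{\Omega_{-\beta U}}^2\bigr) \, ,
\end{equation*}
which is the analogue of $\Delta F = -\inv{\beta}\log(Z/Z_0)$. Then use \cref{thm:unboundedPerturbationsKMS} \ref{enu:unboundedPerturbationRelEnt1}:
\begin{equation*}
    0 \le \MS_\MFM(\omega, \omega_U) = \beta \omega(U) + \log\bigl(\norm{\Omega_{-\beta U}}^2\bigr) \, .
\end{equation*}
Dividing by $\beta$ and rearranging gives $\MF(\omega_U,\omega) \le \omega(U)$. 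Equivalently, this is the Peierls-Bogoliubov inequality \cref{thm:unboundedPerturbationsKMS} \ref{enu:unboundedPerturbationPB} after taking logarithms.

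The main obstacle is the handling of possibly infinite quantities. $\omega_U(U)$ may equal $-\infty$, and $\omega(U)$ may equal $+\infty$, in which case the upper bound is trivial. Cancelling $\omega_U(U)$ in the rewriting of $\MF$ is legitimate only when $\omega_U(U)$ is finite. I would argue as follows. $\log \norm{\Omega_{-\beta U}}^2$ is a finite real number, since $\Omega_{-\beta U} \neq 0$ is separating and its norm is finite. Non-negativity in \cref{lem:unboundedPerturbationRelEnt2} then forces $-\beta\omega_U(U) \ge \log\norm{\Omega_{-\beta U}}^2 > -\infty$, so $\omega_U(U) < +\infty$; together with \ref{enu:A4}, this is what makes $\MF$ well defined. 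If $\omega_U(U) = -\infty$, I would instead read the formula $\MF = -\inv{\beta}\log\norm{\Omega_{-\beta U}}^2$ as the defining value, since the sum in \cref{def:relativeFreeEnergy} has the form $-\infty + \infty$. With that reading, the lower bound holds trivially. Similarly, $\omega(U) > -\infty$ by the discussion preceding \ref{enu:A4}, so the identity in \cref{thm:unboundedPerturbationsKMS} \ref{enu:unboundedPerturbationRelEnt1} is a meaningful inequality in $(-\infty,+\infty]$.
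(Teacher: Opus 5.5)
Your proposal is correct and follows the paper's proof essentially verbatim. You rewrite $\MF(\omega_U,\omega) = -\inv{\beta}\log\bigl(\norm{\Omega_{-\beta U}}^2\bigr)$ via \cref{lem:unboundedPerturbationRelEnt2}, then obtain the lower bound from $\MS_\MFM(\omega_U,\omega)\ge 0$ and the upper bound from $\MS_\MFM(\omega,\omega_U) = \beta\omega(U) + \log\bigl(\norm{\Omega_{-\beta U}}^2\bigr)\ge 0$. Your bookkeeping of infinite values goes beyond what the paper spells out. One part of it is redundant: $\omega_U(U)<+\infty$ is already supplied by \ref{enu:A4}, so deriving it again from the lemma, whose proof itself uses \ref{enu:A4}, adds nothing.
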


\begin{proof}
    According to \cref{lem:unboundedPerturbationRelEnt2}, we have $\MS_\MFM(\omega_U, \omega) = - \beta \omega_U(U) - \log\bigl(\norm{\Omega_{- \beta U}}^2\bigr)$. Therefore, the relative free energy \eqref{eq:relativeFreeEnergy} can be written as
    \begin{equation}\label{eq:perturbedVectorRFE}
        \MF(\omega_U, \omega) = - \inv{\beta} \log\bigl(\norm{\Omega_{- \beta U}}^2\bigr) \, .
    \end{equation}
    Using that the Araki-Uhlmann relative entropy $\MS_\MFM(\omega_U, \omega)$ is non-negative, cf. \cref{thm:relEntProperties} \ref{enu:relEntNonNeg}, it follows that $\omega_U(U) \le \MF(\omega_U, \omega)$, which is the lower bound. Similarly, combining $\MS_\MFM(\omega, \omega_U) = \beta \omega(U) + \log (\norm{\Omega_{- \beta U}}^2)$ from \cref{thm:unboundedPerturbationsKMS} \ref{enu:unboundedPerturbationRelEnt1} with non-negativity of $\MS_\MFM(\omega, \omega_U)$, one obtains the upper bound $\MF(\omega_U, \omega) \le \omega(U)$.
\end{proof}

\begin{remark}     
    As in the quantum-mechanical case (cf. \cref{rem:BogoliubovQM}), the upper bound $- \inv{\beta} \log (\norm{\Omega_{- \beta U}}^2) \le \omega(U)$ is a version of the \emph{Peierls-Bogoliubov inequality}. In von Neumann algebras, it was established for bounded perturbations by Araki \cite[Thm. 1]{Araki1973b}, and for unbounded $U \in \MFS_1(L, \Omega)$ by Derezi\'{n}ski, Jak\v{s}i\'{c}, and Pillet \cite[Thm. 5.5 (8)]{DJP03}, see \cref{thm:unboundedPerturbationsKMS} \ref{enu:unboundedPerturbationPB}. In its typical form, this inequality reads
    \begin{equation*}
        \ee^{- \beta \omega(U)} \le \norm{\Omega_{- \beta U}}^2 \, .
    \end{equation*}
    The lower bound $\omega_U(U) \le - \inv{\beta} \log (\norm{\Omega_{- \beta U}}^2)$, on the other hand, does not seem to have been discussed in the literature so far; a different lower bound is given by the well-known \emph{Golden-Thompson inequality}, which for unbounded perturbations $U \in \MFS_1(L, \Omega)$ was also established in Ref.~\cite[Thm. 5.5 (9)]{DJP03}, see \cref{thm:unboundedPerturbationsKMS} \ref{enu:unboundedPerturbationGT}:
    \begin{equation*}
        \norm{\Omega_{- \beta U}} \le \norm{\ee^{- \beta U / 2} \Omega} \, .
    \end{equation*}
    Thus, it has to be stressed that the lower bound in the two-sided Bogoliubov inequality \eqref{eq:BogoliubovVN} is a new estimate whose usefulness for applications in physics has already been established \cite{Reible2022, Reible2025apx, Reible2023, Reible2025camcos}.
\end{remark}

Our next goal is to show that \cref{eq:BogoliubovVN} can indeed be considered the proper generalization of the quantum-mechanical two-sided Bogoliubov inequality \eqref{eq:BogoliubovQM} for unbounded perturbations on general von Neumann algebras $\MFM$: we will prove that if $\MFM = \BO(\MH)$, then \eqref{eq:BogoliubovVN} reduces to \eqref{eq:BogoliubovQM}. This is, in fact, not trivial due to the use of the standard form representation in the proof of \cref{thm:BogoliubovVN}. The key technical tool for the proof is the following result.

\begin{lemma}\label{lem:perturbedLiouvillianFQS}
    Consider the von Neumann algebra $\MFM = \BO(\MH)$, let $H_0$ be a self-adjoint linear operator acting in $\MH$, let $U$ be relatively $H_0$-bounded such that $H = H_0 + U$ is again self-adjoint, and let $L$ be the Liouvillian of the dynamics generated by $H_0$, cf. \cref{exa:LiouvillianKMS} \ref{enu:finiteQuantumSystem}. Then for all $t \in \R$ and $X \in \HS(\MH)$ there holds
    \begin{equation}\label{eq:perturbedLiouvillianFQS}
        \ee^{\ii t (L + U)} X = \ee^{\ii t H} X \, \ee^{- \ii t H_0} \, .
    \end{equation}
\end{lemma}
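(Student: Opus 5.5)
We first identify the action of $U$ on $\HS(\MH)$. In the standard form of \cref{exa:standardForm}, $\BO(\MH)$ acts by left multiplication, so the affiliated operator $U$ acts as the left multiplication $L_U$. Its domain is $\dom(L_U) = \set{X \in \HS(\MH) : UX \in \HS(\MH)}$, where $UX$ is understood via $\ran X \subset \dom(U)$. Similarly, by \cref{exa:LiouvillianKMS} \ref{enu:finiteQuantumSystem}, $L$ acts on suitable $X$ as $L X = H_0 X - X H_0$.

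The plan is to define the family
\begin{equation*}
    W_t X \ce \ee^{\ii t H} X \, \ee^{- \ii t H_0}, \qquad t \in \R,\ X \in \HS(\MH),
\end{equation*}
and show it equals $\ee^{\ii t (L+U)}$. We proceed in three steps.

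\emph{Step 1.} $(W_t)_{t \in \R}$ is a strongly continuous one-parameter unitary group on $\HS(\MH)$. Unitarity is clear because both factors are unitary and the Hilbert--Schmidt norm is invariant under multiplication by unitaries on either side. The group law is immediate. Strong continuity follows from strong continuity of $\ee^{\ii tH}$ and $\ee^{-\ii t H_0}$: one checks it first on finite-rank $X$ and extends by density together with uniform boundedness.

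\emph{Step 2.} Identify the generator $G$ of $W_t$ on a core. Take the dense set $\MD_0$ of finite-rank operators $X = \sum_k \ket{\phi_k}\bra{\psi_k}$ with $\phi_k \in \dom(H_0) \subset \dom(U)$ and $\psi_k \in \dom(H_0)$. Relative boundedness gives $\dom(H) = \dom(H_0)$. For such $X$, differentiating $W_t X$ at $t=0$ in the HS norm (which equals a finite sum of rank-one differences) gives
\begin{equation*}
    G X = H X - X H_0 = (H_0 X - X H_0) + U X = (L + U) X .
\end{equation*}
Moreover $\MD_0$ is invariant under $W_t$, since $\ee^{\ii tH}$ preserves $\dom(H) = \dom(H_0)$ and $\ee^{\ii t H_0}$ preserves $\dom(H_0)$. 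Hence, by Nelson's invariant-domain theorem (the core criterion for generators), $\MD_0$ is a core for $G$.

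\emph{Step 3.} Show that the self-adjoint operator $L+U$ appearing in the statement agrees with $G$. Here $L+U$ means the closure on $\dom(L) \cap \dom(U)$, presumed self-adjoint. We have $\MD_0 \subset \dom(L) \cap \dom(U)$ and $G = L+U$ on $\MD_0$. Since $\MD_0$ is a core for the self-adjoint $G$, we get $G = \oln{G|_{\MD_0}} \subset L+U$. A self-adjoint operator has no proper symmetric extension, and $L+U$ is symmetric, so $G = L+U$. Stone's theorem then gives $\ee^{\ii t(L+U)} = W_t$, which is \cref{eq:perturbedLiouvillianFQS}.

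The main obstacle is Step 3: what "$L+U$" means, given that the lemma does not state \ref{enu:A1} explicitly. The argument above gives essential self-adjointness of $L+U$ on $\dom(L)\cap\dom(U)$ as a byproduct, because that domain contains the core $\MD_0$ of the self-adjoint $G$. It remains to verify $\MD_0 \subset \dom(L)$, i.e.\ that $L X = H_0X - XH_0$ on finite-rank $X$ built from $\dom(H_0)$ vectors. This follows by differentiating $\ee^{\ii tH_0}X\ee^{-\ii tH_0}$ exactly as in Step 2, with $U = 0$.
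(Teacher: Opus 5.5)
Your proposal is correct and follows essentially the same route as the paper: both define $W(t)X = \ee^{\ii t H} X \, \ee^{- \ii t H_0}$, show that the span of rank-one operators built from vectors in $\dom(H_0)$ is a dense, $W(t)$-invariant subspace of the generator's domain (hence a core), check that the generator agrees with $L+U$ there, and conclude via maximality of self-adjoint operators and Stone's theorem. You also make explicit two points the paper leaves implicit: that the difference quotients converge in Hilbert--Schmidt norm rather than merely strongly, and that the final identification needs $L+U$ to be symmetric, which relies on $U$ being self-adjoint.
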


\begin{proof}
    Recall from \cref{exa:standardForm} that the standard form representation of $\MFM$ is given in terms of left multiplication on the Hilbert space $\MK \ce \HS(\MH)$ of Hilbert-Schmidt operators on $\MH$, and from \cref{exa:LiouvillianKMS} \ref{enu:finiteQuantumSystem} that the standard Liouvillian $L : \MK \supset \dom(L) \to \MK$ of the $W^\ast$-dynamics $\tau_t(A) \ce \ee^{\ii t H_0} A \, \ee^{- \ii t H_0}$, $t \in \R$, $A \in \MFM$, is determined by
    \begin{equation}\label{eq:LiouvillianFQS}
        \ee^{\ii t L} X = \ee^{\ii t H_0} X \, \ee^{- \ii t H_0} \com X \in \MK \, .
    \end{equation}

    Since both $t \mapsto \ee^{\ii t H}$ and $t \mapsto \ee^{- \ii t H_0}$ are strongly continuous one-parameter unitary groups on the Hilbert space $\MH$ (\cref{pro:generatorUnitaryGroup}), it is straightforward to see that the one-parameter family $t \mapsto W(t)$ of operators $W(t) : \MK \to \MK$ given by
    \begin{equation*}
        W(t) X \ce \ee^{\ii t H} X \, \ee^{- \ii t H_0} \com t \in \R \, , \ X \in \MK \, ,
    \end{equation*}
    is well-defined and also a strongly continuous one-parameter unitary group on $\MK$. Let $K : \MK \supset \dom(K) \to \MK$ denote the self-adjoint generator of the group $t \mapsto W(t)$, which exists by Stone's theorem (\cref{thm:Stone}) and is characterized as follows (\cref{pro:generatorUnitaryGroup}):
    \begin{gather*}
        \dom(K) = \Set{X \in \MK \ : \ \odbound{}{t}{t=0} W(t) X = \lim_{h \to 0} \frac{1}{h} \, \bigl(W(h) X - X\bigr) \in \MK \ \text{exists}} \, , \\
        K X = - \ii \, \odbound{}{t}{t=0} W(t) X \com X \in \dom(K) \, .
    \end{gather*}
    In the following, we will show that the generator $K$ of $(W(t))_{t \in \R}$ is given by $\oln{L + U}$, which then implies our claim. To do that, we will \textbf{(1)} identify a core $\MD$ of $K$ and \textbf{(2)} show that these operators agree on $\MD$ to conclude that $K = \oln{L + U}$.

    \textbf{(1)} For $\xi, \eta \in \MH$ let $P_{\xi, \eta} : \MH \to \MH$ denote the finite-rank operator $P_{\xi, \eta} \zeta \ce \braket{\eta, \zeta} \, \xi$, $\zeta \in \MH$, and define the following subspace of $\MK$:
    \begin{equation*}
        \MD \ce \lin \set[\big]{P_{\xi, \eta} \ : \ \xi, \eta \in \dom(H_0)} \, .
    \end{equation*}
    We will show that
    \begin{enumerate}
        \item $\MD \subset \MK$ is dense;
        \item $\MD \subset \dom(K)$;
        \item $W(t) \MD \subset \MD$ for all $t \in \R$.
    \end{enumerate}
    \cref{pro:coreGenerator} then implies that $\MD$ is a core for the operator $K$. 

    \tAd{} (a). It suffices to note that the finite-rank operators on $\MH$ are dense in $\MK$ (see, e.g., Ref.~\cite[Cor. 26.4]{BlanchardBruening2015}), and that since $\dom(H_0)$ is dense in $\MH$ (because $H_0$ is self-adjoint), every finite-rank operator on $\MH$ can be approximated by a finite-rank operator on $\dom(H_0)$, that is, by an element from $\MD$.
    
    \tAd{} (b) and (c). First, note that $\dom(H) = \dom(H_0)$ since, by assumption, $U$ is relatively $H_0$ bounded which implies $\dom(H_0) \subset \dom(U)$ and, hence, $\dom(H) = \dom(H_0) \cap \dom(U) = \dom(H_0)$. Now, let $P_{\xi, \eta} \in \MD$ be arbitrary.  For all $\zeta \in \MH$, it holds that
    \begin{equation*}
        \bigl(W(t) P_{\xi, \eta}\bigr) \zeta =  \braket{\eta, \ee^{- \ii t H_0} \zeta} \, \ee^{\ii t H} \xi = \braket{\ee^{\ii t H_0} \eta, \zeta} \, \ee^{\ii t H} \xi = P_{\ee^{\ii t H} \xi, \, \ee^{\ii t H_0} \eta} \, \zeta \, .
    \end{equation*}
    Since $\xi, \eta \in \dom(H_0) = \dom(H)$, we also have $\ee^{\ii t H} \xi, \, \ee^{\ii t H_0} \eta \in \dom(H_0)$, cf. \cref{pro:generatorUnitaryGroup}, which implies $W(t) P_{\xi, \eta} = P_{\ee^{\ii t H} \xi, \, \ee^{\ii t H_0} \eta} \in \MD$, proving property (c). Moreover, from
    \begin{align*}
        \bigl(W(t) P_{\xi, \eta} - P_{\xi, \eta}\bigr) \zeta &= \braket{\ee^{\ii t H_0} \eta, \zeta} \, \ee^{\ii t H} \xi - \braket{\eta, \zeta} \, \xi \\
        &= \braket{\ee^{\ii t H_0} \eta, \zeta} \, (\ee^{\ii t H} \xi - \xi) + \braket[\big]{(\ee^{\ii t H_0} - \id) \, \eta, \zeta} \, \xi
    \end{align*}
    it follows that
    \begin{align*}
        \lim_{t \to 0} \frac{1}{t} \, \bigl(W(t) P_{\xi, \eta} - P_{\xi, \eta}\bigr) \zeta &= \lim_{t \to 0} \left[\braket{\ee^{\ii t H_0} \eta, \zeta} \, \frac{\ee^{\ii t H} \xi - \xi}{t}\right] + \Braket{\lim_{t \to 0} \frac{1}{t} \, (\ee^{\ii t H_0} - \id) \, \eta, \zeta} \, \xi \\
        &= \braket{\eta, \zeta} \, \ii H \xi + \braket{\ii H_0 \eta, \zeta} \, \xi \\
        &= \ii \bigl(P_{H \xi, \eta} - P_{\xi, H_0 \eta}\bigr) \zeta \, ,
    \end{align*}
    where again $\xi, \eta \in \dom(H_0) = \dom(H)$ was used to evaluate the limits of the unitary groups on $\MH$. Since the vector $\zeta \in \MH$ was arbitrary, this shows that
    \begin{equation*}
        \lim_{t \to 0} \frac{1}{t} \, \bigl(W(t) P_{\xi, \eta} - P_{\xi, \eta}\bigr) = \ii \bigl(P_{H \xi, \eta} - P_{\xi, H_0 \eta}\bigr) \, .
    \end{equation*}
    As the right-hand side is a well-defined element of $\MK$, we can conclude that $P_{\xi, \eta} \in \dom(K)$, and hence (b) is proved.

    Having verified properties (a) -- (c), we can now invoke \cref{pro:coreGenerator} which implies that the space $\MD$ is a core for the generator $K$ of the unitary group $W(t)$ on $\MK$, that is,
    \begin{equation}\label{eq:closureRestrictedGenerator}
        K = \oln{K \restr \MD} \, ,
    \end{equation}
    where $K \restr \MD$ is the restriction of the self-adjoint operator $K$ to the subspace $\MD$. The previous argument also shows that for all $P_{\xi, \eta} \in \MD$, the restriction is given by
    \begin{equation}\label{eq:restrictedGenerator}
        K P_{\xi, \eta} = P_{H \xi, \eta} - P_{\xi, H_0 \eta} \, .
    \end{equation}

    \textbf{(2)} Coming back to the standard Liouvillian $L$ which generates the group $\ee^{\ii t L}$ given by \cref{eq:LiouvillianFQS}, it follows by the same argument as before that $\MD \subset \dom(L)$ and that for $P_{\xi, \eta} \in \MD$:
    \begin{equation*}
        L P_{\xi, \eta} = - \ii \, \odbound{}{t}{t=0} \Bigl(\ee^{\ii t H_0} P_{\xi, \eta} \, \ee^{- \ii t H_0}\Bigr) = P_{H_0 \xi, \eta} - P_{\xi, H_0 \eta} \, .
    \end{equation*}
    Since $\dom(H_0) \subset \dom(U)$ by assumption, we also have $U P_{\xi ,\eta} = P_{U \xi, \eta}$. Therefore, it follows from \cref{eq:restrictedGenerator} that
    \begin{align*}
        K P_{\xi, \eta} &= P_{H \xi, \eta} - P_{\xi, H_0 \eta} \\
        &= P_{(H_0 + U) \xi, \eta} - P_{\xi, H_0 \eta} \\
        &= P_{H_0 \xi, \eta} + P_{U \xi, \eta} - P_{\xi, H_0 \eta} \\
        &= (L + U) P_{\xi, \eta} \, .
    \end{align*}
    This proves $K \restr \MD = (L + U) \restr \MD$. Note that in the third line, we used that $P_{(H_0 + U) \xi, \eta} \zeta = \braket{\eta, \zeta} \, (H_0 + U) \xi = \braket{\eta, \zeta} \, H_0 \xi + \braket{\eta, \zeta} \, U \xi = (P_{H_0 \xi, \eta} + P_{U \xi, \eta}) \zeta$ for all $\zeta \in \MH$.
    
    Taking the closure in the relation $K \restr \MD = (L + U) \restr \MD$ and using the conclusion \eqref{eq:closureRestrictedGenerator} of the first step, we find that
    \begin{equation*}
        \oln{(L + U) \restr \MD} = K \, .
    \end{equation*}
    Since $K$ is self-adjoint, this shows that the operator $(L + U) \restr \MD$ is essentially self-adjoint; therefore, $L + U$ is also essentially self-adjoint, and we have
    \begin{equation*}
        \oln{L + U} = \oln{(L + U) \restr \MD} = K \, .
    \end{equation*}
    
    This proves that the generator $K$ of $(W(t))_{t \in \R}$ is equal to $\oln{L + U}$, hence, by Stone's theorem (\cref{thm:Stone}), the unitary group $\ee^{\ii t (L + U)}$ must agree with the group $t \mapsto W(t)$, that is,
    \begin{equation*}
        \ee^{\ii t (L + U)} X = \ee^{\ii t H} X \, \ee^{- \ii t H_0} \quad \text{for all} \quad t \in \R \, , \ X \in \MK \, . \tag*{\qedhere}
    \end{equation*}
\end{proof}

We can now show that the two-sided Bogoliubov inequality \eqref{eq:BogoliubovVN} in von Neumann algebras reduces to the quantum-mechanical inequality \eqref{eq:BogoliubovQM} if $\MFM = \BO(\MH)$.

\begin{proposition}\label{pro:BogoliubovVNimpliesQM}
    Consider the von Neumann algebra $\MFM = \BO(\MH)$, let $H_0$ be a lower semi-bounded self-adjoint linear operator acting in $\MH$, let $U$ be relatively $H_0$-bounded such that $H = H_0 + U$ is self-adjoint, let $\beta > 0$, and assume that $\ee^{- \beta H_0}$ and $\ee^{- \beta H}$ are trace-class as well as $\norm{\ee^{- \beta V / 2} \, \ee^{- \beta H_0 / 2}}_\mathrm{HS} < + \infty$ and $\EV_\rho [V] < + \infty$.\footnote{The first assumption is satisfied, in particular, if $V$ is lower semi-bounded.} Then \cref{eq:BogoliubovVN} implies \cref{eq:BogoliubovQM}, i.e.,
    \begin{equation*}
        \EV_\rho[U] \le \Delta F \le \EV_{\rho_0}[U] \, .
    \end{equation*}
\end{proposition}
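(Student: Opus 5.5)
We work in the standard form of \cref{exa:standardForm}, $\bigl(\BO(\MH), \HS(\MH), \,\cdot\,^\ast, \HS(\MH)_+\bigr)$, with $\tau_t(A) = \ee^{\ii t H_0} A \, \ee^{- \ii t H_0}$ and standard Liouvillian $L$ as in \cref{exa:LiouvillianKMS} \ref{enu:finiteQuantumSystem}. We take $\omega = \tr(\rho_0 \,\cdot\,)$, which is a faithful $(\tau, \beta)$-KMS state, with vector representative $\Omega = \rho_0^{1/2} = \ee^{- \beta H_0/2}/Z_0^{1/2} \in \HS(\MH)_+$. We read the $V$ in the hypotheses as the perturbation $U$. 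The plan has three steps: (i) verify $U \in \MFS_2(L, \Omega)$; (ii) identify the perturbed vector $\Omega_{- \beta U}$ explicitly; (iii) show that $\omega_U$, $\omega(U)$, $\omega_U(U)$ and $\MF(\omega_U, \omega)$ reduce to $\tr(\rho\,\cdot\,)$, $\EV_{\rho_0}[U]$, $\EV_\rho[U]$ and $\Delta F$. Then \cref{thm:BogoliubovVN} gives the claim.

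For (i), \ref{enu:A1} follows directly from \cref{lem:perturbedLiouvillianFQS}: its proof shows that $L + U$ is essentially self-adjoint on $\MD \subset \dom(L) \cap \dom(U)$, hence on the larger domain. Here $U$ acts on $\HS(\MH)$ by left multiplication, and this operator is affiliated with $\BO(\MH)$ because the commutant consists of right multiplications. For \ref{enu:A2}, note that $J U J$ is right multiplication by $U$. The same computation as in \cref{lem:perturbedLiouvillianFQS} shows that $W(t) X = \ee^{\ii t H} X \ee^{- \ii t H}$ has generator agreeing with $L + U - JUJ$ on $\MD$, since $H_0$ and $U$ act on both sides of $P_{\xi,\eta}$. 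We also need $\MD$ to be invariant under this group, which holds because $\ee^{\ii t H}$ preserves $\dom(H) = \dom(H_0)$. \cref{pro:coreGenerator} then gives essential self-adjointness of $L_U$ on $\MD$. For \ref{enu:A3}, $\norm{\ee^{- \beta U/2} \Omega} = Z_0^{-1/2} \norm{\ee^{- \beta U/2} \ee^{- \beta H_0/2}}_\mathrm{HS} < + \infty$ by hypothesis. Assumption \ref{enu:A4} will follow from $\EV_\rho[U] < + \infty$ once $\omega_U$ has been identified in step (ii).

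For (ii), \cref{lem:perturbedLiouvillianFQS} gives $\ee^{\ii t (L+U)} X = \ee^{\ii t H} X \ee^{- \ii t H_0}$. Equivalently, $L + U$ is unitarily the generator of $X \mapsto HX - XH_0$, i.e.\ $H \otimes 1 - 1 \otimes \oln{H_0}$ on $\MH \otimes \oln{\MH} \cong \HS(\MH)$. Joint functional calculus for these two commuting self-adjoint operators should then give
\begin{equation*}
\ee^{- \beta (L+U)/2} X = \ee^{- \beta H/2} X \, \ee^{\beta H_0/2}
\end{equation*}
whenever the right-hand side makes sense. Applying this to $X = \Omega$ yields $\Omega_{- \beta U} = Z_0^{-1/2} \ee^{- \beta H/2}$, which lies in $\HS(\MH)$ because $\ee^{- \beta H}$ is trace-class. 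Consequently $\norm{\Omega_{- \beta U}}^2 = Z/Z_0$, and $\omega_U(A) = \tr(\ee^{- \beta H/2} A \ee^{- \beta H/2})/Z = \tr(\rho A)$. We expect this step to be the main obstacle: one must justify the spectral calculus of the unbounded generator acting on $\Omega$. We would first try the spectral-theorem argument on the tensor product, checking that $\Omega \in \dom(\ee^{- \beta(L+U)/2})$ via the finiteness of $\tr \ee^{- \beta H}$. If that fails, we would fall back on analytic continuation in $t$ of $\braket{Y, \ee^{\ii t(L+U)} \Omega}$ for $Y \in \MD$, and identify the boundary values using \cref{pro:unboundedPerturbedVector}.

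For (iii), by \cref{eq:perturbedVectorRFE} from the proof of \cref{thm:BogoliubovVN},
\begin{equation*}
\MF(\omega_U, \omega) = - \inv{\beta} \log\bigl(\norm{\Omega_{- \beta U}}^2\bigr) = - \inv{\beta} \log(Z/Z_0) = \Delta F .
\end{equation*}
The expectation values are obtained by extending $\omega(A) = \tr(\rho_0 A)$ and $\omega_U(A) = \tr(\rho A)$ from bounded $A$ to the unbounded $U$ through its spectral measure: $\omega(U) = \int \lambda \, \diff \tr(\rho_0 E_U(\lambda)) = \EV_{\rho_0}[U]$, and likewise $\omega_U(U) = \EV_\rho[U]$. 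Substituting these into \eqref{eq:BogoliubovVN} gives \eqref{eq:BogoliubovQM}.
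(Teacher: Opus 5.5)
Your proposal is correct and has the same skeleton as the paper's proof: check \ref{enu:A1}--\ref{enu:A4} via \cref{lem:perturbedLiouvillianFQS}, identify $\Omega_{-\beta U}$ and $\omega_U$, then apply \cref{thm:BogoliubovVN}. Your treatment of \ref{enu:A2} is more explicit than the paper's ``analogous argument'': you use the group $X \mapsto \ee^{\ii t H} X \ee^{-\ii t H}$, which leaves $\MD$ invariant and whose generator equals $L + U - JUJ$ on $\MD$.

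The genuine differences are in steps (ii) and (iii).

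In step (ii), the paper proceeds in two stages. It first obtains $\Omega \in \dom(\ee^{-\beta(L+U)/2})$ abstractly from \cref{pro:unboundedPerturbedVector}, using Golden--Thompson, weak convergence and \ref{enu:A3}. It then identifies the vector by analytically continuing $t \mapsto \ee^{\ii t H}\Omega\,\ee^{-\ii t H_0}$ via \cref{thm:analyticCont}. That identification needs $\alpha \mapsto \ee^{\alpha H}\Omega\,\ee^{-\alpha H_0}$ to be a continuous, analytic $\HS(\MH)$-valued function on the strip, which the paper does not spell out.

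Your joint spectral calculus on $\MH \otimes \oln{\MH}$ gives domain membership and the formula at once, directly from $\tr(\ee^{-\beta H}) < +\infty$. So \ref{enu:A3} is only needed to place $U$ in $\MFS_2(L,\Omega)$. To make your hedged step precise, write $\ee^{-\beta(L+U)/2} = (1 \otimes \ee^{\beta \oln{H_0}/2})(\ee^{-\beta H/2} \otimes 1)$. This holds with equality of domains because $\ee^{-\beta H/2}$ is bounded ($H$ is bounded below since $\ee^{-\beta H}$ is trace class).

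The order of the factors matters. $\Omega$ itself is not in the domain of right multiplication by $\ee^{\beta H_0/2}$ when $\dim \MH = \infty$, since that would produce a multiple of $\id_\MH$. But $\ee^{-\beta H/2}\Omega$ is in that domain: its adjoint $Z_0^{-1/2}\ee^{-\beta H_0/2}\ee^{-\beta H/2}$ has range in $\dom(\ee^{\beta H_0/2})$ and is mapped to $Z_0^{-1/2}\ee^{-\beta H/2} \in \HS(\MH)$.

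In step (iii), the paper goes through \cref{exa:relativeEntropy} to get $\MS_{\BO(\MH)}(\omega_U, \omega) = S(\rho, \rho_0)$, and then uses $\Delta F = \EV_\rho[U] + \inv{\beta} S(\rho, \rho_0)$. You read off $\MF(\omega_U,\omega) = -\inv{\beta}\log(Z/Z_0) = \Delta F$ directly from \cref{eq:perturbedVectorRFE}. This is shorter and avoids the Umegaki formula with unbounded logarithms.

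Your vector $\Omega_{-\beta U} = Z_0^{-1/2}\ee^{-\beta H/2}$ is the correct one. The paper's display has $Z_0^{-1/2}\ee^{-\beta H}$, but its subsequent norm $(Z/Z_0)^{1/2}$ matches yours.
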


\begin{proof}
    Let $\omega \in \NS{\MFM}$ be given by the vector $\Omega \ce \rho_0^{1/2} = Z_0^{-1/2} \ee^{- \beta H_0 / 2}$, $Z_0 = \tr(\ee^{- \beta H_0})$, from the natural positive cone $\NPC = \HS(\MH)_+$, that is,
    \begin{equation*}
        \omega(A) = \frac{1}{Z_0} \tr \bigl(A \, \ee^{- \beta H_0}\bigr) = \EV_{\rho_0} [A] \com A \in \MFM \, .
    \end{equation*}
    It holds that $\omega$ is a $(\tau, \beta)$-KMS state, where $\tau_t(A) \ce \ee^{\ii t H_0} A \, \ee^{- \ii t H_0}$ for $t \in \R$ and $A \in \MFM$; see \cref{exa:LiouvillianKMS} \ref{enu:finiteQuantumSystem}. Moreover, $\omega$ is faithful because the operator $\ee^{- \beta H_0 /2}$ is injective.
    
    Let $L$ be the standard Liouvillian of $\tau$, cf. \cref{eq:LiouvillianFQS}. In the proof of \cref{lem:perturbedLiouvillianFQS}, we showed that $L + U$ is essentially self-adjoint, hence, Assumption \ref{enu:A1} is satisfied; employing an analogous argument, one can also show that \ref{enu:A2} holds true. Finally, according to the hypothesis of the proposition, we also have $\norm{\ee^{- \beta V / 2} \Omega} < + \infty$, that is, Assumption \ref{enu:A3}. Therefore, we can apply \cref{pro:unboundedPerturbedVector} to conclude that
    \begin{equation*}
        \Omega \in \dom (\ee^{- \beta (L + U) / 2}) \, .
    \end{equation*}

    We can now invoke \cref{thm:analyticCont}, implying that the function $\ii t \mapsto \ee^{\ii t (L + U)} \Omega$ has a continuous extension to the strip $\set{\alpha \in \C \, : \, - \beta / 2 \le \Re(\alpha) \le 0}$ which is analytic in the interior. Thus, we obtain from \cref{eq:perturbedLiouvillianFQS} of \cref{lem:perturbedLiouvillianFQS} that
    \begin{equation*}
        \Omega_{- \beta U} = \ee^{- \beta (L + U) / 2} \Omega = \ee^{- \beta H / 2} \Omega \, \ee^{\beta H_0 / 2} = Z_0^{-1/2} \ee^{- \beta H} \, .
    \end{equation*}
    The Hilbert-Schmidt norm of this vector is given by $\norm{\Omega_{- \beta U}}_\mathrm{HS} = (Z / Z_0)^{1/2}$, where $Z = \tr (\ee^{- \beta H})$, and thus the perturbed state $\omega_U$ defined in \cref{eq:unboundedPerturbedVector} takes the form
    \begin{equation*}
        \omega_U (A) = \frac{1}{\norm{\Omega_{- \beta U}}^2} \, \braket[\big]{\Omega_{- \beta U}, A \Omega_{- \beta U}} = \frac{1}{Z} \tr \bigl(A \, \ee^{- \beta H}\bigr) = \EV_\rho [A] \com A \in \MFM \, .
    \end{equation*}

    According to the final hypothesis of the proposition, also Assumption \ref{enu:A4} is satisfied. Therefore, we can use \cref{thm:BogoliubovVN} which implies that $\omega_U (U) \le \MF (\omega_U, \omega) \le \omega(U)$. Since $\omega_U$ is given by $\rho$ and $\omega$ is given by $\rho_0$, it follows that
    \begin{equation*}
        \MS_{\BO(\MH)} (\omega_U, \omega) = S (\rho, \rho_0) \, ,
    \end{equation*}
    cf.~\cref{exa:relativeEntropy}. Therefore, the relative free energy $\MF (\omega_U, \omega)$ from \cref{eq:relativeFreeEnergy} reduces to the quantity $\Delta F$ from \cref{eq:interfaceEnergy}, and thus the previous inequality takes the form of the quantum-mechanical two-sided Bogoliubov inequality \eqref{eq:BogoliubovQM}.
\end{proof}

\subsection{Application to a Bosonic scalar field}\label{subsec:bosonicField}

As an example, we evaluate the two-sided Bogoliubov inequality for a free Bose field whose modular generator is perturbed by conjugation with a Weyl unitary. The perturbation is inner, so the perturbed state is a coherent transform of the reference state; it is isospectral, the free energy vanishes, and both bounds collapse onto the symplectic form.

Let $M$ be the spacetime and $\mathcal{O} \subset M$ be a region. Consider the von Neumann algebra generated by the Weyl operators, $\MFM \ce \set[\big]{\ee^{\ii \ph(f)} : f \in C_c^\infty(\mathcal{O})}^{\prime\prime}$, where $C_c^\infty(\mathcal{O})$ denotes the real-valued smooth, compactly supported test functions on $\mathcal{O}$, satisfying the canonical commutation relations
\begin{equation}\label{eq:commutationRel}
  [\, \ph(f), \ph(h) \,]=  \ii \, \Gc(f,h) \, ,
\end{equation}
where $\Gc$ is the causal propagator (real, antisymmetric). Let $\Omega$ be cyclic and separating, with modular operator $\Delta_\Omega = \ee^{-\beta L}$, modular conjugation $J_\Omega$, and Tomita operator $S_\Omega = J_\Omega \Delta_\Omega^{1/2}$.  

\begin{lemma}\label{lem:innerPerturbation}
    Let $g$ be defined by $[\, \ii \ph(f), L \,] = \ph(g)$. Then conjugation of $L$ by the Weyl unitary $W: = \ee^{\ii \ph(f)}\in \MFM$ is exact, i.e.,
    \begin{equation*}
      W L W^\ast = L + U \quad \text{for} \quad U = \ph(g) - \frac{1}{2} \, \Gc(f,g) \, .
    \end{equation*}
\end{lemma}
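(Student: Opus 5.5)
We compute $W L W^\ast$ through the one-parameter family $W_s \ce \ee^{\ii s \ph(f)}$, $s \in [0,1]$, and the operator-valued function $L(s) \ce W_s L W_s^\ast$, defined at first on a suitable common core. Differentiating formally gives
\begin{equation*}
    \frac{\diff}{\diff s} L(s) = W_s \, [\, \ii \ph(f), L \,] \, W_s^\ast = W_s \, \ph(g) \, W_s^\ast \, ,
\end{equation*}
so it remains to compute $W_s \ph(g) W_s^\ast$. From the commutation relations \eqref{eq:commutationRel}, $[\ii s \ph(f), \ph(g)] = \ii s \cdot \ii \Gc(f,g) = - s \, \Gc(f,g)$ is a scalar. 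The Baker--Campbell--Hausdorff series therefore terminates after the first commutator, and
\begin{equation*}
    W_s \ph(g) W_s^\ast = \ph(g) - s \, \Gc(f,g) \, \id_\MH \, .
\end{equation*}
Integrating from $0$ to $1$ gives $L(1) = L + \ph(g) - \tfrac{1}{2}\Gc(f,g)$, which is the claimed $U$. An equivalent algebraic route is to expand $W L W^\ast = \sum_n \frac{1}{n!}\operatorname{ad}_{\ii\ph(f)}^n(L)$. Here the first commutator is $\ph(g)$, the second is $[\ii\ph(f),\ph(g)] = -\Gc(f,g)$, a scalar, and all higher commutators vanish. This produces the factor $\tfrac{1}{2}$ directly and shows that the conjugation is exact, with no remainder.

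To make the argument rigorous, we take the defining relation $[\ii\ph(f), L] = \ph(g)$ in the sense that it holds on a common invariant core $\MD$, for instance finite-particle vectors with smooth wave functions, contained in $\dom(L)\cap\dom(\ph(f))\cap\dom(\ph(g))$ and consisting of analytic vectors for $\ph(f)$. We then differentiate $s \mapsto \braket{W_s^\ast \xi, L W_s^\ast \eta}$ for $\xi,\eta \in \MD$, using the Weyl relation $W_s \ph(g) W_s^\ast = \ph(g) - s\Gc(f,g)$. This relation holds in strong form, on all of $\dom(\ph(g))$, because it follows from the Weyl form of the CCR: $W_s \ee^{\ii t\ph(g)} W_s^\ast = \ee^{-\ii st\Gc(f,g)}\ee^{\ii t\ph(g)}$, after which one differentiates in $t$ via Stone's theorem (\cref{thm:Stone}).

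This yields the identity in quadratic form sense on $\MD$. Since $W L W^\ast$ is self-adjoint on $W\dom(L)$, and $L+U$ agrees with it on the core $W\MD$, the identity extends to an equality of self-adjoint operators. In particular, $L+U$ is essentially self-adjoint, which is Assumption \ref{enu:A1}.

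The main obstacle is domain control. We must verify that $W_s$ maps a core of $L$ into $\dom(L)$, so that the derivative of $L(s)$ exists. The step I would try first is to use invariance of $\MD$ under $W_s$: Weyl operators preserve the smooth finite-particle domain up to analytic vectors, and on analytic vectors of $\ph(f)$ the exponential series for $W_s$ converges. One can then justify termwise commutation with $L$, given that $[\ph(f),L]$ is again a field operator, so that repeated commutators stay controlled.
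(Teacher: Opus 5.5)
Your computation is correct, and the ``equivalent algebraic route'' you sketch is exactly the paper's proof. The paper writes $W L W^\ast = \ee^{X} L \, \ee^{-X}$ with $X = \ii \ph(f)$ and expands by the Hadamard formula. It uses $[X,L] = \ph(g)$ and $\tfrac12 [X,[X,L]] = -\tfrac12 \Gc(f,g)$, and concludes that the series terminates, all formally and without any discussion of domains.

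Your primary route is genuinely different and better suited to a rigorous statement. You differentiate $s \mapsto W_s L W_s^\ast$ and integrate the exact identity $W_s \ph(g) W_s^\ast = \ph(g) - s\,\Gc(f,g)$. This needs the commutator relation only at first order on an invariant core. The second-order information comes from the Weyl relation, which indeed holds in strong form, with $W_s \dom(\ph(g)) = \dom(\ph(g))$, by the exponentiated CCR and Stone's theorem as you say. So no operator-valued series in the unbounded $L$ has to converge, and the factor $\tfrac12$ arises as $\int_0^1 s \diff s$ instead of $1/2!$.

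Your closing step amounts to: a symmetric operator extending an essentially self-adjoint one has the same closure, so $\oln{L+U} = W L W^\ast$. This gives the lemma as an equality of self-adjoint operators. That is what \cref{lem:coherent} actually uses when it writes $\ee^{-\beta(L+U)/2} = W \ee^{-\beta L/2} W^\ast$. It also yields Assumption \ref{enu:A1} for this example, which the paper never checks. What the paper's version buys is brevity.

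One correction to your domain plan: finite-particle vectors are not invariant under $W_s$. A Weyl operator maps the vacuum to a coherent vector with components in every particle sector. So that choice of $\MD$ cannot supply the invariance your differentiation step relies on.

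The hypothesis on $g$ implicitly assumes the quasi-free situation, where $L = \mathrm{d}\Gamma(\ell)$ for a one-particle operator $\ell$; suppose also the one-particle vector of $f$ lies in $\dom(\ell)$. Then the span of exponential vectors $e(h)$, $h \in \dom(\ell)$, works:
\begin{itemize}
\item $W_s$ maps each $e(h)$ to a scalar multiple of $e(h + c\, s f)$, identifying $f$ with its one-particle vector, for a convention-dependent constant $c$;
\item exponential vectors are entire analytic for field operators;
\item the span is a core for $L$ by \cref{pro:coreGenerator}, since $\ee^{\ii t L} e(h) = e(\ee^{\ii t \ell} h)$.
\end{itemize}
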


\begin{proof}
    Let $X \ce \ii \ph(f)$ and $\operatorname{ad}_X (\,\cdot\,) \ce [X,\,\cdot\,]$. By the Baker-Campbell-Hausdorff (Hadamard) formula, we have
    \begin{equation*}
          W L W^\ast = \ee^{X} L \, \ee^{-X} = \sum_{n=0}^{\infty} \frac{1}{n!} \operatorname{ad}_X^{n} (L) = L + [X,L] + \frac{1}{2} \bigl[X, [X,L]\bigr] + \dotsb \, .
    \end{equation*}
    By definition of $g$, the first commutator is $[X,L] = [\,\ii \ph(f), L \,] = \ph(g)$. Moreover, the second commutator is just a number because \cref{eq:commutationRel} implies
    \begin{equation*}
        \frac{1}{2} \, \bigl[ X, [X,L] \bigr] = \frac{1}{2} \, [\, \ii \ph(f), \ph(g) \,] = - \frac{1}{2} \, \Gc(f,g) \, .
    \end{equation*}
    Therefore, all higher-order commutators vanish, $\operatorname{ad}_X^{n} (L) = 0$ for $n\ge3$, and the series terminates:
    \begin{equation*}
        W L W^\ast = L + \ph(g) - \frac{1}{2} \, \Gc(f,g) = L + U \, . \qedhere
    \end{equation*}
    \end{proof}
    
\begin{lemma}\label{lem:coherent}
    The perturbed vector $\Omega_{- \beta U}$ is a {two-sided coherent vector}:
    \begin{equation*}
        \Omega_{- \beta U} = W W^\prime \Omega \quad \text{with} \quad W^\prime = J_\Omega W J_\Omega \in \comm{\MFM} \, .
    \end{equation*}
    Furthermore, for every $A \in \MFM$, the perturbed state $\omega_U$ is given by
    \begin{equation*}
        \omega_U(A) = \omega (W^\ast A W) \, .
    \end{equation*}
    In particular, $\omega_U$ is unitarily equivalent to $\omega$, and their relative free energy \eqref{eq:relativeFreeEnergy} vanishes:
    \begin{equation*}
        \MF (\omega_U, \omega) = 0 \, .
    \end{equation*}
\end{lemma}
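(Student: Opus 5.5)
The plan is to identify $\Omega_{-\beta U} = \ee^{-\beta(L+U)/2}\Omega$ directly from the conjugation identity of \cref{lem:innerPerturbation}, without going through the approximation procedure.

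\textbf{Step 1: functional calculus.} Since $W L W^\ast = L + U$ holds as an equality of self-adjoint operators (the closure of $L+U$ under \ref{enu:A1}), unitary invariance of the functional calculus gives
\begin{equation*}
    \ee^{-\beta(L+U)/2} = W \, \ee^{-\beta L/2} \, W^\ast ,
\end{equation*}
together with the domain relation $\dom(\ee^{-\beta(L+U)/2}) = W \dom(\ee^{-\beta L/2})$. Because $\Delta_\Omega = \ee^{-\beta L}$, we have $\ee^{-\beta L/2} = \Delta_\Omega^{1/2}$.

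\textbf{Step 2: Tomita relation.} The vector $W^\ast\Omega$ lies in $\MFM\Omega \subset \dom(\Delta_\Omega^{1/2})$. By \cref{eq:TomitaOperator} and \eqref{eq:TomitaPolarDecomp},
\begin{equation*}
    \Delta_\Omega^{1/2} W^\ast\Omega = J_\Omega S_\Omega W^\ast \Omega = J_\Omega W \Omega = J_\Omega W J_\Omega \Omega = W^\prime \Omega ,
\end{equation*}
where we used $J_\Omega\Omega=\Omega$ from \eqref{eq:modularData}. Hence $\Omega_{-\beta U} = W W^\prime\Omega$, and $W^\prime \in \comm{\MFM}$ by \cref{thm:Tomita}. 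I would also check that this agrees with the definition of $\Omega_{-\beta U}$ in \cref{def:unboundedPerturbedState}. This is automatic, since \cref{pro:unboundedPerturbedVector} only asserts membership in the domain, which Step 1 already provides directly.

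\textbf{Step 3: the perturbed state.} Since $W^\prime$ is unitary, $\norm{\Omega_{-\beta U}} = 1$. Since $W^\prime$ commutes with $\MFM$, for every $A \in \MFM$,
\begin{equation*}
    \omega_U(A) = \braket{W W^\prime\Omega, A W W^\prime\Omega} = \braket{W^\prime\Omega, W^\ast A W W^\prime \Omega} = \braket{\Omega, W^\ast A W \Omega} = \omega(W^\ast A W).
\end{equation*}

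\textbf{Step 4: the free energy.} By \cref{eq:perturbedVectorRFE} from the proof of \cref{thm:BogoliubovVN}, we get $\MF(\omega_U,\omega) = -\inv{\beta}\log\norm{\Omega_{-\beta U}}^2 = -\inv{\beta}\log 1 = 0$.

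The main obstacle is Step 1. \cref{lem:innerPerturbation} is derived formally via a terminating BCH series, so one must upgrade it to an operator identity with the correct domains. Concretely, this means showing $W\dom(L)\cap\dom(U)$-type compatibility, so that the closure of $L+U$ equals $WLW^\ast$. I would argue via a core: on $W\dom(L)$ the identity holds, and $WLW^\ast$ is self-adjoint, so it coincides with the unique self-adjoint closure guaranteed by \ref{enu:A1}. The remaining point is to verify that the two operators agree on a common core, and this is where care is needed.
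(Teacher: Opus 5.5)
Your proposal is correct and follows the paper's proof essentially step by step: unitary invariance of the functional calculus gives $\ee^{-\beta(L+U)/2} = W \Delta_\Omega^{1/2} W^\ast$, the Tomita relation gives $\Delta_\Omega^{1/2} W^\ast \Omega = J_\Omega W \Omega = W^\prime \Omega$, unitarity of $W W^\prime$ together with \cref{eq:perturbedVectorRFE} gives $\MF(\omega_U,\omega) = 0$, and commutation of $W^\prime$ with $\MFM$ gives $\omega_U(A) = \omega(W^\ast A W)$. The paper takes \cref{lem:innerPerturbation} as an operator identity without further comment, so the domain issue you flag is extra care rather than a required step; note also that no common core is needed, since both $W L W^\ast$ and $\oln{L+U}$ are self-adjoint and it suffices to show that one extends the other.
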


\begin{proof}
    Since $L + U = W L W^\ast$ by \cref{lem:innerPerturbation}, we also have $\ee^{- \beta (L + U) / 2} = W \ee^{- \beta L / 2} W^\ast$ by functional calculus \cite[Prop. 3.60]{Moretti19}. Using the defining property and polar decomposition of the Tomita operator $S_\Omega$ from \cref{eq:TomitaOperator,eq:TomitaPolarDecomp}, we can write
    \begin{equation*}
        \Delta_\Omega^{1/2} W^\ast \Omega = J_\Omega S_\Omega (W^\ast \Omega) = J_\Omega W \Omega \, .
    \end{equation*}
    Moreover, $J_\Omega W \Omega = J_\Omega W J_\Omega \Omega \equiv W^\prime \Omega$ according to \cref{eq:modularData}, where $W^\prime \ce J_\Omega W J_\Omega \in \comm{\MFM}$ by \cref{thm:Tomita}. Therefore, the perturbed vector \eqref{eq:unboundedPerturbedVector} is given by
    \begin{equation*}
        \Omega_{- \beta U} = \ee^{- \beta (L + U) / 2} \Omega = W \ee^{- \beta L / 2} W^\ast \Omega = W \Delta_\Omega^{1/2} W^\ast \Omega = W J_\Omega W J_\Omega \Omega = W W^\prime \Omega \, .
    \end{equation*}
    Since $W$ and $W^\prime$ are unitary operators, it follows that $\Omega_{- \beta U}$ is normalized, $\norm{\Omega_{- \beta U}} = 1$; hence, $\MF (\omega_U, \omega) = 0$ by virtue of \cref{eq:perturbedVectorRFE}. Finally, as $W^\prime \in \comm{\MFM}$ commutes with $A,W \in \MFM$:
    \begin{equation*}
        \omega_U(A) = \braket{\Omega_{- \beta U}, A \Omega_{- \beta U}} = \braket{W W^\prime \Omega, \, A \, W W^\prime \Omega} = \braket{\Omega, \, W^\ast A W \, \Omega} = \omega(W^\ast A W) \, . \qedhere
    \end{equation*}
\end{proof}
    
\begin{lemma}\label{lem:expectationValues}
    With $g = Lf$, we have
    \begin{equation*}
      \omega(U) = - \frac{1}{2} \, \Gc(f,g) \quad \text{and} \quad \omega_U(U) = + \frac{1}{2} \, \Gc(f,g).
    \end{equation*}
\end{lemma}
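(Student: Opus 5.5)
The plan is to compute both expectation values directly from the explicit form $U = \ph(g) - \frac{1}{2}\Gc(f,g)$ of \cref{lem:innerPerturbation}, using \cref{lem:coherent} for the perturbed state.

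\textbf{Step 1: the unperturbed value.} Since $\omega$ is quasi-free (or at least the one-point function vanishes, $\omega(\ph(g)) = \braket{\Omega, \ph(g)\Omega} = 0$), linearity gives
\begin{equation*}
    \omega(U) = \omega(\ph(g)) - \tfrac{1}{2}\Gc(f,g) = -\tfrac{1}{2}\Gc(f,g).
\end{equation*}
Alternatively, and independently of quasi-freeness, use $L\Omega = 0$ (since $\Delta_\Omega\Omega=\Omega$). Then
\begin{equation*}
    \braket{\Omega, \ph(g)\Omega} = \braket{\Omega, [\ii\ph(f), L]\Omega} = \braket{\Omega, \ii\ph(f) L\Omega} - \braket{L\Omega, \ii\ph(f)\Omega} = 0 .
\end{equation*}
This is formal on domains but is the natural argument, and it is consistent with the hypothesis $g = Lf$. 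I will prefer this route.

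\textbf{Step 2: the perturbed value.} By \cref{lem:coherent}, $\omega_U(U) = \omega(W^\ast U W)$. Since $W^\ast = \ee^{-\ii\ph(f)}$, the same Hadamard expansion as in \cref{lem:innerPerturbation} gives
\begin{equation*}
    W^\ast \ph(g) W = \ph(g) + [-\ii\ph(f), \ph(g)] = \ph(g) + \Gc(f,g),
\end{equation*}
using \cref{eq:commutationRel}; the series terminates because the commutator is a scalar. Hence $W^\ast U W = \ph(g) + \frac{1}{2}\Gc(f,g)$. Applying Step 1, $\omega(\ph(g)) = 0$, and therefore $\omega_U(U) = +\frac{1}{2}\Gc(f,g)$.

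\textbf{Consistency check.} By \cref{eq:perturbedVectorRFE} and \cref{lem:coherent}, $\MF = 0$, and \cref{thm:BogoliubovVN} then requires $\omega_U(U) \le 0 \le \omega(U)$, i.e.\ $\Gc(f,g) \le 0$. I would note this as a sign condition that is automatically enforced, since the relative entropies $\mp\beta\omega_{(\cdot)}(U)$ are nonnegative.

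\textbf{Main obstacle.} The delicate step is justifying $\omega(\ph(g)) = 0$ rigorously, since $L$ and $\ph(f)$ are unbounded and the commutator identity is only formal. I would try to work on a common core of analytic vectors for $\ph(f)$ that lies in $\dom(L)$, or else use the quasi-free structure of $\omega$, which gives vanishing one-point functions directly.
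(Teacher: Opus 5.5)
Your proposal is correct and follows the paper's proof essentially verbatim. You insert $U = \ph(g) - \frac{1}{2}\Gc(f,g)$ and use $\omega(\ph(g)) = 0$ for the unperturbed value, then combine \cref{lem:coherent} with the terminating expansion $W^\ast \ph(g) W = \ph(g) + \Gc(f,g)$ for the perturbed one. Your preferred justification of $\omega(\ph(g)) = 0$, via $L\Omega = 0$ and $\ph(g) = [\ii\ph(f), L]$, is a mild improvement: it derives the vanishing one-point function from the setup instead of asserting it as the paper does, and it sits at the same formal level regarding domains.
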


\begin{proof}
    Using the expression for $U$ from \cref{lem:innerPerturbation} and noting that the one-point function $\omega(\ph(g))$ vanishes, we immediately get $\omega(U) = - \frac{1}{2} \, \Gc(f,g)$. For the perturbed state, \cref{lem:coherent} and
    $W^\ast \ph(g) W = \ph(g) + [\, - \ii \ph(f), \ph(g) \,] = \ph(g) + \Gc(f,g)$ give
    \begin{equation*}
        \omega_U(U) = \omega(W^\ast U W) = \omega(\ph(g)) + \Gc(f,g) - \frac{1}{2} \, \Gc(f,g) = + \frac{1}{2} \, \Gc(f,g) \, . \qedhere
    \end{equation*}
\end{proof}

Substituting the results of \cref{lem:coherent,lem:expectationValues} into the two-sided Bogoliubov inequality $\omega_U(U) \le \MF (\omega_U, \omega) \le \omega(U)$, one obtains
\begin{equation*}
    \frac{1}{2} \, \Gc(f,g) \, \le \, 0 \, \le \, - \frac{1}{2} \, \Gc(f,g) \quad \Leftrightarrow \quad \Gc(f,Lf) \le 0 \, .
\end{equation*}
For a free field, the coherent-state relative entropy is symmetric and given by~\cite{Casini2019, Longo2019}
\begin{equation*}
    \MS_\MFM (\omega, \omega_U) = \MS_\MFM (\omega_U, \omega) = - \frac{\beta}{2} \, \Gc(f,g) \, .
\end{equation*}
Therefore, the two-sided Bogoliubov inequality is precisely the positivity of the relative entropy,
\begin{equation*}
    \MS_\MFM (\omega_U, \omega) = - \frac{1}{2} \, \Gc(f,Lf) \ge 0 \, ,
\end{equation*}
and its symmetrized (gap) form reads
\begin{equation*}
    \MS_\MFM (\omega_U, \omega) + \MS_\MFM (\omega, \omega_U) = - \beta \, \Gc(f,Lf) = \beta \, \big[\omega(U)-\omega_U(U)\big] \ge 0 \, .
\end{equation*}

In geometric settings, this relative-entropy gap acquires a horizon interpretation: through modular theory it is tied to the entropy--area
relation for quantum matter near spherically symmetric outer trapping horizons~\cite{Kurpicz2021}, to the entropy--area law and temperature of de Sitter horizons~\cite{DAngelo2024}, and to the semiclassical Einstein equations derived from the relative entropy~\cite{Dorau2026}.

\subsection{Variational Bounds}\label{subsec:variationalBounds}

Now, the variational bounds for the quantum-mechanical relative free energy stated in \cref{thm:variationalBoundsQM} shall be extended to the setting of von Neumann algebras. For bounded perturbations, they follow from the so-called Gibbs variational principle and the Donsker-Varadhan variational principle in von Neumann algebras that were established by D.~Petz \cite{Petz1988} (see \cref{pro:PetzVP} below). To extend \cref{thm:variationalBoundsQM}, we will generalize the two aforementioned variational principles to unbounded perturbations.

The following proposition contains the Gibbs variational principle for unbounded perturbations. The proof is an adaptation of Petz' argument, cf. Ref.~\cite[Prop. 1 \& Cor. 2]{Petz1988}, and relies on \cref{lem:unboundedPerturbedRelModOp4}. As before, let $(\MFM, \MH, J, \NPC)$ be a von Neumann algebra in standard form, let $\Phi, \Psi \in \NPC$ be separating, and define $L \ce - \inv{\beta} \log(\Delta_\Phi)$. Introduce the following final class of perturbations:
\begin{equation*}
    \MFS_3(L, \Phi, \Psi) \ce \left\{V \in \MFS_1(L, \Phi) \ : \ \parbox{0.45\textwidth}{\centering $\log(\Delta_{\Phi, \Psi}) - \beta V$ is essentially self-adjoint on $\dom(\log \Delta_{\Phi, \Psi}) \cap \dom(V)$}\right\} \, .
\end{equation*}

\begin{proposition}[Gibbs variational principle]\label{pro:GibbsVariationalPrinciple}
    Let $\varphi, \psi \in \NS{\MFM}$ be faithful with vector representatives $\Phi, \Psi \in \NPC$, define $L \ce - \inv{\beta} \log(\Delta_\Phi)$, and let $V \in \MFS_3(L, \Phi, \Psi)$. Then
    \begin{equation}\label{eq:generalizedPBinequality}
        - \inv{\beta} \log \bigl(\norm{\Phi_{- \beta V}}^2\bigr) \le \psi(V) + \inv{\beta} \MS_\MFM(\psi, \varphi) \, .
    \end{equation}
    Equality holds true if and only if $\psi = \omega_{\Phi_{- \beta V}} / \norm{\Phi_{- \beta V}}^2$. Furthermore,
    \begin{equation}\label{eq:generalizedGibbsVP}
        - \inv{\beta} \log\bigl(\norm{\Phi_{- \beta V}}^2\bigr) = \inf_{\gamma \in \Xi(V, \Phi)} \Bigl(\gamma(V) + \inv{\beta} \MS_\MFM(\gamma, \varphi)\Bigr) \, ,
    \end{equation}
    where the infimum is taken over the following set of normal states:
    \begin{equation*}
        \Xi(V, \Phi) \ce \left\{\omega \in \NS{\MFM} \ : \ \parbox{0.45\textwidth}{\centering $\omega = \omega_\Omega$ faithful and $\log(\Delta_{\Phi, \Omega}) - \beta V$ essentially self-adjoint}\right\} \, .
    \end{equation*}
\end{proposition}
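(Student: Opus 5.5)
The plan is to follow Petz' argument. The unbounded input is \cref{lem:unboundedPerturbedRelModOp4}, which applies because $V \in \MFS_3(L, \Phi, \Psi)$. It gives
\begin{equation*}
    \log(\Delta_{\Phi_{- \beta V}, \Psi}) = \log(\Delta_{\Phi, \Psi}) - \beta V \, .
\end{equation*}
Set $\lambda \ce \norm{\Phi_{- \beta V}}^2$ and $\xi \ce \Phi_{- \beta V} / \norm{\Phi_{- \beta V}}$. By \cref{lem:perturbedVectorSeparating}, $\xi \in \NPC$ is a separating unit vector, so $\omega_\xi$ is a faithful normal state and $\ssupp(\psi) \le \ssupp(\omega_\xi)$.

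Using \cref{lem:relModOpScaling} we have $\log(\Delta_{\xi, \Psi}) = \log(\Delta_{\Phi_{- \beta V}, \Psi}) - \log(\lambda)$. Taking expectations in $\Psi$, and using $\log(\Delta_{\Phi, \Psi})$ in the definition \eqref{eq:relativeEntropy} of $\MS_\MFM(\psi, \varphi)$, we obtain
\begin{equation*}
    \MS_\MFM(\psi, \omega_\xi) = \MS_\MFM(\psi, \varphi) + \beta \psi(V) + \log \lambda \, .
\end{equation*}
Non-negativity of $\MS_\MFM(\psi, \omega_\xi)$ (\cref{thm:relEntProperties} \ref{enu:relEntNonNeg}) then yields \eqref{eq:generalizedPBinequality} after dividing by $\beta$. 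Equality holds iff $\MS_\MFM(\psi, \omega_\xi) = 0$, i.e. iff $\psi = \omega_\xi = \omega_{\Phi_{- \beta V}} / \norm{\Phi_{- \beta V}}^2$, again by \ref{enu:relEntNonNeg}.

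The variational formula \eqref{eq:generalizedGibbsVP} follows from two observations:
\begin{enumerate}
    \item Applying the inequality to every $\gamma = \omega_\Omega \in \Xi(V, \Phi)$, with $\Omega$ in place of $\Psi$, shows that the left-hand side is a lower bound for the infimum. The membership condition in $\Xi(V, \Phi)$ is exactly the hypothesis needed for $V \in \MFS_3(L, \Phi, \Omega)$.
    \item For attainment, I would show that the minimizer $\gamma_0 \ce \omega_\xi$ lies in $\Xi(V, \Phi)$. It is faithful, so what must be proved is that $\log(\Delta_{\Phi, \xi}) - \beta V$ is essentially self-adjoint on the natural domain.
\end{enumerate}

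For step (2), I would use \cref{lem:unboundedPerturbationRelModOp2} in its non-KMS form, with $\Omega$ replaced by $\Phi$. Together with scaling, it gives
\begin{equation*}
    \log(\Delta_{\Phi, \xi}) = \log(\Delta_\Phi) + \beta J V J + \log \lambda = - \beta (L - J V J) + \log \lambda \, .
\end{equation*}
Hence $\log(\Delta_{\Phi, \xi}) - \beta V = - \beta L_V + \log \lambda$ on $\dom(L) \cap \dom(V) \cap \dom(J V J)$, and Assumption \ref{enu:A2} gives essential self-adjointness. The one worry is domains: \ref{enu:A2} is stated on $\dom(L) \cap \dom(J V J) \cap \dom(V)$, whereas membership in $\Xi$ asks about $\dom(\log \Delta_{\Phi, \xi}) \cap \dom(V)$, which could be larger. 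The operator on the larger domain extends one that is essentially self-adjoint. Its closure should still equal $\oln{-\beta L_V} + \log\lambda$, since $\log(\Delta_{\Phi, \xi}) - \beta V$ is symmetric and a symmetric extension of an essentially self-adjoint operator has the same closure. With that settled, equality is attained at $\gamma_0$, which proves \eqref{eq:generalizedGibbsVP}.

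The main obstacle I expect is the entropy identity for $\MS_\MFM(\psi, \omega_\xi)$ when unbounded operators appear. We need $\Psi \in \dom(V)$, or at least a quadratic-form interpretation of $\psi(V)$, to split $\braket{\Psi, (\log\Delta_{\Phi,\Psi} - \beta V)\Psi}$ into two terms without producing $\infty - \infty$. If $\psi(V)$ is infinite, I would handle it in the extended-real sense, or via the spectral decomposition $V = V_+ + V_-$ as in \cref{subsec:extensionPerturbation}. If needed, I would first approximate by the bounded $V_n$, use Petz' bounded result, and pass to the limit, using lower semicontinuity of relative entropy together with the weak convergence $\Phi_{- \beta V_n} \to \Phi_{- \beta V}$ and $\norm{\Phi_{- \beta V}} \le \liminf \norm{\Phi_{- \beta V_n}}$.
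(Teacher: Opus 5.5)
Your proposal is correct and has the same skeleton as the paper's proof. The entropy identity comes from \cref{lem:unboundedPerturbedRelModOp4}, and attainment comes from \cref{lem:unboundedPerturbationRelModOp2} together with \ref{enu:A2}. Your rescaling via \cref{lem:relModOpScaling} is equivalent to the paper's shift $W = V + \inv{\beta}\log(\norm{\Phi_{-\beta V}}^2)\,\id_\MH$.

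The genuine difference is in the inequality and its equality case.
\begin{itemize}
\item The paper keeps the non-normalized functional $\varphi^V = \omega_{\Phi_{-\beta V}}$ and bounds $\MS_\MFM(\psi,\varphi^V)$ from below using \cref{thm:relEntProperties} \ref{enu:relEntLowerBound}. It then characterizes equality by reading that bound as a relative entropy on the trivial subalgebra and invoking Petz's sufficiency theorem for equality in monotonicity.
\item You normalize first. Then both the inequality and the equality case follow from \ref{enu:relEntNonNeg} alone, since $\MS_\MFM(\psi,\omega_\xi)=0$ iff $\psi=\omega_\xi$. This is more elementary and avoids the sufficiency machinery, which here is only a roundabout route to the same fact.
\end{itemize}

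You are also more careful than the paper in two places.
\begin{itemize}
\item You check that $\gamma\in\Xi(V,\Phi)$ is exactly what makes $V\in\MFS_3(L,\Phi,\Omega)$, so the inequality applies to every $\gamma$ in the infimum.
\item Your symmetric-extension argument correctly transfers essential self-adjointness from the \ref{enu:A2} core $\dom(L)\cap\dom(V)\cap\dom(JVJ)$ to the larger natural domain $\dom(\log\Delta_{\Phi,\xi})\cap\dom(V)$ that $\Xi(V,\Phi)$ requires. The paper passes over this point.
\end{itemize}
The splitting of $\braket{\Psi,(\log\Delta_{\Phi,\Psi}-\beta V)\Psi}$ into two terms, which you flag as the main obstacle, is done in the paper without comment. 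Your main line is therefore at the paper's level of rigor.

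Do not rely on your approximation fallback, though, because the semicontinuity goes the wrong way. With only weak convergence $\Phi_{-\beta V_n}\to\Phi_{-\beta V}$ available, you get $\norm{\Phi_{-\beta V}}\le\liminf_n\norm{\Phi_{-\beta V_n}}$. This bounds $-\inv{\beta}\log(\norm{\Phi_{-\beta V}}^2)$ from below, not above, so it cannot carry the bounded-case inequality to the limit. In addition, weak convergence of vectors does not give convergence of the vector functionals, which lower semicontinuity of $\MS_\MFM$ would need.
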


\begin{proof}
    To establish \cref{eq:generalizedPBinequality}, we can assume, without loss of generality, that $\psi(V) < + \infty$; otherwise, since $\MS_\MFM(\psi, \varphi) \ge 0$ by \cref{thm:relEntProperties} \ref{enu:relEntNonNeg}, there is nothing to show. Define\footnote{The non-normalized functional $\varphi^V$ should not be confused with the \emph{state} $\omega_V$ defined in \cref{eq:unboundedPerturbedVector}.}
    \begin{equation*}
        \varphi^V(A) \ce \omega_{\Phi_{-\beta V}}(A) = \braket{\Phi_{-\beta V}, A \Phi_{-\beta V}} \com A \in \MFM \, ,
    \end{equation*}
    and recall that $\Phi_{- \beta V}$ is cyclic and separating for $\MFM$ (\cref{lem:perturbedVectorSeparating}). Therefore, we can consider the relative entropy from \cref{eq:relativeEntropy} between $\psi$ and $\varphi^V$ which is given by
    \begin{equation*}
        \MS_\MFM \bigl(\psi, \varphi^V\bigr) = - \braket[\big]{\Psi, \log (\Delta_{\Phi_V, \Psi}) \Psi} \, .
    \end{equation*}
    By our choice of the perturbation $V$, the operator $\log(\Delta_{\Phi, \Psi}) - \beta V$ is essentially self-adjoint; hence, we can apply \cref{lem:unboundedPerturbedRelModOp4} to write
    \begin{equation}\label{eq:relEntPerturbation}
        \MS_\MFM \bigl(\psi, \varphi^V\bigr) = - \braket[\big]{\Psi, \bigl(\log(\Delta_{\Phi, \Psi}) - \beta V\bigr) \Psi} = \MS_\MFM(\psi, \varphi) + \beta \psi(V) \, .
    \end{equation}
    Note that this identity is well-known for bounded perturbations \cite[Thm. 3.10]{Araki1977}. Using the fundamental inequality for $\MS_\MFM$ from \cref{thm:relEntProperties} \ref{enu:relEntLowerBound}, one obtains
    \begin{equation}\label{eq:basicLowerBound}
        \MS_\MFM \bigl(\psi, \varphi^V\bigr) \ge - \psi(\id_\MH) \log \left(\frac{\varphi^V(\ssupp_\psi)}{\psi(\id_\MH)}\right) = - \log \bigl(\varphi^V(\id_\MH)\bigr) \, ,
    \end{equation}
    where $\psi(\id_\MH) = 1$ was used. Since $\varphi^V(\id_\MH) = \norm{\Phi_{- \beta V}}^2$, the relation \eqref{eq:relEntPerturbation} together with the previous inequality \eqref{eq:basicLowerBound} implies that
    \begin{equation*}
        - \log \bigl(\norm{\Phi_{- \beta V}}^2\bigr) \le \beta \psi(V) + \MS_\MFM(\psi, \varphi) \, .
    \end{equation*}

    Having proved \cref{eq:generalizedPBinequality}, we will now examine when equality is reached in the latter. Observe first that, by virtue of \cref{thm:relEntProperties} \ref{enu:relEntScaling}, the lower bound in \cref{eq:basicLowerBound} can be interpreted as the relative entropy $\MS_{\MFM_0} \bigl(\psi|_{\MFM_0}, \varphi^V|_{\MFM_0}\bigr)$ on the subalgebra $\MFM_0 \ce \set{\id_\MH}$: since $\psi|_{\MFM_0} = \psi(\id_\MH) \, \1$ and $\varphi^V|_{\MFM_0} = \varphi^V(\id_\MH) \, \1$, where $\1 : A \mapsto 1$ denotes the constant functional, we have
    \begin{equation*}
        \MS_{\MFM_0} \bigl(\psi|_{\MFM_0}, \varphi^V|_{\MFM_0}\bigr) = \psi(\id_\MH) \, \mathord{\underbrace{\MS_{\MFM_0}(\1, \1)}_{=0}} - \psi(\id_\MH)^2 \log \left(\frac{\varphi^V(\id_\MH)}{\psi(\id_\MH)}\right) = - \log \bigl(\varphi^V(\id_\MH)\bigr) \, .
    \end{equation*}
    Therefore, \cref{eq:basicLowerBound} is just a special instance of monotonicity of the relative entropy when restricting the involved functionals to a subalgebra, cf. \cref{thm:relEntProperties} \ref{enu:relEntSubalg}:
    \begin{equation*}
        \MS_{\MFM_0} \bigl(\psi|_{\MFM_0}, \varphi^V|_{\MFM_0}\bigr) \le \MS_\MFM \bigl(\psi, \varphi^V\bigr) \, .
    \end{equation*}
    According to a well-known theorem of Petz \cite[Thm. 4]{Petz1986b} (see also Ref.~\cite[Thm. 9.3]{OP2004}), we have $\MS_{\MFM_0} \bigl(\psi|_{\MFM_0}, \varphi^V|_{\MFM_0}\bigr) = \MS_\MFM \bigl(\psi, \varphi^V\bigr)$ if and only if for all $t \in \R$,
    \begin{equation*}
        \Delta_{\varphi^V, \psi}^{\ii t} \, \Delta_{\psi}^{- \ii t} = \Delta_{\varphi^V|_{\MFM_0}, \psi|_{\MFM_0}}^{\ii t} \, \Delta_{\psi|_{\MFM_0}}^{- \ii t} \, .
    \end{equation*}
    Since $\psi|_{\MFM_0}$ is constant, $\Delta_{\psi|_{\MFM_0}} = \id_\MH$ \cite[Exa. 2.4]{Hiai2021}. Similarly, \cref{lem:relModOpScaling} implies that $\Delta_{\varphi^V|_{\MFM_0}, \psi|_{\MFM_0}} = \varphi^V(\id_\MH) / \psi(\id_\MH)$. Thus, we have equaliy in \cref{eq:basicLowerBound} if and only if
    \begin{equation*}
        \Delta_{\varphi^V, \psi}^{\ii t} \, \Delta_{\psi}^{- \ii t} = \left(\frac{\varphi^V(\id_\MH)}{\psi(\id_\MH)}\right)^{\ii t} \id_\MH
    \end{equation*}
    for all $t \in \R$. As Petz notes \cite[Prop. 1, p. 346]{Petz1988}, this identity implies that $\varphi^V = \lambda \psi$ for some $\lambda > 0$, that is, $\lambda = \varphi^V(\id_\MH) / \psi(\id_\MH) = \varphi^V(\id_\MH)$. We conclude that equality in \cref{eq:generalizedPBinequality} is attained only for the state
    \begin{equation*}
        \psi = \frac{1}{\varphi^V(\id_\MH)} \, \varphi^V = \frac{1}{\norm{\Phi_{- \beta V}}^2} \, \omega_{\Phi_{- \beta V}} \, .
    \end{equation*}
    
    Let us finally prove the Gibbs variational principle \eqref{eq:generalizedGibbsVP}. On the one hand, the previously established inequality \eqref{eq:generalizedPBinequality} implies that
    \begin{equation*}
        - \inv{\beta} \log \bigl(\norm{\Phi_{- \beta V}}^2\bigr) \le \inf_{\gamma \in \Xi(V, \Phi)} \Bigl(\gamma(V) + \inv{\beta} \MS_\MFM(\gamma, \varphi)\Bigr)
    \end{equation*}
    because, by assumption on $\psi$ and $V$, it holds that $\psi \in \Xi(V, \Phi)$. Consider the perturbation $W \ce V + \inv{\beta} \log (\norm{\Phi_{- \beta V}}^2) \, \id_\MH \in \MFS_1(L, \Phi)$ and recall from the proof of \cref{lem:unboundedPerturbationRelEnt2} that
    \begin{equation*}
        \Phi_{- \beta W} = \frac{\Phi_{- \beta V}}{\norm{\Phi_{- \beta V}}} \, .
    \end{equation*}
    According to \cref{lem:unboundedPerturbationRelModOp2}, there holds
    \begin{align*}
        \log \bigl(\Delta_{\Phi, \Phi_{- \beta W}}\bigr) - \beta V &= \log(\Delta_\Phi) + \beta J W J - \beta V \\
        &= \log(\Delta_\Phi) + \beta J V J + \log \bigl(\norm{\Phi_{- \beta V}}^2\bigr) \, \id_\MH - \beta V \\
        &= - \beta (L + V - J V J) + \log \bigl(\norm{\Phi_{- \beta V}}^2\bigr) \, \id_\MH \\
        &= - \beta L_V + \log \bigl(\norm{\Phi_{- \beta V}}^2\bigr) \, \id_\MH \, .
    \end{align*}
    Since $V \in \MFS_1(L, \Phi)$ satisfies Assumption \ref{enu:A2}, it follows that $\log \bigl(\Delta_{\Phi, \Phi_{- \beta W}}\bigr) - \beta V$ is essentially self-adjoint; hence, the state $\wt{\psi} \ce \omega_{\Phi_{- \beta W}}$ is an element of $\Xi(V, \Phi)$. Therefore, as the choice $\psi = \wt{\psi}$ yields equality in \cref{eq:generalizedPBinequality}, we also obtain the reverse inequality
    \begin{equation*}
        - \inv{\beta} \log \bigl(\norm{\Phi_{- \beta V}}^2\bigr) = \wt{\psi}(V) + \inv{\beta} \MS_\MFM\bigl(\wt{\psi}, \varphi\bigr) \ge \inf_{\gamma \in \Xi(V, \Phi)} \Bigl(\gamma(V) + \inv{\beta} \MS_\MFM(\gamma, \varphi)\Bigr) \, . \tag*{\qedhere}
    \end{equation*}
\end{proof}

\begin{remark}
    In Donald's approach \cite{Donald1990} to perturbation theory mentioned above (\cref{subsec:unboundedPerturbations}), given $\varphi \in \NS{\MFM}$ and an extended-valued, semi-bounded operator $V$ affiliated with $\MFM$, one shows that the Gibbs variational principle \eqref{eq:generalizedGibbsVP} has a unique minimizer which is then called $\varphi^V$ \cite[Thm. 3.1]{Donald1990}. In contrast, here we showed that, defining $\varphi^V$ for a general unbounded $V$ via \cref{pro:unboundedPerturbedVector}, the Gibbs variational principle is satisfied.
\end{remark}

The next goal is to obtain a dual form of the Gibbs variational principle, expressing the Araki-Uhlmann relative entropy as a supremum over unbounded perturbations; an expression of this form is usually called Donsker-Varadhan principle \cite[Lem. 2.1]{Donsker1975}. In the setting of von Neumann algebras, Petz \cite[Thm. 9]{Petz1988} proved such a variational principle for bounded perturbations.

\begin{proposition}[Petz]\label{pro:PetzVP}
    Let $\varphi, \psi \in \NS{\MFM}$ be faithful with $\varphi$ given by $\Phi \in \NPC$. Then the Araki-Uhlmann relative entropy of $\psi$ and $\varphi$ can be written as
    \begin{align}\label{eq:PetzVP}
        \MS_\MFM(\psi, \varphi) = \sup_{V \in \MFM^\mathrm{sa}} \Bigl(- \beta \psi(V) - \log \bigl(\norm{\Phi_{- \beta V}}^2\bigr)\Bigr) \, .
    \end{align}
\end{proposition}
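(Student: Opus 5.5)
The identity splits into two inequalities, and both come from the Gibbs variational inequality applied with bounded perturbations, which avoids every domain issue of the unbounded theory. For bounded self-adjoint $V \in \MFM$ the sum $\log(\Delta_{\Phi,\Psi}) - \beta V$ is a bounded perturbation of a self-adjoint operator, hence self-adjoint on $\dom(\log \Delta_{\Phi,\Psi})$. Likewise Assumptions \ref{enu:A1}--\ref{enu:A3} hold trivially for $V \in \MFM^{\mathrm{sa}}$ with $L = -\inv{\beta}\log\Delta_\Phi$: bounded perturbations preserve self-adjointness, and $\norm{\ee^{-\beta V/2}\Phi} < +\infty$ holds because $\ee^{-\beta V/2}$ is bounded. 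Therefore $\MFM^{\mathrm{sa}} \subset \MFS_3(L,\Phi,\Psi)$.

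\textbf{Upper bound ($\ge$).} For every $V \in \MFM^{\mathrm{sa}}$, inequality \eqref{eq:generalizedPBinequality} of \cref{pro:GibbsVariationalPrinciple} gives
\[
-\log\bigl(\norm{\Phi_{-\beta V}}^2\bigr) \le \beta\psi(V) + \MS_\MFM(\psi,\varphi),
\]
that is, $-\beta\psi(V) - \log(\norm{\Phi_{-\beta V}}^2) \le \MS_\MFM(\psi,\varphi)$. Here $\psi(V)$ is finite because $V$ is bounded, so the rearrangement is legitimate. If $\MS_\MFM(\psi,\varphi) = +\infty$, this direction is vacuous. Taking the supremum over $V$ gives $\sup \le \MS_\MFM(\psi,\varphi)$.

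\textbf{Lower bound ($\le$).} The aim is to show that the supremum attains or approximates $\MS_\MFM(\psi,\varphi)$. If the Connes cocycle were trivial, the natural candidate would be a "$V$ with $\psi = $ normalized $\varphi^V$". In general no bounded $V$ achieves this, since $\psi$ is typically not a bounded perturbation of $\varphi$. The plan is therefore to reverse the roles of the two states, using Araki's perturbation theory around $\psi$.

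For $Q \in \MFM^{\mathrm{sa}}$, \cref{pro:boundedPerturbationRelModOp} gives $\log\Delta_{\Psi_Q,\Psi} = \log\Delta_\Psi + Q$. Combining this with \eqref{eq:relEntPerturbation} in its bounded form (Araki's Thm.~3.10),
\[
\MS_\MFM(\psi,\varphi^V) = \MS_\MFM(\psi,\varphi) + \beta\psi(V),
\]
the expression inside the supremum becomes
\[
-\beta\psi(V) - \log\varphi^V(\id) = \MS_\MFM(\psi,\varphi) - \Bigl[\MS_\MFM(\psi,\varphi^V) + \log\varphi^V(\id)\Bigr].
\]
By \cref{thm:relEntProperties} \ref{enu:relEntScaling}, the bracket equals $\MS_\MFM\bigl(\psi, \varphi^V/\varphi^V(\id)\bigr) \ge 0$. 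So it suffices to find bounded $V_k$ such that the normalized perturbed states $\varphi^{V_k}/\varphi^{V_k}(\id)$ satisfy
\[
\MS_\MFM\bigl(\psi, \varphi^{V_k}/\varphi^{V_k}(\id)\bigr) \to 0.
\]

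\textbf{Construction of $V_k$ (main obstacle).} Formally one wants $-\beta V = \log\Delta_{\Psi,\Phi} - \log\Delta_\Phi$, which is meaningless as an element of $\MFM$. The first attempt is to regularize through the Connes cocycle $u_t = (D\psi : D\varphi)_t$: take a smooth bounded approximation of its "generator", for instance the analytic elements obtained by averaging $\frac{1}{\ii t}\log$-type expressions, or truncations $f_k$ of the relative Hamiltonian, and then use lower semicontinuity of the relative entropy to pass to the limit.

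A cleaner alternative is the dual route. Apply Petz's/Kosaki's trick via the Pusz--Woronowicz/Uhlmann interpolation: the functional $V \mapsto \log\norm{\Phi_{-\beta V}}^2$ is convex, with Legendre transform equal to $\MS_\MFM(\cdot,\varphi)$, which is lower semicontinuous and convex on normal states. Fenchel--Moreau duality in the pairing between $\MFM^{\mathrm{sa}}$ and the self-adjoint part of $\MFM_\ast$ (with the weak-$\ast$ topology on the predual side) then yields equality. This requires lower semicontinuity of $\MS_\MFM(\cdot,\varphi)$ together with the identification of its convex conjugate as $\log\norm{\Phi_{-\beta V}}^2$ by means of \eqref{eq:generalizedGibbsVP}. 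I expect verifying the hypotheses of Fenchel--Moreau, namely lower semicontinuity and properness in the appropriate topology, to be the delicate step.
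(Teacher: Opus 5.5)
The paper gives no proof of this proposition; it quotes it from Petz and from Ohya--Petz. Your second route, convex duality, is the standard argument for this result: Petz's formula says exactly that $\MS_\MFM(\cdot,\varphi)$ and $V\mapsto\log\bigl(\norm{\Phi_{-\beta V}}^2\bigr)$ are Legendre conjugates. That route works.

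Your first direction is fine. For $V\in\MFM^{\mathrm{sa}}$, (A1)--(A3) and the essential self-adjointness required in $\MFS_3$ hold trivially, since the perturbation is bounded. Also, \eqref{eq:generalizedPBinequality} is proved in the paper without using \cref{pro:PetzVP}, so there is no circularity.

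You should drop the first attempt at the reverse inequality, for three reasons:
\begin{itemize}
\item The reduction to $\MS_\MFM\bigl(\psi,\varphi^{V_k}/\varphi^{V_k}(\id_\MH)\bigr)\to 0$ only makes sense when $\MS_\MFM(\psi,\varphi)<+\infty$; otherwise you are subtracting $\infty-\infty$.
\item The formal relative Hamiltonian $\log\Delta_{\Psi,\Phi}-\log\Delta_\Phi$ is in general not an operator in, or affiliated with, $\MFM$. So there is nothing to truncate by functional calculus inside $\MFM$.
\item Lower semicontinuity only gives $\liminf_k \MS_\MFM(\psi,\varphi_k)\ge \MS_\MFM(\psi,\lim_k\varphi_k)$. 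That is a lower bound, whereas you need an upper estimate forcing these quantities to $0$.
\end{itemize}

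The step you call delicate is routine. Set $G(\omega)\ce\MS_\MFM(\omega,\varphi)$ for $\omega\in\NS{\MFM}$, and $G(\omega)\ce+\infty$ for the remaining hermitian $\omega\in\MFM_\ast$.
\begin{itemize}
\item $G$ is proper, since $G(\varphi)=0$ and $G\ge 0$.
\item $G$ is convex and $\sigma(\MFM_\ast,\MFM)$-lower semicontinuous. Joint convexity and weak lower semicontinuity of the Araki--Uhlmann entropy are classical facts, independent of Petz's formula (for instance via Kosaki's or Uhlmann's variational expressions). The normal state space is weakly closed, so the $+\infty$ extension stays lower semicontinuous.
\item The dual of $\MFM_\ast^{\mathrm{sa}}$ in this topology is $\MFM^{\mathrm{sa}}$. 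This is why the biconjugate becomes a supremum over bounded self-adjoint $V$, exactly as in the statement.
\end{itemize}

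You also need only one inequality for the conjugate. Evaluate at the faithful state $\omega_V=\norm{\Phi_{-\beta V}}^{-2}\,\omega_{\Phi_{-\beta V}}$ and use the equality case of \cref{pro:GibbsVariationalPrinciple} (equivalently \cref{lem:unboundedPerturbationRelEnt2} for the rescaled modular dynamics). This gives $G^\ast(-\beta V)\ge\log\bigl(\norm{\Phi_{-\beta V}}^2\bigr)$, hence
\[
\MS_\MFM(\psi,\varphi)=G^{\ast\ast}(\psi)=\sup_{V\in\MFM^{\mathrm{sa}}}\bigl(-\beta\psi(V)-G^\ast(-\beta V)\bigr)\le\sup_{V\in\MFM^{\mathrm{sa}}}\Bigl(-\beta\psi(V)-\log\bigl(\norm{\Phi_{-\beta V}}^2\bigr)\Bigr).
\]
Two things follow. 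You never need the Gibbs inequality for non-faithful states, even though the supremum defining $G^\ast$ formally ranges over them. And the case $\MS_\MFM(\psi,\varphi)=+\infty$ is covered automatically, which your first route could not handle.
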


A proof of this can also be found in Ref.~\cite[p. 228]{OP2004}. In the following, we will use the Gibbs variational principle and the previous result of Petz to obtain an expression similar to \cref{eq:PetzVP} for unbounded perturbations.

\begin{proposition}[Donsker-Varadhan principle]\label{pro:unboundedDVVP}
    Let $(\MFM, \MH, J, \NPC)$ be a von Neumann algebra in standard form, let $\varphi, \psi \in \NS{\MFM}$ be faithful with vector representatives $\Phi, \Psi \in \NPC$, and let $L = - \inv{\beta} \log(\Delta_\Phi)$. Then
    \begin{equation}\label{eq:unboundedDVVP}
        \MS_\MFM(\psi, \varphi) = \sup_{V \in \MFS_3(L, \Phi, \Psi)} \Bigl(- \beta \psi(V) - \log \bigl(\norm{\Phi_{- \beta V}}^2\bigr)\Bigr) \, .
    \end{equation}
\end{proposition}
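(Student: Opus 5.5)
The plan is to prove \eqref{eq:unboundedDVVP} as two inequalities. The upper bound will come from the Gibbs variational principle, applied to each admissible unbounded $V$. The lower bound will come from Petz's \cref{pro:PetzVP}, once we check that bounded self-adjoint perturbations are admissible.

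\textbf{Step 1 (upper bound, ``$\ge$'').} Fix $V \in \MFS_3(L, \Phi, \Psi)$. By \cref{pro:GibbsVariationalPrinciple}, inequality \eqref{eq:generalizedPBinequality}, we have $-\log(\norm{\Phi_{-\beta V}}^2) \le \beta\psi(V) + \MS_\MFM(\psi,\varphi)$. Rearranging gives $-\beta\psi(V) - \log(\norm{\Phi_{-\beta V}}^2) \le \MS_\MFM(\psi,\varphi)$.

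This rearrangement needs care when $\psi(V) = \pm\infty$. If $\psi(V)=+\infty$, the left-hand side is $-\infty$ and there is nothing to show. If $\psi(V) = -\infty$, then \eqref{eq:relEntPerturbation} together with the lower bound \eqref{eq:basicLowerBound} excludes this case, because $\MS_\MFM(\psi,\varphi^V) \ge -\log\norm{\Phi_{-\beta V}}^2 > -\infty$. Taking the supremum over $V$ then yields $\sup \le \MS_\MFM(\psi,\varphi)$.

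\textbf{Step 2 (lower bound, ``$\le$'').} I will show that $\MFM^{\mathrm{sa}} \subset \MFS_3(L,\Phi,\Psi)$. It then follows that the supremum in \eqref{eq:unboundedDVVP} dominates the supremum in \eqref{eq:PetzVP}, which equals $\MS_\MFM(\psi,\varphi)$ by \cref{pro:PetzVP}. Here one must check that for bounded $V$ the vector $\Phi_{-\beta V} = \ee^{-\beta(L+V)/2}\Phi$ of \cref{def:unboundedPerturbedState} coincides with Araki's $\Phi_{-\beta V}$ of \eqref{eq:boundedPerturbedVector}, which is the one used in Petz's formula. Since $-\beta L = \log\Delta_\Phi$, we have $-\beta(L+V)/2 = (\log\Delta_\Phi - \beta V)/2$, so the two agree literally.

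The assumptions are verified as follows. For bounded self-adjoint $V$, the Kato--Rellich theorem (with relative bound $0$) shows that $L+V$ is self-adjoint on $\dom(L)$, which gives \ref{enu:A1}. Likewise $L+V-JVJ$ is self-adjoint on $\dom(L)$, since $JVJ$ is bounded, which gives \ref{enu:A2}. Assumption \ref{enu:A3} is trivial, as $\ee^{-\beta V/2}$ is bounded. Finally, $\log(\Delta_{\Phi,\Psi}) - \beta V$ is self-adjoint on $\dom(\log\Delta_{\Phi,\Psi}) = \dom(\log\Delta_{\Phi,\Psi})\cap\dom(V)$, again by Kato--Rellich. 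Hence $V \in \MFS_3(L,\Phi,\Psi)$.

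\textbf{Main obstacle.} The inclusion in Step 2 is routine. The delicate point is Step 1: \cref{pro:GibbsVariationalPrinciple} is formulated for states, and its inequality must be rearranged without producing an $\infty-\infty$ expression. I will handle this by the case distinction on $\psi(V)$ described above. In particular, I will note explicitly that for $V \in \MFS_3$ the quantity $\psi(V)$ is either finite or $+\infty$, by the relative-entropy identity \eqref{eq:relEntPerturbation}, which rests on \cref{lem:unboundedPerturbedRelModOp4}. Combining the two inequalities then proves \eqref{eq:unboundedDVVP}.
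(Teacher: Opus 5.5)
Your proposal is correct and follows the paper's proof: the bound ``$\ge$'' comes directly from inequality \eqref{eq:generalizedPBinequality} of the Gibbs variational principle, and the bound ``$\le$'' comes from Petz's \cref{pro:PetzVP} via the inclusion $\MFM^{\mathrm{sa}} \subset \MFS_3(L,\Phi,\Psi)$. The paper leaves three things implicit that you spell out: the Kato--Rellich verification of that inclusion, the check that the two definitions of $\Phi_{-\beta V}$ agree for bounded $V$, and the case analysis for $\psi(V)=\pm\infty$.
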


\begin{proof}
    From the inequality \eqref{eq:generalizedPBinequality} in \cref{pro:GibbsVariationalPrinciple}, it immediately follows that
    \begin{equation*}
        \MS_\MFM(\psi, \varphi) \ge \sup_{V \in \MFS_3(L, \Phi, \Psi)} \Bigl(- \beta \psi(V) - \log \bigl(\norm{\Phi_{- \beta V}}^2\bigr)\Bigr) \, .
    \end{equation*}
    On the other hand, since the set $\MFM^\mathrm{sa}$ of self-adjoint elements of $\MFM$ satisfies $\MFM^\mathrm{sa} \subset \MFS_3(L, \Phi, \Psi)$, Petz' variational expression \eqref{eq:PetzVP} for bounded perturbations shows that
    \begin{equation*}
        \MS_\MFM(\psi, \varphi) \le \sup_{V \in \MFS_3(L, \Phi, \Psi)} \Bigl(- \beta \psi(V) - \log \bigl(\norm{\Phi_{- \beta V}}^2\bigr)\Bigr) \, . \tag*{\qedhere}
    \end{equation*}
\end{proof}

Combining the Gibbs variational principle from \cref{pro:GibbsVariationalPrinciple} and the Donsker-Varadhan principle from the previous \cref{pro:unboundedDVVP}, we finally obtain the desired variational bounds for the relative free energy.

\begin{theorem}[Variational bounds]\label{thm:variationalBoundsVN}
    Let $(\MFM, \MH, J, \NPC)$ be a von Neumann algebra in standard form, let $\tau$ be a $W^\ast$-dynamics on $\MFM$ with standard Liouvillian $L$, and let $\omega$ be a faithful $(\tau, \beta)$-KMS state ($\beta > 0$) given by $\Omega \in \NPC$. For all $U \in \MFS_2(L, \Omega)$, it holds that
    \begingroup
    \small
    \begin{equation*}
        \inf_{\psi \in \Xi(U, \Omega)} \Bigl(\psi(U) + \inv{\beta} \MS_\MFM(\psi, \omega)\Bigr) = \MF(\omega_U, \omega) = \sup_{V \in \MFS_3(L, \Omega, \Omega_{- \beta U})} \Bigl(\omega_U(U - V) - \inv{\beta} \log \bigl(\norm{\Omega_{- \beta V}}^2\bigr)\Bigr) \, .
    \end{equation*}
    \endgroup
    In particular, we have the following family of two-sided bounds for the relative free energy $\MF(\omega_U, \omega)$, which holds for all $V \in \MFS_3(L, \Omega, \Omega_{- \beta U})$ and $\psi \in \Xi(U, \Omega)$:
    \begin{equation}\label{eq:unboundedFamilyBounds}
        \omega_U(U - V) - \inv{\beta} \log\bigl(\norm{\Omega_{- \beta V}}^2\bigr) \le \MF(\omega_U, \omega) \le \psi(U) + \inv{\beta} \MS_\MFM(\psi, \omega) \, .
    \end{equation}
    Setting $V = 0$ and $\psi = \omega$ in the previous inequalities, one recovers the two-sided Bogoliubov inequality \eqref{eq:BogoliubovVN}.
\end{theorem}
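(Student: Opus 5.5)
The plan is to reduce both variational formulas to the two principles just established, using the identity \eqref{eq:perturbedVectorRFE} from the proof of \cref{thm:BogoliubovVN}:
\begin{equation*}
    \MF(\omega_U, \omega) = - \inv{\beta} \log\bigl(\norm{\Omega_{- \beta U}}^2\bigr) \, ,
\end{equation*}
which holds for $U \in \MFS_2(L, \Omega)$ by \cref{lem:unboundedPerturbationRelEnt2}. Since $\omega$ is a faithful $(\tau,\beta)$-KMS state, \cref{pro:LiouvillianKMS} gives $L = - \inv{\beta} \log(\Delta_\Omega)$. Hence the classes $\MFS_1(L,\Omega)$, $\MFS_3(L, \Omega, \cdot)$ and $\Xi(U,\Omega)$ are exactly those appearing in \cref{pro:GibbsVariationalPrinciple,pro:unboundedDVVP} with $\Phi = \Omega$, $\varphi = \omega$.

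\emph{Infimum.} I will apply \eqref{eq:generalizedGibbsVP} with $V = U$ and $\Phi = \Omega$. Its left-hand side is $\MF(\omega_U,\omega)$, and its right-hand side is the claimed infimum over $\Xi(U, \Omega)$. The only subtlety is the hypothesis $U \in \MFS_3(L,\Omega,\Psi)$ for some faithful $\psi$, which is needed to invoke the proposition. Inspecting its proof, the equality \eqref{eq:generalizedGibbsVP} uses only $U \in \MFS_1(L,\Omega)$ together with inequality \eqref{eq:generalizedPBinequality} for each $\gamma \in \Xi(U,\Omega)$. That inequality holds precisely because $\gamma \in \Xi(U, \Omega)$ means $U \in \MFS_3(L,\Omega,\Gamma)$. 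The minimizer $\omega_U$ lies in $\Xi(U,\Omega)$ by the $L_U$ computation given there, which uses \ref{enu:A2}. So the infimum formula follows directly.

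\emph{Supremum.} I will apply \cref{pro:unboundedDVVP} with $\psi = \omega_U$, which is faithful by \cref{lem:perturbedVectorSeparating}. Its vector representative is $\Psi = \Omega_{-\beta U}/\norm{\Omega_{-\beta U}}$. By \cref{lem:relModOpScaling}, $\log \Delta_{\Omega,\Psi}$ differs from $\log\Delta_{\Omega, \Omega_{-\beta U}}$ only by an additive constant. The essential self-adjointness condition is therefore unchanged, so $\MFS_3(L,\Omega,\Psi) = \MFS_3(L,\Omega,\Omega_{-\beta U})$. This gives
\begin{equation*}
    \MS_\MFM(\omega_U,\omega) = \sup_{V} \Bigl(- \beta\, \omega_U(V) - \log\bigl(\norm{\Omega_{-\beta V}}^2\bigr)\Bigr) \, .
\end{equation*}
Dividing by $\beta$ and adding $\omega_U(U)$ (finite by \ref{enu:A4} together with the finiteness of $\MS_\MFM(\omega_U,\omega)$) yields, by \cref{def:relativeFreeEnergy}, the supremum formula. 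Writing $\omega_U(U) - \omega_U(V) = \omega_U(U-V)$ requires some care when $\omega_U(V) = \pm\infty$. In that case I will interpret the expression as $\omega_U(U) - \omega_U(V)$; terms with $\omega_U(V) = +\infty$ contribute $-\infty$ and do not affect the supremum. This bookkeeping of possibly infinite expectation values is the main point needing care, though it is not deep.

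\emph{Family of bounds.} The inequalities \eqref{eq:unboundedFamilyBounds} are immediate from the inf/sup formulas. For $V = 0$ the lower bound gives $\omega_U(U) \le \MF(\omega_U, \omega)$, since $0 \in \MFS_3$ (here $\log\Delta_{\Omega,\Omega_{-\beta U}}$ is self-adjoint) and $\Omega_0 = \Omega$ has norm one. For $\psi = \omega$ the upper bound gives $\MF(\omega_U, \omega) \le \omega(U)$, because $\MS_\MFM(\omega,\omega)=0$. Here $\omega \in \Xi(U,\Omega)$ requires $\log\Delta_\Omega - \beta U = -\beta(L+U)$ to be essentially self-adjoint on $\dom(L) \cap \dom(U)$, which is exactly \ref{enu:A1}. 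This recovers \eqref{eq:BogoliubovVN}.
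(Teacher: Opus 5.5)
Your proposal is correct and follows the paper's proof. The infimum comes from the Gibbs variational principle \eqref{eq:generalizedGibbsVP} with $V = U$ and $\Phi = \Omega$, combined with \eqref{eq:perturbedVectorRFE}, and the supremum comes from the Donsker--Varadhan principle \eqref{eq:unboundedDVVP} with $\psi = \omega_U$ after adding $\omega_U(U)$.

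Your additional checks make explicit what the paper leaves implicit: that the infimum formula needs only $U \in \MFS_1(L,\Omega)$, that normalizing $\Omega_{-\beta U}$ leaves $\MFS_3$ unchanged by \cref{lem:relModOpScaling}, and that $0 \in \MFS_3(L,\Omega,\Omega_{-\beta U})$ and $\omega \in \Xi(U,\Omega)$ via \ref{enu:A1}. The one inaccuracy is that \ref{enu:A4} does not give finiteness of $\MS_\MFM(\omega_U,\omega)$. By \cref{lem:unboundedPerturbationRelEnt2} that finiteness is equivalent to $\omega_U(U) > -\infty$, which is exactly what is tacitly needed, in the paper as well, for $\MF(\omega_U,\omega)$ to be well defined at all.
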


\begin{proof}
    Choosing $\varphi = \omega$ in the Gibbs variational principle \eqref{eq:generalizedGibbsVP} and recalling that $\MF(\omega_U, \omega) = - \inv{\beta} \log (\norm{\Omega_{- \beta U}}^2)$ by virtue of \cref{eq:perturbedVectorRFE}, we obtain the first equality. Combining the definition \eqref{eq:relativeFreeEnergy} of the relative free energy $\MF(\omega_U, \omega) = \omega_U(U) + \inv{\beta} \MS_\MFM(\omega_U, \omega)$ with the Donsker-Varadhan variational principle \eqref{eq:unboundedDVVP} gives the second equality:
    \begin{align*}
        \MF(\omega_U, \omega) &= \omega_U(U) + \sup_{V \in \MFS_3(L, \Omega, \Omega_U)} \Bigl(- \omega_U(V) - \inv{\beta} \log \bigl(\norm{\Omega_{- \beta V}}^2\bigr)\Bigr) \\
        &= \sup_{V \in \MFS_3(L, \Omega, \Omega_U)} \Bigl(\omega_U(U - V) - \inv{\beta} \log \bigl(\norm{\Omega_{- \beta V}}^2\bigr)\Bigr) \, . \tag*{\qedhere}
    \end{align*}
\end{proof}

\section{Conclusions and outlook}

We have generalized the two-sided Bogoliubov inequality \cite{Reible2022} to the setting of von Neumann algebras, utilizing the framework of unbounded perturbation theory of KMS states from Ref.~\cite{DJP03} as well as some new extensions of it. Furthermore, we have also generalized the variational bounds for the relative free energy \cite{Reible2022} by extending the known variational principles for the relative entropy in terms of bounded perturbations \cite{Petz1988} to the unbounded case. The extension of the results of Ref.~\cite{Reible2022} to general von Neumann algebras has been motivated by the idea to establish a thermodynamic criterion for the quantification of entanglement in terms of the free energy of separation, which is applicable not only to finite quantum systems but also to systems with infinitely many degrees of freedom, like quantum fields of spins, as they play an increasingly important role in the development of quantum technology.

While the current contribution was devoted to establishing the criterion on a mathematically solid basis, in future work we intend to apply the inequality to concrete systems in order to illustrate its utility. On the mathematical side, it would also be interesting to investigate the question whether the domains of the infimum and supremum in \cref{thm:variationalBoundsVN} can be enlarged, that is, whether the essential self-adjointness condition related to the relative modular operator can be replaced or dropped. Finally, on a more foundational level, we note that there still remain some conceptual difficulties about the relativistic covariance of thermodynamic properties like the KMS condition or thermalization; see Ref.~\cite{Passegger2025} for a detailed discussion of this topic. Therefore, the physical interpretation of the quantities entering the two-sided Bogoliubov inequality needs to be handled with some care in the context of relativistic quantum field theories.

\appendix
\section{Existence of perturbed vectors}\label{app:perturbedVectors}

In the following, we provide a proof of the key \cref{pro:unboundedPerturbedVector}, asserting that if $(\MFM, \MH, J, \NPC)$ is a von Neumann algebra in standard form, $\Omega \in \NPC$ a normalized and separating vector, $L \ce - \inv{\beta} \log(\Delta_\Omega)$ with $\beta > 0$, and $V \in \MFS_1(L, \Omega)$, then
\begin{equation*}
    \Omega \in \dom \bigl(\ee^{- \beta (L + V) / 2}\bigr) \tand \ee^{- \beta (L + V) / 2} \Omega = \dlim{w}{n \to \infty} \ee^{- \beta (L + V_n) / 2} \Omega \, .
\end{equation*}
The proof follows exactly the cunning strategy that was devised in Ref.~\cite[Thm. 5.1 (1)]{DJP03}, and below we will show that our \cref{pro:unboundedPerturbedVector} is actually equivalent to the result stated in Ref.~\cite[Thm. 5.5 (1)]{DJP03}.

\begin{proof}[Proof of \cref{pro:unboundedPerturbedVector}]
    According to Assumption \ref{enu:A3}, we have $\int_\R \ee^{- \beta \lambda} \diff \mu_\Omega^V(\lambda) < + \infty$, where the measure $\mu_\Omega^V$ is defined by $\mu_\Omega^V (A) \ce \braket{\Omega, E_V(A) \Omega}$ for all Borel sets $A \subset \R$, with $E_V$ is the spectral measure of $V$. For every $n \in \N$, let $V_n = f_n(V) \in \MFM$ be defined as in \cref{eq:approximatingSequence}. Since
    \begin{equation*}
        \ee^{- \beta f_n(\lambda)} \le 1 + \ee^{- \beta \lambda} \ (\lambda \in \R) \quad \text{and} \quad \int_\R (1 + \ee^{- \beta \lambda}) \diff \mu_\Omega^V(\lambda) = \norm{\Omega}^2 + \norm{\ee^{-\beta V / 2} \Omega}^2 < + \infty \, ,
    \end{equation*}
    the dominated convergence theorem implies that
    \begin{align*}
        \lim_{n \to \infty} \norm[\big]{\ee^{- \beta V_n / 2} \Omega}^2 = \int_\R \lim_{n \to \infty} \ee^{- \beta f_n(\lambda)} \diff \mu_\Omega^V(\lambda) = \int_\R \ee^{- \beta \lambda} \diff \mu_\Omega^V(\lambda) = \norm[\big]{\ee^{- \beta V / 2} \Omega}^2 \, .
    \end{align*}   
    In particular, this shows that there exists $C > 0$ such that for all $n \in \N$, $\norm{\ee^{- \beta V_n / 2} \Omega} \le C$. Using the Golden-Thompson inequality (\cref{pro:GoldenThompson}), one therefore obtains
    \begin{equation}\label{eq:auxiliaryExistenceInequality}
        \norm[\big]{\ee^{- \beta (L + V_n) / 2} \Omega} = \norm[\big]{\ee^{(\log \Delta_\Omega - \beta  V_n) / 2} \Omega} \overset{\eqref{eq:GoldenThompson}}{\le} \norm[\big]{\ee^{- \beta  V_n / 2} \Omega} \le C \, .
    \end{equation}

    Recall that $L + V_n \to L + V$ in the strong resolvent sense according to \cref{lem:convergenceLiouvillian} \ref{enu:convUnperturbedL}. Since $L + V_n$ is self-adjoint and $L + V$ is essentially self-adjoint by \ref{enu:A1}, we can apply \cref{pro:SRimpliesFSR} to conclude that also $\ee^{- \beta (L + V_n) / 2} \to \ee^{- \beta (L + V) / 2}$ in the strong resolvent sense. Now, using \cref{pro:SRimpliesW} (with $T_n \equiv \ee^{- \beta (L + V_n) / 2}$, $T \equiv \ee^{- \beta (L + V) / 2}$, and $\Omega_n \equiv \Omega$ for all $n \in \N$, noting that \cref{eq:auxiliaryExistenceInequality} yields the necessary bound), we can conclude that
    \begin{equation*}
        \Omega \in \dom(\ee^{- \beta (L + V) / 2}) \, ,
    \end{equation*}
    which is the first assertion, and we also directly obtain the statement about weak convergence:
    \begin{equation*}
        \dlim{w}{n \to \infty} \ee^{- \beta (L + V_n) / 2} \Omega = \ee^{- \beta (L + V) / 2} \Omega \, . \tag*{\qedhere}
    \end{equation*}
\end{proof}

Let us now show that \cref{pro:unboundedPerturbedVector} is indeed equivalent to the perturbation result of Ref.~\cite[Thm. 5.5 (1)]{DJP03}.

\begin{proposition}\label{pro:equivalencePerturbedVectors}
    Let $(\MFM, \MH, J, \NPC)$ be a von Neumann algebra in standard form and let $\beta > 0$. The following assertions are equivalent:
    \begin{enumerate}[equiv]
        \item If $\Phi \in \NPC$ is normalized and separating for $\MFM$, $L_\Phi \ce - \inv{\beta} \log(\Delta_\Phi)$, and $V \in \MFS_1(L_\Phi, \Phi)$, then $\Phi \in \dom \bigl(\ee^{- \beta (L_\Phi + V) / 2}\bigr)$.
        
        \item If $\tau$ is a $W^\ast$-dynamics on $\MFM$ with standard Liouvillian $L$, $\omega = \omega_\Omega \in \NS{\MFM}$ is a faithful $(\tau, \beta)$-KMS state, and $V \in \MFS_1(L, \Omega)$, then $\Omega \in \dom \bigl(\ee^{- \beta (L + V) / 2}\bigr)$.
    \end{enumerate}
\end{proposition}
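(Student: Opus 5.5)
The two implications hinge on one dictionary between the two settings: for a faithful normal state $\omega=\omega_\Omega$ with $\Omega\in\NPC$, being $(\tau,\beta)$-KMS for a $W^\ast$-dynamics with standard Liouvillian $L$ is the same as having $L=-\inv{\beta}\log(\Delta_\Omega)$. Once this is established, both statements concern the same operator $L$ and the same class $\MFS_1(L,\Omega)$, so each follows from the other by specializing.

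\textbf{(i) $\implies$ (ii).} Let $\tau$, $L$, $\omega=\omega_\Omega$ and $V\in\MFS_1(L,\Omega)$ be as in (ii). Since $\omega$ is faithful, $\Omega$ is separating for $\MFM$, and hence cyclic by \cref{lem:cyclicSeparating}. \Cref{pro:LiouvillianKMS} then gives $\Delta_\Omega=\ee^{-\beta L}$, so $L=-\inv{\beta}\log(\Delta_\Omega)=L_\Omega$. Consequently $\MFS_1(L,\Omega)=\MFS_1(L_\Omega,\Omega)$, because assumptions \ref{enu:A1}--\ref{enu:A3} depend only on $L$, $J$, $V$ and $\Omega$. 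Since $\omega$ is a state, $\Omega$ is normalized, and (i) applied with $\Phi=\Omega$ yields $\Omega\in\dom\bigl(\ee^{-\beta(L+V)/2}\bigr)$.

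\textbf{(ii) $\implies$ (i).} Let $\Phi$ and $V\in\MFS_1(L_\Phi,\Phi)$ be as in (i). By \cref{lem:cyclicSeparating}, $\Phi$ is cyclic and separating, so the modular group $\sigma^\Phi$ is defined, and I rescale time to set $\tau_t\ce\sigma^\Phi_{-t/\beta}$, i.e.\ $\tau_t(A)=\Delta_\Phi^{-\ii t/\beta}A\,\Delta_\Phi^{\ii t/\beta}$. This is a $W^\ast$-dynamics. By \cref{exa:LiouvillianKMS}~\ref{enu:modularGroup}, $\Delta_\Phi^{\ii s}\NPC=\NPC$, so the implementing unitaries $\ee^{\ii tL_\Phi}$ with $L_\Phi=-\inv{\beta}\log\Delta_\Phi$ preserve $\NPC$. \Cref{pro:standardLiouvillian} therefore identifies $L_\Phi$ as the standard Liouvillian of $\tau$.

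It remains to show that $\omega_\Phi$ is a $(\tau,\beta)$-KMS state, and I would do this through the criterion of \cref{pro:LiouvillianKMS}. There $\ee^{-\beta L_\Phi/2}=\Delta_\Phi^{1/2}$. The Tomita relation \cref{eq:TomitaOperator}, \cref{eq:TomitaPolarDecomp} gives $\MFM\Phi\subset\dom(\Delta_\Phi^{1/2})$, and
\begin{equation*}
    \Delta_\Phi^{1/2}A\Phi=J S_\Phi A\Phi=JA^\ast\Phi \, .
\end{equation*}
Here I use that $J=J_\Phi$ in the standard form, which follows from uniqueness of the standard form, or directly from \cite[Thm.~3.2]{Hiai2021}. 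Alternatively, the KMS property follows from Takesaki's theorem (\cref{exa:LiouvillianKMS}~\ref{enu:modularGroup}) by rescaling time by $-\beta$. Since $\Phi$ is normalized and separating, $\omega_\Phi$ is a faithful normal state. Applying (ii) with $\Omega=\Phi$ and $L=L_\Phi$ then gives $\Phi\in\dom\bigl(\ee^{-\beta(L_\Phi+V)/2}\bigr)$.

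The main obstacle is this last step, namely correctly identifying $L_\Phi$ as the \emph{standard} Liouvillian of a dynamics for which $\omega_\Phi$ is $(\tau,\beta)$-KMS. This requires the sign and scaling conventions to match: the rescaling $t\mapsto -t/\beta$ has to convert Takesaki's $\beta=-1$ KMS condition into the $\beta>0$ condition used here. It also requires the positive cone to be invariant under the implementing unitaries. Everything else is bookkeeping.
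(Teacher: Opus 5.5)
Your proof is correct and follows the paper's argument: (i)$\implies$(ii) reads off $L = -\inv{\beta}\log(\Delta_\Omega)$ from \cref{pro:LiouvillianKMS}, and (ii)$\implies$(i) applies (ii) to the rescaled modular dynamics $t \mapsto \sigma^\Phi_{-t/\beta}$, whose standard Liouvillian is $L_\Phi$. The only difference is how you verify the $(\tau,\beta)$-KMS property: you use the Hilbert-space criterion of \cref{pro:LiouvillianKMS} together with $\Delta_\Phi^{1/2} A \Phi = J_\Phi A^\ast \Phi$, which relies on the standard fact $J = J_\Phi$ for cyclic separating $\Phi \in \NPC$, whereas the paper uses Takesaki's theorem plus time rescaling, which is your stated alternative.
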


\begin{proof}
    (i) $\implies$ (ii): Since $\omega = \omega_\Omega$ is a faithful state, it follows that $\Omega \in \NPC$ is normalized and separating. Moreover, \cref{pro:LiouvillianKMS} shows that the Liouvillian $L$ of $\tau$ is given by $L = - \inv{\beta} \log (\Delta_\Omega)$. Thus, (i) implies $\Omega \in \dom \bigl(\ee^{- \beta (L + V) / 2}\bigr)$.

    (ii) $\implies$ (i): According to Takesaki's fundamental theorem, $\varphi = \omega_\Phi$ is a faithful $(\sigma_t^\Phi, -1)$-KMS state; cf. \cref{exa:LiouvillianKMS} \ref{enu:modularGroup}. Hence, $\varphi$ is also a $(\sigma_{- \inv{\beta} t}^\Phi, \beta)$-KMS state \cite[p. 78]{BR2}. The standard Liouvillian of the $W^\ast$-dynamics $t \mapsto \sigma_{- \inv{\beta} t}^\Phi$ is given by $L = - \inv{\beta} \log(\Delta_\Phi) = L_\Phi$, and so (ii) implies that $\Phi \in \dom \bigl(\ee^{- \beta (L_\Phi + V) / 2}\bigr)$.
\end{proof}

\section{Strong resolvent convergence}\label{app:SRconvergence}

Let $(T_n)_{n \in \N}$ and $T$ be self-adjoint linear operators acting in a Hilbert space $\MH$. The sequence $(T_n)_{n \in \N}$ is said to converge to the operator $T$ in the \emph{strong resolvent sense} iff $(T_n - \ii)^{-1} \to (T - \ii)^{-1}$ in the strong operator topology, that is,
\begin{equation*}
    \forall \, \xi \in \MH \ : \ \lim_{n \to \infty} \norm[\big]{(T_n - \ii)^{-1} \xi - (T - \ii)^{-1} \xi} = 0 \, .
\end{equation*}
In the following, we collect some propositions characterizing strong resolvent convergence from the literature, which are needed in the main part of the paper.

The first result, which is proved, e.g., in Refs.~\cite[Prop. 10.1.18]{Oliveira2009} and \cite[Thm. VIII.25]{RS1}, shows that strong convergence on a core implies strong resolvent convergence.

\begin{proposition}\label{pro:coreSOimpliesSR}
    Let $T$ and $(T_n)_{n \in \N}$ be self-adjoint operators acting in $\MH$. Suppose that $\MD \subset \MH$ is a subspace contained in $\dom(T)$ and $\dom(T_n)$ for all $n \in \N$, and that $T$ is essentially self-adjoint on $\MD$. Furthermore, assume that $\lim_{n \to \infty} T_n \xi = T \xi$ for all $\xi \in \MD$. Then $T_n \to T$ in the strong resolvent sense.
\end{proposition}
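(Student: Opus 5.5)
The statement is the standard fact (Reed--Simon Thm.~VIII.25(a)) that strong convergence on a common core of the limit implies strong resolvent convergence. The plan is to prove the convergence $(T_n - \ii)^{-1}\eta \to (T - \ii)^{-1}\eta$ first on a dense set of vectors $\eta$, and then extend it to all of $\MH$ using the uniform bound $\norm{(T_n - \ii)^{-1}}_\mop \le 1$, which holds because the $T_n$ are self-adjoint.

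\textbf{Step 1: identify the dense set.} Since $T$ is essentially self-adjoint on $\MD$ and $T$ is itself self-adjoint, the closure of $T\restr\MD$ is $T$. Consequently $\ran\bigl((T - \ii)\restr\MD\bigr)$ is dense in $\MH$. Indeed, if $\zeta \perp (T - \ii)\MD$, then $\zeta \in \ker\bigl((T\restr\MD)^\ast + \ii\bigr) = \ker(T + \ii) = \{0\}$.

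\textbf{Step 2: convergence on the dense set.} Take $\eta = (T - \ii)\xi$ with $\xi \in \MD$. Since $\xi \in \dom(T_n)$, we may write $\xi = (T_n - \ii)^{-1}(T_n - \ii)\xi$. This gives the resolvent-type identity
\begin{equation*}
    (T_n - \ii)^{-1}\eta - (T - \ii)^{-1}\eta = (T_n - \ii)^{-1}\bigl[(T - \ii)\xi - (T_n - \ii)\xi\bigr] = (T_n - \ii)^{-1}(T - T_n)\xi \, .
\end{equation*}
Its norm is bounded by $\norm{(T - T_n)\xi}$, which tends to zero by hypothesis.

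\textbf{Step 3: extension to all of $\MH$.} Let $\zeta \in \MH$ and $\varepsilon > 0$. Choose $\eta$ in the dense set from Step 1 with $\norm{\zeta - \eta} < \varepsilon$. Because all resolvents have norm at most $1$, an $\varepsilon/3$-type argument gives
\begin{equation*}
    \limsup_{n\to\infty}\norm{(T_n - \ii)^{-1}\zeta - (T - \ii)^{-1}\zeta} \le 2\varepsilon \, .
\end{equation*}
Letting $\varepsilon \to 0$ yields the claim.

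I expect no real obstacle here. The only delicate point is Step 1, namely the density of $(T - \ii)\MD$. This rests on the characterization that an operator is essentially self-adjoint if and only if $\ran(T \pm \ii)$ is dense, applied to $T\restr\MD$, whose closure is $T$.
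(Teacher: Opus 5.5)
Your proof is correct and is the standard argument from Reed--Simon Thm.~VIII.25(a), which the paper cites instead of giving its own proof: density of $(T-\ii)\MD$ from the core property, the identity $(T_n-\ii)^{-1}(T-T_n)\xi$ on that set, and extension to all of $\MH$ via the uniform bound $\norm{(T_n-\ii)^{-1}}_\mop \le 1$. Your argument also only needs $\MD$ to be a core for $T$ contained in each $\dom(T_n)$, not a common core for all $T_n$, which is exactly the hypothesis as the paper states it.
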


The next result, taken from and proved in Ref.~\cite[Prop. A.4]{DJP03}, is used to construct a weakly convergent sequence of vectors from operators converging in the strong resolvent sense.

\begin{proposition}\label{pro:SRimpliesW}
    Let $T$ and $(T_n)_{n \in \N}$ be self-adjoint operators acting in $\MH$ such that $T_n \to T$ in the strong resolvent sense. Furthermore, suppose that $(\Omega_n)_{n \in \N} \subset \MH$ and $\Omega \in \MH$ are vectors such that $\Omega_n \to \Omega$ weakly, and that there exists a constant $C \ge 0$ such that for all $n \in \N$, $\norm{T_n \Omega_n} \le C$. Then $\Omega \in \dom(T)$, $\dlim{w}{n \to \infty} T_n \Omega_n$ exists, and $T \Omega = \dlim{w}{n \to \infty} T_n \Omega_n$.
\end{proposition}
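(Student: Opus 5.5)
The statement to prove is the last one, \cref{pro:SRimpliesW}: if $T_n \to T$ in the strong resolvent sense, $\Omega_n \to \Omega$ weakly, and $\norm{T_n \Omega_n} \le C$, then $\Omega \in \dom(T)$ and $T\Omega = \dlim{w}{n\to\infty} T_n\Omega_n$. The plan is to use the resolvents $R_n \ce (T_n - \ii)^{-1}$ and $R \ce (T - \ii)^{-1}$ as bridges between the unbounded operators and weak limits. Two facts drive the argument. First, $R_n^\ast = (T_n + \ii)^{-1} \to R^\ast = (T+\ii)^{-1}$ strongly; this holds because strong resolvent convergence at one non-real point implies it at all non-real points. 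Second, for every $n$ we have $\Omega_n = R_n (T_n - \ii)\Omega_n$, with $\norm{(T_n - \ii)\Omega_n} \le C + \norm{\Omega_n}$. The weakly convergent sequence $\Omega_n$ is bounded by the uniform boundedness principle, so $\xi_n \ce (T_n - \ii)\Omega_n$ is a bounded sequence.

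The first step is to pass to a subsequence along which $\xi_{n_k} \to \xi$ weakly, by weak sequential compactness of bounded sets. Next we show $\Omega = R\xi$. For any $\eta \in \MH$,
\begin{equation*}
\braket{\eta, \Omega_{n_k}} = \braket{R_{n_k}^\ast \eta, \xi_{n_k}} = \braket{R^\ast\eta, \xi_{n_k}} + \braket{(R_{n_k}^\ast - R^\ast)\eta, \xi_{n_k}} .
\end{equation*}
The second term tends to zero, because $R_{n_k}^\ast\eta \to R^\ast\eta$ in norm and $\xi_{n_k}$ is bounded. The first term tends to $\braket{R^\ast\eta,\xi} = \braket{\eta, R\xi}$. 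Since the left side tends to $\braket{\eta,\Omega}$, we get $\Omega = R\xi \in \ran R = \dom(T)$ and $(T-\ii)\Omega = \xi$.

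The identification is independent of the subsequence: every weakly convergent subsequence of $(\xi_n)$ has limit $(T-\ii)\Omega$. A bounded sequence all of whose weakly convergent subsequences share the same limit converges weakly to that limit. Hence $(T_n - \ii)\Omega_n \to (T-\ii)\Omega$ weakly. Adding $\ii\Omega_n \to \ii\Omega$ weakly gives $T_n\Omega_n \to T\Omega$ weakly, which is the full claim.

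The main point to verify carefully is the strong convergence of the adjoint resolvents $R_n^\ast$. I would cite the standard fact that strong resolvent convergence at $z = \ii$ implies it at every $z \in \C\setminus\R$; see \cite[Thm. VIII.19]{RS1}. Alternatively, it follows directly from the resolvent identity together with the uniform bound $\norm{(T_n - z)^{-1}} \le \abs{\Im z}^{-1}$. Everything else in the argument is routine weak-compactness bookkeeping.
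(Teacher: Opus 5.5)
Your proof is correct. The paper gives no argument of its own for this statement; it only cites \cite[Prop.~A.4]{DJP03}. Your argument is therefore a valid self-contained proof: resolvent duality, then weak sequential compactness of the bounded sequence $(T_n-\ii)\Omega_n$, then the observation that all subsequential weak limits coincide. Its one external input, strong convergence of $(T_n+\ii)^{-1}$, is the standard \cite[Thm.~VIII.19]{RS1}.

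You can also avoid that input altogether. Write $\Omega_n=(T_n+\ii)^{-1}(T_n+\ii)\Omega_n$ and pair with $(T_n-\ii)^{-1}\eta$. This sequence converges in norm directly by the paper's definition of strong resolvent convergence, which is stated only at $z=\ii$.
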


The following well-known result characterizes strong resolvent convergence in terms of strong convergence of bounded functions of the operators \cite[Prop. 10.1.9]{Oliveira2009}, \cite[Thm. VIII.20]{RS1}.

\begin{proposition}\label{pro:characterizationSRConvergence}
    $T_n \to T$ in the strong resolvent sense if and only if $f(T_n) \to f(T)$ in the strong operator topology for all bounded continuous $f : \R \to \C$.
\end{proposition}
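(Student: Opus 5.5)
The plan is to prove the two directions separately, using the resolvent-difference operators $R_n(z) \ce (T_n - z)^{-1}$ and $R(z) \ce (T - z)^{-1}$.

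\textbf{Direction ($\Leftarrow$).} This direction is immediate. Take $f(x) \ce (x - \ii)^{-1}$, which is bounded and continuous on $\R$. The hypothesis gives $f(T_n) \to f(T)$ strongly, and this is exactly strong resolvent convergence.

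\textbf{Direction ($\Rightarrow$).} I would proceed in three steps.

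First, I would upgrade convergence at $z = \ii$ to all $z \in \C \setminus \R$.
\begin{itemize}
    \item For $z = -\ii$, note that $(T_n + \ii)^{-1} = \bigl((T_n - \ii)^{-1}\bigr)^\ast$. Strong convergence of uniformly bounded normal operators that commute with their adjoints needs an argument here. I would use the identity
    \begin{equation*}
        \norm{(R_n - R)\xi}^2 = \braket{\xi, (R_n^\ast R_n - R_n^\ast R - R^\ast R_n + R^\ast R)\xi}
    \end{equation*}
    together with the resolvent identity $R_n(\ii)^\ast R_n(\ii) = \frac{1}{2\ii}\bigl(R_n(\ii) - R_n(-\ii)\bigr)$ to control the adjoints weakly. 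Weak convergence plus convergence of norms then gives strong convergence. Alternatively, one can simply cite the standard fact (Reed--Simon Thm.~VIII.19) that convergence at one nonreal point suffices.
    \item For points in a disc around $z_0$, I would use the Neumann series expansion
    \begin{equation*}
        R(z) = \sum_k (z - z_0)^k R(z_0)^{k+1},
    \end{equation*}
    which converges uniformly in $n$ for $\abs{z - z_0} < \abs{\Im z_0}$, since $\norm{R_n(z_0)} \le \abs{\Im z_0}^{-1}$. Each term converges strongly, because products of strongly convergent uniformly bounded sequences converge strongly. This propagates convergence across the upper and lower half-planes.
\end{itemize}

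Second, the span of the functions $x \mapsto (x - z)^{-1}$, $z \notin \R$, together with constants forms a $\ast$-algebra. It is closed under products by the resolvent identity and under conjugation. It separates points of the one-point compactification of $\R$. By Stone--Weierstrass it is therefore dense in $C_0(\R) \oplus \C$ in sup norm. Since $\norm{g(T_n)} \le \norm{g}_\infty$ uniformly in $n$, an $\varepsilon/3$-argument gives $f(T_n) \to f(T)$ strongly for every continuous $f$ with a limit at infinity.

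Third, and this is the main obstacle, I would pass to a general bounded continuous $f$, which need not have a limit at $\pm\infty$. Fix $\xi$ and $\varepsilon$. Choose $g \in C_c(\R)$ with $0 \le g \le 1$ and $\norm{(1 - g(T))\xi} < \varepsilon$, which is possible by spectral calculus. From the second step, $g(T_n)\xi \to g(T)\xi$, so $\norm{(1 - g(T_n))\xi} < 2\varepsilon$ for large $n$. Now split
\begin{equation*}
    f(T_n)\xi = (fg)(T_n)\xi + f(T_n)(1 - g(T_n))\xi .
\end{equation*}
The first term converges to $(fg)(T)\xi$ because $fg \in C_c(\R)$. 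The second term is bounded by $2\norm{f}_\infty \varepsilon$, uniformly in $n$. The same decomposition applied to $T$ gives the analogous bound for the limit, which concludes the proof.
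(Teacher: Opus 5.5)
The paper gives no proof of its own here and simply cites Reed--Simon (Thm.~VIII.20) and de~Oliveira (Prop.~10.1.9). Your argument is precisely that standard proof and it is correct: convergence at $z=\pm\ii$ via normality of the resolvents (weak convergence of the adjoints plus convergence of norms), Stone--Weierstrass on the one-point compactification, then a compactly supported cutoff $g$ with $\|(1-g(T))\xi\|<\varepsilon$.

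One small slip: the linear span of the functions $(x-z)^{-1}$ and the constants is not closed under products, since $(x-z)^{-2}$ is not in it. Apply Stone--Weierstrass instead to the $\ast$-algebra generated by $(x\mp\ii)^{-1}$ and $1$, whose elements converge strongly as sums of products of uniformly bounded strongly convergent sequences; this also makes your Neumann-series extension to all nonreal $z$ unnecessary.
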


The final result is an immediate consequence of the previous one.

\begin{proposition}\label{pro:SRimpliesFSR}
    Let $(T_n)_{n \in \N}$ be a sequence of self-adjoint operators acting in a Hilbert space $\MH$ and converging in the strong resolvent sense to a self-adjoint operator $T$. Then for any continuous real-valued function $g : \R \to \R$, it holds that $g(T_n) \to g(T)$ in the strong resolvent sense as $n \to + \infty$.
\end{proposition}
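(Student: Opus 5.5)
The plan is to reduce the statement directly to \cref{pro:characterizationSRConvergence}, applied once to the sequence $(T_n)_{n \in \N}$ and a suitably chosen bounded continuous function. First I note that, since $g$ is real-valued and Borel measurable (indeed continuous), the functional calculus yields self-adjoint operators $g(T_n)$ and $g(T)$. So the notion of strong resolvent convergence for the sequence $(g(T_n))_{n \in \N}$ makes sense. By definition, I must show that $(g(T_n) - \ii)^{-1} \to (g(T) - \ii)^{-1}$ in the strong operator topology.

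The key step is to express these resolvents as functions of the original operators. Define $h : \R \to \C$ by $h(\lambda) \ce (g(\lambda) - \ii)^{-1}$. Because $g$ is real, $\abs{g(\lambda) - \ii} \ge 1$ for every $\lambda$, so $h$ is well defined, continuous, and bounded by $1$. By the multiplicativity of the Borel functional calculus and the composition rule for functions of a self-adjoint operator, I will verify that
\begin{equation*}
    h(T_n) = (g(T_n) - \ii)^{-1} \tand h(T) = (g(T) - \ii)^{-1} .
\end{equation*}
To do so, write $h(T_n)$ as the integral of $h$ against the spectral measure $E_{T_n}$. Then check that $(g(T_n) - \ii)\, h(T_n) = \id_\MH$ and that $h(T_n)\,(g(T_n) - \ii) \subset \id_\MH$. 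Both follow from $(g - \ii)\, h = 1$ pointwise, together with the domain statement $\dom\bigl((g - \ii)(T_n)\, h(T_n)\bigr) = \dom(h(T_n)) = \MH$, which holds because $h$ is bounded.

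With this identification in hand, the forward implication of \cref{pro:characterizationSRConvergence}, applied to the bounded continuous function $h$, gives $h(T_n) \to h(T)$ strongly. This is exactly the claimed strong resolvent convergence $g(T_n) \to g(T)$.

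The only step that requires any care is the functional-calculus identity $h(T) = (g(T) - \ii)^{-1}$ for unbounded $g(T)$, specifically the domain bookkeeping. I would handle it by citing the standard property $\phi(T)\psi(T) \subset (\phi\psi)(T)$, with $\dom\bigl(\phi(T)\psi(T)\bigr) = \dom(\psi(T)) \cap \dom\bigl((\phi\psi)(T)\bigr)$. Everything else is immediate.
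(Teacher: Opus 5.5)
Your proposal is correct and follows essentially the same route as the paper: apply \cref{pro:characterizationSRConvergence} to the bounded continuous function $h(\lambda) = (g(\lambda) - \ii)^{-1}$ and identify $h(T_n) = (g(T_n) - \ii)^{-1}$. The only differences are that the paper phrases it for a general $z \in \C \setminus \R$ and takes this identification for granted, whereas you fix $z = \ii$ and spell out the functional-calculus domain bookkeeping, which is a harmless and welcome addition.
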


\begin{proof}
    Let $z \in \C \setminus \R$ be arbitrary, and observe that the function $h_z : \R \to \C$, $\lambda \mapsto \bigl(g(\lambda) - z\bigr)^{-1}$, is continuous and bounded. Indeed, since clearly $\abs{g(\lambda) - z} \ge \abs{\Im(z)}$, one obtains
    \begin{equation*}
        \abs{h_z(\lambda)} \le \frac{1}{\abs{\Im(z)}} \, .
    \end{equation*}
    Therefore, since $T_n \to T$ in the strong resolvent sense, it follows from \cref{pro:characterizationSRConvergence} that $h_z(T_n) \to h_z(T)$ as $n \to + \infty$ in the strong operator topology. But $h_\ii(T_n) = (g(T_n) - \ii)^{-1}$ and $h_\ii(T) = (g(T) - \ii)^{-1}$, hence the previous statement is equivalent to convergence $g(T_n) \to g(T)$ in the strong resolvent sense.
\end{proof}

\section{One-parameter groups of operators}\label{app:oneParameterGroups}

In this appendix, we gather some well-known results about strongly continuous one-parameter unitary groups that are used at various places in the main text. Recall that a \emph{strongly continuous one-parameter unitary group} on a Hilbert space $\MH$ is a family $(U(t))_{t \in \R}$ of unitary operators $U(t) : \MH \to \MH$ satisfying $U(t) U(s) = U(t + s)$ for all $t, s \in \R$ and $\lim_{h \to 0} U(t + h) \xi = U(t) \xi$ for all $\xi \in \MH$ and $t \in \R$.

The following assertion shows how to recover a self-adjoint operator from the strongly continuous unitary group it generates \cite[Prop. 6.1]{Schmuedgen2012}.

\begin{proposition}\label{pro:generatorUnitaryGroup}
    Let $\MH$ be a Hilbert space and $T$ be a self-adjoint linear operator in $\MH$. Define $U(t) \ce \ee^{\ii t T}$ for all $t \in \R$. Then $(U(t))_{t \in \R}$ is a strongly continuous one-parameter unitary group on $\MH$, and the operator $T$ is uniquely determined by the group as follows:
    \begin{gather*}
        \dom(T) = \Set{\xi \in \MH \ : \ \odbound{}{t}{t=0} U(t) \xi = \lim_{h \to 0} \frac{1}{h} \, \bigl(U(t) \xi - \xi\bigr) \in \MH \ \text{exists}} \, , \\
        T \xi = - \ii \, \odbound{}{t}{t=0} U(t) \xi \com \xi \in \dom(T) \, .
    \end{gather*}
    Moreover, the domain of $T$ is invariant under the unitary group, that is,
    \begin{equation*}
        \forall \, \xi \in \dom(T) \ \forall \, t \in \R \ : \ U(t) \xi \in \dom(T) \, .
    \end{equation*}
\end{proposition}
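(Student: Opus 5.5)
The plan is to reduce everything to the spectral theorem for the self-adjoint operator $T$. Let $E_T$ be its spectral measure, and for $\xi \in \MH$ let $\mu_\xi(A) \ce \braket{\xi, E_T(A) \xi}$. Then $U(t) = \ee^{\ii t T} = \int_\R \ee^{\ii t \lambda} \diff E_T(\lambda)$ is unitary by functional calculus, since $\abs{\ee^{\ii t \lambda}} = 1$. The group law $U(t) U(s) = U(t+s)$ follows from multiplicativity of the bounded functional calculus. Strong continuity follows from
\begin{equation*}
    \norm{U(t+h)\xi - U(t)\xi}^2 = \int_\R \abs{\ee^{\ii h \lambda} - 1}^2 \diff \mu_\xi(\lambda) \, ,
\end{equation*}
which tends to $0$ as $h \to 0$ by dominated convergence, with the constant $4$ as dominating function.

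\textbf{Inclusion $\dom(T) \subset \{$limit exists$\}$ with the stated formula.} For $\xi \in \dom(T)$ we have $\int \lambda^2 \diff \mu_\xi < +\infty$, and
\begin{equation*}
    \norm[\Big]{\frac{1}{h}\bigl(U(h)\xi - \xi\bigr) - \ii T \xi}^2 = \int_\R \abs[\Big]{\frac{\ee^{\ii h \lambda} - 1}{h} - \ii \lambda}^2 \diff \mu_\xi(\lambda) \, .
\end{equation*}
Since $\abs{(\ee^{\ii h\lambda}-1)/h} \le \abs{\lambda}$, the integrand is bounded by $4\lambda^2$. It tends to $0$ pointwise, so dominated convergence gives the limit $\ii T\xi$, i.e., $T\xi = -\ii \, \odbound{}{t}{t=0} U(t)\xi$.

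\textbf{Converse inclusion.} I expect this to be the only step needing care. Suppose the limit $\eta \ce \lim_{h\to 0} h^{-1}(U(h)\xi - \xi)$ exists, so in particular the difference quotients are bounded in norm along a sequence $h_k \to 0$. Since $\abs{(\ee^{\ii h_k\lambda}-1)/h_k}^2 \to \lambda^2$ pointwise, Fatou's lemma yields
\begin{equation*}
    \int_\R \lambda^2 \diff \mu_\xi(\lambda) \le \liminf_{k} \norm[\big]{h_k^{-1}(U(h_k)\xi - \xi)}^2 = \norm{\eta}^2 < +\infty \, ,
\end{equation*}
hence $\xi \in \dom(T)$. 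The previous paragraph then identifies $\eta = \ii T\xi$.

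An alternative route, if one prefers to avoid Fatou, is to show that the generator $A$ defined by the limit is symmetric and extends $T$. Self-adjointness of $T$ then gives $A \subset A^\ast \subset T^\ast = T \subset A$. Either way, this shows that $T$ is uniquely determined by the group.

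\textbf{Domain invariance.} Let $\xi \in \dom(T)$ and $t \in \R$. Since $U(t)$ commutes with every $U(h)$ by the group law,
\begin{equation*}
    \frac{1}{h}\bigl(U(h)U(t)\xi - U(t)\xi\bigr) = U(t)\,\frac{1}{h}\bigl(U(h)\xi - \xi\bigr) \, .
\end{equation*}
The right-hand side converges to $\ii U(t) T\xi$ because $U(t)$ is bounded. By the characterization just proved, $U(t)\xi \in \dom(T)$, and moreover $T U(t)\xi = U(t) T \xi$.
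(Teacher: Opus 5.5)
Your proof is correct. The paper gives no proof of its own and only cites \cite[Prop.~6.1]{Schmuedgen2012}, and your argument is the standard spectral-calculus proof along those lines: dominated convergence gives strong continuity and $\dom(T) \subset \{\text{limit exists}\}$, commutation of $U(t)$ with $U(h)$ gives domain invariance, and for the converse inclusion your Fatou estimate $\int_\R \lambda^2 \diff \mu_\xi(\lambda) \le \norm{\eta}^2$ is a valid self-contained alternative to the symmetric-extension/maximality argument you also list, which is the one usually used.
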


A converse to the previous result is the classical theorem of Stone, which states that every unitary group is of the form discussed in \cref{pro:generatorUnitaryGroup}; see, e.g., Ref.~\cite[Thm. 6.2]{Schmuedgen2012}.

\begin{theorem}[Stone]\label{thm:Stone}
    If $(U(t))_{t \in \R}$ is a strongly continuous one-parameter unitary group on $\MH$, then there exists a uniquely defined self-adjoint linear operator $T$ such that $U(t) = \ee^{\ii t T}$ for all $t \in \R$.
\end{theorem}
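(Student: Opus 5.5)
We would prove Stone's theorem by the classical generator construction. Define $T$ exactly as in \cref{pro:generatorUnitaryGroup}: $\dom(T)$ is the set of $\xi \in \MH$ for which $\lim_{h \to 0} h^{-1}\bigl(U(h)\xi - \xi\bigr)$ exists, and $T\xi \ce -\ii$ times this limit. Linearity of $T$ is immediate. We then show successively that $T$ is densely defined, symmetric, and self-adjoint, and that $\ee^{\ii t T} = U(t)$. Uniqueness follows at once from \cref{pro:generatorUnitaryGroup}, since that result recovers $T$ from the group $t \mapsto \ee^{\ii tT}$.

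\textbf{Density.} For $\xi \in \MH$ and $\varphi \in C_c^\infty(\R)$ we form the Gårding vectors $\xi_\varphi \ce \int_\R \varphi(s)\, U(s)\xi \diff s$. This is a Bochner (or weak) integral, which exists by strong continuity. The shift $U(h)\xi_\varphi = \int \varphi(s-h)\, U(s)\xi \diff s$ shows that $h^{-1}\bigl(U(h)\xi_\varphi - \xi_\varphi\bigr) \to -\xi_{\varphi'}$. Hence $\xi_\varphi \in \dom(T)$ with $T\xi_\varphi = \ii\, \xi_{\varphi'}$. Taking an approximate identity $\varphi_k \to \delta_0$, strong continuity gives $\xi_{\varphi_k} \to \xi$, so $\dom(T)$ is dense. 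The same computation shows that $U(t)\dom(T) \subset \dom(T)$ and $T U(t) = U(t) T$ on $\dom(T)$.

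\textbf{Symmetry and self-adjointness.} Symmetry follows from $U(h)^\ast = U(-h)$. Indeed, for $\xi, \eta \in \dom(T)$,
\[
\braket{\eta, \ii T\xi} = \lim_{h \to 0} \braket{h^{-1}(U(-h)-\id)\eta, \xi} = \braket{-\ii T\eta, \xi} = \braket{\ii T \eta, \xi}\cdot(-1)^{0},
\]
which after sorting the signs gives $\braket{T\eta,\xi} = \braket{\eta, T\xi}$. For self-adjointness we use the basic criterion $\ker(T^\ast \mp \ii) = \{0\}$. Suppose $T^\ast \eta = \ii \eta$ and fix $\xi \in \dom(T)$. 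The function $f(t) \ce \braket{\eta, U(t)\xi}$ is differentiable, with
\[
f'(t) = \braket{\eta, \ii T U(t)\xi} = \ii \braket{T^\ast \eta, U(t)\xi} = \ii\,\overline{\ii}\, f(t) = f(t).
\]
Thus $f(t) = f(0)\ee^{t}$. Since $\abs{f(t)} \le \norm{\eta}\norm{\xi}$ is bounded on $\R$, we get $f(0) = \braket{\eta,\xi} = 0$ for all $\xi$ in the dense set $\dom(T)$, so $\eta = 0$. The case $-\ii$ is identical, with $\ee^{-t}$ in place of $\ee^{t}$. Hence $T$ is self-adjoint.

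\textbf{Identification of the group.} Let $V(t) \ce \ee^{\ii tT}$, defined by the spectral theorem. By \cref{pro:generatorUnitaryGroup}, $V$ is a strongly continuous unitary group whose generator is $T$. For $\xi \in \dom(T)$ set $w(t) \ce U(t)\xi - V(t)\xi$. Both terms are differentiable with derivatives $\ii T U(t)\xi$ and $\ii T V(t)\xi$, so $w'(t) = \ii T w(t)$. Using symmetry of $T$,
\[
\tfrac{\diff}{\diff t}\norm{w(t)}^2 = 2\Re\braket{w(t), \ii T w(t)} = 0.
\]
Since $w(0) = 0$, this gives $w \equiv 0$ on $\dom(T)$, and by density and boundedness $U(t) = V(t)$ on all of $\MH$.

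\textbf{Main obstacle.} We expect the self-adjointness step to be the hardest, because that is where the unitary group structure is genuinely needed: mere symmetry is not enough. We would rely on the boundedness argument for $f(t)$ above. The technical details to check carefully are the differentiability of $t \mapsto U(t)\xi$ at every $t$, not only at $t = 0$, which follows from the group law together with the invariance $U(t)\dom(T) \subset \dom(T)$, and the interchange of limits and integrals in the Gårding-vector computation.
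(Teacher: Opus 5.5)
The paper does not prove this statement; it quotes Stone's theorem from Schm\"udgen [Thm.~6.2]. Your outline is the standard textbook argument: G\aa{}rding vectors for density, boundedness of $t \mapsto \braket{\eta, U(t)\xi}$ to rule out deficiency vectors, and $\frac{\diff}{\diff t}\norm{w(t)}^2 = 0$ to identify the group. Density, symmetry (in substance), the deficiency computation, the identification argument and uniqueness are fine. There is, however, one genuine gap in the self-adjointness step.

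You conclude \enquote{hence $T$ is self-adjoint} from $\ker(T^\ast \mp \ii) = \{0\}$. For a densely defined symmetric operator, this condition only gives \emph{essential} self-adjointness, i.e.\ that $\oln{T}$ is self-adjoint. Self-adjointness of $T$ itself also requires $T$ to be closed, or equivalently $\ran(T \pm \ii) = \MH$. For example, $-\ii \, \frac{\diff}{\diff x}$ on $C_c^\infty(\R)$ satisfies your kernel condition but is not self-adjoint. The gap matters, because you then apply the spectral theorem to $T$ to define $V(t) = \ee^{\ii t T}$.

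Your $T$ is in fact closed, and there are two quick repairs.
\begin{enumerate}
    \item Directly: for $\xi \in \dom(T)$, the map $s \mapsto U(s)\xi$ is $C^1$ with derivative $\ii U(s) T\xi$, so $U(h)\xi - \xi = \ii \int_0^h U(s) T\xi \diff s$. If $\dom(T) \ni \xi_n \to \xi$ and $T\xi_n \to \zeta$, passing to the limit gives $U(h)\xi - \xi = \ii \int_0^h U(s)\zeta \diff s$. Dividing by $h$ and letting $h \to 0$ yields $\xi \in \dom(T)$ with $T\xi = \zeta$.
    \item Alternatively, run your identification step with $S \ce \oln{T}$. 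For $\xi \in \dom(T)$, the vector $w(t) = U(t)\xi - \ee^{\ii t S}\xi$ lies in $\dom(S)$ and satisfies $w' = \ii S w$, since $TU(t)\xi = SU(t)\xi$. Your energy argument then gives $U(t) = \ee^{\ii t S}$. Now \cref{pro:generatorUnitaryGroup} says $\dom(S)$ is exactly the set of $\xi$ for which $h^{-1}(U(h)\xi - \xi)$ converges. By your definition that set is $\dom(T)$, so $T = S$ is self-adjoint.
\end{enumerate}

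Two minor corrections. First, the displayed chain in the symmetry step is wrong as written: $\braket{-\ii T\eta, \xi} = \ii \braket{T\eta, \xi}$, whereas $\braket{\ii T\eta, \xi} = -\ii\braket{T\eta, \xi}$. Drop the last equality; $\braket{\eta, \ii T\xi} = \braket{-\ii T\eta, \xi}$ already gives $\ii\braket{\eta, T\xi} = \ii \braket{T\eta, \xi}$. Second, the invariance $U(t)\dom(T) \subset \dom(T)$ with $TU(t) = U(t)T$ does not come from the G\aa{}rding computation. It follows from $h^{-1}(U(h) - \id)\, U(t)\xi = U(t)\, h^{-1}(U(h) - \id)\xi$.
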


The next result provides a criterion to identify cores of generators of strongly continuous unitary groups \cite[Prop. II.1.7]{EngelNagel2000}, \cite[Prop. 6.3]{Schmuedgen2012}.

\begin{proposition}\label{pro:coreGenerator}
    Let $T$ be the generator of a strongly continuous one-parameter unitary group $(U(t))_{t \in \R}$ on $\MH$. If $\MD \subset \dom(T)$ is a subspace which is dense in $\MH$ and satisfies $U(t) \MD \subset \MD$ for all $t \in \R$, then $\MD$ is a core for $A$.
\end{proposition}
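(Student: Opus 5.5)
The plan is to show that the restriction $T \restr \MD$ is essentially self-adjoint. Its closure is then a self-adjoint operator contained in the self-adjoint operator $T$. Since self-adjoint operators admit no proper self-adjoint extensions, it follows that $\oln{T \restr \MD} = T$, which is precisely the statement that $\MD$ is a core. The statement's phrase ``a core for $A$'' should be read as ``a core for $T$''.

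First, $T \restr \MD$ is symmetric, since $T$ is self-adjoint, and it is densely defined by hypothesis. By the basic criterion for essential self-adjointness, it therefore suffices to show that $\ker\bigl((T \restr \MD)^\ast \mp \ii\bigr) = \set{0}$.

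So let $\eta \in \dom\bigl((T \restr \MD)^\ast\bigr)$ satisfy $(T \restr \MD)^\ast \eta = \ii \eta$. Fix $\xi \in \MD$ and consider the scalar function
\begin{equation*}
    f(t) \ce \braket{\eta, U(t) \xi} \com t \in \R \, .
\end{equation*}
Because $\xi \in \dom(T)$, \cref{pro:generatorUnitaryGroup} shows that $t \mapsto U(t) \xi$ is differentiable, with derivative $\ii T U(t) \xi$; here one uses that $U(t)\xi \in \dom(T)$ and that $U(t)$ commutes with $T$, which follows from the group law. The invariance hypothesis $U(t) \MD \subset \MD$ is used exactly at this point: it allows us to write $T U(t) \xi = (T \restr \MD) U(t) \xi$ and then move the operator to the adjoint. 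Recalling that the inner product is anti-linear in the first argument, this gives
\begin{equation*}
    f'(t) = \ii \braket{\eta, (T \restr \MD) U(t) \xi} = \ii \braket{\ii \eta, U(t) \xi} = \ii \cdot (-\ii) \, f(t) = f(t) \, .
\end{equation*}
Hence $f(t) = \ee^{t} f(0)$. On the other hand, $\abs{f(t)} \le \norm{\eta}\norm{\xi}$ for all $t$, because $U(t)$ is unitary. Letting $t \to +\infty$ forces $f(0) = \braket{\eta, \xi} = 0$. Since $\xi \in \MD$ was arbitrary and $\MD$ is dense, we conclude $\eta = 0$.

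The case $(T \restr \MD)^\ast \eta = -\ii \eta$ is identical. There one obtains $f' = -f$, so $f(t) = \ee^{-t} f(0)$, and the boundedness argument is applied as $t \to -\infty$.

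The only delicate step is the differentiation. One must check that the derivative of $U(t)\xi$ is $\ii T U(t)\xi$ rather than merely $\ii U(t) T\xi$; these coincide because $T$ commutes with its own unitary group. One must also check that $U(t)\xi$ stays inside $\MD$, not just inside $\dom(T)$, so that the adjoint relation for $T \restr \MD$ applies. The latter is exactly the invariance hypothesis. Once both points are in place, the rest is the elementary ODE and boundedness argument above, followed by maximality of self-adjoint operators.
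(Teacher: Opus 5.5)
Your proof is correct. The sign bookkeeping with the inner product anti-linear in the first slot checks out, and ``core for $A$'' is indeed a typo for ``core for $T$''.

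The paper gives no proof of its own and simply cites \cite[Prop.~II.1.7]{EngelNagel2000} and \cite[Prop.~6.3]{Schmuedgen2012}. Your argument is the standard Hilbert-space route. You show that $T \restr \MD$ is essentially self-adjoint through the deficiency criterion, using an ODE for $t \mapsto \braket{\eta, U(t)\xi}$ whose boundedness comes from unitarity. You then invoke maximality of the self-adjoint operator $T$.

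The Engel--Nagel proof is genuinely different: it works directly in the graph norm of $T$.
\begin{enumerate}
    \item Take $x \in \dom(T)$ and $x_n \in \MD$ with $x_n \to x$ in $\MH$. The Riemann sums of $\int_0^t U(s) x_n \diff s$ lie in $\MD$, and they converge in graph norm because $s \mapsto U(s) x_n$ is graph-norm continuous.
    \item The identity $T \int_0^t U(s) y \diff s = -\ii \bigl(U(t) y - y\bigr)$ then gives graph-norm convergence as $n \to \infty$.
    \item Finally, $t^{-1} \int_0^t U(s) x \diff s \to x$ in graph norm as $t \downarrow 0$.
\end{enumerate}

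That argument uses no adjoints, no self-adjointness, and only invariance for $t \ge 0$. It therefore holds for arbitrary strongly continuous semigroups on Banach spaces. Your argument needs all three ingredients; in particular, your $-\ii$ case uses $t \to -\infty$.

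In return, your route delivers essential self-adjointness of $T \restr \MD$ directly. That is exactly what the paper ultimately extracts from this proposition in the proof of \cref{lem:perturbedLiouvillianFQS}.
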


The final result that we need is an analytic continuation theorem due to G.~K.~Pedersen and M.~Takesaki \cite[Lem. 3.2]{Pedersen1973}; see also Refs.~\cite[Lem. 3]{Araki1973b} and \cite[Cor. 9.21]{StratilaZsido2019}.

\begin{theorem}\label{thm:analyticCont}
    Let $T$ be a positive self-adjoint operator with $\ker(T) = \set{0}$, $\xi \in \MH$, and $s < 0$. Then $\xi \in \dom(T^s)$ if and only if the function $\ii t \mapsto T^{\ii t} \xi \in \MH$ can be extended to a continuous function in $\set{\alpha \in \C \, : \, s \le \Re(\alpha) \le 0}$ which is analytic in $\set{\alpha \in \C \, : \, s < \Re(\alpha) < 0}$.
\end{theorem}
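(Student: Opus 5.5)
The plan is to pass to the spectral representation of $T$ and reduce both implications to scalar integrals against the spectral measure of $\xi$. Since $T$ is positive, self-adjoint and injective, $H \ce \log T$ is a well-defined self-adjoint operator and $T^{\alpha} = \ee^{\alpha H}$ for every $\alpha \in \C$ (on its natural domain). Let $E$ be the spectral measure of $H$ and $\mu_\xi(A) \ce \braket{\xi, E(A)\xi}$. Then $\xi \in \dom(T^s)$ if and only if $\int_\R \ee^{2s\lambda} \diff\mu_\xi(\lambda) < +\infty$. Throughout I write $S \ce \set{\alpha \in \C \, : \, s \le \Re(\alpha) \le 0}$ for the closed strip and $S^\circ$ for its interior.

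\textbf{($\Rightarrow$)} Assume $\xi \in \dom(T^s)$ and define $F(\alpha) \ce \int \ee^{\alpha\lambda} \diff E(\lambda)\xi$ for $\alpha \in S$.
\begin{itemize}
\item Since $\abs{\ee^{\alpha\lambda}} = \ee^{\Re(\alpha)\lambda} \le 1 + \ee^{s\lambda}$ on $S$, and $1 + \ee^{s\lambda}$ is square-integrable for $\mu_\xi$, we get $\xi \in \dom(T^{\alpha})$, so $F$ is well defined.
\item Continuity of $F$ on $S$ follows from $\norm{F(\alpha) - F(\alpha_0)}^2 = \int \abs{\ee^{\alpha\lambda} - \ee^{\alpha_0\lambda}}^2 \diff\mu_\xi$ and dominated convergence, with dominating function $4(1 + \ee^{s\lambda})^2$.
\item For analyticity in $S^\circ$ it suffices to show weak analyticity, i.e.\ that $\alpha \mapsto \braket{\eta, F(\alpha)} = \int \ee^{\alpha\lambda} \diff\mu_{\eta,\xi}(\lambda)$ is holomorphic for each $\eta \in \MH$. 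On a substrip $s + \delta \le \Re(\alpha) \le -\delta$ the derivative $\abs{\lambda\,\ee^{\alpha\lambda}}$ is bounded by $C_\delta (1 + \ee^{s\lambda})$. This justifies differentiating under the integral, or alternatively applying Morera's theorem with Fubini.
\item On the imaginary axis $F(\ii t) = T^{\ii t}\xi$, so $F$ is the required extension.
\end{itemize}

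\textbf{($\Leftarrow$)} Let $F$ be a continuous extension of $\ii t \mapsto T^{\ii t}\xi$ to $S$ that is analytic in $S^\circ$. Fix $n$ and set $P_n \ce E([-n,n])$. The vector $P_n\xi$ lies in the domain of every $T^{\alpha}$, and $G_n(\alpha) \ce T^{\alpha}P_n\xi$ is entire. Both $P_nF$ and $G_n$ are continuous on $S$, analytic in $S^\circ$, and they coincide on the line $\Re(\alpha) = 0$, because $P_n$ commutes with $T^{\ii t}$.

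The key step, and the one I expect to be the main obstacle, is to conclude that $P_nF = G_n$ on all of $S$. Agreement on a boundary line does not immediately give agreement inside via the identity theorem, since that line is not in the open domain. My approach is as follows.
\begin{itemize}
\item Pair with an arbitrary $\eta$ and set $g(\alpha) \ce \braket{\eta, P_nF(\alpha) - G_n(\alpha)}$. This is scalar, continuous on $S$, holomorphic in $S^\circ$, and real-valued (indeed zero) on $\Re(\alpha) = 0$.
\item Apply the Schwarz reflection principle, in its version for continuous boundary values on a line, to extend $g$ holomorphically across $\Re(\alpha) = 0$ by $\alpha \mapsto \oln{g(-\oln{\alpha})}$.
\item The extended function vanishes on a line inside its domain of holomorphy, so by the identity theorem it vanishes identically. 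By continuity $g = 0$ on all of $S$, including the line $\Re(\alpha) = s$.
\end{itemize}

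Consequently $T^s P_n \xi = P_n F(s)$. This gives $\int_{[-n,n]} \ee^{2s\lambda} \diff\mu_\xi(\lambda) = \norm{P_nF(s)}^2 \le \norm{F(s)}^2$, uniformly in $n$. Monotone convergence as $n \to \infty$ then yields $\int_\R \ee^{2s\lambda} \diff\mu_\xi < \infty$, that is, $\xi \in \dom(T^s)$. As a by-product we also obtain $F(s) = T^s\xi$.
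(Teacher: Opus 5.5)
Your proof is correct and complete. Because the uniqueness step (Schwarz reflection across $\Re(\alpha)=0$, then the identity theorem) is purely local, it needs no boundedness of the extension on the strip.

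The paper gives no proof of this statement; it quotes it from Pedersen--Takesaki \cite[Lem.~3.2]{Pedersen1973} (cf.\ \cite[Lem.~3]{Araki1973b}, \cite[Cor.~9.21]{StratilaZsido2019}). Your argument is essentially the standard proof. The usual variation pairs $F(s)$ with spectrally bounded vectors $\eta$, obtaining $\abs{\braket{T^s\eta,\xi}} \le \norm{F(s)}\norm{\eta}$ on a core and hence $\xi \in \dom((T^s)^\ast) = \dom(T^s)$. You instead truncate $\xi$ and use monotone convergence, which works equally well.
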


\section*{Acknowledgments}

The material in this article is based on the first-named author's master thesis \cite{Reible2025mt}, where a detailed presentation of modular theory and perturbation theory can be found. B.M.R.~also wishes to thank Professor Jan Derezi\'{n}ski for providing helpful comments about his 2003 \emph{Reviews in Mathematical Physics} paper \cite{DJP03}.

\bibliographystyle{abbrv}
\bibliography{bogoliubov.bib}

\begin{thebibliography}{10}

\bibitem{Amico2008}
L.~Amico, R.~Fazio, A.~Osterloh, and V.~Vedral.
\newblock Entanglement in many-body systems.
\newblock {\em Rev. Mod. Phys.}, 80(2):517--576, 2008.

\bibitem{Araki1973a}
H.~Araki.
\newblock Expansional in {B}anach algebras.
\newblock {\em Ann. Sci. de l'École Norm. Sup.}, 4(6):67--84, 1973.

\bibitem{Araki1973b}
H.~Araki.
\newblock {G}olden-{T}hompson and {P}eierls-{B}ogolubov inequalities for a general von {N}eumann algebra.
\newblock {\em Commun. Math. Phys.}, 34:167–178, 1973.

\bibitem{Araki1973c}
H.~Araki.
\newblock Relative {H}amiltonian for faithful normal states of a von {N}eumann algebra.
\newblock {\em Publ. Res. Inst. Math. Sci.}, 9(1):165–209, 1973.

\bibitem{Araki1974}
H.~Araki.
\newblock Some properties of modular conjugation operator of von {N}eumann algebras and a non-commutative {R}adon-{N}ikodym theorem with a chain rule.
\newblock {\em Pacific J. Math.}, 50(2):309--354, 1974.

\bibitem{Araki1976}
H.~Araki.
\newblock Relative entropy of states of von {N}eumann algebras.
\newblock {\em Publ. Res. Inst. Math. Sci.}, 11(3):809--833, 1976.

\bibitem{Araki1977}
H.~Araki.
\newblock Relative entropy of states of von {N}eumann algebras {II}.
\newblock {\em Publ. Res. Inst. Math. Sci.}, 13(1):173--192, 1977.

\bibitem{Attal2006}
S.~Attal.
\newblock Elements of operator algebras and modular theory.
\newblock In S.~Attal, A.~Joye, and C.-A. Pillet, editors, {\em Open Quantum Systems I. The Hamiltonian Approach}, Lecture Notes in Mathematics 1880, chapter~3, pages 69--105. Springer, Berlin, 2006.

\bibitem{Bakshi2024}
A.~Bakshi, A.~Liu, A.~Moitra, and E.~Tang.
\newblock High-temperature {G}ibbs states are unentangled and efficiently preparable.
\newblock In {\em 2024 IEEE 65th Annual Symposium on Foundations of Computer Science (FOCS)}, pages 1027--1036, Chicago, IL, USA, 2024.
\newblock arXiv:2403.16850 [quant-ph].

\bibitem{BlanchardBruening2015}
P.~Blanchard and E.~Brüning.
\newblock {\em Mathematical Methods in Physics}.
\newblock Progress in Mathematical Physics 69. Birkhäuser, Cham, second edition, 2015.

\bibitem{BR1}
O.~Bratteli and D.~W. Robinson.
\newblock {\em Operator Algebras and Quantum Statistical Mechanics I. $C^\ast$- and $W^\ast$-Algebras, Symmetry Groups. Decomposition of States}.
\newblock Theoretical and Mathematical Physics. Springer, Berlin, 1979.

\bibitem{BR2}
O.~Bratteli and D.~W. Robinson.
\newblock {\em Operator Algebras and Quantum Statistical Mechanics II. Equilibrium States, Models in Quantum Statistical Mechanics}.
\newblock Theoretical and Mathematical Physics. Springer, Berlin, 1981.

\bibitem{BreiteneckerGruemm1972}
M.~Breitenecker and H.-R. Grümm.
\newblock Note on trace inequalities.
\newblock {\em Commun. Math. Phys.}, 26:276–279, 1972.

\bibitem{Casini2019}
H.~Casini, S.~Grillo, and D.~Pontello.
\newblock Relative entropy for coherent states from {A}raki formula.
\newblock {\em Phys. Rev. D}, 99(12):125020, 2019.

\bibitem{Cole2025}
M.~C. Cole, M.~T. Pelly, C.~V. Topping, T.~Reindl, U.~Waizmann, J.~Weis, and A.~W. Rost.
\newblock Coulomb blockade thermometry based nanocalorimetry.
\newblock {\em Rev. Sci. Instrum.}, 96(7):073903, 2025.

\bibitem{Connes1974}
A.~Connes.
\newblock Caractérisation des espaces vectoriels ordonnés sous-jacents aux algèbres de von {N}eumann.
\newblock {\em Ann. Inst. Fourier}, 24(4):121--155, 1974.

\bibitem{DAngelo2024}
E.~D'Angelo, M.~B. Fr\"ob, S.~Galanda, P.~Meda, A.~Much, and K.~Papadopoulos.
\newblock Entropy-area law and temperature of de {S}itter horizons from modular theory.
\newblock {\em Prog. Theor. Exp. Phys.}, 2024(2):021A01, 2024.

\bibitem{Oliveira2009}
C.~R. de~Oliveira.
\newblock {\em Intermediate Spectral Theory and Quantum Dynamics}.
\newblock Progress in Mathematical Physics 54. Birkhäuser, Basel, 2009.

\bibitem{DelleSite2015}
L.~Delle~Site.
\newblock Shannon entropy and many-electron correlations: Theoretical concepts, numerical results, and {C}ollins conjecture.
\newblock {\em Int. J. Quantum Chem.}, 115:1396--1404, 2015.

\bibitem{DelleSite2017}
L.~Delle~Site, G.~Ciccotti, and C.~Hartmann.
\newblock Partitioning a macroscopic system into independent subsystems.
\newblock {\em J. Stat. Mech.}, 2017(8):083201, 2017.

\bibitem{DelleSite2024pra}
L.~Delle~Site and C.~Hartmann.
\newblock Scaling law for the size dependence of a finite-range quantum gas.
\newblock {\em Phys. Rev. A}, 109(2):022209, 2024.

\bibitem{DJP03}
J.~Derezi\'{n}ski, V.~Jak\v{s}i\'{c}, and C.-A. Pillet.
\newblock Perturbation theory of {$W^\ast$}-dynamics, {L}iouvilleans and {KMS}-states.
\newblock {\em Rev. Math. Phys.}, 15(5):447--489, 2003.

\bibitem{Donald1987}
M.~J. Donald.
\newblock Free energy and the relative entropy.
\newblock {\em J. Stat. Phys.}, 49(1-2):81--87, 1987.

\bibitem{Donald1990}
M.~J. Donald.
\newblock Relative {H}amiltonians which are not bounded from above.
\newblock {\em J. Funct. Anal.}, 91(1):143--173, 1990.

\bibitem{Donsker1975}
M.~D. Donsker and S.~R.~S. Varadhan.
\newblock Asymptotic evaluation of certain {M}arkov process expectations for large time, {I}.
\newblock {\em Commun. Pure Appl. Math.}, 28(1):1--47, 1975.

\bibitem{Dorau2026}
P.~Dorau and A.~Much.
\newblock From quantum relative entropy to the semiclassical {E}instein equations.
\newblock {\em Phys. Rev. Lett.}, 136(9):091602, 2026.

\bibitem{EngelNagel2000}
K.-J. Engel and R.~Nagel.
\newblock {\em One-Parameter Semigroups for Linear Evolution Equations}.
\newblock Graduate Texts in Mathematics 194. Springer, New York, 2000.

\bibitem{Feng2024}
C.~Feng and L.~Chen.
\newblock Quantifying quantum entanglement via machine learning models.
\newblock {\em Commun. Theor. Phys.}, 76:075104, 2024.

\bibitem{Haag1967}
R.~Haag, N.~M. Hugenholtz, and M.~Winnink.
\newblock On the equilibrium states in quantum statistical mechanics.
\newblock {\em Commun. Math. Phys.}, 5:215–236, 1967.

\bibitem{Haagerup1975}
U.~Haagerup.
\newblock The standard form of von {N}eumann algebras.
\newblock {\em Math. Scand.}, 37(2):271--283, 1975.

\bibitem{Hiai2021}
F.~Hiai.
\newblock {\em Lectures on Selected Topics in von Neumann Algebras}.
\newblock EMS Series of Lectures in Mathematics. European Mathematical Society, 2021.

\bibitem{Hiai2021mps}
F.~Hiai.
\newblock {\em Quantum $f$-Divergences in von Neumann Algebras}.
\newblock Mathematical Physics Studies. Springer, Singapore, 2021.

\bibitem{Horodecki2009}
R.~Horodecki, P.~Horodecki, M.~Horodecki, and K.~Horodecki.
\newblock Quantum entanglement.
\newblock {\em Rev. Mod. Phys.}, 81(2):865--942, 2009.

\bibitem{Keyl2003}
M.~Keyl, D.~Schlingemann, and R.~F. Werner.
\newblock Infinitely entangled states.
\newblock {\em Quant. Inf. Comput.}, 3(4):281--306, 2003.

\bibitem{Kinoshita2025}
S.~Kinoshita, K.~Murata, D.~Yamamoto, and R.~Yoshii.
\newblock Spin systems as quantum simulators of quantum field theories in curved spacetimes.
\newblock {\em Phys. Rev. Res.}, 7(2):023917, 2025.

\bibitem{Koutny2023}
D.~Koutn\'y, L.~Gin\'es, L.~Mocza{\l}a-Dusanowska, S.~H\"{o}fling, C.~Schneider, A.~Predojevi\'c, and M.~Je\v{z}ek.
\newblock Deep learning of quantum entanglement from incomplete measurements.
\newblock {\em Sci. Adv.}, 9:eadd7131, 2023.

\bibitem{Kurpicz2021}
F.~Kurpicz, N.~Pinamonti, and R.~Verch.
\newblock Temperature and entropy-area relation of quantum matter near spherically symmetric outer trapping horizons.
\newblock {\em Lett. Math. Phys.}, 111:110, 2021.

\bibitem{Laurell2025}
P.~Laurell, A.~Scheie, E.~Dagotto, and D.~A. Tennant.
\newblock Witnessing entanglement and quantum correlations in condensed matter: A review.
\newblock {\em Adv. Quantum Technol.}, 8:2400196, 2025.

\bibitem{Li2019}
D.~Li, K.~Lee, B.~Y. Wang, M.~Osada, S.~Crossley, H.~R. Lee, Y.~Cui, Y.~Hikita, and H.~Y. Hwang.
\newblock Superconductivity in an infinite-layer nickelate.
\newblock {\em Nature}, 572:624--627, 2019.

\bibitem{Lindblad73}
G.~Lindblad.
\newblock Entropy, information and quantum measurement.
\newblock {\em Commun. Math. Phys.}, 33:305--322, 1973.

\bibitem{Longo1997}
R.~Longo.
\newblock An analogue of the {K}ac--{W}akimoto formula and black hole conditional entropy.
\newblock {\em Commun. Math. Phys.}, 186:451–479, 1997.

\bibitem{Longo2019}
R.~Longo.
\newblock Entropy of coherent excitations.
\newblock {\em Lett. Math. Phys.}, 109(12):2587–2600, 2019.

\bibitem{Moretti19}
V.~Moretti.
\newblock {\em Fundamental Mathematical Structures of Quantum Theory}.
\newblock Springer, Cham, 2019.

\bibitem{OP2004}
M.~Ohya and D.~Petz.
\newblock {\em Quantum Entropy and Its Use}.
\newblock Theoretical and Mathematical Physics. Springer, Berlin, corrected second printing edition, 2004.

\bibitem{Passegger2025}
A.~G. Passegger and R.~Verch.
\newblock Disjointness of inertial {KMS} states and the role of {L}orentz symmetry in thermalization.
\newblock {\em Rev. Math. Phys.}, 37(2):2430009, 2025.

\bibitem{Pedersen1973}
G.~K. Pedersen and M.~Takesaki.
\newblock The {R}adon-{N}ikodym theorem for von {N}eumann algebras.
\newblock {\em Acta Math.}, 130:53--87, 1973.

\bibitem{Petz1986a}
D.~Petz.
\newblock Properties of the relative entropy of states of von {N}eumann algebras.
\newblock {\em Acta Math. Hung.}, 47(1-2):65--72, 1986.

\bibitem{Petz1986b}
D.~Petz.
\newblock Sufficient subalgebras and the relative entropy of states of a von {N}eumann algebra.
\newblock {\em Commun. Math. Phys.}, 105:123–131, 1986.

\bibitem{Petz1988}
D.~Petz.
\newblock A variational expression for the relative entropy.
\newblock {\em Commun. Math. Phys.}, 114:345–349, 1988.

\bibitem{Pillet2006}
C.-A. Pillet.
\newblock Quantum dynamical systems.
\newblock In S.~Attal, A.~Joye, and C.-A. Pillet, editors, {\em Open Quantum Systems I. The Hamiltonian Approach}, Lecture Notes in Mathematics 1880, chapter~4, pages 107--182. Springer, Berlin, 2006.

\bibitem{Plenio2007}
M.~B. Plenio and S.~Virmani.
\newblock An introduction to entanglement measures.
\newblock {\em Quant. Inf. Comput.}, 7(1):1--51, 2007.

\bibitem{Popescu1997}
S.~Popescu and D.~Rohrlich.
\newblock Thermodynamics and the measure of entanglement.
\newblock {\em Phys. Rev. A}, 56(5):R3319(R), 1997.

\bibitem{RS1}
M.~Reed and B.~Simon.
\newblock {\em Methods of Modern Mathematical Physics. Volume I: Functional Analysis}.
\newblock Academic Press, San Diego, {R}evised and {E}nlarged edition, 1980.

\bibitem{Reible2025mt}
B.~M. {Reible}.
\newblock Monotonicity of the relative entropy and the two-sided {B}ogoliubov inequality in von {N}eumann algebras.
\newblock Master thesis, Universität Leipzig, arXiv:2501.04564 [math.OA], 2024.

\bibitem{Reible2022}
B.~M. Reible, C.~Hartmann, and L.~{Delle Site}.
\newblock Two-sided {Bogoliubov} inequality to estimate finite size effects in quantum molecular simulations.
\newblock {\em Lett. Math. Phys.}, 112:97, 2022.

\bibitem{Reible2025apx}
B.~M. {Reible}, C.~Hartmann, and L.~{Delle Site}.
\newblock Finite-size effects in molecular simulations: a physico-mathematical view.
\newblock {\em Adv. Phys. X}, 10(1):2495151, 2025.

\bibitem{Reible2023}
B.~M. Reible, J.~F. Hille, C.~Hartmann, and L.~{Delle Site}.
\newblock Finite-size effects and thermodynamic accuracy in many-particle systems.
\newblock {\em Phys. Rev. Res.}, 5(2):023156, 2023.

\bibitem{Reible2025camcos}
B.~M. {Reible}, N.~Liebreich, C.~Hartmann, and L.~{Delle Site}.
\newblock A fast and rigorous numerical tool to measure length-scale artifacts in molecular simulations.
\newblock Accepted for publication in \textit{Commun. Appl. Math. Comp. Sci.}, in press. arXiv:2511.01442 [physics.comp-ph], 2025.

\bibitem{Rubboli2024}
R.~Rubboli and M.~Tomamichel.
\newblock New additivity properties of the relative entropy of entanglement and its generalizations.
\newblock {\em Commun. Math. Phys.}, 405:162, 2024.

\bibitem{Schmuedgen2012}
K.~Schmüdgen.
\newblock {\em Unbounded Self-adjoint Operators on Hilbert Space}.
\newblock Graduate Texts in Mathematics 265. Springer, Dordrecht, 2012.

\bibitem{Stamatova2025}
M.~Stamatova and V.~Vedral.
\newblock Complex heat capacity as a witness of spatio-temporal entanglement.
\newblock arXiv:2508.15728 [quant-ph]. Accepted for publication in \emph{Phys.~Rev.~A}., 2026.

\bibitem{Stratila2020}
S.~V. Strătilă.
\newblock {\em Modular Theory in Operator Algebras}.
\newblock Cambridge IISc Series. Cambridge University Press, Cambridge, second edition, 2020.

\bibitem{StratilaZsido2019}
S.~V. Strătilă and L.~Zsidó.
\newblock {\em Lectures on von Neumann Algebras}.
\newblock Cambridge IISc Series. Cambridge University Press, Cambridge, second edition, 2019.

\bibitem{Takesaki1970}
M.~Takesaki.
\newblock {\em Tomita's Theory of Modular Hilbert Algebras and its Applications}.
\newblock Lecture Notes in Mathematics 128. Springer, Berlin, 1970.

\bibitem{Uhlmann1977}
A.~Uhlmann.
\newblock Relative entropy and the {W}igner-{Y}anase-{D}yson-{L}ieb concavity in an interpolation theory.
\newblock {\em Commun. Math. Phys.}, 54:21--32, 1977.

\bibitem{Umegaki62}
H.~Umegaki.
\newblock Conditional expectation in an operator algebra, {IV}. {E}ntropy and information.
\newblock {\em Kodai Math. Sem. Rep.}, 14(2):59--85, 1962.

\bibitem{vanLuijk2025cmp}
L.~van Luijk, A.~Stottmeister, R.~F. Werner, and H.~Wilming.
\newblock Pure state entanglement and von {N}eumann algebras.
\newblock {\em Commun. Math. Phys.}, 406:296, 2025.

\bibitem{Vedral2002}
V.~Vedral.
\newblock The role of relative entropy in quantum information theory.
\newblock {\em Rev. Mod. Phys.}, 74(1):197--234, 2002.

\bibitem{Vedral1998}
V.~Vedral and M.~B. Plenio.
\newblock Entanglement measures and purification procedures.
\newblock {\em Phys. Rev. A}, 57(3):1619--1633, 1998.

\bibitem{Vedral1997}
V.~Vedral, M.~B. Plenio, M.~A. Rippin, and P.~L. Knight.
\newblock Quantifying entanglement.
\newblock {\em Phys. Rev. Lett.}, 78(12):2275--2279, 1997.

\bibitem{Verch2005}
R.~Verch and R.~F. Werner.
\newblock Distillability and positivity of partial transposes in general quantum field systems.
\newblock {\em Rev. Math. Phys.}, 17(5):545--576, 2005.

\bibitem{Xie2023}
S.~Xie, Y.-Y. Zhao, C.~Zhang, Y.-F. Huang, C.-F. Li, G.-C. Guo, and J.~H. Eberly.
\newblock Experimental examination of entanglement estimates.
\newblock {\em Phys. Rev. Lett.}, 130(15):150801, 2023.

\end{thebibliography}

\end{document}